\newcommand{\mathbbm}[1]{\text{\usefont{U}{bbm}{m}{n}#1}}
\tikzset{
  treenode/.style = {shape=rectangle, rounded corners,
                     draw, anchor=center,
                     text width=10em, align=center,
                     inner sep=1ex},
  decision/.style = {treenode, diamond, inner sep=0pt,text width=8em,},
  root/.style     = {treenode},
  env/.style      = {treenode, font=\ttfamily\normalsize},
  finish/.style   = {root, fill=gray!25},
  finishnonnormal/.style   = {root, fill=gray!15,dashed},
  dummy/.style    = {circle,draw}
}
\newcommand{\yes}{edge node [above] {yes}}
\newcommand{\no}{edge  node [left]  {no}}
\newtheorem{theorem}{Theorem}
\newtheorem{observation}{Observation}
\newtheorem{lemma}{Lemma}
\newtheorem{definition}{Definition}
\newcommand{\tr}{\text{tr}}
\newcommand{\id}{\ensuremath{\mathbbm{1}}} % 1 fuer ident. Abbildung
\newcommand{\one}{\id}
\newcommand{\bra}[1]{\langle #1|}
\newcommand{\ket}[1]{|#1\rangle}
\newcommand{\C}{\ensuremath{\mathbbm C}}
\renewcommand\Re{\operatorname{Re}}
\renewcommand\Im{\operatorname{Im}}
\let\vec\mathbf
\newcommand{\bi}{\begin{itemize}}
\newcommand{\ei}{\end{itemize}}
\newcommand{\be}{\begin{equation}}
\newcommand{\ee}{\end{equation}}
\newcommand{\bea}{\begin{eqnarray}}
\newcommand{\eea}{\end{eqnarray}}
\newcommand{\kommentar}[1]{}
\newcommand{\identity}{\mathbbm{1}}
\newcommand{\forget}[1]{}
\newcommand{\Nto}{\overset{N}{\rightarrow}}
\begin{document}

\title{Symmetries and local transformations of translationally invariant Matrix Product States }

\date{\today}

\author{Martin Hebenstreit}
\email{martin.hebenstreit@uibk.ac.at}
\affiliation{Institute for Theoretical Physics, University of Innsbruck, 6020 Innsbruck, Austria}

\author{David Sauerwein}
\email{sauerwein.david@gmail.com; This work was done prior to joining AWS.}
\affiliation{Amazon Web Services Europe, Zürich, Switzerland}

\author{Andras Molnar}
\email{andras.molnar@univie.ac.at}
\affiliation{University of Vienna, Faculty of Mathematics, 1090 Vienna, Austria}

\author{J. Ignacio Cirac}
\email{ignacio.cirac@mpq.mpg.de}
\affiliation{Max-Planck-Institut für Quantenoptik, Hans-Kopfermann-Straße 1, 85748 Garching, Germany}

\author{Barbara Kraus}
\email{barbara.kraus@uibk.ac.at}
\affiliation{Institute for Theoretical Physics, University of Innsbruck, 6020 Innsbruck, Austria}

\begin{abstract}
We determine the local symmetries and local transformation properties of translationally invariant matrix product states (MPS). We focus on physical dimension $d=2$ and bond dimension $D=3$ and use the procedure introduced in D. Sauerwein {\it et al.}, Phys. Rev. Lett. {\bf 123}, 170504 (2019) to determine all (including non--global) symmetries of those states. We identify and classify the stochastic local transformations (SLOCC) that are allowed among MPS. We scrutinize two very distinct sets of MPS and show the big diversity (also compared to the case $D=2$) occurring in both, their symmetries and the possible SLOCC transformations. These results reflect the variety of local properties of MPS, even if restricted to translationally invariant states with low bond dimension. Finally, we show that states with non-trivial local symmetries are of measure zero for d = 2 and D > 3.

\end{abstract}

\maketitle

%\tableofcontents

\section{Introduction}
Entanglement is a unique quantum property that is behind most modern quantum technologies, such as quantum computers \cite{Nielsen2010,Horodecki2009}, and is key to comprehend important features of quantum many-body systems \cite{Amico2008, Eisert2010}. The relevance of entanglement in many different branches of science has spurred significant research efforts to understand its properties  \cite{Horodecki2009}. While this has led to a clear understanding of bipartite entanglement, many questions are still open in the multipartite realm.

At the heart of entanglement theory lies the fact that entanglement is a resource under local operations assisted by classical communication (LOCC), which are the most general operations that spatially separated parties can use to manipulate a shared entangled state. Transformations of entangled states via LOCC induce a physically meaningful partial order on the set of entangled states: If $\Psi$ can be deterministically transformed into $\Phi$ via LOCC this means that $\Psi$ is at least as entangled as $\Phi$ \cite{Horodecki2009}. Furthermore, if two states cannot even probabilistically be transformed into each other, via so-called stochastic LOCC (SLOCC), they contain different, incomparable kinds of entanglement. This entails that they may be useful in different contexts of quantum information science \cite{Bennett2000}. Hence, the characterization of LOCC transformations is central in entanglement theory.

Bipartite LOCC transformations among pure states admit a simple characterization \cite{Nielsen1999, Vidal1999} that led to a clear understanding of bipartite entanglement and inspired a wide range of applications \cite{Horodecki2009}. A general characterization of multipartite LOCC is still elusive. This is, among other reasons, due to the notorious mathematical complexity of multipartite LOCC \cite{Chitambar2014}, the fact that there are multipartite entangled states that cannot even be transformed into each other via SLOCC \cite{Dur2000} and the exponential growth of the Hilbert space dimension as a function of the number of constituent subsystems.

However, most multipartite (pure) states are not particularily interesting from the perspective of state transformations. On the one hand, this is because most of them cannot even be reached in polynomial time, even if constant-size nonlocal quantum gates are allowed \cite{Poulin2011}. On the other hand, generic multiqudit states of $N>4$ $d$-dimensional subsystems cannot be transformed into nor be obtained from any inequivalent multipartite states of the same dimensions via LOCC  \cite{GoKr17, SaWa18} \footnote{Here and in the following we do not consider local unitary transformations, as they can always be applied locally and as they do not alter the entanglement contained in the system.}. This shows that investigations can be focused on transformations among states within a non-generic subset of physically relevant, i.e. naturally occurring in certain physical contexts, multipartite quantum states.

The starting point of the investigation of entanglement is the characterization of SLOCC classes: An $N$-partite state $\Psi$ is SLOCC--equivalent to a state $\Phi$ if there exist local invertible operators $g_j$ ($j=1,\ldots,N$) such that $|\Psi\rangle = \otimes_{j=1}^N g_j |\Phi\rangle$. Physically, this means that one can obtain $\Phi$ with a finite probability (i.e., for certain measurement outcomes) by applying  only local generalized measurements on the state $\Phi$. As SLOCC--inequivalent states are not related to each other via local operations, their entanglement, viewed as a resource cannot be compared. This is why, in the context of entanglement theory, the study of state transformations and SLOCC classes is central \cite{Horodecki2009}. 
Whereas the characterization of SLOCC classes becomes intractable with increasing the number of constituents and there exist in general infinitely many classes, the problem has been solved for various physically relevant sets of states. For instance, SLOCC classes have been characterized for symmetric states \cite{Bastin2015, MiRo13}, i.e. those that are invariant under permutations, or for certain tripartite ($N=3$) and four-partite ($N=4$) states \cite{Dur2000, Verstraete2002, Briand2004,Chitambar2010}. 

In Ref. \cite{SaMo19} we presented a systematic investigation of state transformations of translationally invariant Matrix Product States (MPS) with periodic boundary conditions. This family of states is physically relavant, as well as mathematically tractable. In physics, MPS efficiently describe the ground state of local, gapped Hamiltonians \cite{Hastings2007}, as well as critical systems. They also correspond one-to-one to the states that are prepared in the context of sequential generation \cite{Schon2005}, where one system sequentially interacts with a set of subsystems originally in a product state. From the mathematical point of view, they admit an efficient description in terms of $N$ tripartite (fiducial) states, or a single one if the state is translationally invariant. In Ref. \cite{SaMo19} we showed how local transformations and SLOCC classes of translationally invariant MPS can be characterized. We demonstrated that these properties can be inferred from the corresponding properties of the fiducial states and certain cyclic structures of operators acting on these fiducial states. We showed that these properties can be highly size dependent and revealed many interesting features of prominent many-body states, such as the cluster \cite{Raussendorf2001} and the AKLT state \cite{Affleck1987}. The methods introduced in \cite{SaMo19} can also be used to identify all local symmetries of MPS (not only corresponding to global unitary operators \cite{Sanz2009, Singh2010})\footnote{By local (global) symmetries we mean (in-)homogeneous symmetries, i.e., that a different (the same) action operates on each physical system.}. Such a characterization  induces a classification of zero temperature phases of matter \cite{Chen2011,Schuch2011, Pollmann2012}. 

Whereas we provided in \cite{SaMo19} a complete characterization of local symmetries and SLOCC classes of translationally invaritant MPS with bond dimension $D=2$, we extend these results here to the case $D=3$. Interestingly, this  increment leads to very different local properties of the corresponding MPS. This is not only seen in the local symmetries, but also in the possible SLOCC transformations and the SLOCC classes.

\subsection{Outline}
We summarize our results in Section \ref{sec:summary}, where we also introduce some notation. In Section \ref{sec:prelim} we discuss preliminaries. In the subsequent sections we characterize the symmetries of normal translationally invariant MPS (TIMPS) with physical dimension 2 and bond dimension $D=3$. There exist six SLOCC classes for the fiducial state in this case. We focus on two of them featuring considerably contrasting properties. In Section \ref{sec:M10M11M1inf} we discuss a fiducial state for which only a discrete number of operators $g$ acting on the qubit give rise to a symmetry of the state. In Section \ref{sec:L1L1T} we discuss a fiducial state for which any operator $g$ acting on the qubit may give rise to a symmetry of the state. We characterize the symmetries of the generated (normal) MPS for both of them. Regarding the SLOCC classification we scrutinize on the latter SLOCC class of the fiducial states (Section \ref{sec:L1L1T}). All the other classes can be treated similarly. We show that, in contrast to MPS with bond dimension $2$, a much larger variety of local and global symmetries occur in this class. Whereas more local symmetries should enable more SLOCC transformation, we show that this is not the case. In fact, we show that any SLOCC transformation which is possible within those states is realizable with a global SLOCC operation. Finally, in Section  \ref{sec:2DDdiagonal} we study fiducial states represented by diagonal matrix pencil for a bond dimension $D \geq 3$.
In Appendices \ref{app:M10M11M1infcycles} to \ref{app:L1L1TSLOCC} and \ref {sec:L1L1Ttable} we present additional details on the concepts used, as well as proofs of claims made in Sections \ref{sec:M10M11M1inf} and \ref{sec:L1L1T}.
In Appendix \ref{sec:otherSLOCCclasses} we discuss symmetries of MPS associated to fiducial states belonging to one of the four remaining SLOCC classes.

\section{Summary of results}
\label{sec:summary}

Before we summarize our findings, we introduce the following notation and basic concepts, which are needed in order to formulate our results. MPS are multipartite states defined in terms of three-partite tensors. We denote the physical dimension of the MPS by $d$ and the bond dimension by $D$. Given a rank-three tensor $A$ with respective index ranges $0\ldots d-1$, $0\ldots D-1$, and $0\ldots D-1$ we often write
\begin{equation}
 \label{Eq:fid}
  \ket{A} =  \sum_{i=0}^{d-1}\sum_{\alpha,\beta=0}^{D-1} A^{i}_{\alpha \beta} \ket{i} \otimes \ket{\alpha} \otimes \ket{\beta}.
\end{equation}
An MPS on $N$ subsystems is then defined in terms of the (in general site-dependent) tensors $A_i$ ($i \in \{0, \ldots, N-1\}$) by
\begin{equation}
    \label{eq:mps}
  \ket{\Psi} = \sum_{i_0,\ldots,i_{N-1}} \tr \left( A_0^{i_0} \dots A_{N-1}^{i_{N-1}} \right) \ket{i_0 \dots i_{N-1}}.
\end{equation}
If $A = A_0 = \ldots = A_{N-1}$ the MPS is translationally invariant (TI). In the TI case we may call $\ket{\Psi} = \ket{\Psi(A)}$ the MPS generated by the fiducial tripartite state $A$. In fact, $A$ generates a whole family of MPS of arbitrarily large system size, $N$.

Let us briefly recall some concepts from entanglement theory that are relevant for this work. Two $N$-partite states $\ket{\psi}$, $\ket{\phi}$ are said to be local unitary (LU) equivalent ($\ket{\psi} \sim_{LU} \ket{\phi}$) if there exists a local unitary operator $u = u_1 \otimes \ldots \otimes u_N$ such that $u\ket{\psi} = \ket{\phi}$. If $\ket{\psi}$ can be transformed into $\ket{\phi}$ via LOCC with finite probability of success, this transformation is said to be possible via stochastic LOCC (SLOCC) and we write $\ket{\psi} \to \ket{\phi}$.  Note that $\ket{\psi} \rightarrow \ket{\phi}$ holds if and only if there exists a local operator $g = g_0 \otimes \ldots \otimes g_{N-1}$ such that $\ket{\phi} = g\ket{\psi}$ \cite{Bennett2000}. If $\ket{\psi} \to \ket{\phi}$ and $\ket{\phi} \to \ket{\psi}$ the two states are said to be SLOCC equivalent. This is the case if and only if there exists an invertible local operator $g$ such that $\ket{\phi} = g\ket{\psi}$ \cite{Dur2000}. The corresponding equivalence classes are called SLOCC classes. We denote the group of local symmetries\footnote{Note that we consider here only invertible local symmetries. That is, we do not characterize those local operators, which annihilate the state. In the cases investigated here, such symmetries contain necessarily two rank one projectors.} of a state $\ket{\psi}$ by
\begin{equation*}
\mathcal{S}_{\ket{\psi}} = \{S: S \ket{\psi} = \ket{\psi}, S = S_0 \otimes S_1 \otimes \ldots \},
\end{equation*}
where $S_i\in {GL}({d_i},\C)$ and $d_i$ denotes the local dimensions of $\ket{\psi}$.

We will focus here on normal (for definition see Section \ref{sec:prelim}) Translationally Invariant MPS (TIMPS) with physical dimension $d=2$ and bond dimension $D=3$ and discuss the higher bond dimensional case in  Section \ref{sec:2DDdiagonal}. We study the local symmetries of MPS, i.e., for MPS $\ket{\Psi}$ we characterize the set $\mathcal{S}_{\ket{\Psi}}$. Moreover, we study the SLOCC classes of MPS. Note that both, the local symmetries as well as the SLOCC classification might depend on the number of subsystems. 

In Ref \cite{SaMo19} we showed that the local symmetries, possible SLOCC transformations, and the SLOCC classes of normal MPS, $\ket{\Psi(A)}$, are determined by certain cyclic structures of operators that are solely defined by its fiducial state, $\ket{A}$. More precisely, they are determined by the properties of the symmetry group of the fiducial state $\mathcal{S}_{\ket{A}}$, which we also denote by
 \be \label{eq:summaryGA}
 G_A =\{ h=g\otimes x\otimes y^T \ | \ h|A\rangle =|A\rangle \},
 \ee
where $T$ denotes the transpose in the standard basis. Given $G_A$, the only operators which can occur in a symmetry of the normal MPS are those, which act on the qubit system, i.e. the operators $g$ in Eq. (\ref{eq:summaryGA}). We call these symmetries the qubit symmetries. The symmetries of normal MPS are determined by specific properties of the symmetries of the fiducial state, so-called \emph{cycles} (see \cite{SaMo19} and Section \ref{sec:prelim}). To give an example, a symmetry of the form $g_0\otimes g_1 \otimes g_0\otimes g_1 \otimes g_0 \ldots \in \mathcal{S}_{\ket{\psi}}$ stems from a 2-cycle in the symmetry group of $\ket{A}$. It should be noted that the concept of cycles allows to characterize the full symmetry group of normal MPS. For non-normal MPS, the concept might yield only a subgroup of the symmetry group. Whether a TIMPS is normal or not depends only on properties of the fiducial state.

Note that if the two fiducial states generating normal MPS are SLOCC inequivalent, then the same holds for the MPS. 
The fiducial states corresponding to the MPS of interest ($D=3$) can be divided into six distinct SLOCC classes \cite{Chitambar2010}. We focus here on two of them, which show considerably contrasting properties with respect to their symmetry groups. As the two classes we focus on can be considered as the two extreme cases,  all the other classes can be treated similarly. The two considered SLOCC classes of the fiducial states are represented by (the notation used here will become clear afterwards): 
\begin{enumerate}[(i)]
\item  $\ket{ M(\omega) } = \ket{0}(\ket{00}+ \omega \ket{11}+ \omega^2\ket{22}) + \ket{1}(\ket{00}+\ket{11}+\ket{22})$, where $\omega = e^{i\frac{2 \pi}{3}}$ (see Section \ref{sec:M10M11M1inf})
\item $\ket{LLT} = \ket{0}(\ket{01} + \ket{22}) +  \ket{1}(\ket{00} + \ket{12})$, (see Section \ref{sec:L1L1T}).
\end{enumerate}

\begin{table*}[!t!]
\begin{tabular}{ |>{\centering\arraybackslash}m{0.48\linewidth}   |   >{\centering\arraybackslash}m{0.48\linewidth} | }
 
  \multicolumn{2}{c}{Considered SLOCC classes of the fiducial state}\\ \hline
 $M(\omega)$ & $LLT$  \\ \hline
  \multicolumn{2}{c}{Representatives}\\ \hline
  $\ket{0}(\ket{00}+ \omega \ket{11}+ \omega^2\ket{22}) + \ket{1}(\ket{00}+\ket{11}+\ket{22})$ & $\ket{0}(\ket{01} + \ket{22}) +  \ket{1}(\ket{00} + \ket{12})$ \\\hline
  \multicolumn{2}{c}{Symmetries of the fiducial state}\\ \hline
 discrete set of 6 possible operators $g$ appearing on physical site &  3-parametric family, any operator $g$ acting on the physical site gives rise to a symmetry \\ \hline
 $g$'s are unitary & $g$'s are non-compact\\ \hline
   \multicolumn{2}{c}{Possible (minimal) cycle lengths} \\ \hline
 1, 2, 3, 4, 6 & $N$ for any $N \in \mathbb{N}$ \\\hline
   \multicolumn{2}{c}{Symmetries of normal MPS} \\ \hline
unitary (finite) & both unitary (finite) and non-unitary (non-compact) \\ \hline
 diagonalizable (as unitary) & both diagonalizable and non-diagonalizable \\ \hline
   \multicolumn{2}{c}{SLOCC among normal MPS} \\ \hline
 \begin{itemize}
 \item examples of SLOCC equivalence via global operations
 \item examples where non-global operations required $\rightarrow$ equivalence is $N$-dependent \end{itemize} & global operations suffice $\rightarrow$ equivalence is not $N$-dependent \\ \hline
   \multicolumn{2}{c}{Results for non-normal MPS} \\ \hline
   no other than the already identified cycles appear among the non-normal MPS & The well-known Majumdar--Ghosh states appear as a special case, certain permutation-invariant states appear as another special case.
   We find SLOCC equivalences through non-global, but not global, operations (in contrast to normal MPS)\\ \hline
   \multicolumn{2}{c}{Comparison to $D=2$ (GHZ- and W- fiducial states, see \cite{SaMo19})} \\ \hline
   GHZ-states: symmetry group of order $1$, $2$, or $2^N$ & W-states: 2-cycle leading to 1-(complex)-parametric symmetry group\\ \hline
\end{tabular}
\caption{Highlights of the obtained results. Out of the six possible SLOCC classes for fiducial states with $D=3$, we consider two representatives with considerably contrasting properties. The representative of the class  $M(\omega)$ (see left column) leads to only six different, unitary, operators appearing as qubit symmetries. The representative of the class $LLT$ has a symmetry group in which any operator $g$ appears as a qubit symmetry. Note that the properties of the remaining four representatives lie between those two extremal cases, as do fiducial states for $D=2$. In this table we summarize and illustrate how the substantial differences in the symmetry groups lead to considerably contrasting properties in the generated MPS regarding the exhibited symmetries and SLOCC equivalence. We also compare to properties of ($D=2$)-MPS generated by W- and GHZ-states (left and right bottom). } 
\label{tab:summary}
\end{table*}

The SLOCC class represented by some state $\ket{A}$ is given by $a\otimes b\otimes c \ket{A}$, for any invertible operators $a,b,c$. In order to determine both, the symmetries as well as the SLOCC classes of all MPS which correspond to a fiducial state belonging to the SLOCC class represented by $\ket{A}$, it suffices to consider fiducial states of the form $\one \otimes b\otimes \one \ket{A}$ only \cite{SaMo19}. 

The main result of the present article is a full characterization of the local symmetries of normal MPS generated by fiducial states within these two SLOCC classes, as well as a full characterization of SLOCC equivalence among normal MPS corresponding to case (ii)---we only outline the procedure for case (i), as the SLOCC classification is much simpler in that case. In the case (i), the set of qubit symmetries are finite and unitary (in fact, they form a unitary representation of the symmetric group $S_3$)~\footnote{Note that in case there exist only finitely many symmetries, one can always chose a representative of the SLOCC class whose symmetry is unitary.}. In case (ii), any operator $g$ acting on the qubit system, i.e. on the physical system, leads to a symmetry of the fiducial state. Stated differently, there is as much freedom in the qubit symmetries as there could possibly be. A goal of this work is to illustrate how the contrasting properties of the two considered classes of fiducial states manifest also in the properties of the associated MPS. We highlight this in form of a comparison of selected properties of the MPS in Table \ref{tab:summary}. The properties of the symmetry group of the remaining four representatives of the fiducial states lie between the two considered extreme cases. Note that the symmetry groups of the fiducial states for $D=2$, the GHZ- and the W-state exhibiting 1- and 2-parametric qubit symmetries (we count complex parameters here and throughout the remainder of the article),  respectively, do not draw near the here considered extremal cases, either. This is reflected in the properties of the generated MPS (see Table \ref{tab:summary}).
The outlined main results are accompanied by several results on non-normal MPS, a discussion on bond dimension $D>3$, and results on MPS associated to the remaining SLOCC classes of fiducial states for $D=3$ (see Sections \ref{sec:L1L1Tnonnormal}, \ref{sec:2DDdiagonal}, and Appendix \ref{sec:otherSLOCCclasses}).

Let us now discuss the results obtained for the two respective fiducial states. 
%We characterize the local symmetries of the fiducial states using the theory of matrix pencils \cite{kronecker}. 
In the case (i), $\ket{ M(\omega) }$, as mentioned above, the set of qubit symmetries is finite and unitary. This leads to finitely many local symmetries of the corresponding MPS (for arbitrary system size). We determine all possible local symmetries (see Table \ref{tab:M10M11M1infcycles}). All minimal cycles are of length $1,2,3,4,6$.
Furthermore, we characterize all fiducial states which generate normal MPS and which lead to the symmetries presented in Table \ref{tab:M10M11M1infcycles}. Despite the fact that deciding whether a family of states is normal or not can be cumbersome, a complete parametrization of the corresponding fiducial states, i.e. the operators $b$ can be found in Appendix \ref{app:M10M11M1infcycles}. 

Given that the set $G_A$ is finite it is straightforward to determine all possible SLOCC transformations. Stated differently, it is easy to determine which pairs of fiducial states lead to MPSs $\ket{\Psi}$ and $\ket{\Phi}$ such that $\ket{\Psi} \propto g_0\otimes g_1 \ldots \otimes g_{N-1} \ket{\Phi}$, where $g_i$ denote local regular matrices. We outline how this question can be answered and how the SLOCC classification can be derived. In fact, many potential SLOCC transformations can be ruled out due to the incompatibility of the symmetry groups of the corresponding states. Within the remaining cases we provide examples of states which are not SLOCC--equivalent, those which are via global transformations and those, which require non--global transformations. 

The reason for investigating the case (ii), i.e. TIMPS which correspond to fiducial states within the SLOCC class represented by $\ket{LLT}$, is that the properties of the fiducial states are in stark contrast to the previously considered case. Thus, also the properties of the corresponding MPS can be expected to be. In fact, in this class, the fiducial state has infinitely many qubit symmetries. As mentioned above, actually, any operator $g$ acting on the physical system leads to a symmetry of the fiducial state. This is why the study of this class is particularly interesting. As we show, this leads to a huge variety of local symmetries of the corresponding MPS. Any possible cycle length actually appears, moreover, finite as well as infinite symmetry groups emerge. Among them, we find diagonalizable as well as non-diagonalizable symmetries. A complete characterization of all possible symmetries for normal MPS is presented in Figure \ref{fig:L1L1Tflowchart}. Although such  large symmetry groups could lead one to believe that there are many possible SLOCC transformation, we show that, surprisingly, the opposite is true. All possible SLOCC transformations can be performed via global operations, i.e. for any two states, $\ket{\Psi}$ and $\ket{\Phi}$, which are SLOCC equivalent, there exits a global operator $g^{\otimes N}$ such that $\ket{\Psi} \propto g^{\otimes N} \ket{\Phi}$.
Note that, in contrast to case (i) and also the case of TIMPS with bond dimension $D=2$ MPS generated by GHZ states (see \cite{SaMo19}), this implies that the existence of a SLOCC transformation among two TIMPS does not depend on the system size. 
In contrast to case (i), the characterization of the SLOCC classes is more challenging due to the large symmetry groups. We provide a parametrization of representatives of each SLOCC class in Figure \ref{fig:L1L1TflowchartSLOCC}. The fact that two states are in the same SLOCC class if and only if they are related to each other via a global operation leads to a huge variety of SLOCC classes.

As mentioned above, only for normal MPS it holds that the whole local symmetry group of the TIMPS can be determined via the local symmetries of the corresponding fiducial state. For non-normal MPS, the methodology of cycles is still useful, but the determined symmetries might form a subgroup of the symmetry group of the MPS, only. Interestingly, in the case (ii), the whole symmetry group can also be determined for certain non normal MPS. This is due to the fact that those states correspond to permutationally invariant states, for which the local symmetry groups are known \cite{BaKr09,MaKr10,MiRo13}. Moreover, in case (ii) it turns out that certain non-normal MPS are well--known states, the Majumdar--Ghosh states. We show that non--normal MPS show a distinct behaviour compared to  normal ones (see Section \ref{sec:L1L1Tnonnormal}). Not only do we provide examples of non--normal states which possess a much larger symmetry group than the one determined by the fiducial state, but also examples of states which are SLOCC--equivalent, but cannot be transformed into each other via a global transformation, in contrast to all normal MPS corresponding to case (ii). 

\subsection{Particularly interesting states}
\label{sec:examples}

In this subsection we present selected fiducial states, which generate MPS with particularly interesting properties with respect to their symmetry group. In light of these examples it is evident that bond dimension $D=3$ allows for much more diverse symmetry groups than is the case for bond dimension $D=2$ \cite{SaMo19}. For the examples presented here we use a notation to label the fiducial states reflecting the properties of the symmetry groups as follows. We use the notation $\ket{1/2 \,G_{(\infty)} \, D^{l_1,l_2,\ldots}_{(\infty)}}$ to label a fiducial state (primarily) according to the properties of the symmetry group of the generated MPS. The notation should be read in three parts: "$1/2$", "$G_{(\infty)}$", and "$D^{l_1,l_2,\ldots}_{(\infty)}$". It should be understood as follows. The first part, "$1/2$", should read either "1", or "2", and indicates whether the fiducial state belongs to the SLOCC class represented by $\ket{M(\omega)}$ (in case of "1"), or  $\ket{LLT}$ (in case of "2"). The second part, "$G_{(\infty)}$", describes the global symmetries of the generated MPS. It should read "$G$" ("$G_\infty$"), in case finitely many (infinitely many) nontrivial global symmetries are present. The third part, "$D^{l_1,l_2,\ldots}_{(\infty)}$", describes the local symmetries of the MPS. As before, the presence of the subscript $\infty$ indicates whether there are finitely of infinitely such symmetries. Moreover, the integers $l_1,l_2, \ldots$ are used to indicate that the local part of the symmetry group changes depending on whether $l_1, l_2, \ldots$ divide the particle number $N$. In case the MPS does not posses any nontrivial local (any nontrivial global) symmetry, we simply omit the "$D$" ("$G$") part. Clearly, the naming scheme does not allow to unambiguously identify MPS, but it suffices to distinguish the examples considered here.

\subsubsection{Examples of fiducial states within the SLOCC class represented by $\ket{ M(\omega) }$}
The first example is given by the fiducial state $\ket{1 \, G}=  \ket{012} + \ket{021} + \ket{101} +\ket{102} + \omega (\ket{001} +\ket{010} +\ket{110}+\ket{112}) + \omega^2 (\ket{002} +\ket{020} +\ket{120}+\ket{121})$.
The corresponding tensor $A$ reads
\begin{equation*}
    A^0 =  
    \begin{pmatrix} 
      0 & \omega & \omega^2 \\ 
      \omega & 0 & 1 \\ 
      \omega^2 & 1 & 0
    \end{pmatrix}   
    \quad \text{and} \quad  
    A^1 =  
    \begin{pmatrix} 
      0 & 1 & 1 \\ 
      \omega & 0 & \omega \\ 
      \omega^2 & \omega^2 & 0
    \end{pmatrix}  .
\end{equation*}
The generated MPS $\ket{\Psi_{1 \, G}}$ has global symmetries only. The symmetry group is finite and unitary and given by $g^{\otimes N}$, where $g$ is any of the six operators generated by $\sigma_x = \begin{pmatrix} 0&1 \\1&0\end{pmatrix}$ and $\operatorname{diag}(\omega,1)$.

The second example is given by the fiducial state $\ket{1\, G \, D^3} = \ket{000} + \ket{002} + \ket{012} +\ket{100}+\ket{101} +\ket{121} + \omega (\ket{001} + \ket{021} +\ket{022}+\ket{102} +\ket{111} +\ket{112} ) +\omega^2 (\ket{010} + \ket{011} +\ket{020} +\ket{110} +\ket{120} +\ket{122})$. The corresponding tensor $A$ reads
\begin{equation*}
    A^0 =  \begin{pmatrix} 1 & \omega & 1 \\ \omega^2 & \omega^2 & 1 \\ \omega^2 & \omega & \omega \end{pmatrix}   \quad \text{and} \quad  A^1 =  \begin{pmatrix} 1 & 1 & \omega \\ \omega^2 & \omega & \omega \\ \omega^2 & 1 & \omega^2 \end{pmatrix}  .
\end{equation*}
The corresponding MPS $\ket{\Psi_{1\, G \, D^3}}$ has the same global symmetries as the first example. If the particle number of the MPS, $N$, is no multiple of 3, these are the only symmetries and the symmetry group is thus identical to the first example. If, however, $N$ is a multiple of 3, then additional local symmetries emerge. Then, the symmetry group is comprised of 18 elements and is generated by repeating sequences (what we will later on call cycles) of $g_0 \otimes g_1 \otimes g_2$, as well as by repeating sequences of $g_0 \otimes g_2 \otimes g_1$ (and translations thereof), where $g_0 = \sigma_x$, $g_1 = \operatorname{diag}(\omega,1)\sigma_x$, and $g_2 = \operatorname{diag}(\omega,1)^2\sigma_x$.

The third example is given by the fiducial state $\ket{1\, G  \, D^{2,6}} = \ket{000} + \ket{002} + \ket{022} +\ket{100}+\ket{101} +\ket{111} + \omega (\ket{001} + \ket{011} +\ket{012}+\ket{102} +\ket{121} +\ket{122} ) +\omega^2 (\ket{010} + \ket{020} +\ket{021} +\ket{110} +\ket{112} +\ket{120})$. The corresponding tensor $A$ reads
\begin{equation*}
    A^0 =  \begin{pmatrix} 1 & \omega & 1 \\ \omega^2 & \omega & \omega \\ \omega^2 & \omega^2 & 1 \end{pmatrix}   \quad \text{and} \quad  A^1 =  \begin{pmatrix} 1 & 1 & \omega \\ \omega^2 & 1 & \omega^2 \\ \omega^2 & \omega & \omega \end{pmatrix}.  
\end{equation*}
The generated MPS $\ket{\Psi_{1\, G  \, D^{2,6}}}$ possesses the global symmetry $g^{\otimes N}$ for $g = \sigma_x$. In case of an even particle number $N$, the MPS possesses local symmetries and the symmetry group is generated by repeating sequences of $g_0 \otimes g_1$, where $g_0 = \operatorname{diag}(\omega,1)\sigma_x$ and $g_1 = \operatorname{diag}(\omega,1)^2\sigma_x$. Moreover, in case that $N$ is divisible by 6, additional local symmetries emerge, repeating sequences of $g_0 \otimes g_1 \otimes g_2 \otimes g_3 \otimes g_4 \otimes g_5$, where $g_0 = g_5 = \operatorname{diag}(\omega,1)\sigma_x$,  $g_1 = g_4 = \sigma_x$, and $g_2 = g_3 = \operatorname{diag}(\omega^2,1)\sigma_x$ generate the symmetry group, then.  
This example is particularly interesting as the MPS $\ket{\Psi_{1\, G  \, D^{2,6}}}$ is SLOCC equivalent to the previous example, $\ket{\Psi_{1\, G \, D^3}}$, for an even particle number $N$. However, this is not the case if $N$ is odd. This leads to the following possible situations.  If 2 divides $N$, but 3 does not, then $\ket{\Psi_{1\, G \, D^3}}$ and $\ket{\Psi_{1\, G  \, D^{2,6}}}$ are SLOCC equivalent and their symmetry groups are of order 6. If both 2 and 3 divide $N$, then $\ket{\Psi_{1\, G \, D^3}}$ and $\ket{\Psi_{1\, G  \, D^{2,6}}}$ are SLOCC equivalent and their symmetry groups are of order 18. If 3 divides $N$, but 2 does not, then $\ket{\Psi_{1\, G \, D^3}}$ and $\ket{\Psi_{1\, G  \, D^{2,6}}}$ are not SLOCC equivalent. Remarkably, the orders of the corresponding symmetry groups differ. The symmetry group of $\ket{\Psi_{1\, G \, D^3}}$ is of order 18, while the symmetry group of $\ket{\Psi_{1\, G  \, D^{2,6}}}$ is of order 2. Finally, if neither 2, nor 3 divides $N$, then $\ket{\Psi_{1\, G \, D^3}}$ and $\ket{\Psi_{1\, G  \, D^{2,6}}}$ are not SLOCC equivalent, again the orders of the symmetry groups differ. The symmetry group of $\ket{\Psi_{1\, G \, D^3}}$ is of order 6, while the symmetry group of $\ket{\Psi_{1\, G  \, D^{2,6}}}$ is of order 2.

The fourth example is given by the fiducial state $\ket{1\,D^3} = \ket{012} - \ket{021}+\ket{101} + \ket{102} + \omega (\ket{001}-\ket{010} +\ket{022} -\ket{110} -\ket{111} + \ket{112} ) +\omega^2 (\ket{002} -\ket{011}+\ket{020} +\ket{120}-\ket{121}+\ket{122})$. The corresponding tensor $A$ reads
\begin{equation*}
    A^0 =  \begin{pmatrix} 0 & \omega & \omega^2 \\ -\omega & -\omega^2 & 1 \\ \omega^2 & -1 & \omega \end{pmatrix}   \quad \text{and} \quad 
    A^1 =  \begin{pmatrix} 0 & 1 & 1 \\ -\omega & -\omega & \omega \\ \omega^2 & -\omega^2 & \omega^2 \end{pmatrix}  .
\end{equation*}
The generated MPS $\ket{\Psi_{1\,D^3}}$ does not exhibit any global symmetry (except the trivial symmetry), however, it does possess local symmetries if $N$ is a multiple of 3. Its symmetry group then is of order 4 and is generated by repeating sequences of $g_0 \otimes g_1 \otimes g_2$ (and translations thereof), where $g_0 = \identity$ and $g_1 = g_2 = \sigma_x$. 

Additional possible symmetry groups (in fact, all possible symmetry groups for normal MPS generated by a fiducial state that is in the SLOCC class of $\ket{ M(\omega) }$) are displayed in Figure \ref{fig:sets}. Examples of MPS exhibiting the corresponding symmetry groups may be easily constructed with the help of Table \ref{tab:M10M11M1infcycles}.

\subsubsection{Examples of fiducial states within the SLOCC class represented by $\ket{LLT}$}
The fifth example is given by the fiducial state $\ket{2\, G_\infty}= \ket{002} - \ket{012} + \ket{021} + \ket{022} -  \ket{112} + \ket{120}$. The corresponding tensor $A$ reads
\begin{equation*}
    A^0 =  \begin{pmatrix} 0 & 0 & 1 \\ 0 & 0 & -1 \\ 0 & 1 & 1\end{pmatrix}  
    \quad \text{and} \quad
    A^1 =  \begin{pmatrix} 0 & 0 & 0 \\ 0 & 0 & -1 \\ 1 & 0 & 0\end{pmatrix}  .
\end{equation*}
The generated MPS $\ket{\Psi_{2\, G_\infty}}$ has a one-parametric symmetry group (counting complex parameters) of global symmetries $g^{\otimes N}$, where $g = \begin{pmatrix} 1 &x \\ 0 & 1\end{pmatrix}$ for any $x\in \mathbb{C}$.

The sixth example is given by the fiducial state $\ket{2\, D^m_\infty} = e^{i\frac{\pi}{m}}\ket{002} + \ket{021} + \ket{022}  - e^{-i\frac{\pi}{m}} \ket{112} +\ket{120}$ for some $m \in \mathbb{N}$. The corresponding tensor $A$ reads
\begin{equation*}
    A^0 =  \begin{pmatrix} 0 & 0 & e^{i\frac{\pi}{m}} \\ 0 & 0 & 0 \\ 0 & 1 & 1\end{pmatrix}   
    \quad \text{and} \quad
    A^1 =  \begin{pmatrix} 0 & 0 & 0 \\ 0 & 0 & -e^{-i\frac{\pi}{m}} \\ 1 & 0 & 0\end{pmatrix} . 
\end{equation*}
The generated MPS $\ket{\Psi_{2\, D^m_\infty}}$ has a non-trivial symmetry group only if the particle number $N$ is a multiple of $m$. Then, the MPS exhibits a one-parameter symmetry group of local symmetries $g_0 \otimes g_1 \otimes \ldots$, where $g_k = \begin{pmatrix}1 & x e^{i \frac{2 k \pi}{m}}\\ 0&1 \end{pmatrix}$, $x \in \mathbb{C}$.

The seventh example is given by the fiducial state $\ket{2\, G  \, D^{2}_\infty} = i \ket{002} + \ket{021} + \ket{022}  + i \ket{112} +\ket{120} + \ket{122}$. The corresponding tensor $A$ reads
\begin{equation*}
    A^0 =  \begin{pmatrix} 0 & 0 & i \\ 0 & 0 & 0 \\ 0 & 1 & 1\end{pmatrix}   
    \quad \text{and} \quad
    A^1 =  \begin{pmatrix} 0 & 0 & 0 \\ 0 & 0 & i \\ 1 & 0 & 1\end{pmatrix}  .
\end{equation*}
In case of an odd particle number $N$, the generated MPS $\ket{\Psi_{2\, G  \, D^{2}_\infty}}$ has a single non-trivial symmetry $g^{\otimes N}$, where $g=i \sigma_x$. In case of an even particle number the MPS exhibits a continuous symmetry group with symmetries $g_0 \otimes g_1 \otimes g_0 \otimes g_1 \otimes \ldots$, where $g_0=\begin{pmatrix}1 & x \\ x & 1\end{pmatrix}$ and  $g_1 = g_0^{-1}$ for any $x \in \mathbb{C} \setminus \{1\}$.

As an eighth example we give the well-known Ma\-jum\-dar-Ghosh states, which are non-normal MPS possessing the symmetry group $g^{\otimes N}$ for any $g$. They appear as a particular case within the SLOCC class of fiducial states represented by $\ket{LLT}$ (see Section \ref{sec:L1L1Tnonnormal}). One such state may be obtained considering the fiducial state $\ket{A} = \ket{002} +\ket{021} -\ket{112} + \ket{120}$. The corresponding tensor $A$ reads
\begin{equation*}
    A^0 =  \begin{pmatrix} 0 & 0 & 1 \\ 0 & 0 & 0 \\ 0 & 1 & 0\end{pmatrix}   
    \quad \text{and} \quad
    A^1 =  \begin{pmatrix} 0 & 0 & 0 \\ 0 & 0 & -1 \\ 1 & 0 & 0\end{pmatrix}  .
\end{equation*}

A list of all representatives of normal MPS generated by fiducial states within the $\ket{LLT}$ class is presented in Figure \ref{fig:L1L1TflowchartSLOCC}, the emerging symmetry groups of the generated MPS are displayed in Figure \ref{fig:L1L1Tflowchart}.

We briefly discuss the remaining SLOCC classes of the fiducial states in Appendix \ref{sec:otherSLOCCclasses}. Moreover, we show that for generic fiducial states of higher bond dimension, there exists no non--trivial local symmetry. That is the set $S_{\ket{
\Psi}}=\{\one\}$, which also implies that the SLOCC classification is trivial. 

\section{Preliminaries}

\label{sec:prelim}
In this section we review relevant concepts from  the theory of MPS (Section \ref{sec:revMPS}) and of Ref. \cite{SaMo19} (Section \ref{sec:revMPSsym}).

\subsection{Matrix Product States}
\label{sec:revMPS}

As injective and normal MPS (not necessarily TI) play a crucial role in the theory of MPS, we recall here their definitions.
\begin{definition}
  A MPS tensor $A$ is injective if the following map is injective:
  \begin{equation*}
    X \mapsto \sum_i \tr \left(X A^i\right) \ket{i}.
  \end{equation*}
  A MPS is injective if all the defining tensors are injective. An MPS is normal if there exists an $L$ such that the contraction of any $L$ consecutive tensors are injective, i.e.,\ the following maps are injective:
  \begin{equation*}
    X \mapsto \sum_j \tr \left(X A_i^{j_1} A_{i+1}^{j_2} \dots A_{i+L-1}^{j_L}\right) \ket{j_1 \dots j_L}.
  \end{equation*}
\end{definition}
In the following we only consider MPS with $N \geq 2L+1$, if not stated differently. In this case the following Fundamental Theorem of MPS characterizes when two normal MPS generate the same state.

\begin{theorem}[Fundamental Theorem of MPS \cite{Molnar2018}]
\label{thm:fund}
 Two normal MPS given by tensors $A_0,\ldots,A_{N-1}$ and $B_0,\ldots,B_{N-1}$ generate the same state $\Psi$ iff there exist regular matrices $x_0,\ldots,x_{N-1}$ such that $A_k^j = x_k^{-1}B_k^jx_{k+1}$ for all $k$ and $j$, with $x_{N} \equiv x_0$; that is, iff
\begin{align}
\ket{A_k} = \one \otimes x_k^{-1} \otimes x_{k+1}^T\ket{B_k} \ \forall \ k. \label{eq:fundtheom}
\end{align}
The matrices $x_0,\ldots,x_{N-1}$ are unique up to a multiplicative constant.
\end{theorem}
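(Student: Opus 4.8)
The ``if'' direction is immediate: if $A_k^j=x_k^{-1}B_k^jx_{k+1}$ with $x_N\equiv x_0$, then in each product $A_0^{j_0}\cdots A_{N-1}^{j_{N-1}}$ the intermediate gauge matrices telescope, leaving $x_0^{-1}\bigl(B_0^{j_0}\cdots B_{N-1}^{j_{N-1}}\bigr)x_0$; taking the trace and using its cyclicity shows that the two MPS have identical coefficients. The whole content is therefore the ``only if'' direction, and my plan is to first settle the injective case and then reduce the normal case to it by blocking.

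\emph{Injective case.} Assume every $A_k,B_k$ is injective, so that $\{A_k^j\}_j$ and $\{B_k^j\}_j$ each span the full matrix algebra $M_D(\C)$, and that $N\ge 3$. Injectivity of a block of tensors says that the associated bulk-to-boundary map is injective, hence has a left inverse represented by a ``dual'' tensor. I would contract the coefficient identity $\tr(A_0^{j_0}\cdots A_{N-1}^{j_{N-1}})=\tr(B_0^{j_0}\cdots B_{N-1}^{j_{N-1}})$ against this dual tensor for the complementary block: singling out a single site $k$ and contracting the remaining sites with the dual of $A_{k+1},\dots,A_{k-1}$ gives $A_k^j=\Phi_k(B_k^j)$ for a linear map $\Phi_k$ of $M_D(\C)$, which is invertible because the symmetric contraction (with $B$-duals) produces its inverse on the span of $\{B_k^j\}$. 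Singling out two \emph{adjacent} sites $k,k+1$ instead yields $A_k^{j}A_{k+1}^{j'}=\Psi_k(B_k^{j}B_{k+1}^{j'})$ for an invertible linear $\Psi_k$; here it is essential that $N\ge 3$, so that the complementary block is non-empty --- this is exactly what excludes the spurious ``gauge'' $A^j=(B^j)^T$ available when $N=2$.

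Comparing the one-site and two-site contractions gives $\Psi_k(PQ)=\Phi_k(P)\,\Phi_{k+1}(Q)$ for all $P,Q\in M_D(\C)$, since the $\{B_k^j\}$ span. Evaluating this at $P=\one$ and at $Q=\one$ expresses $\Phi_k,\Phi_{k+1}$ through $\Psi_k$ and the matrices $\Phi_k(\one),\Phi_{k+1}(\one)$ (which must be invertible, else $\Psi_k$ would fail to be surjective), and substituting back turns the relation, after a rescaling, into $\widetilde\Psi_k(PQ)=\widetilde\Psi_k(P)\widetilde\Psi_k(Q)$ with $\widetilde\Psi_k(\one)=\one$ --- i.e.\ $\widetilde\Psi_k$ is a unital algebra automorphism of $M_D(\C)$. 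By Skolem--Noether every such automorphism is inner, $\widetilde\Psi_k(P)=z_kPz_k^{-1}$, and unwinding the rescalings yields $\Phi_k(M)=x_kM\,y_k$ for invertible $x_k,y_k$, so $A_k^j=x_kB_k^jy_k$. Inserting this into the trace identity and using once more that the $\{B_k^j\}$ span forces $y_kx_{k+1}\propto\one$ for each $k$; absorbing the resulting scalars produces matrices with $A_k^j=x_k^{-1}B_k^jx_{k+1}$ and $x_N\equiv x_0$. Uniqueness up to a constant then follows by applying the statement to $B=A$: a self-gauge $z_k$ of a normal tensor satisfies $z_kM=Mz_{k+1}$ for all $M\in M_D(\C)$, which forces all $z_k$ to be one common scalar multiple of $\one$. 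The main nuisance in this part is the bookkeeping around non-invertibility (an individual $B_k^j$ need not be invertible even though the family spans), which I would sidestep by evaluating all the bilinear relations on a spanning set and reconstructing the invertible $x_k,y_k$ from the automorphism rather than from a single matrix equation.

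\emph{Reduction of the normal case.} For a general normal MPS there is, by definition, an $L$ such that any $L$ consecutive tensors contract to an injective tensor. Applying the injective case to a cyclic blocking of the $N\ge 2L+1$ sites into blocks of length $\ge L$ yields gauge matrices at the block boundaries; repeating this for the $\ell$ blockings obtained by shifting the block pattern one site at a time, and combining the consistency relations between these overlapping blockings, one recovers per-site gauge matrices $x_i$ with $A_i^j=x_i^{-1}B_i^jx_{i+1}$ and $x_N\equiv x_0$ (one also uses, in passing, that equality of the reduced density operators of $\ket{\Psi}$ on contiguous regions forces $B$ to be normal as well). I expect the genuinely delicate point of the whole proof to be precisely this passage between the injective and normal cases --- getting the block lengths and shifts to interlock, and in particular confirming that the hypothesis $N\ge 2L+1$ really is enough to run the $N\ge 3$ version of the injective argument at the blocked level; everything else is linear algebra together with Skolem--Noether.
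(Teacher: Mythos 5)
The paper itself does not prove this statement: it is quoted verbatim as the Fundamental Theorem of MPS from Ref.~\cite{Molnar2018}, so there is no in-paper argument to compare against, and your proposal has to stand on its own as a proof of the theorem as stated (in particular under the hypothesis $N\ge 2L+1$ that the paper relies on). Your injective-case argument is essentially sound: the one- and two-site contractions with the duals of the complementary (injective) blocks, the bilinear extension $\Psi_k(PQ)=\Phi_k(P)\Phi_{k+1}(Q)$, the normalization to a unital automorphism, Skolem--Noether, and the final trace identity forcing $y_kx_{k+1}\propto\one$ all go through for $N\ge 3$ injective sites, and the $N=2$ transpose gauge is correctly excluded.

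The genuine gap is exactly the point you flag but do not close: the reduction from normal to injective tensors. Running your injective argument at the blocked level requires at least three blocks, each of length $\ge L$, i.e.\ $N\ge 3L$, whereas the theorem (and its use in this paper, where $L$ can be as large as $6$ for $D=3$) only assumes $N\ge 2L+1$; for $2L+1\le N<3L$ any blocking into $\ge 3$ blocks contains a block shorter than $L$, which need not be injective, so your reduction simply does not apply in the stated regime. The proof in \cite{Molnar2018} avoids this by never invoking a three-site injective chain: it cuts the ring into \emph{two} complementary injective regions, uses linear independence of the boundary-indexed block vectors to obtain an invertible boundary gauge at each cut, and then compares the gauges of overlapping cuts shifted by one site to extract the per-site matrices $x_k$ — this is what makes $N\ge 2L+1$ (two regions of length $\ge L$ for every cut position) sufficient. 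Without an argument of this overlapping-cut type, your ``interlocking shifted blockings'' step is a hope rather than a proof, and the theorem under the hypothesis actually used here is not established. A smaller looseness: in the uniqueness argument you assert $z_kM=Mz_{k+1}$ for all $M\in M_D(\C)$ from a single normal tensor, but single-site matrices of a normal (non-injective) tensor do not span $M_D(\C)$; this also has to be routed through $L$-site blocks before concluding the $z_k$ are scalars.
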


Whenever we refer to MPS in the remainder of this paper we refer to normal TI MPS, if not stated differently. We call
${\cal N}_{N,D}$ the set of normal, translationally invariant MPS
with bond dimension $D$ and $N \geq 2L + 1$ sites.

\subsection{Review of results on symmetries and local transformations of MPS}
\label{sec:revMPSsym}

As mentioned before, the local symmetries of a normal MPS, $\Psi(A)$, are determined by certain cyclic structures of operators that are solely defined by its fiducial state, $A$ \cite{SaMo19}, $G_A$ (see Eq. (\ref{eq:summaryGA}). Two operators $h_0,h_1\in G_A$ with $h_i=g_i\otimes x_i\otimes y_i^T$ can be concatenated, denoted as $h_0\to h_1$, if $y_0 x_1 \propto \one$. A sequence $\{h_i\}_{i=0}^{k-1} \subseteq G_A$ of $k$ elements in $G_A$ with
 \be
 \label{concat}
 h_0\to h_1\to \ldots \to h_{k-1}\to h_0
 \ee
 is called a $k$-cycle. More explicitly, we have that the sequence $\{h_i\}_{i=0}^{N-1} \subseteq G_A$ of $N$ elements in $G_A$ form a $N$--cycle if the following conditions hold for any $0\leq k \leq N-2$: 
 \bea \label{eq:ConcatSym}
  y_k x_{k+1} \propto \one \nonumber\\
  y_{N-1} x_{0} \propto \one.
 \eea 
 We showed the following theorem.

\begin{theorem}[\cite{SaMo19}]
The local (global) symmetries of $\Psi(A)\in {\cal N}_{N,D}$ are in one-to-one correspondence with the $N$-cycles (1-cycles) in $G_{A}$.
\label{Thm1}
\end{theorem}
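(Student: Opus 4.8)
The plan is to derive the statement from the Fundamental Theorem of MPS (Theorem~\ref{thm:fund}) applied to the two presentations $A_k = A$ and $B_k = S_k^{-1} A S_{k+1}$ of the same state. A local operator $S = S_0 \otimes S_1 \otimes \cdots \otimes S_{N-1}$ is a symmetry of $\Psi(A)$ precisely when $S\ket{\Psi(A)} = \ket{\Psi(A)}$. Writing out the MPS tensor contraction, $S\ket{\Psi(A)}$ is again a normal MPS, generated by the tensors $B_k^{j} = \sum_{j'} (S_k)_{j j'} A^{j'}$; so the symmetry condition becomes the statement that the two normal MPS $\{A\}_k$ and $\{B_k\}_k$ generate the same state. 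First I would invoke Theorem~\ref{thm:fund}: this happens iff there exist regular matrices $x_0,\ldots,x_{N-1}$ (unique up to scalars, with $x_N \equiv x_0$) such that $\ket{A} = \one \otimes x_k^{-1} \otimes x_{k+1}^T \ket{B_k}$ for all $k$, equivalently $\ket{B_k} = \one \otimes x_k \otimes x_{k+1}^{-T}\ket{A}$. Substituting the definition of $B_k$, this reads $S_k \otimes \one \otimes \one \ket{A} = \one \otimes x_k \otimes x_{k+1}^{-T}\ket{A}$, i.e.
\begin{equation*}
 h_k := S_k \otimes x_k^{-1} \otimes x_{k+1}^{T} \in G_A, \qquad k = 0,\ldots,N-1,
\end{equation*}
where indices are taken mod $N$.

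Next I would match this with the definition of an $N$-cycle. Writing $h_k = g_k \otimes u_k \otimes v_k^T$ in the notation of Eq.~\eqref{eq:summaryGA}, the above gives $g_k = S_k$, $u_k = x_k^{-1}$ and $v_k^T = x_{k+1}^T$, i.e.\ $v_k = x_{k+1}$. The concatenation condition $v_k u_{k+1} \propto \one$ from Eq.~\eqref{eq:ConcatSym} then becomes $x_{k+1} x_{k+1}^{-1} = \one$, which holds automatically, and the wrap-around condition $v_{N-1} u_0 \propto \one$ becomes $x_N x_0^{-1} = x_0 x_0^{-1} = \one$. Hence any family $\{x_k\}$ arising from Theorem~\ref{thm:fund} furnishes an $N$-cycle $h_0 \to h_1 \to \cdots \to h_{N-1} \to h_0$ in $G_A$ whose qubit parts are exactly $(S_0,\ldots,S_{N-1})$. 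Conversely, given an $N$-cycle $\{h_k = g_k \otimes u_k \otimes v_k^T\}$ in $G_A$, the relations $v_k u_{k+1} \propto \one$ let me rescale the $h_k$ (using that $G_A$ is a group closed under scalar multiples of the identity acting suitably, or simply absorbing scalars) so that $v_k = u_{k+1}^{-1}$ exactly; setting $x_{k+1} := v_k$ (and checking $x_N = v_{N-1} = u_0^{-1} = x_0$ from the wrap-around) recovers matrices $x_k$ satisfying the hypothesis of Theorem~\ref{thm:fund}, so $S = g_0 \otimes \cdots \otimes g_{N-1}$ is a symmetry of $\Psi(A)$. This establishes the bijection between $N$-cycles and local symmetries; the global case is the specialization $S_0 = \cdots = S_{N-1}$, which forces $h_0 = \cdots = h_{N-1}$ up to the rescaling, i.e.\ a $1$-cycle.

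The main technical point to handle carefully is the bookkeeping of the proportionality constants: Theorem~\ref{thm:fund} fixes the $x_k$ only up to a single global scalar, while the cycle conditions \eqref{eq:ConcatSym} are stated only up to proportionality at each link, so one must check that the scalar freedoms are compatible and that rescaling an $N$-cycle to normalize every link $v_k u_{k+1} = \one$ is consistent around the loop — precisely the role of the closing condition $v_{N-1}u_0 \propto \one$, which guarantees the product of the link-scalars around the cycle is nonzero and can be distributed. A second point worth a sentence is well-definedness: two $N$-cycles related by an overall rescaling (equivalently by the residual scalar in $\{x_k\}$) give the same local operator $S$, and conversely the uniqueness clause of Theorem~\ref{thm:fund} ensures that a given $S$ determines the $\{x_k\}$, hence the cycle, uniquely up to that scalar — so the correspondence is genuinely one-to-one once cycles are considered up to rescaling (as is implicit in \cite{SaMo19}). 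I would also note in passing that normality and $N \ge 2L+1$ are used exactly once, namely to legitimately apply the Fundamental Theorem; without them $S\ket{\Psi(A)} = \ket{\Psi(A)}$ need not imply the existence of the gauge matrices $x_k$, which is why for non-normal MPS cycles produce only a subgroup of $\mathcal{S}_{\ket{\Psi}}$.
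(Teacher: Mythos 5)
Your proposal is correct and takes essentially the same route as the proof this paper relies on (Ref.~\cite{SaMo19}): apply the Fundamental Theorem (Theorem~\ref{thm:fund}) to the two presentations $A$ and $B_k = S_k\cdot A$ of the same normal state, identify the gauge matrices $x_k$ with the virtual parts of elements $h_k = S_k\otimes x_k^{-1}\otimes x_{k+1}^T\in G_A$, and note that the concatenation and closing conditions of Eq.~(\ref{eq:ConcatSym}) are precisely the gauge consistency around the ring, with the scalar bookkeeping handled as you describe. The only step left implicit is the global case: that $S_0=\cdots=S_{N-1}$ forces $x_{k+1}\propto x_k$ (hence a $1$-cycle) is not automatic from uniqueness alone, but follows in one line because for translationally invariant presentations the cyclically shifted tuple $(x_1,\ldots,x_{N-1},x_0)$ is again a valid gauge, so the uniqueness clause of Theorem~\ref{thm:fund} gives $x_{k+1}=\lambda x_k$ for a single $\lambda$ with $\lambda^N=1$.
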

The symmetry of the state corresponding to the cycle $h_0\to h_1\to \ldots \to h_{N-1}\to h_0$ is $g_0\otimes \dots \otimes g_{N-1}$.  Hence, one solely has to determine $G_A$ and find all $N$-cycles in this set to characterize the local symmetries of $\Psi(A)$. In fact, this yields the symmetries of all states in the family of normal MPS generated by $A$. In practice, it is sufficient to characterize all minimal cycles of $G_{A}$ from which all others can be obtained by concatenation. For example, a 3-cycle can always be concatenated with itself to an $N$-cycle if 3 divides $N$. A symmetry of the form $g^{\otimes N}$ is called global. The global symmetries are defined in terms of 1-cycles, and thus require that there is a regular $x$ such that $g\otimes x^{-1} \otimes x^T |A\rangle=|A\rangle$ \cite{SaMo19}.
If $g$ is unitary, this reduces to the well-known characterization of global unitary symmetries of MPS \cite{Sanz2009, Singh2010}. However, minimal cycles of length $N>1$ yield local symmetries of the TI MPS $\Psi(A)$ that are not global and which are generally not considered. 

We often consider fiducial states $\one\otimes b\otimes \one \ket{A}$. The concatenation conditions [see Eq. (\theequation)] then read
 \bea \label{eq:ConcatSymb}
  y_k b x_{k+1} b^{-1} \propto \one, \nonumber\\
  y_{N-1} b x_{0}b^{-1} \propto \one.
 \eea 

In order to characterize SLOCC transformations among normal MPS, one first notices that the corresponding fiducial states need to be SLOCC equivalent. We considered in Ref. \cite{SaMo19} the set
\begin{equation*}
    G_{A,B} = \left\{ h= g\otimes x \otimes y^T \ | \ h \ket{A} = \ket{B} \right\},
 \end{equation*}
As in the case of $G_{A}$ we can define $k$-cycles on $G_{A,B}$. Using the notation $A \Nto B$ if an $N$--partite state $A$ can be transformed via local operations into an $N$--partite state $B$, we proved the following theorem in Ref. \cite{SaMo19}.
\begin{theorem}[\cite{SaMo19}] \label{ThSlocc}$A \Nto B$ with local (global) transformations iff there exists
	 an $N$-cycle
	(1-cycle) in $G_{A,B}$.
\label{Thm2}
\end{theorem}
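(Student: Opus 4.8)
The plan is to prove this as the natural generalization of Theorem~\ref{Thm1}, which is exactly the special case $B=A$ (with $1$-cycles playing the role of global symmetries); the same argument goes through, with $G_{A,B}$ in place of $G_A$. The starting point is the elementary dictionary between acting with a product operator on an MPS and transforming its fiducial tensor: if $h=g\otimes x\otimes y^{T}$, then $h\ket{A}$ is the fiducial state of the tensor obtained from $A$ by rotating the physical index with $g$ and sending each matrix $A^{i}\mapsto xA^{i}y$; in particular $h\in G_{A,B}$, i.e.\ $h\ket{A}=\ket{B}$, is the same as $B^{j}=\sum_{i}g_{ji}\,xA^{i}y$ for every $j$. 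Throughout I assume $N\geq 2L+1$, so that the Fundamental Theorem of MPS (Theorem~\ref{thm:fund}) applies to the normal MPS involved.

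For the implication from a cycle to a transformation, suppose $h_{0}\to h_{1}\to\dots\to h_{N-1}\to h_{0}$ is an $N$-cycle in $G_{A,B}$, with $h_{k}=g_{k}\otimes x_{k}\otimes y_{k}^{T}$ and $y_{k}x_{k+1}\propto\one$ cyclically (the constants being nonzero, so each $x_{k},y_{k}$ is regular). I would substitute, in the matrix element $\tr(B^{j_{0}}\cdots B^{j_{N-1}})$ of $\ket{\Psi(B)}$, the relation attached to $h_{k}$ into the $k$-th factor, pull the matrices $g_{k}$ out of the trace, and telescope the bond part: the products $y_{k}x_{k+1}$ collapse the chain $x_{0}A^{i_{0}}y_{0}\,x_{1}A^{i_{1}}y_{1}\cdots$ down to a scalar times $x_{0}A^{i_{0}}\cdots A^{i_{N-1}}y_{N-1}$, and cyclicity of the trace together with the seam relation $y_{N-1}x_{0}\propto\one$ removes the leftover boundary matrices. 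This produces $\ket{\Psi(B)}\propto (g_{0}\otimes\dots\otimes g_{N-1})\ket{\Psi(A)}$, i.e.\ $A\Nto B$. If the cycle is (a repetition of) a $1$-cycle, all $g_{k}$ coincide and the transformation is the global $g^{\otimes N}$, the same $h$ furnishing such a transformation for every $N$. Note that this direction uses no invertibility of the $g_{k}$.

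For the converse, suppose $A\Nto B$ and absorb the proportionality constant so that $(g_{0}\otimes\dots\otimes g_{N-1})\ket{\Psi(A)}=\ket{\Psi(B)}$. The left-hand side is the MPS generated by the tensors $C_{k}$ with $\ket{C_{k}}=g_{k}\otimes\one\otimes\one\ket{A}$. Here I would first reduce to the case where the $g_{k}$ are invertible, so that each $C_{k}$ is again a normal tensor: this is automatic for SLOCC \emph{equivalence}, and for the (low-dimensional) MPS considered it also holds in general, since a singular local operator cannot map a normal MPS onto the normal MPS $\Psi(B)$. Then $\Psi(C_{0},\dots,C_{N-1})$ and $\Psi(B,\dots,B)$ are two normal MPS generating the same state, so Theorem~\ref{thm:fund} supplies regular $x_{0},\dots,x_{N-1}$ with $x_{N}\equiv x_{0}$ and $\ket{C_{k}}=\one\otimes x_{k}^{-1}\otimes x_{k+1}^{T}\ket{B}$. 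Comparing this with $\ket{C_{k}}=g_{k}\otimes\one\otimes\one\ket{A}$ and acting with $\one\otimes x_{k}\otimes(x_{k+1}^{-1})^{T}$ gives $g_{k}\otimes x_{k}\otimes(x_{k+1}^{-1})^{T}\ket{A}=\ket{B}$; thus $h_{k}:=g_{k}\otimes x_{k}\otimes(x_{k+1}^{-1})^{T}\in G_{A,B}$, and the concatenation conditions hold automatically because $(x_{k+1}^{-1})x_{k+1}=\one$ and, at the seam, $x_{N}^{-1}x_{0}=\one$. Hence $h_{0}\to\dots\to h_{N-1}\to h_{0}$ is an $N$-cycle. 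In the global case $g_{k}\equiv g$ all $C_{k}$ coincide, and uniqueness up to a scalar in Theorem~\ref{thm:fund} together with translational invariance (a cyclic shift of a valid gauge tuple is again valid) forces the $x_{k}$ to agree up to scalars; rescaling then yields a single $h=g\otimes x\otimes(x^{-1})^{T}\in G_{A,B}$ with $x^{-1}x\propto\one$, i.e.\ a $1$-cycle.

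The main obstacle is this converse direction: (i) the reduction to invertible local operators, which is what makes the Fundamental Theorem applicable — the other direction, by contrast, is a purely mechanical telescoping of a trace valid verbatim for non-invertible operators — and (ii) in the global case, shrinking the site-dependent gauge matrices down to a single one by combining the uniqueness clause with translational invariance. A secondary, routine hazard is keeping the transpose conventions straight, in particular the identity $(x^{T})^{-1}=(x^{-1})^{T}$ and the fact that $g\otimes x\otimes y^{T}\ket{A}$ implements $A^{i}\mapsto xA^{i}y$ (and \emph{not} $xA^{i}y^{T}$): this is precisely what makes the concatenation conditions appear in the form $y_{k}x_{k+1}\propto\one$.
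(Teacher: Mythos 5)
Your proposal follows the same strategy as the proof in Ref.~\cite{SaMo19}, which this paper only quotes: sufficiency of a cycle by inserting $B^{j}=\sum_i (g_k)_{ji}\,x_kA^iy_k$ into $\tr\left(B^{j_0}\cdots B^{j_{N-1}}\right)$ and telescoping via $y_kx_{k+1}\propto\one$ (indeed valid for singular $g_k$), and necessity by reading $(g_0\otimes\cdots\otimes g_{N-1})\ket{\Psi(A)}$ as the MPS with site-dependent tensors $g_kA$, extracting the gauge matrices $x_k$ from Theorem~\ref{thm:fund}, and assembling $h_k=g_k\otimes x_k\otimes(x_{k+1}^{-1})^T\in G_{A,B}$ into an $N$-cycle; your treatment of the global case (uniqueness of the gauges up to one constant, combined with the cyclically shifted gauge tuple) is also the standard argument and is carried out correctly, including the transpose bookkeeping.

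The one step you assert rather than prove is the reduction to invertible $g_k$ in the converse, and this is precisely the delicate point, since the theorem explicitly covers non-regular operators (see the remark following it in the paper). For $d=2$ your claim is true, but the short argument is missing: a singular nonzero $g_k$ has rank one, so the transformed state is a product across site $k$; if the normal state $\Psi(B)$ were such a product, then, because the products $B^{j_1}\cdots B^{j_{N-1}}$ with $N-1\geq 2L$ span all of $M_D$, the factorization of $\tr(B^{j_0}M)$ would force every $B^{j_0}$ to be proportional to one fixed matrix, contradicting injectivity for $D\geq 2$. As a blanket statement, however, ``a singular local operator cannot map a normal MPS onto a normal MPS'' is false for $d\geq 3$ (e.g.\ an operator identifying two physical levels can map a normal tensor onto a normal tensor), so in the generality in which the theorem is stated your reduction does not establish the result. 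Note that invertibility of $g_k$ enters your argument only through normality of the intermediate tensors $g_kA$: if $g_k$ is singular but $g_kA$ is still normal, your derivation applies verbatim; the genuinely open case is a non-normal intermediate tensor $g_kA$ whose combination with the others nevertheless produces the normal $\Psi(B)$. Covering that case needs a strengthening of Theorem~\ref{thm:fund} in which only one of the two tensors is assumed normal (cf.\ the results of \cite{Molnar2018} on which \cite{SaMo19} relies), or some other argument excluding it—this is the content hidden behind the paper's remark that the transforming operators need not be regular.
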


In this theorem the operators which transform $A$ to $B$ are not necessarily regular. Here, we focus on SLOCC transformations, i.e. on invertible matrices on the physical (as well as the virtual) systems. As shown in \cite{SaMo19}, in order to solve the problem of SLOCC--equivalence (and also the symmetries), it is sufficient to consider fiducial state of the form 
$\ket{A_b}=\one\otimes b\otimes \one \ket{A}$, where $\ket{A}$ denotes a representative of the SLOCC class of the fiducial states.  Let us briefly recall the reason for that. First, it is clear that the two fiducial states corresponding to SLOCC--equivalent normal TIMPS must be SLOCC--equivalent. Second, $g^{\otimes N} \Psi(A)$ is obviously SLOCC--equivalent to $\Psi(A)$ and therefore the operator on the qubit system does not need to be taken into account. And third, due to the fundamental theorem (see Theorem \ref{thm:fund}), an operator on the third system can be mapped to an operator on the second. Clearly, the same argument applies when considering local symmetries. 

According to Theorem \ref{ThSlocc}, two normal TIMPS corresponding to the fiducial states $\ket{A_b}$, $\ket{A_c}$, respectively, are SLOCC--equivalent iff there exists an $N$-cycle in $G_{A_b,A_c}$ (or, equivalently, in $G_{A_c,A_b}$). Using that  $G_{A_b,A_c}= (\one \otimes c\otimes \one)G_A (\one\otimes b^{-1}\otimes \one)$ the existence of such a cycle can be formulated in terms of the symmetries of the fiducial state representing the SLOCC class as follows. The operators $h_0,h_1\in G_{A}$, with $h_i = g_i \otimes x_i \otimes y_i^T$, are called $(b \to c)$-concatenated, if $y_0 b x_1 \propto c$. In this case we write $h_0 \xrightarrow{b\to c} h_1$. A sequence $\{h_i\}_{i=0}^{k-1} \subseteq G_{A}$ is called a $(b\to c)$-$k$-cycle if
  \begin{equation}\label{eq:bc_cycle}
    h_0 \xrightarrow{b\to c} h_1 \xrightarrow{b\to c} \dots \xrightarrow{b\to c} h_{k-1} \xrightarrow{b\to c} h_0.
  \end{equation}
Stated explicitly, $\{h_i\}_{i=0}^{N-1} \subseteq G_{A}$ is a $(b\to c)$-$N$-cycle if the following concatenation rules are fulfilled for any $k$ such that $0\leq k\leq N-2$
  \bea\label{eq:bc_cycleExpl}
  y_k b x_{k+1} \propto c \\
    y_{N-1} b x_{0} \propto c.
  \eea

As in the case of symmetries it might be possible that $G_{A,B}$ only contains $k$-cycles with $k \geq 2$. Then $\Psi(A) \to \Psi(B)$ only holds if $k$ divides $N$ and the corresponding SLOCC operator is not global, i.e., not of the form $g^{\otimes N}$. Note that Theorem \ref{Thm2} was used in \cite{SaMo19} to characterize all SLOCC classes of normal MPS.

\subsection{The fiducial states of TIMPS with physical dimension $d=2$ and bond dimension $D=3$}
\label{sec:revFiducial}

As we reviewed above, the symmetries and the SLOCC classes of TIMPS can be characterized by considering sets of operators, which are determined by the three--partite fiducial states. 

The fiducial states of TIMPS with physical dimension $d$ and bond dimension $D$ are $d\times D\times D$ states. We focus on the case $d=2$ and $D=3$ and discuss extensions of the results presented here in the appendix. For $d=2$ (and arbitrary $D$) the SLOCC classes have been determined using the theory of Matrix Pencils (MPs) ~\cite{Kronecker, Chitambar2010, HeGa18}. As this theory is also well suited to determine the symmetries of the states, we briefly review it here. To the three-partite state given in Eq. (\ref{Eq:fid}) we associate the homogeneous matrix polynomial (matrix pencil),
\begin{align}
\label{def:pencils}
 \mathcal{P}_{A} \equiv P_{(A^0,A^1)} \equiv \mu A^0 + \lambda A^1
\end{align}
where $\mu, \lambda$ are complex variables and $A^0,A^1 \in \C^{D \times D'}$. In Ref. \cite{Kronecker} Kronecker showed that every matrix pencil is strictly equivalent to a matrix pencil in so-called \emph{Kronecker Canonical Form (KCF)}, which is characterized by a set of invariants (e.g., the finite and infinite eigenvalues, eigenvalue size signatures and minimal indices). Stated differently, for each matrix pencil $\mathcal{P}_{A}$ there exist regular matrices, $x,y$ such that $x \mathcal{P}_{A} y$ has canonical form. The KCF together with the results presented in \cite{Chitambar2010, HeGa18} can be used to determine both, the SLOCC classes of the states as well as the symmetries of the state, as we will show in the following.\\

As shown in \cite{Chitambar2010} there is always an operation on the qubit that transforms a $2 \times D \times D$ state $A$ into a state whose matrix pencil has only finite eigenvalues \cite{Chitambar2010}. Furthermore, the operation on the qubit cannot change the structure of the matrix pencil, but only its eigenvalues.

More precisely, the action of 
\begin{equation*} 
  w = \begin{pmatrix}
      \alpha & \beta \\
      \gamma & \delta
     \end{pmatrix} \in GL(2,\C)
 \end{equation*}
on the qubit changes the eigenvalues of the resulting MP from $\{x_i\}$ to 
\begin{equation*}
  x_i'=\frac{\alpha x_i + \beta}{\gamma x_i + \delta}. 
\end{equation*}
  
Using all that it is then easy to see that the six SLOCC classes of $2\times 3\times 3$ systems are represented by the following states: 

\begin{enumerate}[(i)]
\item $\ket{ M(\omega) } = \ket{0}(\ket{00} + \omega \ket{11}+ \omega^2 \ket{22}) + \ket{1}(\ket{00}+\ket{11}+\ket{22})$, with corresponding MP $M^1(1)\oplus M^1(\omega)\oplus M^1(\omega^2)$, i.e., a MP with three distinct eigenvalues 
\item $\ket{D}=\ket{0}(D\otimes \one)(\ket{00}+\ket{11}+\ket{22}) + \ket{1} (\ket{00}+\ket{11}+\ket{22})$, with $D$ a diagonal matrix with degenerate eigenvalues, which corresponds to a MP with degenerate eigenvalues (disregarding biseparable states, there must be one eigenvalue with degeneracy 1 and one eigenvalue with (algebraic and geometric) multiplicity 2.)
\item  $\ket{J} = \ket{0}(J\otimes \one)(\ket{00}+\ket{11}+\ket{22}) + \ket{1} (\ket{00}+\ket{11}+\ket{22})$, with $J$ a non-diagonalizable matrix in Jordan normal form, which corresponds to a MP with degenerate eigenvalues (this case comprises three distinct SLOCC classes)
\item  $\ket{LLT} = \ket{0}(\ket{01} + \ket{22}) +  \ket{1}(\ket{00} + \ket{12})$, with MP $L_1 \oplus L_1^T$. In this case the MP does not have any eigenvalue.
\end{enumerate}

We will mainly focus here on the cases (i) and (iv) and will discuss the symmetries of the remaining cases in Appendix \ref{sec:otherSLOCCclasses}.

Let us already mention here that in order to determine the symmetries in case (i)-(iii), one first has to ensure that the eigenvalues are at most permuted by the action on the qubit and then choose the operators $x,y$ such that the state is again transformed into KCF \footnote{Such a transformation is then always possible.}. This leads to the following Lemma (see \cite{ChMi10}).

\begin{lemma}
 \label{cor:sym2DD}
Let $A$ be a $2\times D \times D$ state with only finite eigenvalues, $\{x_i\}_{i=1}^l$. Then $G_A$ is characterized as follows. For an invertible matrix $g \in GL(\C,2)$ on the qubit, with
  \begin{align}
  g = \begin{pmatrix}
      \alpha & \beta \\
      \gamma & \delta
     \end{pmatrix}, \label{eq:smat}
 \end{align}
  there exist invertible matrices $x,y \in GL(\C,D)$ on the qudits such that $s \otimes x \otimes y^T \in G_A$ iff there exists a permutation $\sigma \in S(l)$ of the eigenvalues $\{x_i\}_{i=1}^l$ such that
  \begin{align} \label{eq:EigenvaluesMPsym}
  \frac{\alpha x_i + \beta}{\gamma x_i + \delta} = x_{\sigma(i)} \ \forall i 
 \end{align}
 and such that $\sigma$ only permutes eigenvalues of matching multiplicities (i.e., $x_i$ and $x_{\sigma(i)}$ have coinciding size signatures for all $i$).
\end{lemma}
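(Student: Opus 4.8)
The plan is to reduce the statement to a direct application of the theory of strict equivalence of matrix pencils together with the explicit description of how a qubit operator $g$ acts on the eigenvalues of the associated matrix pencil. The key observation is that $s \otimes x \otimes y^T \in G_A$ precisely means $x \mathcal{P}_{A} y = \mathcal{P}_{g^{-T} A}$ (up to the appropriate identification of how the qubit operator transforms the pair $(A^0,A^1)$, which is the content of the formula $x_i' = (\alpha x_i + \beta)/(\gamma x_i + \delta)$ recalled just above the Lemma). In other words, the pencil of $g\cdot A$ must be strictly equivalent to the pencil of $A$. Since $\mathcal{P}_A$ has only finite eigenvalues and, being a $2\times D\times D$ fiducial state with the structure coming from the KCF classification, its Kronecker canonical form consists only of Jordan blocks $M^{k}(x_i)$ attached to the finite eigenvalues $\{x_i\}$, the full set of invariants under strict equivalence is exactly the multiset of eigenvalues together with their size signatures (the list of Jordan block sizes at each eigenvalue).

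First I would spell out the ``only if'' direction: if $s\otimes x\otimes y^T\in G_A$, then applying $g$ on the qubit produces a state whose pencil is obtained from $\mathcal{P}_A$ by the Möbius action on eigenvalues, $x_i \mapsto x_i' = (\alpha x_i+\beta)/(\gamma x_i+\delta)$, while leaving the block structure (the size signatures) untouched --- this is exactly the two facts quoted from \cite{Chitambar2010} right before the Lemma. On the other hand, the existence of $x,y$ with $x\otimes y^T$ mapping $g\cdot A$ back to $A$ says that these two pencils are strictly equivalent, hence have the same eigenvalues with the same size signatures. Combining, the multiset $\{x_i'\}$ equals the multiset $\{x_i\}$, and the bijection realizing this equality must preserve size signatures; calling that bijection $\sigma\in S(l)$ gives precisely Eq.~\eqref{eq:EigenvaluesMPsym} together with the matching-multiplicity condition. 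For the ``if'' direction I would run the argument backwards: given $\sigma$ with $(\alpha x_i+\beta)/(\gamma x_i+\delta) = x_{\sigma(i)}$ preserving size signatures, the pencil of $g\cdot A$ has the same KCF as $\mathcal{P}_A$ (same eigenvalues, permuted, same block sizes), so by Kronecker's theorem there exist regular $x,y$ with $x\,\mathcal{P}_{g\cdot A}\,y$ in KCF equal to $\mathcal{P}_A$ in KCF, which after absorbing the KCF-normalizing transformations yields $s\otimes x\otimes y^T\in G_A$.

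The main obstacle --- or at least the one point requiring care rather than routine invocation --- is the bookkeeping of how an operator on the qubit translates into the transformation $\mu A^0 + \lambda A^1 \mapsto \mu (A^0)' + \lambda (A^1)'$ and the verification that this really is just the Möbius action on the homogeneous pair $(\mu,\lambda)$ (equivalently on the eigenvalue ratios), with no change to the minimal indices or block structure. Since we are in the regime where the pencil has no minimal indices and no infinite eigenvalues (case (i)--(iii) have only finite eigenvalues after the preliminary qubit operation, and the $L_1\oplus L_1^T$ case is excluded by hypothesis), this reduces to the elementary statement that $g$ acts projectively on the variables $(\mu,\lambda)$, which is already recorded in the text. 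A secondary subtlety is that a priori $\sigma$ need only be a bijection between the eigenvalue multisets, but once one notes the size signature is a strict-equivalence invariant attached to each eigenvalue, any such $\sigma$ automatically respects it; I would state this explicitly. Everything else is a direct citation of Kronecker's theorem and the results of \cite{Chitambar2010, HeGa18, ChMi10} already invoked in the surrounding text, so no lengthy computation is needed.
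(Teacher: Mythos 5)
Your proposal is correct and follows essentially the same route the paper takes: it treats the symmetry condition as strict equivalence of $\mathcal{P}_A$ with the pencil of the qubit-transformed state, uses the M\"obius action of $g$ on the eigenvalues, and invokes the Kronecker canonical form (eigenvalues plus size signatures as the complete invariants) exactly as in the discussion preceding the Lemma and the cited result of Chitambar, Miller, and Shi. No gap to report; your explicit handling of the bookkeeping of the qubit action and of the signature-preservation of $\sigma$ just spells out what the paper leaves to the citation.
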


Let us from now on refer to $g\in GL(\C,2)$ in $g \otimes x \otimes y^T \in G_A$ as a qubit symmetry and to $x,y$ as qudit symmetries. From Lemma \ref{cor:sym2DD} we have that for a given fiducial state, $A$, the qubit symmetry can be easily determined via Eq. (\ref{eq:EigenvaluesMPsym}). The corresponding qudit symmetry $x$ and $y$ can then be computed as explained above (see, e.g., Ref. \cite{Gantmacher}).

In case the matrix pencil does not possess any eigenvalue (see case (iv)), it has been shown in \cite{HeGa18} that for any operator $g$ there exist operators $x,y$ such that $g \otimes x \otimes y^T \in G_A$.

Note that the symmetry group of the fiducial state (as of any state) is generated by symmetries of the form $\identity \otimes B \otimes C$ as well as by symmetries $g \otimes B_g \otimes C_g$, for predefined operators $B_g$ and $C_g$.

In the subsequent sections we will use the results reviewed here to determine the symmetries of the fiducial states, which we then use to determine the symmetries and the SLOCC classes of the corresponding TIMPS.

\section{Symmetries and SLOCC classes of the TIMPS $\Psi(M(\omega))$}
%\textbf{non-degenerate, diagonal case}
\label{sec:M10M11M1inf}

We first determine the symmetries of the fiducial state using MP-theory and then use the results summarized above to determine the symmetries of the corresponding MPS. As mentioned before, a representative of this SLOCC class is the state $\ket{ M(\omega) } = \ket{0}(\ket{00}+\omega \ket{11}+ \omega^2 \ket{22}) + \ket{1}(\ket{00}+\ket{11}+\ket{22})$, where $\omega = e^{\frac{i 2 \pi}{3}}$. The corresponding matrix pencil reads
\begin{align*}
\mathcal{P} &= M^1(1) \oplus M^1(\omega) \oplus M^1(\omega^2) \\ 
&= \begin{pmatrix}\mu + \lambda & 0 &0\\0& \omega \mu + \lambda   & 0\\ 0 & 0 & \omega^2 \mu + \lambda \end{pmatrix}.
\end{align*}

Let us remark that an alternative representative is the state $\ket{0}(\ket{11}+\ket{22}) + \ket{1}(\ket{00}+\ket{11})$. We consider the alternative representative when discussing normality in Appendix \ref{app:M10M11M1infnormality}, as it leads to a more sparsely populated tensor $A$. Here, we will stick to the representative $\ket{ M(\omega) }$ though, as the group structure of the local symmetries of the generated TIMPS will appear more natural for this representative.

\subsection{Symmetries of the fiducial state}
In this subsection we discuss the symmetries of the fiducial state $\ket{ M(\omega) }$.

As any symmetry of the form  $\identity \otimes B \otimes C$ must fulfill that $B\mathcal{P} C^T \propto \mathcal{P}$ (see Sec \ref{sec:prelim}), we obtain for any such symmetry 
(choosing a convenient normalization) that $B = \operatorname{diag}(1,B_{11},B_{22})$,  $C = \operatorname{diag}(1,1/B_{11},1/B_{22})$.

The matrix pencil has three distinct eigenvalues, $(x_0, x_1, x_2) = (1, \omega, \omega^2)$. It follows that there is a discrete set of six operators $g$ appearing as the first local operator in the symmetries of the fiducial state. These operators correspond to all possible permutations of the eigenvalues of the matrix pencil. We will index these operators by $\sigma \in S_3$, where $\sigma$ describes the permutation of the eigenvalues. We will use the notation $\sigma = \begin{pmatrix}0 & 1 & 2\\ \sigma(0) & \sigma(1) & \sigma(2) \end{pmatrix} = (\sigma(0), \sigma(1), \sigma(2))$. Moreover, we use permutation matrices
\begin{align*}
P_\sigma = \sum_i \ket{\sigma_i}\bra{i}.
\end{align*}
Defining
\begin{align}
\label{eq:diag3_A}
&{g_{(0,1,2)}} = \identity,\ {g_{(0,2,1)}} =\begin{pmatrix}0 &  1 \\1  &0 \end{pmatrix}, \nonumber\\ & {g_{(2,1,0)}} =\begin{pmatrix}0 &  \omega^2 \\1  & 0 \end{pmatrix}, \ {g_{(1,0,2)}} =\begin{pmatrix}0 &  \omega \\1  & 0 \end{pmatrix}, \nonumber\\  &{g_{(2,0,1)}} =\begin{pmatrix}\omega &  0 \\0  & 1 \end{pmatrix}, \ {g_{(1,2,0)}} =\begin{pmatrix}\omega^2 &  0 \\0  & 1 \end{pmatrix} 
\end{align}
we find symmetries ${g} \otimes B_{g} \otimes C_{g}$ of $\ket{ M(\omega) }$ for
\begin{align}
\label{eq:EigenvaluesMPsymM10M11M1inf}
{g} &= g_\sigma \nonumber \\
B_{g} &= P_\sigma^{-1} D_{\sigma}^{-1}\nonumber \\
C_{g}^T &= P_\sigma,
\end{align}
where $D_\sigma = \begin{cases}\identity & \operatorname{sgn}(\sigma)=1 \\ \operatorname{diag}(1, \omega,\omega^2) & \operatorname{sgn}(\sigma)=-1\end{cases}$, where $\sigma \in S_{3}$, and where $\operatorname{sgn}$ denotes the signum of $\sigma$. We will refer to $\sigma \in \{(0,2,1),(2,1,0),(1,0,2)\}$ ($\sigma$ with $\operatorname{sgn}(\sigma)=-1$) as transpositions and to  $\sigma \in \{(1,2,0),(2,0,1)\}$ ($\sigma$ with $\operatorname{sgn}(\sigma)=1$, but $\sigma \neq \identity$) as cyclic permutations of length 3.
The symmetry group of the state $\ket{ M(\omega) }$ is given by (see Section \ref{sec:prelim})
\begin{align}
\label{eq:symM10M11M1inf}
{g} \otimes B_{g} B \otimes C_{g} C.
\end{align}
 
We will use these symmetries to determine the local symmetries of the corresponding (normal) TIMPS. As mentioned above, to determine then the TIMPS which are SLOCC equivalent it is sufficient to consider the fiducial states of the form $\identity \otimes b \otimes \identity \ket{ M(\omega) }$. The tensor $A_b$ associated to this state reads
\begin{align}
\label{eq:M10M11M1inftensor}
A_b^0 = \begin{pmatrix} b_{00}& b_{01} & b_{02} \\  \omega b_{10} & \omega b_{11} &  \omega b_{12}\\ \omega^2 b_{20} & \omega^2 b_{21}& \omega^2 b_{22}\end{pmatrix}, \ 
A_b^1 = b.
\end{align}

The corresponding symmetries are obviously of the form
\begin{align*}
{g} \otimes b B_{g} B b^{-1} \otimes C_{g} C.
\end{align*}

\subsection{Local symmetries of the TIMPS \texorpdfstring{$\Psi(M(\omega))$}{Psi(Momega)}}

Let us now characterize the symmetries of normal TIMPS generated by $\identity \otimes b \otimes \identity \ket{ M(\omega) }$. To ease notation, we denote the local symmetry group of this fiducial state by $G_b$ throughout this whole section. As explained in the preliminaries, to identify the symmetries of the TIMPS, we first characterize all possible $N$-cycles $({g}_{\sigma_k})_{k=0}^{N-1}$ in $G_b$. After identifying all possible cycles, we  characterize all normal TIMPS (i.e. all $b$) admitting each of the identified cycles.

Using the symmetry of the representative $\ket{ M(\omega) }$, $x= B_{g} B$, $y=\left( C_{g} C \right)^T$ [see Eq. (\ref{eq:symM10M11M1inf})] and inserting in the concatenation conditions for $\identity \otimes b \otimes \identity \ket{ M(\omega) }$ [see Eq. (\ref{eq:ConcatSymb})] we obtain 
 \begin{align}
\label{eq:lhsrhsM10M11M1inf}
b   P_{\sigma_{k+1}}^{-1} B_{k+1}  D_{\sigma_{k+1}}^{-1}   b^{-1} \propto P_{\sigma_k}^{-1}  B_k,
\end{align} 
where $B_k$ are arbitrary diagonal matrices stemming from the symmetries of the form  $\identity \otimes B \otimes C$, and $P_{\sigma_k}$ as well as $D_{\sigma_k}$ stem from the symmetries ${g} \otimes B_{g} \otimes C_{g}$ as in Eq. (\ref{eq:EigenvaluesMPsymM10M11M1inf}).

Note that Eq. (\theequation) comprises a similarity transformation among so-called \emph{monomial matrices}, which are also called \emph{generalized permutation matrices}. These are (invertible) matrices that can be written as a product of a permutation matrix and a diagonal matrix.
In the following, we use the Fourier transform $\mathcal{F} = \begin{pmatrix}1 & 1 & 1 \\ 1 & \omega & \omega^2 \\ 1 & \omega^2 & \omega \end{pmatrix}$, as well as Fourier transforms acting on subspaces $\mathcal{F}_{01} = \begin{pmatrix}1 & 1 & 0 \\ 1 & -1 & 0 \\ 0 & 0 & 1 \end{pmatrix}$ ($\mathcal{F}_{02}$ and $\mathcal{F}_{12}$ analogously).
\begin{observation}
\label{obs:genpermutation3}
Let $P_\sigma$ be a permutation matrix of dimension 3 and $D=\operatorname{diag}(d_0,d_1,d_2)$, $d_i \in \mathbb{C}\setminus\{0\}$. Then the eigenvalues and eigenvectors of $P_\sigma D$ can be determined as follows. 
In case $\sigma$ is trivial, the eigenvalues are the entries of $D$ and the eigenvectors are the computational basis vectors.
In case $\sigma$ is a 3-cycle, the eigenvalues read $d$, $d\omega$, $d\omega^2$, where $d=(d_0 d_1 d_2)^{1/3}$. The eigenvectors are given by $\tilde{D} \mathcal{F}$ with $\tilde{D} = \operatorname{diag}(\tilde{d}_0,\tilde{d}_1,\tilde{d}_2)$, where $\tilde{d}$ may be determined via the recurrence relation $\tilde{d}_{\sigma(i)} = \tilde{d}_{i}d_{i}/d$.
Finally, in case $\sigma$ is the transposition $(0,2,1)$, the eigenvalues read $d_{0}$ and $\pm \sqrt{d_{1}d_{2}}$, and similarly for the remaining transpositions. The eigenvectors are given by  $\tilde{D} \mathcal{F}_{12}$ with $\tilde{D} = \operatorname{diag}(1,\sqrt{d_2},\sqrt{d_1})$, and similarly for the remaining transpositions.
\end{observation}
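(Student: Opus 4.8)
The plan is to dispatch the three cases separately, in each exhibiting an explicit diagonal conjugation that reduces $P_\sigma D$ to a scalar multiple of a canonical permutation matrix whose spectral decomposition is classical. Throughout I would use that $P_\sigma = \sum_i \ket{\sigma_i}\bra{i}$ gives $P_\sigma D = \sum_i d_i \ket{\sigma_i}\bra{i}$, so that $(P_\sigma D)\ket{j} = d_j\ket{\sigma_j}$. For trivial $\sigma$ this operator is $D$ itself, which yields the first claim immediately.

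For a $3$-cycle $\sigma$, I would conjugate by a diagonal $\tilde D = \operatorname{diag}(\tilde d_0,\tilde d_1,\tilde d_2)$ with nonzero entries: a one-line computation gives $\tilde D^{-1}(P_\sigma D)\tilde D = \sum_i \frac{d_i\tilde d_i}{\tilde d_{\sigma_i}}\ket{\sigma_i}\bra{i}$. Demanding that all coefficients equal a common scalar $d$ is exactly the recurrence $\tilde d_{\sigma(i)} = \tilde d_i d_i/d$; iterating it once around the length-$3$ cycle forces the compatibility condition $d^3 = d_0d_1d_2$, and conversely any cube root $d$ of $d_0d_1d_2$ gives a valid solution $\tilde d$ (unique up to an overall scalar). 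With such a choice $P_\sigma D = d\,\tilde D P_\sigma \tilde D^{-1}$, where $P_\sigma$ is now a cyclic shift on three elements, which is diagonalized by the Fourier matrix $\mathcal F$ with eigenvalues $\{1,\omega,\omega^2\}$; since the two $3$-cycles commute, the columns of $\mathcal F$ are a common eigenbasis for both. Hence $P_\sigma D$ has eigenvalues $\{d,d\omega,d\omega^2\}$ with eigenvectors the columns of $\tilde D\mathcal F$, as claimed.

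For a transposition, say $\sigma = (0,2,1)$, the vector $\ket 0$ is already an eigenvector with eigenvalue $d_0$, while on $\operatorname{span}\{\ket1,\ket2\}$ the operator acts by $\ket1\mapsto d_1\ket2$, $\ket2\mapsto d_2\ket1$, an anti-diagonal $2\times 2$ block with eigenvalues $\pm\sqrt{d_1d_2}$. Conjugating this block by $\operatorname{diag}(\sqrt{d_2},\sqrt{d_1})$ turns it into $\sqrt{d_1d_2}$ times the flip matrix, which is diagonalized by the $2\times2$ Fourier block; undoing the conjugation shows the eigenvectors of $P_\sigma D$ are the columns of $\tilde D\mathcal F_{12}$ with $\tilde D = \operatorname{diag}(1,\sqrt{d_2},\sqrt{d_1})$. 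The remaining transpositions follow by the corresponding relabeling of $\{0,1,2\}$.

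All of this is elementary linear algebra, so I do not expect a genuine obstacle; the only point needing care is the compatibility condition for the recurrence in the $3$-cycle case — that traversing the cycle once reproduces the starting value precisely when $d^3 = d_0d_1d_2$ — together with the bookkeeping of which branch of the cube/square root is chosen. Since the statement asks only for the multiset of eigenvalues and an eigenbasis, any consistent branch choice works.
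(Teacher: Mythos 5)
Your argument is correct, and all the computations check out: the identity $\tilde D^{-1}(P_\sigma D)\tilde D=\sum_i \frac{d_i\tilde d_i}{\tilde d_{\sigma(i)}}\ket{\sigma(i)}\bra{i}$ is right, the consistency condition around the $3$-cycle is exactly $d^3=d_0d_1d_2$, and the transposition block $\begin{pmatrix}0&d_2\\ d_1&0\end{pmatrix}$ is indeed conjugated by $\operatorname{diag}(\sqrt{d_2},\sqrt{d_1})$ into $\sqrt{d_1d_2}\,\sigma_x$, reproducing the stated eigenvectors $\tilde D\mathcal F_{12}$. It is worth noting how this compares with the paper: the paper states Observation~\ref{obs:genpermutation3} without proof, regarding it as elementary, and the only related argument it supplies is the proof of Lemma~\ref{lemma:genpermutation} for general bond dimension, which restricts $P_\sigma D$ to the invariant subspace of each cycle of $\sigma$ and reads off the eigenvalues from the characteristic polynomial $\lambda^{\operatorname{length}(\pi_i)}-d_i^{\operatorname{length}(\pi_i)}=0$; that route gives the spectrum only. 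Your explicit diagonal-conjugation reduction to $d\,P_\sigma$ (respectively $\sqrt{d_1d_2}\,\sigma_x$ on the $2\times2$ block), followed by Fourier diagonalization, buys slightly more: it verifies the eigenvector formulas $\tilde D\mathcal F$ and $\tilde D\mathcal F_{12}$ claimed in the observation, which the characteristic-polynomial argument does not address, and it makes transparent why the recurrence $\tilde d_{\sigma(i)}=\tilde d_i d_i/d$ appears. The only point you flag yourself—branch choices of the cube and square roots—is indeed immaterial, since any consistent choice yields an invertible eigenvector matrix and the same multiset of eigenvalues; even in degenerate cases (e.g.\ $d_0=\pm\sqrt{d_1d_2}$) the displayed columns remain a linearly independent eigenbasis because $\tilde D\mathcal F_{12}$ is invertible.
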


As an immediate consequence, considering Eq. (\theequation), we observe that if $\sigma_k$ is a transposition for some $k$, then  $\sigma_l$ cannot be a three-cyclic permutation for any $l$ (and vice versa), due to the mismatch in the eigenvalues of the right-hand side and the left-hand side of Eq. (\theequation).
\begin{observation}
\label{obs:permutationsgnfixed}
Any cycle in  $G_b$ involving a transposition, $\sigma_k$, cannot involve a cyclic permutation of length 3, $\sigma_l$, and vice versa.
\end{observation}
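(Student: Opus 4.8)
The plan is to deduce this directly from Observation \ref{obs:genpermutation3} together with the concatenation identity \eqref{eq:lhsrhsM10M11M1inf}. The key point is that Eq.~\eqref{eq:lhsrhsM10M11M1inf} asserts a \emph{similarity transformation} (up to a scalar) between the monomial matrix $b P_{\sigma_{k+1}}^{-1} B_{k+1} D_{\sigma_{k+1}}^{-1} b^{-1}$ and the monomial matrix $P_{\sigma_k}^{-1} B_k$. Similar matrices have, up to a common scalar factor, the same spectrum (with multiplicities), so the eigenvalue \emph{pattern} of $P_{\sigma_k}^{-1} B_k$ must match that of $P_{\sigma_{k+1}}^{-1} B_{k+1} D_{\sigma_{k+1}}^{-1}$ for every $k$ in the cycle (indices mod $N$).

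First I would record, using Observation \ref{obs:genpermutation3}, the qualitative ``shape'' of the spectrum of a generalized permutation matrix $P_\sigma D$ in dimension $3$ according to the cycle type of $\sigma$: if $\sigma$ is trivial, the three eigenvalues are generic (three independent entries $d_0,d_1,d_2$); if $\sigma$ is a $3$-cycle, the three eigenvalues are of the form $d,\,d\omega,\,d\omega^2$, i.e.\ an equilateral triangle centred at the origin, in particular they sum to zero and all have equal modulus; if $\sigma$ is a transposition, the spectrum is $\{e,\,f,\,-f\}$ for some $e,f$, i.e.\ it contains a pair $\{f,-f\}$ summing to zero together with one unconstrained eigenvalue $e$. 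Note that $P_{\sigma}^{-1}B$ has $\sigma^{-1}$ as its underlying permutation, which has the same cycle type as $\sigma$, so these shapes apply equally to the matrices appearing on both sides of \eqref{eq:lhsrhsM10M11M1inf}.

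Next I would argue by contradiction. Suppose a cycle $(\sigma_k)_{k=0}^{N-1}$ contains at least one transposition and at least one $3$-cycle. Since the cycle is cyclic in $k$, there must be an index $k$ where a transposition is ``adjacent'' to a $3$-cycle, i.e.\ one of $\sigma_k,\sigma_{k+1}$ is a transposition and the other is a $3$-cycle (formally: walking around the cycle, the cycle-type cannot jump from transposition to $3$-cycle without a consecutive pair where both types occur; if every transposition is separated from every $3$-cycle only by trivial permutations, one can still find a consecutive pair of differing non-trivial types by relabelling, or more simply observe that along the cyclic sequence the first time the ``$3$-cycle present'' and ``transposition present'' regions meet yields such adjacency). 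For that $k$, \eqref{eq:lhsrhsM10M11M1inf} forces the (trace-zero, equal-modulus) $3$-cycle spectrum to be a scalar multiple of the transposition spectrum $\{e,f,-f\}$. A scalar multiple of $\{e,f,-f\}$ still has two entries summing to zero and one entry whose modulus need not match; but the $3$-cycle spectrum $\{d,d\omega,d\omega^2\}$ has \emph{no two entries summing to zero} (since $1+\omega,\,1+\omega^2,\,\omega+\omega^2$ are all nonzero for $\omega=e^{2\pi i/3}$) unless $d=0$, which is impossible as the matrix is invertible. This contradiction establishes the claim; the symmetric statement (a cycle with a $3$-cycle cannot contain a transposition) is the same assertion.

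The main obstacle I anticipate is the small bookkeeping step of turning ``the cyclic sequence contains both types'' into ``the cyclic sequence contains a consecutive pair of the two types,'' since trivial permutations $\sigma_k=\identity$ may be interspersed. This is handled cleanly by noting that if $\sigma_k$ is trivial then \eqref{eq:lhsrhsM10M11M1inf} makes $P_{\sigma_k}^{-1}B_k = B_k$ diagonal, so its spectrum is the generic triple $\{(B_k)_{00},(B_k)_{11},(B_k)_{22}\}$; but this must simultaneously be a scalar multiple of the spectrum of the \emph{neighbour} $P_{\sigma_{k+1}}^{-1}B_{k+1}D_{\sigma_{k+1}}^{-1}$ (shape determined by $\sigma_{k+1}$) and — reading the equation with index $k-1$ — a scalar multiple of the spectrum coming from $\sigma_{k-1}$. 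Hence the constraint propagates: a trivial $\sigma_k$ does not ``reset'' the spectral shape, so a $3$-cycle appearing anywhere and a transposition appearing anywhere in the same cyclic sequence are ultimately forced to share (a scalar multiple of) a spectrum, reducing to the contradiction above. Everything else is the elementary observation that $1+\omega^j\neq 0$ for $j\in\{1,2\}$.
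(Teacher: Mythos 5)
Your proposal is correct and follows essentially the same route as the paper: both invoke Observation \ref{obs:genpermutation3} to read off the spectral shape of the monomial matrices in Eq.~(\ref{eq:lhsrhsM10M11M1inf}) (a $\pm$ pair for transpositions versus a $\{d,d\omega,d\omega^2\}$ triple for 3-cycles), propagate this shape through intervening trivial $\sigma_q$ using $D_{\sigma_q}=\identity$, and derive a contradiction from the incompatibility of the two shapes under a common scalar factor. Your initial ``adjacency'' claim is stated a bit loosely, but the closing paragraph handles the interspersed-identity case exactly as the paper does, so the argument is complete.
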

\begin{proof}
We make use of Observation \ref{obs:genpermutation3}. Let us assume that $P_{\sigma_k}$ is a transposition. Then, two eigenvalues of $P_{\sigma_k}^{-1} B_k$ differ by multiplication with $-1$. However, if $P_{\sigma_l}$ is a cyclic permutation of length 3, then any pair of eigenvalues of $P_{\sigma_l}^{-1} B_l$ differs by multiplication with $\omega=e^{i\frac{2 \pi}{3}}$ (or $\omega^2$). Suppose wlog. that $l>k$ and that $\sigma_q$ is identity for $q \in \{k+1, \ldots, l-1\}$ (one may always find such a sequence within any potential cycle involving a transposition as well as a cyclic permutation of legnth 3). Recall that $D_{\sigma_q}=\identity$ unless $\sigma_q$ is a transposition. Then, due to Eq. (\ref{eq:lhsrhsM10M11M1inf}), the eigenvalues of $P_{\sigma_k}^{-1} B_k$ and $P_{\sigma_l}^{-1} B_l$ must coincide (up to a common scaling factor), leading to a contradiction.
\end{proof}

Let us now prove a lemma excluding non-trivial cycles of a certain form. Later on we will resort to this lemma in order to exclude non-trivial cycles in a much broader scope.
\begin{lemma}
\label{lemma:identityattwosites}
Any cycle in  $G_b$ involving $g_k = g_{k+1} = \identity$ for some $k$ must be trivial, i.e., $g_k = \identity$ for all $k$.
\end{lemma}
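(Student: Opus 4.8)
The plan is to exploit Eq.~\eqref{eq:lhsrhsM10M11M1inf} together with Observation~\ref{obs:genpermutation3} to propagate the information $g_k = g_{k+1} = \identity$ around the whole cycle. First I would observe that when $\sigma_k = \sigma_{k+1} = (0,1,2)$ (the identity permutation), both $P_{\sigma_k}$ and $P_{\sigma_{k+1}}$ are the identity matrix and $D_{\sigma_{k+1}} = \identity$, so Eq.~\eqref{eq:lhsrhsM10M11M1inf} at site $k$ reduces to $b B_{k+1} b^{-1} \propto B_k$, i.e.\ a diagonal matrix $B_k$ is similar (via $b$) to another diagonal matrix $B_{k+1}$. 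The key point is that $B_k$ is a \emph{diagonal} matrix on the left-hand side, so the constraint at the previous site $k-1$, namely $b P_{\sigma_k}^{-1} B_k D_{\sigma_k}^{-1} b^{-1} \propto P_{\sigma_{k-1}}^{-1} B_{k-1}$, simplifies to $b B_k b^{-1} \propto P_{\sigma_{k-1}}^{-1} B_{k-1}$ (again using $\sigma_k$ trivial). Chaining these, $P_{\sigma_{k-1}}^{-1} B_{k-1} \propto b B_k b^{-1} \propto b^2 B_{k+1} b^{-2}$, and so on.

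The heart of the argument is to show that triviality cannot stop: if $g_k = \identity$ for all $k$ in some maximal run $\{k, k+1, \dots, k+m\}$ but $g_{k+m+1} \neq \identity$, we derive a contradiction. Concretely, at the site $k+m$ the concatenation reads $b P_{\sigma_{k+m+1}}^{-1} B_{k+m+1} D_{\sigma_{k+m+1}}^{-1} b^{-1} \propto P_{\sigma_{k+m}}^{-1} B_{k+m} = B_{k+m}$, which forces the monomial matrix $P_{\sigma_{k+m+1}}^{-1} B_{k+m+1} D_{\sigma_{k+m+1}}^{-1}$ to be similar to a diagonal matrix. That by itself is not yet contradictory, so one must also feed in the other side: from $g_k = g_{k+1} = \identity$ we get $b B_{k+1} b^{-1} \propto B_k$, and the run condition forces a whole family of such diagonal-similarity relations whose compatibility, combined with the eigenvector structure from Observation~\ref{obs:genpermutation3} (the eigenvectors of a nontrivial $P_\sigma D$ are $\tilde D \mathcal{F}$ or $\tilde D \mathcal{F}_{ij}$, never the computational basis), pins down $b$ and the $B_j$'s so rigidly that no nontrivial $\sigma_{k+m+1}$ can close the cycle. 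I would run this around the full $N$-cycle: since $g_k = g_{k+1} = \identity$, one traverses the cycle in both directions and shows every $\sigma_j$ must be trivial.

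A cleaner route, which I would try first, is to combine all the concatenation equations along the cycle into a single ``trace'' identity. Concatenating Eq.~\eqref{eq:ConcatSymb} around the whole $N$-cycle, the product $\prod_k \left(b x_k^{-1}\right)$-type expression must be proportional to the identity (this is essentially why $N$-cycles correspond to genuine symmetries in Theorem~\ref{Thm1}). Starting from two adjacent trivial sites gives an anchor: the product of the monomial matrices $P_{\sigma_k}^{-1} B_k D_{\sigma_k}^{-1}$ around the cycle, conjugated appropriately by powers of $b$, equals a scalar. If even one $\sigma_j$ is a transposition or a $3$-cycle, the corresponding factor has eigenvalues differing by $-1$ or by $\omega$ (Observation~\ref{obs:genpermutation3}), and I would argue that such a ``twist'' cannot be cancelled by the diagonal factors coming from the trivial sites, because conjugation by $b$ preserves the eigenvalue multiset of each factor while the diagonal $B_j$ from a trivial site contributes no such twist. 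Formally: the presence of $g_k=g_{k+1}=\identity$ lets me split the cyclic product at site $k$ into $B_k$ times the rest, and then an induction on the number of nontrivial sites, removing them one at a time using Observation~\ref{obs:permutationsgnfixed} to keep track of signs/phases, shows the only consistent possibility is all $\sigma_j$ trivial.

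The main obstacle I anticipate is bookkeeping: Eq.~\eqref{eq:lhsrhsM10M11M1inf} mixes permutation matrices, diagonal matrices $B_k$ (free parameters), the fixed phase matrices $D_{\sigma_k}$, and conjugation by the fixed but otherwise arbitrary matrix $b$, so the ``propagation of triviality'' has to be done carefully to avoid circularity — in particular one must make sure that having $g_k = g_{k+1} = \identity$ genuinely forces $g_{k-1}$ and $g_{k+2}$ to be trivial too, rather than merely constraining them. The technical crux is showing that a diagonal matrix (the left-hand side when $\sigma_k$ is trivial) cannot be $b$-conjugate to $P_\sigma^{-1} B D_\sigma^{-1}$ for nontrivial $\sigma$ \emph{given} the additional constraints the rest of the cycle imposes; this is where Observation~\ref{obs:genpermutation3}'s explicit eigenvector description (which distinguishes the trivial case, eigenvectors $= $ computational basis, from the nontrivial cases, eigenvectors involving Fourier matrices) does the real work, and I expect most of the proof's length to be spent making that eigenvector incompatibility precise.
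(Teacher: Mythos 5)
Your first route starts where the paper's proof starts — reducing to $b\,B_{k+1}\,b^{-1}\propto B_k$ at two adjacent trivial sites and then trying to force the neighbouring $\sigma$'s to be trivial — but it stops exactly at the point where the real work begins, and you say so yourself ("that by itself is not yet contradictory"). The missing content is a case analysis on the spectrum of $B_k$: (a) if $B_k\propto\identity$, the adjacent condition immediately gives $P_{\sigma_{k-1}}^{-1}B_{k-1}\propto\identity$; (b) if $B_k$ has three distinct eigenvalues, uniqueness of the spectral decomposition forces $b$ to be a \emph{monomial} matrix, and conjugating a diagonal matrix by a monomial matrix stays diagonal, killing any nontrivial $\sigma_{k-1}$; (c) if $B_k$ has a doubly degenerate eigenvalue, $b\tilde P$ is only block-diagonal, and the local condition at the neighbouring site does \emph{not} exclude all nontrivial permutations — it still allows $\sigma_{k-1}=(1,0,2)$. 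Ruling out that surviving transposition requires importing the concatenation condition from the \emph{other} neighbour (the $k=N-1$ equation in the paper), showing it would force $b$ monomial after all and hence a contradiction. Your sketch never confronts cases (b) and (c); "Observation~\ref{obs:genpermutation3} pins down $b$ and the $B_j$'s so rigidly that no nontrivial $\sigma$ can close the cycle" is precisely the claim that has to be proved, and in case (c) it is false at the level of a single local condition.

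The "cleaner route" you propose to try first does not work as stated. The diagonal matrices $B_j$ are arbitrary free parameters, so the claim that a trivial site "contributes no twist" is wrong: $B_j=\operatorname{diag}(1,-1,1)$ or $\operatorname{diag}(1,\omega,\omega^2)$ has exactly the eigenvalue ratios you want to forbid, so eigenvalue (or trace/determinant) bookkeeping around the cycle cannot by itself distinguish a trivial $\sigma_j$ from a nontrivial one — that distinction lives in the eigenvector structure together with the fact that the same fixed $b$ conjugates every link, which is what the paper's case analysis exploits. Observation~\ref{obs:permutationsgnfixed} only separates transpositions from $3$-cycles and gives no leverage against the identity permutation. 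So the proposal correctly locates the crux but leaves the essential steps unproved, and its preferred shortcut rests on an incorrect cancellation claim.
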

\begin{proof}
Let us assume wlog. that $g_1 = g_2 = \identity$. In order to prove the lemma it suffices to show that Eq. (\ref{eq:lhsrhsM10M11M1inf}) implies $g_0 = \identity$, as then the argument may be iterated in order to show $g_k = \identity$ for all $k$. 

Let us now show that $g_0 = \identity$.
As $g_1 = g_2 = \identity$, Eq. (\ref{eq:lhsrhsM10M11M1inf}) for $k=1$ reads
 \begin{align*}
b    B_{2}   b^{-1} \propto   B_1.
\end{align*} 
As Eq. (\theequation) displays a similarity transformation and both $B_2$ and $B_1$ are diagonal, we have that $B_2 \propto \tilde{P} B_1 \tilde{P}^{-1}$ for some permutation matrix $\tilde{P}$. Let us now distinguish three cases depending on the degeneracies of the eigenvalues of $B_1$. In case $B_1 \propto \identity$, considering Eq. (\ref{eq:lhsrhsM10M11M1inf}) for $k=0$ immediately yields $P_{\sigma_0}^{-1}B_0 \propto \identity$ and thus $g_0 = \identity$. Let us now consider the case that all the eigenvalues of $B_1$ are non-degenerate. Then, due to the uniqueness of the spectral decomposition (see Observation \ref{obs:genpermutation3} for spectral decompositions of the matrices involved in Eq. (\theequation)), $b$ must be a monomial matrix too. Using this fact in Eq. (\ref{eq:lhsrhsM10M11M1inf}) for $k=0$ shows that $P^{-1}_{\sigma_{0}} = \identity$.

Let us finally consider the case that $B_1$ has two distinct eigenvalues with multiplicities one and two, respectively. Let us assume wlog that the degenerate subspace is spanned by $\ket{0},\ket{1}$. Then, due to Eq. (\theequation) and the uniqueness of the spectral decomposition, $b \tilde{P}$ must be block-diagonal in the subspace spanned by $\{\ket{0},\ket{1}\}$ and $\ket{2}$. Let us now consider Eq. (\ref{eq:lhsrhsM10M11M1inf}) for $k=0$. If $\tilde{P} \ket{2} = \ket{2}$, then $b$ commutes with $B_1$, and thus $P^{-1}_{\sigma_{0}} = \identity$ follows immediately.
If $\tilde{P} \ket{2} \neq \ket{2}$, then considering Eq. (\ref{eq:lhsrhsM10M11M1inf}) for $k=0$ shows that $\sigma_{0} \in \{\identity,(1,0,2)\}$. This can be seen as follows. The left-hand side of  Eq. (\ref{eq:lhsrhsM10M11M1inf}) for $k=0$ reads $b B_1 b^{-1} = \left( b\tilde{P}\right) \left( \tilde{P}^{-1}B_1\tilde{P}\right) \left(b\tilde{P}\right)^{-1}$. Combining the fact that $\tilde{P}^{-1}B_1\tilde{P}$ is diagonal with the block-diagonal structure of $\left( b\tilde{P}\right)$ shows that the right-hand side of Eq. (\ref{eq:lhsrhsM10M11M1inf}) for $k=0$ must have the same block-diagonal structure and thus $\sigma_{0} \in \{\identity,(1,0,2)\}$. Let us now argue that the case $\sigma_{0} = (1,0,2)$ cannot occur. To this end, let us assume $\sigma_{0} = (1,0,2)$ and show that this leads to a contradiction. Let us consider Eq. (\ref{eq:lhsrhsM10M11M1inf}) for $k=N-1$ and rewrite the left-hand side as $\left( b\tilde{P}\right) \left( \tilde{P}^{-1} P^{-1}_{(1,0,2)}\tilde{P}\right)D \left(b\tilde{P}\right)^{-1}$ for some diagonal $D$. Due to the right-hand side of Eq. (\ref{eq:lhsrhsM10M11M1inf}), $\left( b\tilde{P}\right) \left( \tilde{P}^{-1} P^{-1}_{(1,0,2)}\tilde{P}\right)D \left(b\tilde{P}\right)^{-1}$ must be a monomial matrix, which is only possible if $b$ is a monomial matrix (this can be seen by considering the last row and column of the matrix expression). However, if $b$ is a monomial matrix, then Eq. (\ref{eq:lhsrhsM10M11M1inf}) for $k=0$ implies that $P_{\sigma_0}^{-1} = \identity$ (a diagonal matrix conjugated by a monomial matrix remains diagonal), which is a contradicition. This completes the proof of the lemma.
\end{proof}

In the following, we will make use of the group structure of the symmetries. Obviously, whenever $S_1$ and $S_2$ are symmetries of a state, then also $S_1 S_2$ is. As we consider TIMPS, we additionally have that $S_1 \mathcal{T} S_2 \mathcal{T}^{-1}$ is a symmetry of the MPS, where $\mathcal{T}$ denotes the translation operator. It can be easily verified that this structure carries over to cycles. If  $G_b$ exhibits an $N$-cycle $g_0, \ldots, g_{N-1}$, there must also exist an $N$-cycle $g_0 g_1, g_1 g_2,  \ldots, g_{N-1} g_{0}$, etc. Subsequently, we will also deal with situations in which we have partial information about a cycle. Given a string of operators $g_0, \ldots, g_{N-1}$, we denote by \emph{substring} any list of operators which consecutively appear within the string $g_0, \ldots, g_{N-1}$.
With the considerations above, one may easily convince oneself of the following. Given a string of operators $g_0, \ldots, g_{k}$ and the promise that one may append operators $g_{k+1}, \ldots, g_{N-1}$ in order to obtain an $N$-cycle in $G_b$, the promise must also hold for any element-wise product of substrings of coinciding length such as e.g. the length-3 string
$g_1g_3, g_2 g_4, g_3 g_5$. Concretely, there must exist $N-3$ additional operators such that an $N$-cycle starting with $g_1g_3, g_2 g_4, g_3 g_5$ is obtained.
In the following, it will be helpful to call the set of all such element-wise products of substrings the set of \emph{generated substrings}.
Using Lemma \ref{lemma:identityattwosites}, these considerations allow to make the following observation.

\begin{observation}
\label{obs:substrings}
A given string of operators $g_0, \ldots, g_{N-1}$, which contains the substring $A,B,C$ as well as the substring $A,B,D$ for some operators $A,B,C,D$ such that $C \neq D$, cannot form an $N$-cycle in $G_b$. 

More generally, the same holds if $A,B,C$ and $A,B,D$ are generated substrings of $g_0, \ldots, g_{N-1}$.
\end{observation}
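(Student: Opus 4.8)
The plan is to reduce Observation \ref{obs:substrings} to Lemma \ref{lemma:identityattwosites} by using the group structure of cycles described just before the statement. The key idea is that from a string containing $A,B,C$ and $A,B,D$ as (generated) substrings, I can manufacture a generated substring of the form $\identity,\identity,X$ with $X\neq\identity$, which together with the promise that the string extends to an $N$-cycle contradicts Lemma \ref{lemma:identityattwosites}.

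First I would recall precisely what ``generated substring'' buys us: if $g_0,\dots,g_{N-1}$ extends to an $N$-cycle in $G_b$, then so does any element-wise product of equal-length substrings, and in particular any translate of the original string also extends to an $N$-cycle. So from the substring $A,B,C$ occurring at positions $i,i+1,i+2$ and the substring $A,B,D$ occurring at positions $j,j+1,j+2$, I form the element-wise product of the length-$3$ block starting at $i$ with the \emph{inverse} of the length-$3$ block starting at $j$. Since the inverse of a generated substring is again generated (the $g$'s form a group under multiplication, and inversion commutes with the concatenation bookkeeping — or, more carefully, one takes the inverse of the whole cycle, which is again a cycle, and then the corresponding substring), this produces the generated length-$3$ substring $A A^{-1}, B B^{-1}, C D^{-1} = \identity, \identity, CD^{-1}$. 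Because $C\neq D$ we have $CD^{-1}\neq\identity$.

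Then I invoke Lemma \ref{lemma:identityattwosites}: the promise that $g_0,\dots,g_{N-1}$ extends to an $N$-cycle in $G_b$ forces the corresponding promise for the generated substring $\identity,\identity,CD^{-1}$, i.e.\ there is an $N$-cycle in $G_b$ whose first two operators are $\identity$ and whose third is $CD^{-1}\neq\identity$. But Lemma \ref{lemma:identityattwosites} says any cycle with two consecutive identity operators is trivial, so the third operator would have to be $\identity$ as well — a contradiction. Hence no such string $g_0,\dots,g_{N-1}$ can form an $N$-cycle in $G_b$, which is exactly the claim. The special case stated first (where $A,B,C$ and $A,B,D$ are honest substrings rather than merely generated ones) is then immediate since honest substrings are in particular generated substrings.

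The main obstacle I anticipate is making the manipulation with ``generated substrings'' fully rigorous, in particular justifying that one may take element-wise inverses and that the concatenation conditions of Eq. (\ref{eq:ConcatSymb}) are preserved under element-wise multiplication of substrings — the text asserts this (``this structure carries over to cycles'', and the closure of the set of generated substrings under the promise), so I would lean on that, but I would want to state cleanly that if $h_0\to\cdots\to h_{N-1}\to h_0$ and $h_0'\to\cdots\to h_{N-1}'\to h_0'$ are $N$-cycles then so is the pointwise product $h_0 h_0'\to\cdots$, and that the pointwise inverse (read backwards) is again a cycle, so that the qubit-operator strings behave like a group and the promise-extension property is inherited by pointwise products of equal-length substrings and their inverses. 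Once that bookkeeping is in place, the argument above is just ``produce $\identity,\identity,(\text{nontrivial})$ and apply Lemma \ref{lemma:identityattwosites}.''
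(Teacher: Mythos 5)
Your proposal is correct and follows essentially the same route as the paper: the paper's proof likewise uses the group structure of cycles to manufacture an $N$-cycle containing the generated substring $A^{-1}A,\,B^{-1}B,\,C^{-1}D=\identity,\identity,C^{-1}D$ and then invokes Lemma \ref{lemma:identityattwosites} to force $C^{-1}D=\identity$, contradicting $C\neq D$. The bookkeeping point you flag (that element-wise inverses of cycles are again cycles, so the promise is inherited) is exactly the implicit step the paper also relies on, so there is no gap relative to the published argument.
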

\begin{proof}
The proof is by contradiction. Suppose that  $g_0, \ldots, g_{N-1}$ forms an $N$-cycle in  $G_b$, and $A,B,C,D$ are such as in the statement of the observation. Due to the group structure, there must also exist an $N$-cycle described by a string of operators $g'_0, \ldots, g'_{N-1}$ containing $A^{-1}A, B^{-1}B, C^{-1}D$ as a substring. Due to Lemma \ref{lemma:identityattwosites}, $C^{-1}D = \identity$, which contradicts the assumption $C \neq D$.
\end{proof}

We are now in the position to characterize all possible non-trivial cycles within $G_b$. For brevity we use the following shorthand notation for permutations, $\tau = (0,2,1)$, $\epsilon = (1,0,2)$, $\kappa = (2,1,0)$ (transpositions), and $S = \circlearrowright=(1,2,0)$, $S^2 = \circlearrowleft=(2,0,1)$ (cyclic permutations of length 3). We assign labels $T_0^{\tau/\epsilon/\kappa/\circlearrowleft/\circlearrowright}, \ldots, T_7^{\tau/\epsilon/\kappa/\circlearrowleft/\circlearrowright}$ and $C_0, \ldots, C_4$ to specific cycles as in Table \ref{tab:M10M11M1infcycles} ("$T$" indicating that the cycle involves transpositions, and "$C$" indicating that the cycle involves cyclic permutations of length 3. The superindex differentiates between subgroups that are of a similar structure, e.g., $T_1^\tau$ refers to the 2-cylce $\tau \otimes \identity$, while $T_1^\epsilon$ refers to the 2-cylce $\epsilon \otimes \identity$).

\begin{theorem}
\label{thm:M10M11M1infcycles}
The possible $N$-cycles ${g}_0, \ldots, {g}_{N-1}$ in  $G_b$ are given by $T_0^{\tau/\epsilon/\kappa/\circlearrowleft/\circlearrowright}, \ldots, T_7^{\tau/\epsilon/\kappa/\circlearrowleft/\circlearrowright}$ and $C_0, \ldots, C_4$ as in Table \ref{tab:M10M11M1infcycles} and have lengths $N \in \{1,2,3,4,6\}$.
\end{theorem}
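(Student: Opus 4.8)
The plan is to organize the argument around the three kinds of non-trivial permutation that can appear as $\sigma_k$: transpositions, cyclic permutations of length 3, and the identity. Observation \ref{obs:permutationsgnfixed} already forbids mixing transpositions with $3$-cycles in a single cycle, so I would split into two exhaustive cases: (a) cycles in which every $\sigma_k$ is either the identity or a transposition, and (b) cycles in which every $\sigma_k$ is either the identity or a $3$-cycle. In both cases Lemma \ref{lemma:identityattwosites} does the heavy lifting: a non-trivial cycle can never contain two consecutive identities, so non-identity entries occur at least every other site, which immediately caps how ``spread out'' the non-trivial structure can be and lets me enumerate the short patterns by hand.

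For case (a), I would first nail down the constraint imposed by Eq.~(\ref{eq:lhsrhsM10M11M1inf}) between two consecutive non-identity sites. Using Observation \ref{obs:genpermutation3}, the eigenvalue structure of $P_{\sigma_k}^{-1}B_k$ (one eigenvalue $d_0$, a pair $\pm\sqrt{d_1 d_2}$) together with the presence of $D_{\sigma_{k+1}}=\operatorname{diag}(1,\omega,\omega^2)$ on the left-hand side forces the diagonal blocks $B_k$ to be highly constrained, and in fact pins down $b$ (up to monomial factors and the relevant block structure) much as in the proof of Lemma \ref{lemma:identityattwosites}. Then Observation \ref{obs:substrings} is the key combinatorial tool: once I know the local ``transition rules'' of which $g_{k+1}$ can follow a given $g_k$ (possibly with an identity in between), any cycle is a closed walk in a small directed graph on the six vertices $\{\identity,\tau,\epsilon,\kappa,S,S^2\}$, and Observation \ref{obs:substrings} kills any walk that would force two different successors after the same length-$2$ generated substring. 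Enumerating closed walks in that graph — together with the group-structure trick (element-wise products of shifted substrings must again extend to cycles) — should leave exactly the families $T_0^{\bullet},\dots,T_7^{\bullet}$ of lengths $1,2,4$ (and the length-$6$ ones built from alternating transpositions with the right $B$-dressing).

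Case (b) is analogous but cleaner: with only $3$-cycles and identities, $D_{\sigma_k}=\identity$ throughout, so Eq.~(\ref{eq:lhsrhsM10M11M1inf}) becomes a similarity between generalized permutation matrices of ``$3$-cycle type'', whose eigenvalues are $d,d\omega,d\omega^2$; matching these across the cycle and invoking Observation \ref{obs:genpermutation3} for the eigenvector data forces $b$ to intertwine the relevant Fourier-dressed bases, which again restricts the admissible transition patterns. Closing the walk then yields the $C_0,\dots,C_4$ families of lengths $1,3,6$ listed in Table~\ref{tab:M10M11M1infcycles}. Finally I would record that every listed pattern is actually realized by some $b$ (the explicit parametrizations go to Appendix~\ref{app:M10M11M1infcycles}), so the list is exhaustive and non-redundant, and collect the possible lengths as $N\in\{1,2,3,4,6\}$.

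The main obstacle I expect is case (a): the interaction of the diagonal dressings $B_k$ with the sign/phase discrepancies of the transposition eigenvalues makes the ``transition graph'' genuinely state-dependent, so I have to be careful that the case analysis on the degeneracy pattern of the relevant diagonal matrix (done once in Lemma \ref{lemma:identityattwosites}) really does propagate around an arbitrarily long cycle. The group-structure reduction via Observation \ref{obs:substrings} is what makes this tractable — it converts ``check all $N$'' into ``check a bounded graph'' — but setting it up so that every generated substring argument is legitimate (i.e.\ the promised extension to an $N$-cycle is preserved under the element-wise product) is the delicate bookkeeping step.
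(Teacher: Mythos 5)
Your overall strategy — exhaust the possibilities by combining Lemma \ref{lemma:identityattwosites}, Observations \ref{obs:permutationsgnfixed} and \ref{obs:substrings} and the group structure of cycles into a bounded combinatorial enumeration, then certify the surviving patterns via the explicit parametrizations of $b$ — is essentially the route the paper takes (there the enumeration is organized as a tree/branch exploration that terminates once a length-2 substring repeats). However, there is a genuine gap in how you close the argument. You assert that the combinatorial filter "should leave exactly the families $T_0,\dots,T_7$ and $C_0,\dots,C_4$", and you treat the realizability of the listed patterns (Appendix \ref{app:M10M11M1infcycles}) as establishing exhaustiveness. Realizability only shows the listed cycles occur; it says nothing about whether other strings survive the filter. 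And in fact they do: the necessary conditions of Lemma \ref{lemma:identityattwosites} and Observations \ref{obs:permutationsgnfixed}, \ref{obs:substrings} are \emph{not} strong enough to cut the candidate list down to Table \ref{tab:M10M11M1infcycles}. The paper's enumeration produces additional surviving candidates such as $\identity\otimes S$ (length 2), $\identity\otimes S\otimes S$ (length 3), $\tau\otimes\tau\otimes\epsilon\otimes\epsilon\otimes\kappa\otimes\kappa$ (length 6), and length-8 strings like $\identity \otimes S \otimes S \otimes S^2 \otimes \identity \otimes S^2\otimes S^2 \otimes S$ and its transposition-dressed analogues. These must be eliminated by a separate, non-combinatorial step: showing that no $b$ satisfies the full concatenation equation (\ref{eq:lhsrhsM10M11M1inf}) for them (the transposition-type extras then die because they contain the excluded $3$-cycle-type candidates as subgroups). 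Without this step your proof cannot even conclude the length statement $N\in\{1,2,3,4,6\}$, since length-8 candidates pass every test you invoke.

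Two smaller points. First, your case split (a)/(b) is fine as an exhaustive dichotomy, but your claimed output of case (a) — "lengths $1,2,4$ and the alternating-transposition length-6 ones" — already omits the length-3 transposition cycles $T_3$, $T_4$, $T_5$ that are in the table, which suggests the hand enumeration was not actually carried through; the delicate part is precisely that the filter admits more, not fewer, patterns than you expect. Second, your hope that the eigenvalue/eigenvector analysis via Observation \ref{obs:genpermutation3} "restricts the admissible transition patterns" enough to recover the table is the right instinct, but as written it conflates the purely combinatorial stage (which is where Observation \ref{obs:substrings} lives) with the analytic stage (existence of a compatible $b$); making that second stage explicit for each surviving extra candidate is exactly what is missing and what the paper supplies.
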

\begin{proof}
It turns out that the necessary conditions for a string of operators forming a cycle as given in Lemma \ref{lemma:identityattwosites} and Observations \ref{obs:permutationsgnfixed} and \ref{obs:substrings} are very restrictive. In fact, all strings of operators satisfying the conditions may be exhaustively enumerated. We will now argue why this is the case and, in the course of that, provide such an enumeration. 
Consider the following tree exploration protocol. Starting from $N=1$, strings of operators ${g}_0, \ldots, {g}_{N-1}$ of increasing length $N$ are generated by appending operators to the previously considered strings of length $N-1$ (and thus, a tree is formed). A string (branch) is discarded if it, or any of its generated substrings, violates the conditions in Lemma \ref{lemma:identityattwosites} or Observations \ref{obs:permutationsgnfixed} and \ref{obs:substrings}. Moreover, one may stop further exploring a branch once the substring consisting of the last two operators ${g}_{N-2},{g}_{N-1}$ has appeared previously as a substring within the considered branch. The reason for this is that the only possibility to continue from there on (without violating conditions in Lemma \ref{lemma:identityattwosites} or Observations \ref{obs:permutationsgnfixed} and \ref{obs:substrings}) is repeating the sequence starting from the first appearance of the substring ${g}_{N-2},{g}_{N-1}$ over and over again.
Whenever, in some branch, this point is reached, one then either has obtained a candidate for an $N$-cycle, if the sequence ${g}_{N-2},{g}_{N-1}$ coincides with ${g}_{0}, {g}_{1}$, or one discards the string (and abandons the branch), otherwise, as in the latter case it is impossible to close the cycle.
As the number of operators to choose from is finite (in fact, 6), this is guaranteed to happen at a finite $N$ \footnote{A very naive bound would be $N \leq 6^2+2=38$ using that a string of length $N$ has $N-1$ substrings of length 2 and there exist $6^2$ distinct strings of length 2. Much better bounds can be obtained, though.}. One may also skip exploring a branch whenever the currently considered string contains a substring that has been considered already. For instance, if all branches starting with ${g_0}, {g_1}= \tau, \epsilon$ have been handled, then one may skip investigating the branch ${g_0}, {g_1}, {g_2} = \tau, \tau, \epsilon$, as all possibly emerging cycles will have been identified already.

Following this procedure, one obtains candidates\footnote{Recall that one obtains strings of operators that satisfy necessary conditions for forming an $N$-cycle for some MPS.} for cycles as in Table \ref{tab:M10M11M1infcycles} as well as the following additional candidates: $\identity \otimes S$ $(\tilde{C}_0)$, $\identity \otimes S \otimes S$ $(\tilde{C}_1)$, $\identity \otimes S \otimes S \otimes S^2 \otimes \identity \otimes S^2\otimes S^2 \otimes S$  $(\tilde{C}_2^{\circlearrowleft})$, $\identity \otimes S \otimes S^2 \otimes S^2 \otimes \identity \otimes S^2\otimes S \otimes S$  $(\tilde{C}_2^{\circlearrowright})$, $\tau \otimes \tau \otimes \epsilon \otimes \epsilon \otimes \kappa \otimes \kappa ({\tilde{T}}_0^{\circlearrowleft})$,  ${\tilde{T}}_0^{\circlearrowright}$ analogously, $\epsilon \otimes \epsilon \otimes \tau \otimes \epsilon \otimes \kappa \otimes \kappa \otimes \tau \otimes \kappa$ (${\tilde{T}}_1^{\tau, \circlearrowleft}$), and ${\tilde{T}}_1^{\epsilon, \circlearrowleft}$, ${\tilde{T}}_1^{\kappa, \circlearrowleft}$, ${\tilde{T}}_1^{\tau, \circlearrowright}$, ${\tilde{T}}_1^{\epsilon, \circlearrowright}$, ${\tilde{T}_1}^{\kappa, \circlearrowright}$ analogously. Note that any ${\tilde{T}}_0$ has $\tilde{C}_0$ as a subgroup and moreover, any ${\tilde{T}_1}$ has some ${\tilde{C}}_2$ as a subgroup. Finally, it is straightforward to show that for $\tilde{C}_0$, $\tilde{C}_1$, and $\tilde{C}_2$, there does not exist any $b$ satisfying the concatenation conditions in Eq. (\ref{eq:lhsrhsM10M11M1inf}). This shows that $\tilde{C}^{\circlearrowleft/\circlearrowright}_i$ and $\tilde{T}_i^{\tau/\epsilon/\kappa/\circlearrowleft/\circlearrowright}$ cannot be cycles, in contrast to the cycles shown in Table \ref{tab:M10M11M1infcycles}. 
\end{proof}

\begin{table}[h]
\begin{tabular}{l|l|l|l}
Label & Subgroup(s) & Cycle & Length \\ \hline
  $C_0$&- & $S$& 1 \\
 $T_0^\tau$ & - & $\tau$ & 1\\
 $T_0^\epsilon$ & - & $\epsilon$ & 1\\
 $T_0^\kappa$ & - & $\kappa$ & 1\\ \hline
  $C_1$ &-& $S \otimes S^2$  & 2 \\
 $T_1^\tau$ & $T_0^\tau$ & $\identity \otimes \tau$& 2  \\
 $T_1^\epsilon$ & $T_0^\epsilon$ & $\identity \otimes \epsilon$  & 2\\
 $T_1^\kappa$ & $T_0^\kappa$ & $\identity \otimes \kappa$ & 2 \\
 $T_2^\tau$ &  $T_0^\tau$, $C_1$ & $\epsilon \otimes \kappa$ & 2\\
 $T_2^\epsilon$ &  $T_0^\epsilon$, $C_1$ & $\tau \otimes \kappa$ & 2\\
 $T_2^\kappa$ &  $T_0^\kappa$, $C_1$ & $\tau \otimes \epsilon$ & 2\\ \hline
  $C_2$& $C_0$ & $\identity \otimes S \otimes S^2$& 3 \\
  $T_3^\tau$& - & $\identity \otimes \tau \otimes \tau$& 3 \\
  $T_3^\epsilon$& - & $\identity \otimes \epsilon \otimes \epsilon$& 3 \\
  $T_3^\kappa$& - & $\identity \otimes \kappa \otimes \kappa$& 3 \\
  $T_4^{\circlearrowleft}$& $C_0$ & $\tau \otimes \kappa \otimes \epsilon$& 3 \\
  $T_4^{\circlearrowright}$& $C_0$ & $\tau \otimes \epsilon \otimes \kappa$& 3 \\
  $T_5^{\circlearrowleft}$& $C_2$, $C_0$ & $\tau \otimes \tau \otimes \kappa $& 3 \\
  $T_5^{\circlearrowright}$& $C_2$, $C_0$ & $\tau \otimes \tau \otimes \epsilon $& 3 \\ \hline
  $C_3$& - & $\identity \otimes S \otimes \identity \otimes S^2$ & 4\\
  $T_6^\tau$& $C_3$, $T_0^\tau$ & $\kappa \otimes \tau \otimes \epsilon \otimes \tau $ & 4\\
  $T_6^\epsilon$& $C_3$, $T_0^\epsilon$ & $\tau \otimes \epsilon \otimes \kappa \otimes \epsilon $ & 4\\
  $T_6^\kappa$& $C_3$, $T_0^\kappa$ & $\epsilon \otimes \kappa \otimes \tau \otimes \kappa $ & 4\\ \hline
   $C_4$& $C_1$& $\identity \otimes S \otimes S \otimes \identity \otimes S^2 \otimes S^2$  & 6\\
  $T_7^\tau$& $C_4$, $C_1$, $T_2^\tau$, $T_0^\tau$ & $\kappa \otimes \tau \otimes \epsilon \otimes \epsilon \otimes \tau \otimes \kappa$  & 6 \\
  $T_7^\epsilon$& $C_4$, $C_1$, $T_2^\epsilon$, $T_0^\epsilon$ & $\tau \otimes \epsilon \otimes \kappa \otimes \kappa \otimes \epsilon \otimes \tau$  & 6 \\
  $T_7^\kappa$& $C_4$, $C_1$, $T_2^\kappa$, $T_0^\kappa$ & $\epsilon \otimes \kappa \otimes \tau \otimes \tau \otimes \kappa \otimes \epsilon$  & 6 \\\hline
\end{tabular}
\caption{Characterization of all possible cycles in $G_b$ (considering fiducial states $\identity \otimes b \otimes \identity \ket{ M(\omega) }$), and thus, all possible symmetries of associated normal MPS. Cycles of lengths 1, 2, 3, 4, and 6 are possible. The third column gives the physical operators, here we use the shorthand notation $\tau$ in place of ${g}_\tau$, etc. Note that the given cycles should be understood as generators, i.e., the cycle $C_3$, $\identity \otimes S \otimes \identity \otimes S^2$, also comprises the cycle $S \otimes S \otimes S^2 \otimes S^2$, or $T_6^{\tau}$ also comprises $\epsilon \otimes \epsilon \otimes \kappa \otimes \kappa$, etc. In the second column, we indicate which cycles emerge as a subgroup of the cycle at hand.}
\label{tab:M10M11M1infcycles}
\end{table}

Before we characterize the $b$ possessing the cycles mentioned in the theorem, a few remarks are in order. First, note that some of the identified cycles lead to symmetry groups that are subgroups of the symmetry group corresponding to another of the identified cycles, as indicated in the second column of Table \ref{tab:M10M11M1infcycles}. Thus, e.g., it is not possible to find $b$ such that the fiducial state exhibits the 2-cycle $T_1^{\tau}$, but not $T_0^\tau$. We illustrate the emerging group structure in Figure \ref{fig:groupstructure}. However, we will see later on that not every combination of cycles that is compatible with the group structure is actually possible. We will see, e.g., that there does not exist a $b$ exhibiting $C_3$, but not any of $T_6^{\tau/\epsilon/\kappa}$.

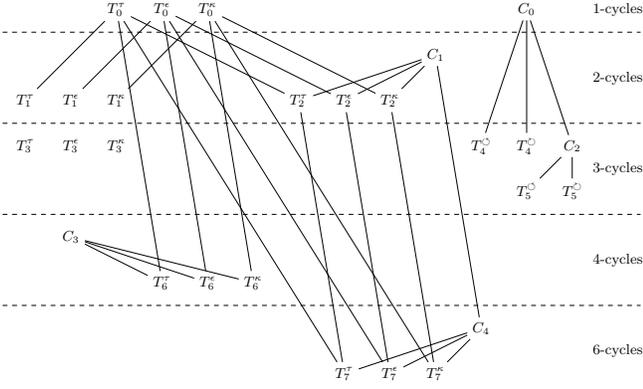
\begin{figure}
\resizebox{\linewidth}{!}{%
\begin{tikzpicture}
  \node (T0tau) at (0,0) {$T_0^\tau$};
  \node (T0eps) at (1,0) {$T_0^\epsilon$};
  \node (T0kap) at (2,0) {$T_0^\kappa$};
  \node (C0) at (9,0) {$C_0$};
  \node (C1) at (7,-1) {$C_1$};
  \node (T1tau) at (-2,-2) {$T_1^\tau$};
  \node (T1eps) at (-1,-2) {$T_1^\epsilon$};
  \node (T1kap) at (0,-2) {$T_1^\kappa$};
  \node (T2tau) at (4,-2) {$T_2^\tau$};
  \node (T2eps) at (5,-2) {$T_2^\epsilon$};
  \node (T2kap) at (6,-2) {$T_2^\kappa$};
  \node (T3tau) at (-2,-3) {$T_3^\tau$};
  \node (T3eps) at (-1,-3) {$T_3^\epsilon$};
  \node (T3kap) at (0,-3) {$T_3^\kappa$};
  \node (T4cc) at (8,-3) {$T_4^\circlearrowleft$};
  \node (T4c) at (9,-3) {$T_4^\circlearrowright$};
  \node (C2) at (10,-3) {$C_2$};
  \node (T5cc) at (9,-4) {$T_5^\circlearrowleft$};
  \node (T5c) at (10,-4) {$T_5^\circlearrowright$};
  \node (C3) at (-1,-5) {$C_3$};
  \node (T6tau) at (1,-6) {$T_6^\tau$};
  \node (T6eps) at (2,-6) {$T_6^\epsilon$};
  \node (T6kap) at (3,-6) {$T_6^\kappa$};
  \node (C4) at (8,-7) {$C_4$};
  \node (T7tau) at (5,-8) {$T_7^\tau$};
  \node (T7eps) at (6,-8) {$T_7^\epsilon$};
  \node (T7kap) at (7,-8) {$T_7^\kappa$};
  \draw (T1tau) -- (T0tau);
  \draw (T1eps) -- (T0eps);
  \draw (T1kap) -- (T0kap);
  \draw (T2tau) -- (T0tau);
  \draw (T2eps) -- (T0eps);
  \draw (T2kap) -- (T0kap);
  \draw (T2tau) -- (C1);
  \draw (T2eps) -- (C1);
  \draw (T2kap) -- (C1);
  \draw (T4cc) -- (C0);
  \draw (T4c) -- (C0);
  \draw (C2) -- (C0);
  \draw (T5cc) -- (C2);
  \draw (T5c) -- (C2);
  \draw (T6tau) -- (T0tau);
  \draw (T6eps) -- (T0eps);
  \draw (T6kap) -- (T0kap);
  \draw (T6tau) -- (C3);
  \draw (T6eps) -- (C3);
  \draw (T6kap) -- (C3);
  \draw (C4) -- (C1);
  \draw (T7tau) -- (T0tau);
  \draw (T7eps) -- (T0eps);
  \draw (T7kap) -- (T0kap);
  \draw (T7tau) -- (C4);
  \draw (T7eps) -- (C4);
  \draw (T7kap) -- (C4);
  \draw (T7tau) -- (T2tau);
  \draw (T7eps) -- (T2eps);
  \draw (T7kap) -- (T2kap);
  \draw[dashed] (-2.5,-0.5) -- (11.5,-0.5);
  \draw[dashed] (-2.5,-2.5) -- (11.5,-2.5);
  \draw[dashed] (-2.5,-4.5) -- (11.5,-4.5);
  \draw[dashed] (-2.5,-6.5) -- (11.5,-6.5);
  \node (1c) at (11,0) {1-cycles};
  \node (2c) at (11,-1.5) {2-cycles};
  \node (3c) at (11,-3.5) {3-cycles};
  \node (4c) at (11,-5.5) {4-cycles};
  \node (6c) at (11,-7.5) {6-cycles};
\end{tikzpicture}
}
\caption{Illustration of the group structure of the cycles for normal MPS generated by the fiducial states $\identity \otimes b \otimes \identity \ket{M(\omega)}$. The depicted labels correspond to the cycles as given in Table \ref{tab:M10M11M1infcycles}. A line connecting two labels indicates that the symmetry group associated to the higher elevated cycle is a subgroup of the symmetry group associated to the cycle below. }
\label{fig:groupstructure}
\end{figure}

Second, note that we have not restricted our attention to TIMPS that are normal, so far. In particular, the characterization of allowed cycles in Theorem \ref{thm:M10M11M1infcycles} holds for normal as well as for non-normal TIMPS. Recall that in case of normal TIMPS, a characterization of cycles yields a full characterization of the symmetries of the TIMPS, while non-normal MPS might possess additional symmetries, which are not captured by the study of cycles. Thus, Table \ref{tab:M10M11M1infcycles} exhaustively lists all the possible symmetries of normal MPS and moreover, it exhaustively lists those symmetries of non-normal MPS, that may be identified via the study of cycles. In the following we will focus on normal TIMPS, though. We observe that $b$ of a certain form cannot be normal.
\begin{observation}
\label{obs:permutationnotnormal}
If $b$ is such that in any row or column $i$, the entry $b_{ii}$ is the only non-vanishing entry, or if $b$ is a generalized permutation matrix, then the TIMPS generated by $\identity \otimes b \otimes \identity \ket{ M(\omega) }$ is non-normal. 
\end{observation}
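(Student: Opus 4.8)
The plan is to show that in each of the two stated cases the MPS tensor $A_b$ fails to be normal by exhibiting an $L$ such that no contraction of $L$ consecutive copies of $A_b$ is injective, or more directly, by showing that the transfer-type structure of the tensor is reducible (block-structured) for all system sizes. Recall from Eq.~(\ref{eq:M10M11M1inftensor}) that $A_b^0 = D_\omega b$ and $A_b^1 = b$, where $D_\omega = \operatorname{diag}(1,\omega,\omega^2)$. Since $b$ is regular, the pair $(A_b^0, A_b^1)$ is, up to the regular gauge $b$ acting on one virtual index, equivalent to the pair $(D_\omega, \identity)$ (conjugated appropriately): indeed $A_b^i = $ (something)$\, b$, so the linear span of products $A_b^{i_1}\cdots A_b^{i_L}$ is $b^{-1}$-related to the span of products of $D_\omega$'s and $\identity$'s — but these all live in the $3$-dimensional commutative algebra of diagonal matrices. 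Hence the span of any number of consecutive tensor contractions has dimension at most $3 < 9 = D^2$, so the relevant map can never be injective, and the MPS is non-normal. This handles the first case (where $b$ itself may be non-diagonal but the tensors $A_b^0,A_b^1$ both equal a fixed diagonal matrix times $b$ — wait, that is exactly the generic situation, so the hypothesis ``$b_{ii}$ is the only nonvanishing entry in its row/column'' must be doing more work).

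Let me restate the approach more carefully. First I would reduce, using $A_b^i = (\text{diagonal})\cdot b$, to analyzing the algebra generated by $\{A_b^{i}\}$ up to conjugation. In the first case — $b$ has the property that in each row or column $i$ the diagonal entry $b_{ii}$ is the unique nonzero entry — I would argue that $b$ is block-diagonal with respect to a partition of $\{0,1,2\}$ into the ``diagonal-only'' indices and the rest; a short case analysis on the subsets of indices satisfying the hypothesis shows $b$ (and hence both $A_b^0, A_b^1 = D_\omega b, b$) preserve a nontrivial proper subspace of $\C^3$ spanned by some subset of computational basis vectors. A tensor whose matrices share a common invariant coordinate subspace generates an MPS that is a direct sum (or has reducible boundary structure), and such MPS are never normal (normality requires the tensor to be ``injective after blocking'', which in particular forces the generated algebra to be all of $M_D(\C)$; a common invariant subspace obstructs this). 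In the second case — $b$ a generalized permutation matrix, $b = P_\pi E$ with $E$ diagonal and $P_\pi$ a permutation matrix — I would observe that then $A_b^0 = D_\omega P_\pi E$ and $A_b^1 = P_\pi E$ are both monomial matrices, so every product $A_b^{i_1}\cdots A_b^{i_L}$ is again a monomial matrix; the monomial matrices span a subspace of dimension only $D\cdot D = 9$?? — no: the monomial matrices with a fixed underlying permutation form a $D$-dimensional space, and as the word varies the underlying permutation ranges only over the cyclic group generated by $\pi$, so the total span has dimension at most $D\cdot|\langle\pi\rangle| \le D\cdot D$, but more to the point, for the contraction map $X \mapsto \sum \tr(X A^{j_1}\cdots A^{j_L})\ket{j_1\cdots j_L}$ to be injective one needs the $A^{j_1}\cdots A^{j_L}$ to span all of $M_D(\C)$; monomial-matrix products never achieve this since they all lie in the proper subspace of matrices supported on $\le D$ entries forming a partial permutation pattern. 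Either way injectivity fails for all $L$, so the MPS is non-normal.

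The key steps, in order: (1) rewrite $A_b^0, A_b^1$ as in Eq.~(\ref{eq:M10M11M1inftensor}) and record that $A_b^0 = D_\omega b$, $A_b^1 = b$; (2) in the first case, deduce from the hypothesis on $b$ that $b$ stabilizes a proper coordinate subspace (a small combinatorial argument over subsets of $\{0,1,2\}$), hence so do $D_\omega b$ and $b$, giving a common invariant subspace for all contractions; (3) in the second case, note monomial $\times$ monomial $=$ monomial, so all length-$L$ contractions are monomial matrices and hence confined to a proper subspace of $M_3(\C)$; (4) invoke the standard fact (see Section~\ref{sec:prelim}) that normality requires the length-$L$ contractions to be injective as a map into $\bigoplus \ket{j_1\cdots j_L}$, equivalently to span $M_D(\C)$, and conclude this fails in both cases for every $L$, so the TIMPS is non-normal. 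The main obstacle I anticipate is making step (2) airtight: one must check that ``$b_{ii}$ is the only nonzero entry in row $i$ \emph{or} in column $i$'' — with the row/column choice possibly differing across $i$ — still forces a genuine invariant coordinate subspace rather than merely a sparse but irreducible matrix; this requires carefully enumerating which index subsets can be ``row-pure'' versus ``column-pure'' and verifying their union/complement structure always yields an invariant set of basis vectors. The monomial-matrix case (3) is by comparison routine.
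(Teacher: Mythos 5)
Your first case is, in substance, the paper's own argument and it works: a row condition at some index $i$ (only $b_{ii}$ nonvanishing in row $i$) means that $b$, and hence also $A_b^0=\operatorname{diag}(1,\omega,\omega^2)\,b$ and $A_b^1=b$, map $\operatorname{span}\{\ket{j}:j\neq i\}$ into itself, while a column condition at $i$ means they preserve the line $\C\ket{i}$. So the case analysis you flag as the main obstacle in step (2) is immediate — a single row or column condition at a single index already yields a common invariant coordinate subspace, whatever happens at the other indices — and all length-$L$ products then lie in the proper subalgebra of matrices preserving that subspace. The paper phrases the same fact as a sparsity pattern closed under multiplication (at most seven linearly independent products); either way injectivity fails for every $L$.

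The genuine gap is in the generalized-permutation case. Your closing justification — that products of monomial matrices ``lie in the proper subspace of matrices supported on at most $D$ entries forming a partial permutation pattern'' — is false: the monomial matrices do not form a subspace, and their linear span is all of $M_3(\C)$ (every matrix unit $\ket{i}\bra{j}$ is a difference of two invertible monomial matrices). Likewise your bound $D\cdot\lvert\langle\pi\rangle\rvert$ equals $9$ when $\pi$ is a $3$-cycle, so it does not rule out injectivity. The missing observation — which is exactly what the paper uses — is that $A_b^0$ and $A_b^1$ have the \emph{same} underlying permutation $\pi$ (each is $P_\pi$ times a diagonal matrix, since conjugating or multiplying a monomial matrix by diagonals does not change its permutation part). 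Hence every product of exactly $L$ factors is of the form $P_{\pi^L}\tilde D$ with $\tilde D$ diagonal, i.e.\ for each fixed $L$ the length-$L$ products span at most a $3$-dimensional subspace of $M_3(\C)$, far short of the $9$ needed, so normality fails for every $L$. With that one sentence inserted your step (3) is repaired; your opening paragraph (gauging away $b$, which would wrongly show non-normality for every $b$) is indeed wrong, but since you abandon it yourself it does not affect the rest.
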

\begin{proof}
Let us first consider the case that $b$ is such that in any row or column $i$, the entry $b_{ii}$ is the only non-vanishing entry. Note that this property is retained when taking products of matrices of such a form. Moreover, note that if $b$ is of such a form, then both $A_b^0$ and $A_b^1$ as in Eq. (\ref{eq:M10M11M1inftensor}) are of this form too. Hence, it is impossible to find more than seven linearly independent products of $A_b^0$ and $A_b^1$ and the tensor $A_b$ cannot be normal.

Let us now consider the case that $b$ is a generalized permutation matrix, $b = P_\sigma D$. Then, any product of $A_b^0$ and $A_b^1$ comprised of $L$ factors can be written as $P_{\sigma^L} \tilde{D}$ for some diagonal $\tilde{D}$ (as $A_b^{0,1} = b A^{0,1}$ and $A^{0,1}$ is diagonal). Hence, for any $L$ it is impossible to find more than three linearly independent products of $A_b^0$ and $A_b^1$ with $L$ factors. Hence, the tensor $A_b$ cannot be normal.
\end{proof}
 Thus, we restrict our attention to $b$ which are not of the form given in observation in the following. In fact, as we will see (see also Observation \ref{obs:M10M11M1infnormality}), the remaining $b$'s are either normal or $G_b$ does not exhibit any non-trivial cycles.

With Theorem \ref{thm:M10M11M1infcycles}, it is now a straightforward calculation to characterize all $b$, for which the  $G_b$ exhibits any specific cycle listed in Table $\ref{tab:M10M11M1infcycles}$. To this end, one considers Eq. (\ref{eq:lhsrhsM10M11M1inf}) for $\sigma_k$ given by the considered cycle. One may then utilize Observation \ref{obs:genpermutation3} in order to determine $b$.
We present a full characterization of those $b$ (disregarding $b$ of the form given in Observation \ref{obs:permutationnotnormal}) for which  $G_b$ exhibits non-trivial cycles (see Table \ref{tab:M10M11M1infcycles}) in Table \ref{tab:M10M11M1infcycleslong} in Appendix \ref{app:M10M11M1infcycles}.
Let us exemplarily display the calculation for the cycle $C_0$. We use Eq. (\ref{eq:lhsrhsM10M11M1inf}) with $N=1$, $\sigma_0 = S$ and obtain 
\begin{align}
\label{eq:lhsrhsM10M11M1infC0}
b   P_{S}^{-1} B_{0}  D_{S}^{-1}   b^{-1} \propto P_{S}^{-1}  B_0.
\end{align} 
As $D_S = \identity$, making use of Observation \ref{obs:genpermutation3} we have  $P_{S}^{-1} B_{0}  D_{S}^{-1} = P_{S}^{-1} B_{0} =  \tilde{D} \mathcal{F} \operatorname{diag}(1,\omega,\omega^2) \left(\tilde{D} \mathcal{F}\right)^{-1}$ with $\tilde{D} = \operatorname{diag}(1,\frac{B^{(0)}_{22}}{\sqrt[3]{B^{(0)}_{11} B^{(0)}_{22} }^2},\frac{1}{\sqrt[3]{B^{(0)}_{11} B^{(0)}_{22} }})$. Due to the uniqueness of the spectral decomposition we then have
\begin{align*}
    b &= \tilde{D} \mathcal{F} P_{\tilde{\sigma}} \operatorname{diag}(x_0,x_{1},x_{2})  \mathcal{F}^{-1}  \tilde{D}^{-1} \\
    &=  \tilde{D} \operatorname{diag}(1,\omega,\omega^2)^l\mathcal{F}  \operatorname{diag}(x_0,x_{1},x_{2})  \mathcal{F}^{-1}  \tilde{D}^{-1}
\end{align*}
for some $x_i \in \mathbb{C}$, $l \in \{0,1,2\}$, and some permutation $\tilde{\sigma}$ with $\operatorname{sgn}(\tilde{\sigma})=1$. Here, the additional permutation matrix $P_{\tilde{\sigma}}$ comes from the fact that the proportionality factor in Eq. (\ref{eq:lhsrhsM10M11M1infC0}) allows to cyclically permute the eigenvalues. Equivalently, we may write 
\begin{align*}
b= D \operatorname{diag}(1,\omega,\omega^2)^l \begin{pmatrix} b_{0} & b_{1}& b_{2} \\ b_{2} & b_{0} & b_{1} \\ b_{1} & b_{2} & b_{0} \end{pmatrix}D^{-1},
\end{align*}
where $b_{0}, b_{1}, b_{2}\in \mathbb{C}$, $l \in \{0,1,2\}$, and $D$ is an arbitrary diagonal matrix. Let us remark that the diagonal matrix $D$ is actually irrelevant. More precisely, the fiducial states $\identity \otimes b \otimes \identity \ket{ M(\omega) }$ and $\identity \otimes c \otimes \identity \ket{ M(\omega) }$ give rise to the same TIMPS, if $c= D b D^{-1}$ for any diagonal $D$, due to the Fundamental Theorem (Theorem \ref{thm:fund}) and the symmetries of the seed state\footnote{This may also be easily seen by noting that such $b$ and $c$ are related by a $(b \rightarrow c)$ 1-cycle  as in Eq. (\ref{eq:bc_cycleExpl}) with $g = \identity$ for any diagonal matrix $D$.}. 

It is straightforward to perform the calculation for all cycles, leading to the results in Table \ref{tab:M10M11M1infcycleslong} within Appendix \ref{app:M10M11M1infcycles}. We find that there are continuous families of $b$'s leading to the cycles $C_0$, $T_0^{\tau, \epsilon,\kappa}$, $C_1$, $T_1^{\tau, \epsilon,\kappa}$, $T_2^{\tau, \epsilon,\kappa}$, $T_3^{\tau, \epsilon,\kappa}$, and $T_4^{\circlearrowleft,\circlearrowright}$, while there is a discrete number of $b$'s leading to the cycles $C_2$, $T_5^{\circlearrowleft,\circlearrowright}$, $C_3$, $T_6^{\tau, \epsilon,\kappa}$, $C_4$, and $T_7^{\tau, \epsilon,\kappa}$.  

While we defer details on the concrete parametrizations of the sets of operators $b$ for which  $G_b$ exhibits the respective cycles to Appendix \ref{app:M10M11M1infcycles}, the relations among these sets are important in order to know which different symmetry groups can occur simultaneously. Hence, we will discuss these relations in the following.
Let us denote the set of $b$'s leading to a certain cycle, say $T_0^\tau$, by $b(T_0^\tau)$, etc.
Obviously, the families of $b$'s satisfy relations imposed by the group structure of the cycles mentioned above (see Figure \ref{fig:groupstructure}), e.g., $b(T_1^\tau)$ must be a subset of $b(T_0^\tau)$. Note, however, that it is not guaranteed that for every possible combination of cycles, which is compatible with the group structure, there exists a $b$ such that  $G_b$ exhibits this combination of cycles. On the contrary, additional set-theoretic restrictions emerge, which one may not immediately conclude from the group structure of the cycles. We display all the relations within an Euler diagram in Figure \ref{fig:sets}.

\begin{figure}
\subfloat[]{\label{main:a}\includegraphics[width=1.0\linewidth]{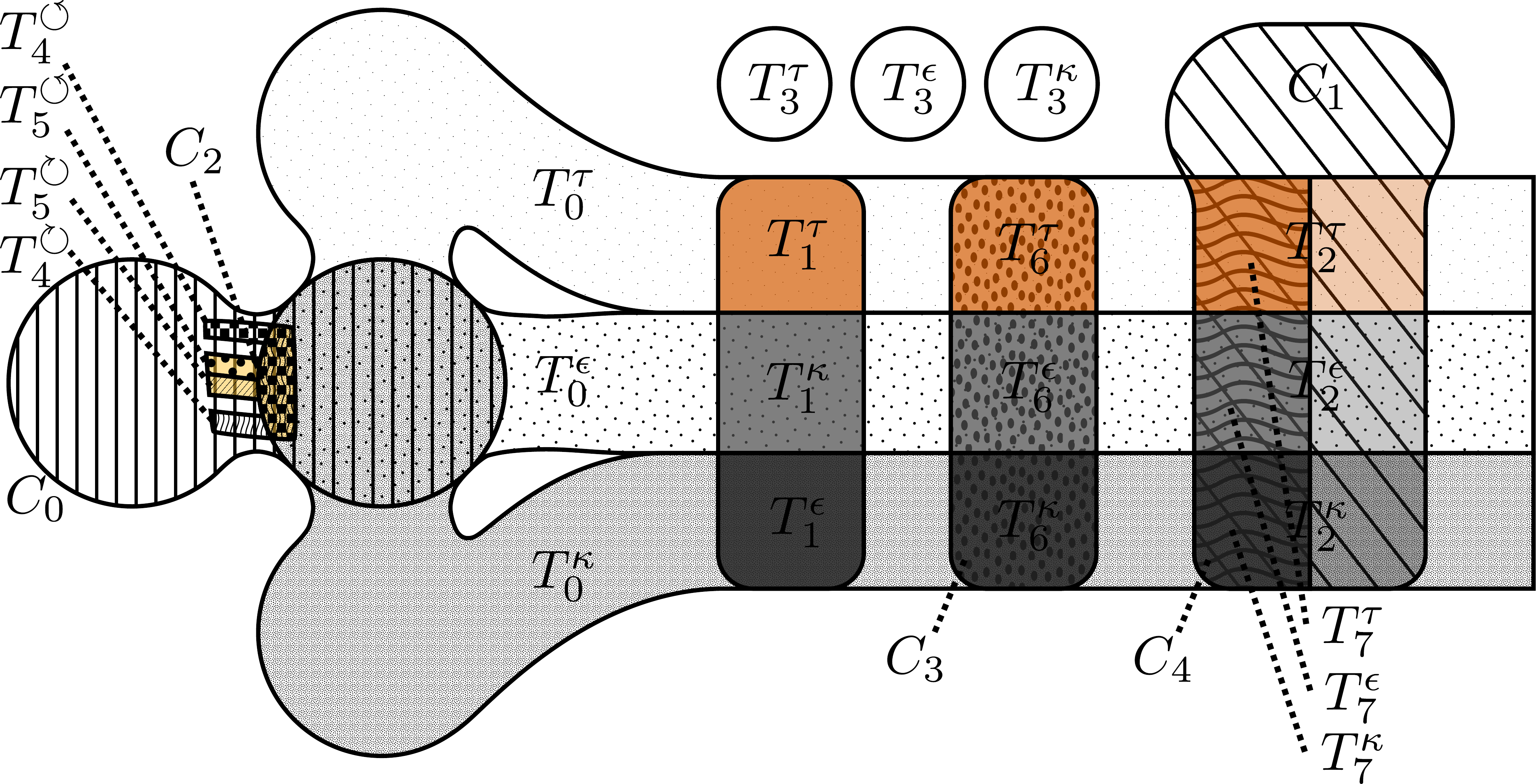}}\par\medskip
\subfloat[]{\label{main:b}\includegraphics[width=1.0\linewidth]{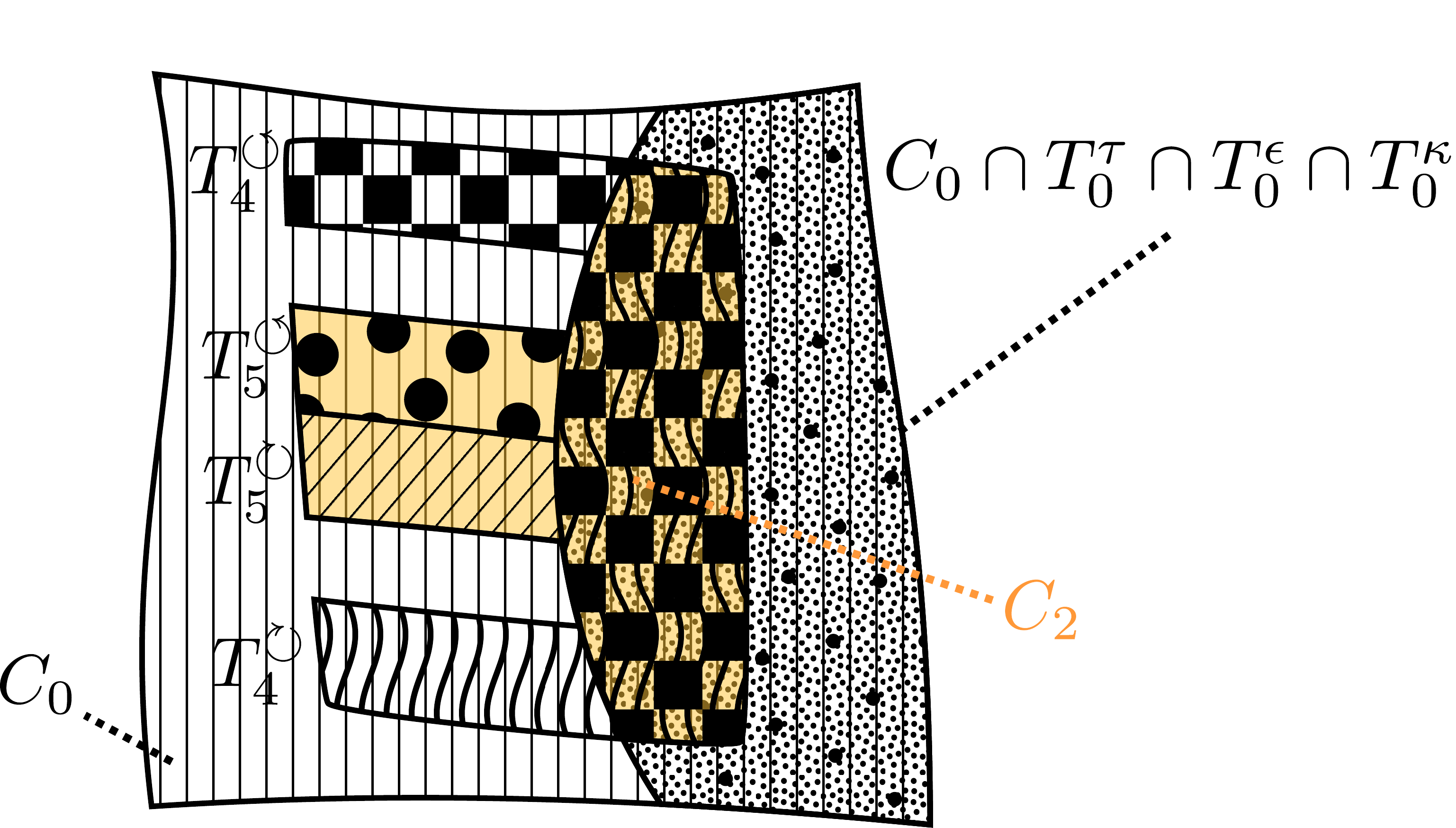}}
\caption{(a) Set-theoretic sketch (Euler diagram) of $b$'s leading to non-trivial cycles in  $G_b$ (considering the fiducial state $\identity \otimes b \otimes \identity \ket{ M(\omega) }$). The sketch illustrates the inclusion relations and intersections between the different sets of $b$'s leading to certain cycles as labelled in Table \ref{tab:M10M11M1infcycles}. Let us point out a few of them. For instance, one sees that the set $b(C_3)$ is a union of the (pairwise) non-intersecting sets $b(T_6^\tau)$, $b(T_6^\epsilon)$, and $b(T_6^\kappa)$, and each of $b(T_6^\sigma)$ is a subset of $b(T_0^\sigma)$ for $\sigma \in \{\tau, \epsilon, \kappa\}$. This illustrates that, e.g., there does exist a $b$ s.t.  $G_b$ exhibits the three cycles $T_6^\tau$, $T_0^\tau$, and $C_3$, but there does not exist any $b$ leading to the cycle $C_3$ only. 
We find that $b(T_0^\tau)$, $b(T_0^\epsilon)$, $b(T_0^\kappa)$, and $b(C_0)$ intersect. However, the three sets $b(T_i^\tau)$, $b(T_j^\epsilon)$, and $b(T_k^\kappa)$ are disjoint for any $i,j,k \in \{1,2,3,6,7\}$. In fact, $b(T_3^{\tau,\epsilon,\sigma})$ are completely isolated in the sense that they do not intersect with any other family. The set $b(T_1^\sigma)$ is a subset of $b(T_0^\sigma)$ (for $\sigma \in\{\tau, \epsilon, \kappa\}$), but does not intersect with any other family. 
Note that the sketch does not correctly represent the geometry or sizes of the sets.
As explained in the main text, the diagram gives as a complete characterization of symmetries of all normal TIMPS generated by fiducial states $\identity \otimes b \otimes \identity \ket{ M(\omega) }$. 
While some of the displayed inclusion relations follow immediately considering the group structure of the cycles as in Figure \ref{fig:groupstructure}, additional relations only reveal themselves considering the full characterization of $b$ as in Table \ref{tab:M10M11M1infcycleslong}. An instance of the latter case would be one of the example mentioned above---the fact that there exists no $b$ s.t.  $G_b$ possesses the cycle $C_3$, but does not possess any of the cycles $T_6^\tau$, $T_6^\epsilon$, and $T_6^\kappa$. (b) Detailed view of an aspect of subfigure (a).}
\label{fig:sets}
\end{figure}

In Observation \ref{obs:permutationnotnormal}, we have given a few conditions on $b$ under which the generated TIMPS is not normal. In contrast to that, we find that any $b$ s.t.  $G_b$ possesses non-trivial cycles and s.t. $b$ does not satisfy one of the mentioned conditions for non-normality, is actually normal.
\begin{observation}
\label{obs:M10M11M1infnormality}
All $b$ s.t. there exist non-trivial cycles in  $G_b$ (see Table \ref{tab:M10M11M1infcycleslong}) are normal with injectivity length $L=4$, $5$, or $6$, unless $b$ fulfills the prerequisites of Observation \ref{obs:permutationnotnormal}. 
\end{observation}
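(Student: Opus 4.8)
The plan is to reduce normality to a spanning condition that can be checked family by family. Recall that, for the translationally invariant tensor with $A_b^0 = Db$ and $A_b^1 = b$ (here $D = \operatorname{diag}(1,\omega,\omega^2)$, see Eq.~\eqref{eq:M10M11M1inftensor}), the $L$-block map $X \mapsto \sum_j \tr(X A_b^{j_1}\cdots A_b^{j_L})\ket{j_1\dots j_L}$ is injective iff the $2^L$ products $A_b^{j_1}\cdots A_b^{j_L}$, $j\in\{0,1\}^L$, span $\C^{3\times 3}$, and the least such $L$ is the injectivity length. Two observations frame the whole statement. First, $\dim\C^{3\times 3}=9>8=2^3$, so length-$3$ products can never span it; hence $L\ge 4$ automatically for every tensor in this family, and the only content of the Observation is the bound $L\le 6$. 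Second, $b$ (hence $A_b^0$ and $A_b^1$) is invertible, so right multiplication by $A_b^0$ is a bijection of $\C^{3\times 3}$; consequently $\dim\operatorname{span}\{\text{length-}L\text{ products}\}$ is non-decreasing in $L$, and once it equals $9$ at some $L$ it stays $9$. Thus it suffices, for each relevant $b$, to exhibit one $L\le 6$ at which the length-$L$ products span.

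I would next trim the set of $b$ that must be treated. By the Fundamental Theorem (Theorem~\ref{thm:fund}), $b$ and $DbD^{-1}$ with $D$ diagonal generate the same TIMPS, so within each parametrized family of Table~\ref{tab:M10M11M1infcycleslong} one fixes a diagonal gauge and removes redundant parameters. It then remains to show that every gauge-fixed $b$ in these families that is not of the two forms in Observation~\ref{obs:permutationnotnormal} is normal with $L\le 6$. The connection is that the two cases of Observation~\ref{obs:permutationnotnormal} are exactly the obstructions. Indeed, a subspace invariant under both $A_b^0$ and $A_b^1$ is $b$-invariant (as $b$ is invertible) and hence invariant under $A_b^0(A_b^1)^{-1}=D$; since $D$ has three distinct eigenvalues, such a subspace is a coordinate subspace fixed by $b$, which occurs iff $b$ has a row or column whose only nonzero entry lies on the diagonal --- the first case. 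And the only way the length-$L$ span can stay a proper subspace for \emph{all} $L$ in the absence of a common invariant subspace is that $A_b^0$ and $A_b^1$ are simultaneously monomial with one and the same nontrivial underlying permutation, i.e.\ $b$ is a generalized permutation matrix --- the second case.

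The core is then a finite case distinction over the families of Table~\ref{tab:M10M11M1infcycleslong}: the continuous ones ($C_0$, $T_0^{\tau/\epsilon/\kappa}$, $C_1$, $T_1^{\tau/\epsilon/\kappa}$, $T_2^{\tau/\epsilon/\kappa}$, $T_3^{\tau/\epsilon/\kappa}$, $T_4^{\circlearrowleft/\circlearrowright}$) and the discrete ones ($C_2$, $T_5^{\circlearrowleft/\circlearrowright}$, $C_3$, $T_6^{\tau/\epsilon/\kappa}$, $C_4$, $T_7^{\tau/\epsilon/\kappa}$). For each I would substitute the parametrization into $A_b^0=Db$, $A_b^1=b$, assemble the $9\times 2^L$ matrix whose columns are the vectorized length-$L$ products, and check for rank $9$, beginning at $L=4$ and raising $L$ to $5$ and then $6$ if needed; by the monotonicity remark, the first $L$ that works is the injectivity length, which one records as $4$, $5$, or $6$ for that family. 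For the continuous families the rank-$9$ condition amounts to the non-vanishing of a few polynomials in the remaining parameters, and one verifies that their common zero set lies inside the union of the non-normal loci of Observation~\ref{obs:permutationnotnormal}; the discrete families are single matrices (up to gauge), where the rank is computed outright.

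The step I expect to be the main obstacle is the ``aperiodicity'' half of the second paragraph: ruling out, cleanly and in general, a genuine periodic obstruction --- a $b$ with no common invariant subspace whose length-$L$ span is nevertheless proper for every $L$ --- apart from the generalized-permutation case. A tidy route is to note that once $\dim\mathcal{W}_L$ (with $\mathcal{W}_L$ the length-$L$ span) has stabilized, the inclusions $\mathcal{W}_LA_b^0,\mathcal{W}_LA_b^1\subseteq\mathcal{W}_{L+1}$ of equal dimension force $\mathcal{W}_LA_b^0=\mathcal{W}_LA_b^1=\mathcal{W}_{L+1}$, hence $\mathcal{W}_L$ is invariant under right multiplication by $A_b^0(A_b^1)^{-1}=D$, and symmetrically under left multiplication by $(A_b^1)^{-1}A_b^0=b^{-1}Db$; combined with irreducibility this pins down the possible stable dimensions and shows that a proper one forces the monomial structure. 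In practice this abstract argument can be sidestepped entirely: since there are only finitely many families, each with few parameters, the monotonicity remark together with the explicit rank checks up to $L=6$ settles every case, and the appearance of the generalized-permutation exception is then simply the locus where those checks fail.
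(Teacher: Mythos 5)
Your strategy is correct and essentially the paper's own: both establish the claim by explicit rank-$9$ (determinant) checks of chosen sets of nine length-$L$ products, $L=4,5,6$, for the parametrized families of Table \ref{tab:M10M11M1infcycleslong}, using several alternative product sets so that the common zero locus of the resulting polynomials is contained in the excluded cases of Observation \ref{obs:permutationnotnormal} (and $\det b=0$). The paper merely streamlines the same computation—first reducing to the maximal families $b(C_0)$, $b(C_1)$, $b(C_3)$, $b(T_0^{\tau})$, $b(T_3^{\tau})$ via the subgroup structure of Figure \ref{fig:groupstructure} and the SLOCC-invariance of normality (Observation \ref{obs:SLOCCnormality}), and passing to the sparser representative $M^1(0)\oplus M^1(1)\oplus M^1(\infty)$—whereas you check all families directly with $A_b^0=Db$, $A_b^1=b$; your $L\ge 4$ lower bound and dimension-monotonicity remarks are correct and let you sidestep the abstract aperiodicity discussion exactly as you say.
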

We present a proof of the observation, as well as additional details on proving normality in general, in Appendix \ref{app:M10M11M1infnormality}.

As a consequence, except for $b$ satisfying the conditions in Observation \ref{obs:permutationnotnormal}, the characterization of cycles directly yields the symmetries of the generated TIMPS.  Figure \ref{fig:sets} hence shows all possible (non-trivial) symmetries of normal TIMPS generated by fiducial states of the type $\identity \otimes b \otimes \identity \ket{ M(\omega) }$ and hence for any normal TIMPS corresponding to a fiducial state in the SLOCC class of $\ket{M(\omega)}$.
For any given normal $b$, the symmetry group of the generated TIMPS may be determined by comparing $b$ with the results in Table \ref{tab:M10M11M1infcycleslong} in Appendix \ref{app:M10M11M1infcycles}. Conversely, in order to decide whether there exists a normal TIMPS (generated by some $\identity \otimes b \otimes \identity \ket{ M(\omega) }$) possessing a desired symmetry group, one may simply look up Figure \ref{fig:sets} and see whether the corresponding intersection is non-empty. If it exists, then an appropriate $b$ may be constructed with the help of Table  \ref{tab:M10M11M1infcycleslong} in Appendix \ref{app:M10M11M1infcycles}.

\subsection{SLOCC Classification}
As we are dealing here with finitely many symmetries, it is straightforward to determine the SLOCC classes of the TIMPS.
Whereas we will determine all the classes with infinitely many symmetries in the subsequent section, we will only outline here the procedure and discuss some examples.

To this end, we consider the concatenation conditions presented in Eq. (\ref{eq:bc_cycleExpl}). 
To determine for instance all $(b\rightarrow c)$ 1-cycles, we have to only consider one equation, namely
\begin{align*}
   b   P_{\sigma}^{-1} B  D_{\sigma}^{-1}   \propto P_{\sigma}^{-1}  B c,
\end{align*}
which immediately lets one construct all $c$ connected to a given $b$ via an  $(b\rightarrow c)$ 1-cycle. 
Stated differently, normal TIMPSs corresponding to the fiducial states $\identity \otimes b \otimes \identity \ket{ M(\omega) }$ and $\identity \otimes c \otimes \identity \ket{ M(\omega) }$ respectively are related to each other via a global operation iff $b$ and $c$ fulfill the equation above.

For 2-cycles one would proceed as follows. First, one considers the necessary condition
\begin{align}
   b   P_{\sigma_1}^{-1} D_{\sigma_1}^{-1} B_1^{-1} B_0  D_{\sigma_0}P_{\sigma_0} b^{-1}  \propto P_{\sigma_0}^{-1}  B_0^{-1} B_1 P_{\sigma_1}.
\end{align}
Obviously, the tools utilized throughout this section so far are applicable here. Once $g_0$ and $g_1$ satisfy the necessary conditions for some given $b$, all $c$ connected to $b$ via the  $(b\rightarrow c)$ 2-cycle given by $g_0, g_1$ may be straightforwardly characterized. 

Let us remark that an obvious necessary condition for SLOCC equivalence of two states $\ket{\psi}$ and $\ket{\phi}$ is that their symmetry group must be compatible, i.e., $\mathcal{S}_{\ket{\psi}}$  equals $\mathcal{S}_{\ket{\phi}}$ up to conjugation. For the MPS considered here this must be a conjugation by some tensor product of $g_\sigma$ as in Eq. (\ref{eq:diag3_A}). Thus, not only the order of the full symmetry group must coincide, in fact, but also the number of symmetries involving transpositions as well as the number of symmetries involving cyclic permutations of length 3 must be retained, each. This immediately rules out SLOCC equivalence among many of the families of $b$s as in Figure \ref{fig:sets}.

Let us conclude with some examples. It may be easily verified that the MPS associated to the family $b(T_i^{\tau})$ are SLOCC-equivalent to some MPS associated to the family $b(T_i^{\epsilon})$ and $b(T_i^{\kappa})$ (and vice versa) for any $i \in \{0,1,2,3,6,7\}$. This is witnessed by the $(b \rightarrow c)$ 1-cycles given by $g_\epsilon$, or $g_\kappa$, respectively. Moreover, for even $N$, any MPS generated by some fiducial state belonging to $b(C_0) \cap b(T_0^{\tau}) \cap b(T_0^{\epsilon}) \cap b(T_0^{\kappa})$ is SLOCC equivalent to some MPS associated to $b(C_1) \cap b(T_0^{\kappa}) \cap b(T_2^{\kappa})$ and vice versa. This is witnessed by the $(b\rightarrow c)$ 2-cycle $\identity \otimes \kappa$. Conversely, there exist examples in $b(C_0)$, which are not related to any $b$ in $b(C_1)$ despite compatibility of the stabilizer.

\section{Symmetries and SLOCC classes of the TIMPS \texorpdfstring{$\Psi(LLT)$}{Psi(LLT)}}
\label{sec:L1L1T}
In this section we discuss MPS generated by fiducial states that are represented by $\ket{LLT} = \ket{0}(\ket{01} + \ket{22}) +  \ket{1}(\ket{00} + \ket{12})$. First, we present the symmetries of the fiducial states. Then, we characterize the symmetries of normal MPS, which, in contrast to the previous section, are potentially infinitely many. In the course of that, we give a characterization of those fiducial states that generate normal MPS. Finally, we characterize SLOCC equivalence among normal MPS. We conclude with a few remarks on some non-normal MPS generated by fiducial states represented by $\ket{LLT}$.

The matrix pencil corresponding to $\ket{LLT}$ reads
\begin{align*}
\mathcal{P} = L_1\oplus L_1^T = \begin{pmatrix}\lambda & \mu &0\\0& 0  & \lambda\\ 0 & 0 & \mu \end{pmatrix}.
\end{align*}
Note that the tensor $A$ corresponding to $\ket{LLT}$ is simply given by  $A^0 = \mathcal{P}|_{\mu=1,\lambda=0}$ and $A^1 = \mathcal{P}|_{\mu=0,\lambda=1}$.
\subsection{Symmetries of the fiducial state}
The symmetries of the state $\ket{LLT}$ are special in the sense that any invertible operator acting on the first site forms a local symmetry of $\ket{LLT}$ with appropriate operators acting on the remaining two sites. That is, for any operator $g$ on the qubit, there exist $3\times 3$ matrices, which are uniquely determined by $g$, $B_g$, and $C_g$ such that $g\otimes B_g\otimes C_g$ is a symmetry of the state. Moreover, as mentioned above, any symmetry of ${\ket{LLT}}$ can be written as a product of one symmetry of the form ${g} \otimes B_{g} \otimes C_{g}$ and symmetries of the form $\identity \otimes B \otimes C$.
For 
\begin{equation*}
{g} = \begin{pmatrix}\alpha & \beta \\ \gamma & \delta \end{pmatrix}
\end{equation*} 
it is easy to show that 
\begin{align*}
B_{g} &= \frac{1}{det(g)} \begin{pmatrix}det(g) & 0 & 0 \\0 &\alpha & -\beta \\ 0 & -\gamma & \delta \end{pmatrix} \\
C_{g}^T &= \frac{1}{det(g)} \begin{pmatrix}\alpha & -\gamma &0 \\  -\beta & \delta & 0 \\ 0 & 0 & det(g) \end{pmatrix} .
\end{align*} 
The symmetries of the form  $\identity \otimes B \otimes C$ are given by
\begin{align*}
B &=  \begin{pmatrix}1 &  B_{01} & B_{02} \\0 & B_{11} & 0 \\ 0 & 0 & B_{11} \end{pmatrix} \\
C^T &=  \frac{1}{B_{11}} \begin{pmatrix}B_{11} & 0 & -B_{01} \\0 & B_{11} & -B_{02} \\ 0 & 0 & 1 \end{pmatrix},
\end{align*} 
where we use the normalization $B_{00}=1$.
The symmetries of $\ket{LLT}$ are thus given by
\begin{align*}
{g} \otimes B_{g} B \otimes C_{g} C
\end{align*}
for any ${g} \in GL(2,\mathbb{C})$, $B_{00}, B_{01},B_{02},B_{11} \in \mathbb{C}$.

Clearly, the symmetries of $\identity \otimes b \otimes \identity \ket{LLT}$ are given by
\begin{align} \label{Eq:xy}
g \otimes b\underbrace{ B_{g} B }_{x}b^{-1} \otimes \underbrace{C_{g} C}_{y^T},
\end{align}
using the notation $x$ and $y$ for symmetries of the representative as introduced in Section \ref{sec:prelim}.  We denote the local symmetry group of this fiducial state by $G_b$ throughout this whole section. The tensor $A_b$ associated to $\identity \otimes b \otimes \identity \ket{LLT}$ reads
\begin{align*}
A_b^0 = \begin{pmatrix} 0 & b_{00} & b_{02} \\ 0 & b_{10} & b_{12}\\0 & b_{20} & b_{22}\end{pmatrix} 
\quad \text{and} \quad 
A_b^1 = \begin{pmatrix} b_{00} &0 & b_{01} \\  b_{10} & 0 & b_{11}\\b_{20}& 0 & b_{21}\end{pmatrix}.
\end{align*}

Let us now introduce the following  parametrization for any $3 \times 3$ operators $b$ with $b_{20} = 1$ (it will become clear later on that $b_{20} \neq 0$ is required in order to obtain normal MPS) in terms of a matrix $T$ and a vector $\vec{v}$, as well as two complex numbers  $b_{00}, b_{10}$,
 \begin{align}
 \label{eq:L1L1Tparam}
 b= \begin{pmatrix} b_{00} \\ b_{10} \\ 1\end{pmatrix}. \begin{pmatrix} 1 \\ v_1 - b_{00}  \\ v_0 - b_{10} \end{pmatrix}^T -\det T \begin{pmatrix}&  \sigma_x T^{-1} \sigma_z \\0& \end{pmatrix}.
 \end{align}
 It may be easily veryfied that for a given $b$, $T$ and $\vec{v}$ may be obtained via 
\begin{align}
\label{eq:Tdef}
T &= \begin{pmatrix} b_{02}- b_{00}b_{22} &  b_{10}b_{22} -b_{12} \\ b_{01} - b_{00}b_{21} & b_{10}b_{21} -b_{11} \end{pmatrix}\\
\label{eq:vdef}
\vec{v}&=\begin{pmatrix}
b_{10} + b_{22}\\
b_{00} + b_{21}
\end{pmatrix}.
\end{align}
Note that $\det b = \det T$. Despite the fact that this parameterization might seem a bit arbitrary, we will see that it is particularly useful to characterize the local symmetries and the SLOCC classes of TIMPS corresponding to fiducial states of the form $\one \otimes b \otimes \one \ket{L_1 \otimes L_1^T}$

\subsection{Concatenation conditions}
To obtain a physical symmetry, (normal) MPS need to fulfill the conditions given in Eq. (\ref{eq:ConcatSymb}), which we restate here,
\begin{align*}
y_k b x_{k+1}b^{-1} &\propto \identity \quad \forall  \ k \in \{0, \ldots, N-2\}, \nonumber\\ y_{N-1} b x_{0}b^{-1}  &\propto \identity,
\end{align*}
where $x,y$ are such that $g\otimes x\otimes y^T$ are a symmetry of the fiducial state as in Eq. (\ref{Eq:xy}).
For simplicity, we will in the following only consider the upper equation for $k \in \{0, \ldots, N-1\}$ and consider addition modulo $N$ in this context, i.e., $N \equiv 0$.

Inserting for $x_{k+1},y_k$ the expressions given in Eq. (\ref{Eq:xy}) and using the normalization  $B_{00}^{(k)} = 1$ as well as $\det {g}_k = 1$ for all $k$, the concatenation conditions  $b x_{k+1} b^{-1} \propto y_k^{-1}$ read
 
\begin{align}
\label{eq:lhsrhs}
b \underbrace{\begin{pmatrix}1&  B_{01}^{(k+1)} & B_{02}^{(k+1)} \\0 & \alpha_{k+1} B_{11}^{(k+1)} & -\beta_{k+1} B_{11}^{(k+1)} \\ 0 & -\gamma_{k+1} B_{11}^{(k+1)} & \delta_{k+1} B_{11}^{(k+1)} \end{pmatrix}}_{LHS_{k+1}} b^{-1} \propto \nonumber \\
\underbrace{\begin{pmatrix}\delta_{k}   &  \gamma_{k} & \gamma_{k} B_{02}^{(k)} + \delta_{k} B_{01}^{(k)}\\ \beta_{k}   & \alpha_{k}  & \alpha_{k} B_{02}^{(k)} + \beta_{k} B_{01}^{(k)} \\ 0 &0 &  B_{11}^{(k)} \end{pmatrix}}_{RHS_k}.
\end{align} 

We denote the eigenvalues of $g_k$ by $\chi_k$ and $1/\chi_k$. We use the convention 
$|\chi_k| \geq 1$ and additionally in case $|\chi_k| > 1$ (i), we choose $\Im{\chi_k} >0$, or $\Im{\chi_k} = 0$ and $\Re{\chi_k} > 0$, while in case $|\chi_k| = 1$ (ii) we choose both $\Re{\chi_k} \geq 0$ and $\Im{\chi_k} \geq 0$. We denote the domain of $\chi_k$ by $\mathcal{D}$ and show a sketch of $\mathcal{D}$ in Figure \ref{fig:domain}.
This normalization may be achieved by ordering the eigenvalues appropriately and by a freedom of multiplying ${g}_k$ by $-1$ which still remains after fixing  $\det {g}_k = 1$.  
For each $g_k$ we then consider the Jordan decomposition 
\begin{align}
\label{eq:jordandecomp}
{g}_k = S_k J_k S_k^{-1},
\end{align}
where either $J_k = \operatorname{diag}(\chi_k, 1/\chi_k)$ (in case ${g}_k$ is diagonalizable), or $J_k = \begin{pmatrix} 1 & 1 \\ 0 & 1 \end{pmatrix}$ (in case ${g}_k$ is not diagonalizable, here $\chi_k=1$).

\begin{figure}[t]
\includegraphics[width=0.85\linewidth]{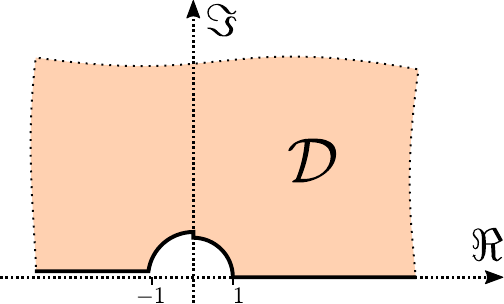}
\caption{Sketch of the domain of $chi_k$, $\mathcal{D}$. $\mathcal{D}$ is comprised of complex numbers in the upper half of the complex plane which have absolute value larger than or equal to one, excluding the negative real axis, and, moreover excluding numbers with negative real part whose absolute value equals 1. The imaginary unit $i$ is included in $\mathcal{D}$.}
\label{fig:domain}
\end{figure}

As a simple necessary condition, we see that the set of eigenvalues of $LHS_{k+1}$ must match the set of eigenvalues of $RHS_{k}$ up to a common proportionality factor.
The eigenvalues read\footnote{Note that eigenvalues may coincide and that $LHS_{k}$, $RHS_{k}$ may be not diagonalizable, even though $g_k$ is diagonalizable.}
\begin{align*}
\sigma(LHS_{k+1}) &= \left\{1, B_{11}^{(k+1)} \chi_{k+1}, B_{11}^{(k+1)}/\chi_{k+1}   \right\}  , \\
\sigma(RHS_{k}) &= \left\{B_{11}^{(k)}, \chi_k, 1/\chi_k \right\}.
\end{align*}

Let us remark here that considering the concatenation conditions, it is immediately clear that for any ${g}$, one can find an MPS that has the global symmetry ${g}^{\otimes N}$. The reason for that is that the concatenation condition ($b LHS_{0} b^{-1} \propto RHS_0$), in case of a global symmetry, i.e., 1-cycle) reduces to matching the set of eigenvalues (up to a common proportionality factor) \footnote{This can be easily seen choosing $B_{01}^0 =B_{02}^0=0$ and $B_{11}^0=0$.} For all matrices ${g}$ it is possible to find a proportionality factor and a choice of $B_{11}$  such that the eigenvalues match.

\subsection{Local symmetries of the TIMPS \texorpdfstring{$\Psi(LLT)$}{Psi(LLT)}}
In this subsection, we present a characterization of the symmetries of normal MPS generated by $\identity\otimes b \otimes \identity \ket{\Psi_0}$ and also discuss some of the non-normal MPS. Certain details of the derivation will be deferred to Appendix \ref{app:L1L1Tsym}.

Let us right away distinguish between the two cases $b_{20} = 0$ and $b_{20} \neq 0$. In the former case, $b_{20} = 0$, the generated MPS cannot be normal as we will see in Observation \ref{obs:L1L1Tnormality}. However, despite the fact that the fundamental theorem does not apply, actually much can be said about the symmetries of the corresponding MPS as we show in the following observation.
\begin{observation}
\label{obs:L1L1Tb20vansihes}
$N$-qubit MPS associated to $\identity\otimes b \otimes \identity \ket{LLT}$ with $b_{20} = 0$ are either SLOCC-equivalent to $\ket{0}^{\otimes N}$, $\ket{\mathrm{GHZ}_N}$, or they possess only global symmetries.
\end{observation}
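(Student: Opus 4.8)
The plan is to exploit the very rigid form the tensor $A_b$ takes when $b_{20}=0$: I will show that in this case the generated MPS is, for every $N$, a superposition of (at most) two product states, after which its SLOCC type is immediate. Since $b_{20}=0$ forces the MPS to be non-normal, the fundamental theorem does not apply and the argument must proceed by a direct computation of the amplitudes. Setting $b_{20}=0$ in the expressions for $A_b^0$ and $A_b^1$ given above, the last row of $A_b^0$ becomes $(0,0,b_{22})$ and that of $A_b^1$ becomes $(0,0,b_{21})$; hence both matrices are block upper triangular with respect to the splitting $\C^3=\C^2\oplus\C$ of the virtual space into indices $\{0,1\}$ and $\{2\}$. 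Every product $A_b^{i_0}\cdots A_b^{i_{N-1}}$ is then block upper triangular, so its trace is the sum of the traces of its two diagonal blocks: writing $M_i$ for the top-left $2\times2$ block of $A_b^i$ and putting $c_0=b_{22}$, $c_1=b_{21}$, one has $\tr(A_b^{i_0}\cdots A_b^{i_{N-1}})=\tr(M_{i_0}\cdots M_{i_{N-1}})+c_{i_0}\cdots c_{i_{N-1}}$ for every word.

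I would next analyse the blocks $M_i$. Here $M_0$ has vanishing first column and $M_1$ has vanishing second column, and in both cases the remaining column equals the same vector $\vec v:=(b_{00},b_{10})^T$; in particular $M_0$ and $M_1$ both have rank at most one with common column space $\operatorname{span}\{\vec v\}$. A one-line induction, using $M_i\vec v=g_i\vec v$ with $g_0=b_{10}$, $g_1=b_{00}$ and $\tr M_i=g_i$, then gives $\tr(M_{i_0}\cdots M_{i_{N-1}})=g_{i_0}\cdots g_{i_{N-1}}$. Inserting this into Eq.~(\ref{eq:mps}) produces $\ket\Psi=\ket{\phi_1}^{\otimes N}+\ket{\phi_2}^{\otimes N}$ with $\ket{\phi_1}=b_{10}\ket0+b_{00}\ket1$ and $\ket{\phi_2}=b_{22}\ket0+b_{21}\ket1$, i.e.\ a superposition of two single-qubit product vectors, each of which may vanish or be proportional to the other.

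The case distinction is then short. If $\ket{\phi_1}$ and $\ket{\phi_2}$ are linearly independent they form a basis of $\C^2$, so there is $g\in GL(2,\C)$ with $g\ket{\phi_1}=\ket0$ and $g\ket{\phi_2}=\ket1$, whence $g^{\otimes N}\ket\Psi=\ket0^{\otimes N}+\ket1^{\otimes N}=\ket{\mathrm{GHZ}_N}$ and $\ket\Psi$ is SLOCC-equivalent to $\ket{\mathrm{GHZ}_N}$. If $\ket{\phi_1}$ and $\ket{\phi_2}$ are linearly dependent (one of them vanishes, or $\ket{\phi_2}\propto\ket{\phi_1}$), then $\ket\Psi=\lambda\,\ket\chi^{\otimes N}$ for a single product vector $\ket\chi$ and a scalar $\lambda$; as soon as $\lambda\ket\chi\neq0$ this is a genuine product state and hence SLOCC-equivalent to $\ket0^{\otimes N}$. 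The only remaining possibilities are the non-generic parameter choices for which the amplitudes collapse (e.g.\ $\ket{\phi_2}=\mu\ket{\phi_1}$ with $1+\mu^N=0$, where $\ket\Psi$ vanishes); these are exactly what the third alternative is meant to absorb, and are disposed of by inspection.

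The step I expect to be the main obstacle is this last one. Because $b$ is not assumed invertible here, various entries can vanish and the two product vectors can degenerate, so one must check, among other things, that the operator $g$ used in the $\ket{\mathrm{GHZ}_N}$ case is genuinely invertible (it is, precisely because $\ket{\phi_1},\ket{\phi_2}$ are assumed independent there), that the surviving product vector in the $\ket0^{\otimes N}$ case is nonzero, and that none of the residual cases yields local symmetries beyond the global ones. The structural heart of the proof---the block-triangular decomposition and the rank-one reduction leading to the two-product-state form---is the short computation sketched above.
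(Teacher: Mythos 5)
Your proof is correct, and its first half is essentially the computation the paper performs: the paper's proof also evaluates $\tr(A_b^{j_1}\cdots A_b^{j_N})$ for $b_{20}=0$ and finds that the amplitude depends only on the Hamming weight (your block-triangular-plus-rank-one argument is just a structured way of obtaining the same formula). Where you diverge is in how the conclusion is drawn. The paper stops at permutation invariance and then invokes the known classification of the local symmetries of symmetric $N$-qubit states: such a state is SLOCC-equivalent to $\ket{0}^{\otimes N}$, $\ket{\mathrm{GHZ}_N}$ or $\ket{W_N}$, or it is ``non-exceptionally symmetric'' and hence has only global symmetries; the $W$ case is excluded by a bond-dimension argument. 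You instead push the amplitude formula one step further to the exact identity $\ket{\Psi}=\ket{\phi_1}^{\otimes N}+\ket{\phi_2}^{\otimes N}$ with $\ket{\phi_1}=b_{10}\ket{0}+b_{00}\ket{1}$, $\ket{\phi_2}=b_{22}\ket{0}+b_{21}\ket{1}$, which lets you conclude directly and self-containedly, without citing the symmetric-state results, and in fact yields a strictly stronger statement: for $b_{20}=0$ the nonvanishing MPS is always GHZ- or product-equivalent, so the third alternative of the observation (and the $W$ case) never genuinely occurs. The only loose end is the degenerate situation where $\ket{\phi_2}\propto\ket{\phi_1}$ with the two contributions cancelling (e.g.\ $\ket{\phi_2}=\mu\ket{\phi_1}$, $\mu^N=-1$), where the state vanishes and the trichotomy is vacuous rather than ``absorbed by the third alternative'' as you phrase it; the paper's proof glosses over exactly the same point, so this is a presentational remark rather than a gap. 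In short: same key computation, but your second half trades the paper's appeal to the symmetric-state classification for an elementary case analysis that buys a sharper conclusion.
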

Let us remark here that using $b= \identity$, i.e., using the seed state $\ket{LLT}$ as fiducial state gives rise to an MPS that is a product state.
\begin{proof}
In order to prove the observation, we consider the definition of an MPS as in Eq. (\ref{eq:mps}) and note that for $b_{20}=0$ it holds that 
\begin{align*}
\tr\{A^{j_1} A^{j_2} \ldots A^{j_N}\} = b_{22}^{|j|} b_{21}^{N - |j|} +  b_{00}^{|j|} b_{10}^{N - |j|},
\end{align*}
where $j = (j_1 ,\ldots, j_N) \in \{0,1\}^N$ and $|j|$ denotes the Hamming weight of $j$. In particular, the expression in Eq. (\theequation) does not depend on the order of the operators $A_{j_l}$. Thus, the MPS is, for $b_{20}=0$ not only translationally invariant, but actually invariant under any particle permutation. A permutation invariant $N$-qubit state is either SLOCC equivalent to $\ket{0}^{\otimes N}$, $\ket{\mathrm{GHZ}_N}$, or $\ket{W_N}$\footnote{Note that the W-state is not representable by a TIMPS of bond dimension three.}, or the state is what was called non--exceptionally symmetric \cite{sym}, meaning that all its symmetries are of the form $S^{\otimes N}$ \cite{BaKr09,MaKr10, MiRo13}.   In the former cases, the symmetries of the MPS are well-known and in the latter case (by definition of non--exceptionally symmetric states) the MPS possesses global symmetries only.
\end{proof}

We will focus on the case $b_{20} \neq 0$ for the remainder of this section. For simplicity, in the following we will assume $b_{20} = 1$, as an overall scaling factor within $b$ is irrelevant. Note, however, that $b_{20} \neq 0$ is not a sufficient condition to have normal MPS. In fact, normality additionaly depends on $\vec{v}$ as in Eq. (\ref{eq:vdef}), as the following observation shows. We prove the observation in Appendix \ref{app:L1L1Tnormality} (see also Appendix \ref{app:M10M11M1infnormality} for a few general remarks on proving normality).
\begin{observation}
\label{obs:L1L1Tnormality}
$N$-qubit MPS associated to $\identity\otimes b \otimes \identity \ket{LLT}$ are normal if and only if $b_{20}\neq 0$ and $\vec{v} \neq \vec{0}$.
\end{observation}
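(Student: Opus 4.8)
The plan is to use the characterization of normality from Section~\ref{sec:revMPS}: $A_b$ is normal iff there is an $L$ for which the $2^L$ words $A_b^{j_1}\cdots A_b^{j_L}$ ($j\in\{0,1\}^L$) span $M_3(\mathbb{C})$ linearly, since the blocked map $X\mapsto\sum_j\tr(X A_b^{j_1}\cdots A_b^{j_L})\ket{j_1\dots j_L}$ is injective exactly when these matrices span the whole matrix algebra. I would first record the column picture of the tensor: if $u$, $q$, $p$ denote the three columns of $b$, then $A_b^0 e_1=0$, $A_b^0 e_2=u$, $A_b^0 e_3=p$, $A_b^1 e_1=u$, $A_b^1 e_2=0$, $A_b^1 e_3=q$. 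Since $b$ is invertible throughout this section (otherwise $\one\otimes b\otimes\one\ket{LLT}$ leaves the $\ket{LLT}$ SLOCC class), $\{u,p,q\}$ is a basis and $\ker A_b^0=\mathbb{C}e_1$, $\ker A_b^1=\mathbb{C}e_2$. For completeness one also notes that a non-invertible $b$ can never give a normal MPS, as then all words have column space inside the at most two-dimensional $\operatorname{Im}(b)$.

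For the necessity direction I argue by contraposition. If $b_{20}=0$, the third rows of $A_b^0$ and $A_b^1$ are $(0,0,b_{22})$ and $(0,0,b_{21})$, so $\operatorname{span}(e_1,e_2)$ is a common invariant subspace; every nonempty word then has vanishing $(3,1)$- and $(3,2)$-entries, hence the words span at most a seven-dimensional subspace of $M_3(\mathbb{C})$ and $A_b$ is not normal --- in line with Observation~\ref{obs:L1L1Tb20vansihes}. If instead $b_{20}=1$ but $\vec{v}=\vec{0}$, the point is that $\mathbb{C}^3=V_0\oplus V_1$ with $V_0=\mathbb{C}u$, $V_1=\operatorname{span}(e_1,e_2)$ is a direct sum ($u_3=1$); that $A_b^i(V_1)=\mathbb{C}u\subseteq V_0$ holds identically; and that $A_b^0 u=b_{10}u+p$ and $A_b^1 u=b_{00}u+q$ have third components $b_{10}+b_{22}=v_0$ and $b_{00}+b_{21}=v_1$ --- so $\vec{v}=\vec{0}$ is exactly the condition $A_b^i(V_0)\subseteq V_1$. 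Thus both $A_b^0$ and $A_b^1$ are odd for this $\mathbb{Z}_2$-grading, so a word of length $L$ is homogeneous of degree $L\bmod 2$: for even $L$ it is block-diagonal with respect to $V_0\oplus V_1$ and lies in a five-dimensional space, for odd $L$ it is block-off-diagonal and lies in a four-dimensional space. In neither case can the words span $M_3(\mathbb{C})$, so $A_b$ is not normal.

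For the sufficiency direction I would take $b_{20}=1$ and $\vec{v}\ne\vec{0}$ and show $A_b$ is normal by checking that it is irreducible and aperiodic, which for bond dimension three are the two ways normality can fail (reducibility or periodicity of the transfer operator). Irreducibility: if $\mathbb{C}z$ were a common invariant line, then ($z_3=0$ forces $z=0$ since $u\notin\operatorname{span}(e_1,e_2)$) we may take $z=(z_1,z_2,1)$, and the invariance equations $z_2u+p\propto z$, $z_1u+q\propto z$ express $p$ and $q$, hence all three columns of $b$, as combinations of $u$ and $z$, contradicting $\det b\ne 0$; an invariant plane would have left-invariant annihilator $\mathbb{C}\phi$ with $\phi^TA_b^0$, $\phi^TA_b^1$ forcing either $\phi=e_3$ (excluded by $b_{20}=1$) or $\phi\perp u,p,q$ (excluded since they span $\mathbb{C}^3$). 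Aperiodicity: a $\mathbb{Z}_2$-grading must coincide with the one above and hence needs $\vec{v}=\vec{0}$; and in a $\mathbb{Z}_3$-grading the lines $\ker A_b^0=\mathbb{C}e_1$, $\ker A_b^1=\mathbb{C}e_2$ are two of the three grading lines and $u=A_b^0 e_2$ the third, whence $A_b^0 u$ and $A_b^1 u$ would have to lie on $\mathbb{C}e_1$ or $\mathbb{C}e_2$, i.e.\ have vanishing third component, again forcing $\vec{v}=\vec{0}$. With irreducibility and aperiodicity established, normality follows, and a direct count of how many words are needed to span $M_3(\mathbb{C})$ fixes the injectivity length (it is at most $6$, as for the $M(\omega)$ class of Section~\ref{sec:M10M11M1inf}); I would carry out this bookkeeping in Appendix~\ref{app:L1L1Tnormality}.

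The conceptual heart is the pair of identities $A_b^0 u=b_{10}u+p$ and $A_b^1 u=b_{00}u+q$ with third components $v_0$ and $v_1$: the partial grading $A_b^i(\operatorname{span}(e_1,e_2))\subseteq\mathbb{C}u$, which is always present, upgrades to a genuine $\mathbb{Z}_2$-grading precisely when $\vec{v}=\vec{0}$, and once this is seen the necessity direction is immediate. The main obstacle is the sufficiency direction: assembling the case analysis that rules out \emph{every} exceptional structure --- invariant line, invariant plane, $\mathbb{Z}_2$- and $\mathbb{Z}_3$-grading --- uniformly over all invertible $b$ with $\vec{v}\ne\vec{0}$, and then performing the explicit word-spanning count that pins down the injectivity length.
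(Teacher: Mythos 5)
Your proposal is correct, and while your necessity direction is essentially the paper's argument in disguise (the paper reduces via Observation~\ref{obs:SLOCCnormality} to a sparse $b$ and reads off the forbidden sparsity patterns of odd/even words, which is exactly your $\mathbb{Z}_2$-grading $V_0=\C u$, $V_1=\operatorname{span}(e_1,e_2)$ stated basis-free), your sufficiency direction takes a genuinely different route. The paper proves normality for $\vec{v}\neq\vec{0}$ constructively: after the same SLOCC reduction it exhibits, separately for $v_1\neq 0$ and $v_1=0$, nine explicit words of length $4$ whose $9\times 9$ determinant is a monomial ($-T_{00}^7T_{11}^6v_1^{10}$, resp.\ $-T_{00}^{11}T_{11}^5v_0^4$), which is self-contained and immediately yields injectivity length $L=4$. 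You instead verify irreducibility (no common invariant line or plane) and aperiodicity (no $\mathbb{Z}_2$- or $\mathbb{Z}_3$-grading, the period being at most $D=3$) and invoke the quantum Perron--Frobenius/Wielandt equivalence between primitivity and normality \cite{Sanz2010}. Your case analyses check out (the kernel argument correctly pins the $\mathbb{Z}_2$ and $\mathbb{Z}_3$ gradings to $\{\C u,\operatorname{span}(e_1,e_2)\}$ and $\{\C e_1,\C e_2,\C u\}$, both of which force $\vec{v}=\vec{0}$), so the characterization itself is established; what your route buys is a structural explanation of why normality fails precisely at $\vec{v}=\vec{0}$ (a grading emerges) without any determinant bookkeeping or reduction of $b$, at the cost of leaning on the nontrivial primitivity machinery and of not delivering the concrete injectivity length --- your claim that $L\le 6$ is deferred bookkeeping, whereas the paper's explicit spanning set gives $L=4$ outright.
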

 We keep this fact in mind, however, in the following we will continue without narrowing down the considered set of fiducial states any further and postpone a more detailed discussion on normality till Section \ref{sec:L1L1Tnonnormal}. 

Let us now analyze the concatenation equations in more depth. The (2,0)-matrix element of the concatenation condition with proportionality factors $\lambda_k$, $b LHS_{k+1}  - \lambda_k RHS_{k} b = 0$, reads $b_{20} (1 - B_{11}^{(k)} \lambda_k) = 0$. As $b_{20} = 1$, this matrix element thus fixes the proportionality factors. We then obtain equality of the following two sets of eigenvalues as necessary condition,
\begin{align*}
 \left\{B_{11}^{(k)},B_{11}^{(k)} B_{11}^{(k+1)} \chi_{k+1}, B_{11}^{(k)} B_{11}^{(k+1)}/\chi_{k+1}  \right\}   \nonumber \\
 = \left\{B_{11}^{(k)}, \chi_k, 1/\chi_k \right\}.
\end{align*}
Particularly interesting are the trace and the determinant of the matrix equation in the concatenation condition, i.e., the sum and the product of the elements in the two sets in Eq. (\theequation). One obtains
\begin{align*}
\left(B_{11}^{(k)}\right)^2  \left( B_{11}^{(k+1)}\right)^2 &= 1,\\
 B_{11}^{(k)}  B_{11}^{(k+1)} (\chi_{k+1} + 1/\chi_{k+1}) &=   \chi_{k} + 1/\chi_{k}.
\end{align*}
Considering the chosen normalization, this implies that  the sets of eigenvalues of ${g}_k$ must coincide for all $k$, i.e., $\chi_{k} =  \chi_{l}$ for all $k, l$. We will thus drop the index $k$ in $\chi_k$, in the following.
If $\chi \neq i$, i.e., if $\tr {g}_k$ does not vansish, then $B_{11}^{(k)}  B_{11}^{(k+1)}=1$ for all $k$. Considering a cycle of odd length, one moreover has that either $B_{11}^{(k)}  = 1$ for all $k$, or $B_{11}^{(k)}  = -1$ for all $k$. To see this, note that using $B_{11}^{(k)}  B_{11}^{(k+1)}=1$ recursively yields $B_{11}^{(k)}  =1/B_{11}^{(k)}$. In the case that  $\tr {g}_k= 0$, we only have  $B_{11}^{(k)}  B_{11}^{(k+1)}= \pm 1$, instead. We summarize these findings in the following observation
\begin{observation}
\label{obs:L1L1Teigenvalues}
Suppose that ${g}_0, \ldots, {g}_{N-1}$ is an $N$-cycle in $G_b$. Then, the eigenvalues of ${g}_k$, $\chi$ and $1/\chi$, coincide for all $k$.  If $\chi \neq i$, we have $B_{11}^{(k)}  B_{11}^{(k+1)}=1$. If $\chi = i$, we have $B_{11}^{(k)}  B_{11}^{(k+1)} = \pm1$ in Eq. (\ref{eq:lhsrhs}).
\end{observation}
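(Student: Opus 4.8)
The plan is to distill the analysis of Eq.~(\ref{eq:lhsrhs}) already outlined above into a clean argument, in four moves. First, writing the concatenation condition as $b\,LHS_{k+1} = \lambda_k\,RHS_k\,b$ with a nonzero scalar $\lambda_k$, I would extract its $(2,0)$ matrix element: since the first column of $LHS_{k+1}$ is $(1,0,0)^T$ and the last row of $RHS_k$ is $(0,0,B_{11}^{(k)})$, this element equals $b_{20}$ on the left and $\lambda_k B_{11}^{(k)} b_{20}$ on the right, so $b_{20}=1$ forces $\lambda_k = 1/B_{11}^{(k)}$. Conjugation by $b$ then yields, as a multiset identity, the spectral equality
\[
\{1,\, B_{11}^{(k+1)}\chi_{k+1},\, B_{11}^{(k+1)}/\chi_{k+1}\} = \{1,\, \chi_k/B_{11}^{(k)},\, 1/(\chi_k B_{11}^{(k)})\}.
\]
Second, comparing the products of the three elements on each side gives $(B_{11}^{(k)}B_{11}^{(k+1)})^2 = 1$, so $\eta_k := B_{11}^{(k)}B_{11}^{(k+1)} \in \{+1,-1\}$; comparing the sums gives $\eta_k\,(\chi_{k+1}+1/\chi_{k+1}) = \chi_k + 1/\chi_k$. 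Writing $s_k := \chi_k + 1/\chi_k$, this reads $s_k = \eta_k\, s_{k+1}$ for all $k$ (indices mod $N$), so every $s_k$ equals $\pm s_0$.

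Third, the decisive step is to rule out the sign flip using the shape of the domain $\mathcal{D}$ of the $\chi_k$ (Figure~\ref{fig:domain}). The Joukowski map $\chi\mapsto\chi+1/\chi$ is two-to-one on $\C\setminus\{0\}$, with fibre $\{\chi,1/\chi\}$ over $s=\chi+1/\chi$, and $\mathcal{D}$ is defined precisely so that this map restricts to a bijection $\mathcal{D}\to\C$; in particular it is injective on $\mathcal{D}$. I would then establish the one extra fact needed: for $\chi_0\in\mathcal{D}$ with $s_0\ne 0$, no element of $\mathcal{D}$ maps to $-s_0$. Indeed the full preimage of $-s_0$ in $\C\setminus\{0\}$ is $\{-\chi_0,-1/\chi_0\}$, and a short case check against the description of $\mathcal{D}$---its real slice being $[1,\infty)$ and its unit-circle part being $\{e^{i\theta} : \theta\in[0,\pi/2]\}$---shows that $-\chi_0$ never lies in $\mathcal{D}$ and $-1/\chi_0$ lies in $\mathcal{D}$ only if $\chi_0=i$, i.e.\ only if $s_0=0$.

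Finally, I would assemble the cases. If $s_0\ne 0$, the previous fact forbids any $s_k=-s_0$, hence $s_k=s_0$ for all $k$; injectivity of the Joukowski map on $\mathcal{D}$ then gives $\chi_k=\chi_0=:\chi$ for all $k$, and $s_k=\eta_k s_{k+1}$ with $s_k=s_{k+1}\ne 0$ forces $\eta_k=1$, that is $B_{11}^{(k)}B_{11}^{(k+1)}=1$ --- and $s_0\ne 0$ is exactly $\chi\ne i$. If instead $s_0=0$, then $s_k=\pm s_0=0$ for all $k$, so $\chi_k=i$ for all $k$ (the unique point of $\mathcal{D}$ with vanishing Joukowski value), while the relation $\eta_k s_{k+1}=s_k$ becomes vacuous and only $\eta_k=B_{11}^{(k)}B_{11}^{(k+1)}=\pm 1$ survives. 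In both cases all $\chi_k$ coincide and the asserted dichotomy on $B_{11}^{(k)}B_{11}^{(k+1)}$ holds. I expect the domain case check to be the only genuinely delicate point; everything else is bookkeeping with symmetric functions of the eigenvalues.
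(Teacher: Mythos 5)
Your proposal is correct and takes essentially the same route as the paper: fixing the proportionality factor $\lambda_k=1/B_{11}^{(k)}$ via the $(2,0)$ element of Eq.~(\ref{eq:lhsrhs}), comparing determinant and trace of the resulting multiset identity, and then invoking the eigenvalue normalization — your explicit check on the domain $\mathcal{D}$ simply spells out what the paper compresses into ``considering the chosen normalization.'' One cosmetic remark: the Joukowski map $\chi\mapsto\chi+1/\chi$ restricted to $\mathcal{D}$ is injective but not surjective onto $\C$ (e.g.\ $\chi=2-i$ and $1/\chi$ both lie outside $\mathcal{D}$), but your argument only uses injectivity together with the case check on $-s_0$, so nothing is affected.
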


\begin{figure*}[t]
\resizebox{\textwidth}{!}{%
\begin{tikzpicture}[-latex]
  \matrix (chart)
    [
      matrix of nodes,
      column sep      = 2.5em,
      row sep         = 3ex,
      ampersand replacement=\&
    ]
    {
                               \& |[root]| Given $b$ with $b_{20}=1$, calculate $T$, $\vec{v}$.  Write $T = R J R^{-1}$, where $J$ is JNF of $T$     \& \&  \&         \&    \\
                               \& |[decision]| $T$ diagonalizable?                                    \& |[decision]| $T^m \propto \identity$ for some $m\in\mathbb{N}$? \&   |[treenode]| Write $T \propto R \begin{pmatrix} e^{i \frac{r \pi}{m}} & \\ &  e^{-i \frac{r \pi}{m}}\end{pmatrix} R^{-1}$, $r \in \{0, \ldots, m-1\}$  \& |[decision]| $\vec{v}=\vec{0}$?  \&  |[finishnonnormal]| (I) \rule{\textwidth}{0.2pt} 1-cycle (1-param.): ${g} = R \begin{pmatrix}x&\\&1/x \end{pmatrix} R^{-1}$ for any $x\in \mathbb{C}\setminus\{0\}$. \linebreak $m$-cycle (3-param.): ${g}_k= T^k g_0 T^{-k}$ for any ${g}_0$. \linebreak $m/2$-cycle (1-param., req. even $n$): ${g}_k = R  \begin{pmatrix}0 & i z\\ i/z & 0 \end{pmatrix} R^{-1}$, $z = y e^{i \frac{2k (2r+1) \pi}{m}}$ for any $y \in \mathbb{C}\setminus\{0\}$\\
      |[finishnonnormal]| (IIa) \rule{\textwidth}{0.2pt} 1-cycle (1-param.): ${g} = R \begin{pmatrix}1&y\\0&1 \end{pmatrix} R^{-1}$ for any $y \in \mathbb{C}$   \&  |[decision]| $\vec{v}=\vec{0}$?  \&  |[decision]| $\vec{v}=\vec{0}$? \&  |[finishnonnormal]| (III)   \rule{\textwidth}{0.2pt} 1-cycle (1-param.): ${g} = R \begin{pmatrix}x&\\&1/x \end{pmatrix} R^{-1}$ for any $x\in \mathbb{C}\setminus\{0\}$ \&  |[decision]| $T \vec{v}\propto \vec{v}$? \&  |[finish]| (IV)  \rule{\textwidth}{0.2pt} $m$-cycle (1-param.): ${g}_k = S_0 \begin{pmatrix}1&e^{i \frac{2 k r \pi}{m}}\\0&1 \end{pmatrix} S_0^{-1}$, where $S_0 = (\vec{v},\, \vec{w})$ for any $\vec{w} \in \mathbb{C}^2$ \\
          |[finish]| (IIb) \rule{\textwidth}{0.2pt} 1-cycle (1-param.): ${g} = R \begin{pmatrix}1&y\\0&1 \end{pmatrix} R^{-1}$ for any $y \in \mathbb{C}$                 \&   |[decision]| $T \vec{v}\propto \vec{v}$? \&   |[finish]| (VI) - generic \rule{\textwidth}{0.2pt} no sym.   \&  \&  |[decision]| $T^2 \vec{v} \propto \vec{v}$, i.e., $T^2 \propto \identity$?  \&  |[finish]| (VII)  \rule{\textwidth}{0.2pt} 2-cycle (1-param.): ${g}_0$ s.t. ${g}_0 \vec{v} \propto \vec{v}$, ${g}_0 T\vec{v} \propto T\vec{v}$ with any $x \in \mathbb{C}\setminus\{0,i\}$, ${g}_1 = {g}_0^{-1}$. \linebreak 1-cylce (single): ${g}_0$ as above, but with $x=i$. \\
                               \&         |[finish]| (V) \rule{\textwidth}{0.2pt} no sym.                      \& \&  \&  |[finish]| (VIII) \rule{\textwidth}{0.2pt} no sym. \&   \\
    };
  \draw
    (chart-1-2) edge (chart-2-2)    
    (chart-2-2) \no (chart-3-2)    
    (chart-2-2) \yes (chart-2-3)  
    (chart-3-2) \yes (chart-3-1)
    %(chart-3-2) edge node [below] {\footnotesize\textcolor{orange}{[26]}} (chart-3-1)
    (chart-3-2) \no (chart-4-2) 
    %(chart-3-2) edge  node [left,align=center]  {no \\ {\footnotesize \textcolor{orange}{[26], Thm. 3}}} (chart-4-2)
   % (chart-3-2) edge  node [right,align=center]  {{\footnotesize \textcolor{orange}{(Alternatively: Obs. 5}} \\ {\footnotesize \textcolor{orange}{and Appendix D)}}} (chart-4-2)
    (chart-2-3) edge node [above] {yes} (chart-2-4)
    (chart-2-4) edge  (chart-2-5)
    %(chart-2-3) edge node [below] {\footnotesize \textcolor{orange}{Obs. 5}} (chart-2-5)
    (chart-2-5) edge node [above] {yes} (chart-2-6)
    (chart-2-5) \no (chart-3-5)    
    (chart-2-3) \no (chart-3-3)
    (chart-3-3) \yes (chart-3-4)
    %(chart-3-3) edge node [below,align=center] {\footnotesize \textcolor{orange}{Thm. 4 i,}\\ \footnotesize \textcolor{orange}{Obs. 6}} (chart-3-4)
    (chart-3-3) \no (chart-4-3)
    (chart-3-5) \no (chart-4-5)    
    (chart-3-5) \yes (chart-3-6)
    (chart-4-2) \yes (chart-4-1)
    %(chart-4-3) edge  node [left,align=center]  {no \\ {\footnotesize \textcolor{orange}{Thm. 4 ii-iv,}} \\ {\footnotesize \textcolor{orange}{ Obs. 6}}} (chart-5-3)   
    (chart-4-5) \yes (chart-4-6)
  %  (chart-4-4) edge node [below] {\footnotesize\textcolor{orange}{Thm. 4 iv}} (chart-4-5)
    (chart-4-2) \no (chart-5-2)    
    (chart-4-5) \no (chart-5-5)    ;
    %(chart-5-4) edge node [below] {\footnotesize\textcolor{orange}{Thm. 4 iii}} (chart-5-5);
  %  (chart-5-4) edge  node [left,align=center]  {no} (chart-6-4)
   % (chart-5-4) edge  node [right,align=center]  {{\footnotesize \textcolor{orange}{Thm. 4 ii, iv}}} (chart-6-4);
\end{tikzpicture}
}
\caption{Flowchart showing the characterization of the symmetries of all normal MPS generated by $\identity \otimes b \otimes \identity \ket{LLT}$, i.e., for any $b$ with $b_{20} \neq 0$ (wlog $b_{20} = 1$) and $\vec{v}\neq \vec{0}$ (shaded rectangles with solid contour), cf. Observation \ref{obs:L1L1Tnormality}. For any such $b$, the symmetries of the corresponding MPS may be determined by calculating $T$ and $\vec{v}$ as in Eqs. (\ref{eq:Tdef}) and (\ref{eq:vdef}) and then following the procedure described in the main text, which is shown in the flowchart. Additionally, the flowchart also shows the cycles in $G_b$ obtained for non-normal MPS generated by $b$ such that $b_{20} = 1$ and $\vec{v} = \vec{0}$ (shaded rectangles with dashed contour). Note that for non-normal MPS the symmetry group might be larger than displayed, as the utilized methods may fail to identify the full symmetry group (and yield a subgroup instead).
Here, ``no sym.'' indicates that the corresponding MPS possesses only the trivial symmetry. Generic $b$ belong to box (VI), as indicated in the flowchart.}
\label{fig:L1L1Tflowchart}
\end{figure*}

Building on the observations above and making use of $T$ and $\vec{v}$ as in Eqs. (\ref{eq:Tdef}) and (\ref{eq:vdef}), we derive the following theorem, which gives necessary and sufficient conditions for ${g}_0, \ldots, {g}_{N-1}$ forming an $N$-cycle in $G_b$. We prove the theorem in Appendix \ref{app:L1L1Tsym}. Note that, as we will see below, this leads to a rich variety of situations involving 1-cycles as well as $N$-cycles, diagonalizable $g_k$ as well as non-diagonalizable $g_k$ and single cycles, as well as continuous families of cycles.

\begin{theorem}
\label{thm:L1L1Tconditions}
${g}_0, \ldots, {g}_{N-1}$ is an $N$-cylce in $G_b$ \footnote{Recall that we consider here $b_{20} = 1$. We have dealt with the case $b_{20} = 0$ in Observation \ref{obs:L1L1Tb20vansihes}.} if and only if there exist $B_{11}^{(k)} \in \mathbb{C}$ such that for all $k \in \{0, \ldots, N-1\}$,
\begin{align}
\label{eq:conjugationrule}
{g}_{k+1} &= \frac{1}{B_{11}^{(k)}B_{11}^{(k+1)}} T {g}_k T^{-1},\\
\label{eq:vrule}
\left[ {g}_k - {B_{11}^{(k)}} \identity \right]  \vec{v} &= \vec{0},\\
B_{11}^{(k)}B_{11}^{(k+1)} &= \begin{cases} \pm1 & \text{if } \chi = i \\ 1 &\text{otherwise}\end{cases}
\end{align}
\end{theorem}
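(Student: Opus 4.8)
The plan is to unpack the concatenation conditions in Eq.~(\ref{eq:lhsrhs}) into a matrix equation of the form $b\,LHS_{k+1} = \lambda_k\, RHS_k\, b$ and extract enough scalar consequences to pin down the structure of $x_{k+1}$ and $y_k$. First I would recall that any symmetry of $\one\otimes b\otimes\one\ket{LLT}$ has the form $g_k\otimes x_k\otimes y_k^T$ with $x_k = b B_{g_k} B^{(k)} b^{-1}$ and $y_k^T = C_{g_k} C^{(k)}$, so the concatenation condition $y_k b x_{k+1} b^{-1}\propto\one$, i.e.\ $b\, x_{k+1}\, b^{-1}\propto y_k^{-1}$, becomes the explicit displayed equation. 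The $(2,0)$-entry gives $b_{20}(1 - B_{11}^{(k)}\lambda_k) = 0$, so with $b_{20}=1$ the proportionality factor is $\lambda_k = 1/B_{11}^{(k)}$. Substituting back, the content of the condition is: the matrix $LHS_{k+1}$ conjugated by $b$ equals $\tfrac{1}{B_{11}^{(k)}}RHS_k$. I would then separate this single matrix identity into (a) its action as a conjugation and (b) the residual vector constraint coming from the inhomogeneous column/row containing the $B_{01},B_{02}$ parameters.

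The key algebraic step is to recognize the block structure. Observe that $LHS_{k+1}$ and $RHS_k$ each decompose relative to the basis vector $\ket{0}$ versus the span of $\ket{1},\ket{2}$: in $LHS_{k+1}$ the $\{1,2\}$-block is exactly the ``adjugate-type'' matrix $\det(g_{k+1})^{-1}\mathrm{adj}$-like image of $g_{k+1}$ (which, since $\det g_{k+1}=1$, is conjugate to $g_{k+1}^{-1}$ — or, accounting for the transpose/reflection, to $g_{k+1}$ itself), while in $RHS_k$ the $\{0,1\}$-block is $g_k$ (read off $\begin{pmatrix}\delta_k & \gamma_k\\ \beta_k & \alpha_k\end{pmatrix}$, which is $\sigma_x g_k^T \sigma_x$, hence conjugate to $g_k$). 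This is precisely why the parametrization~(\ref{eq:L1L1Tparam}) in terms of $T$ and $\vec v$ is introduced: I would show that conjugating $LHS_{k+1}$ by $b$ intertwines the $\{1,2\}$-block of $LHS$ with a matrix built from $T$ (the $\det T\,\sigma_x T^{-1}\sigma_z$ piece of $b$), so that the conjugation part of the equation reads exactly $g_{k+1} = \tfrac{1}{B_{11}^{(k)}B_{11}^{(k+1)}}\, T\, g_k\, T^{-1}$, which is Eq.~(\ref{eq:conjugationrule}). The remaining scalar freedom $B_{01}^{(k)},B_{02}^{(k)}$ must be solvable; tracking which vector equation they need to satisfy, and using the rank-one part $\binom{b_{00}}{b_{10}}{1}\binom{1}{v_1-b_{00}}{v_0-b_{10}}^T$ of $b$, yields that the inhomogeneous part is consistent iff $(g_k - B_{11}^{(k)}\one)\vec v = \vec 0$, i.e.\ Eq.~(\ref{eq:vrule}) — intuitively $\vec v$ must be an eigenvector of $g_k$ with eigenvalue $B_{11}^{(k)}$ whenever $g_k\neq B_{11}^{(k)}\one$, and otherwise the condition is vacuous but then $g_k$ is scalar times identity. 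Finally the trace/determinant of the matrix identity, combined with the normalization $|\chi|\ge 1$ on the domain $\mathcal D$ (so that $\chi_k$ is uniquely determined and, by Observation~\ref{obs:L1L1Teigenvalues}, equal across all $k$), forces $B_{11}^{(k)}B_{11}^{(k+1)} = 1$ unless $\chi=i$ (the $\tr g_k = 0$ case), where it can be $\pm1$; this is the third displayed equation.

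For the converse direction I would run the construction backwards: given $g_k$ satisfying (\ref{eq:conjugationrule})--(\ref{eq:vrule}) and the $B_{11}$-product condition, I would exhibit explicit $B_{01}^{(k)},B_{02}^{(k)}$ (solving the now-consistent linear system whose solvability was established via the vector condition) so that the full matrix identity~(\ref{eq:lhsrhs}) holds with $\lambda_k = 1/B_{11}^{(k)}$; this produces genuine symmetries $g_k\otimes x_k\otimes y_k^T$ of the fiducial state whose concatenation chain closes up into an $N$-cycle, and then Theorem~\ref{Thm1} gives the corresponding symmetry of the normal MPS. One has to be a little careful that the closing condition $y_{N-1} b x_0 b^{-1}\propto\one$ is the same equation with indices mod $N$, which it is by construction.

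The main obstacle I anticipate is the bookkeeping in isolating the vector condition~(\ref{eq:vrule}): the inhomogeneous entries $B_{01},B_{02}$ appear in both $LHS$ and $RHS$, they get mixed by conjugation with the non-triangular matrix $b$, and one must verify that for \emph{every} allowed $g_k$ they can be chosen consistently — equivalently, that the only genuine constraint left after fixing the conjugation part is $(g_k-B_{11}^{(k)}\one)\vec v=\vec 0$. Getting the parametrization (\ref{eq:L1L1Tparam}) to do this cleanly is the crux; this is exactly the routine-but-delicate linear algebra that I would relegate to Appendix~\ref{app:L1L1Tsym}. A secondary subtlety is the degenerate case $g_k\propto\one$: then $B_{11}^{(k)}$ is forced to equal that scalar and (\ref{eq:vrule}) is automatic, so one must check this branch does not spuriously enlarge or shrink the solution set — it corresponds to the identity-like entries that make cycles of any length possible, matching the ``$N$ for any $N$'' row of Table~\ref{tab:summary}.
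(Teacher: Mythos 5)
Your proposal follows essentially the same route as the paper's proof in Appendix~\ref{app:L1L1Tsym}: fix the proportionality factor $\lambda_k = 1/B_{11}^{(k)}$ from the $(2,0)$ entry using $b_{20}=1$, isolate the part of the matrix identity independent of $B_{01}^{(k)},B_{02}^{(k)}$ to obtain the conjugation rule Eq.~(\ref{eq:conjugationrule}) with $T$ from Eq.~(\ref{eq:Tdef}), show the leftover linear system in $B_{01}^{(k)},B_{02}^{(k)}$ is solvable iff Eq.~(\ref{eq:vrule}) holds, take the $B_{11}$-product condition from the trace/determinant argument of Observation~\ref{obs:L1L1Teigenvalues}, and verify sufficiency by running the construction backwards. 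The only difference is presentational (the paper writes out the explicit $8\times 4$ linear system of Eq.~(\ref{eq:vectoreq}) where you argue via the block structure of $LHS$ and $RHS$ and defer the bookkeeping), so the approaches coincide in substance.
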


With the help of the conditions provided in Theorem \ref{thm:L1L1Tconditions}, the $N$-cycles in $G_b$ may be determined for any given $b$ with $b_{20} =1$.  In the following, we describe the procedure to do so. We defer the details on the derivation of the procedure to Appendix \ref{app:L1L1Tsym}. Recall that the considered family of $b$'s also involves non-normal MPS. For this reason, we have formulated the theorem in terms of cycles in $G_b$, although for normal MPS the theorem directly characterizes the symmetries of the associated MPS.

 First, one calculates the matrix $T$ according to Eq. (\ref{eq:Tdef}) as well as the vector $\vec{v}$ according to Eq. (\ref{eq:vdef}). The symmetries will be completely determined by $T$ and $\vec{v}$, which, as we would like to stress here again, are merely properties of $b$. Let us denote the similarity transformation bringing $T$ into its Jordan Normal Form (JNF) $J$ by $R$, i.e., we have $T = R J R^{-1}$.
 We now distinguish two cases. We have the case that $T$ is diagonalizable and the case that $T$ is not diagonalizable. In the latter case, we obtain only trivial cycles if $\vec{v} \neq \vec{0}$ and $T \vec{v} \not\propto \vec{v}$. In contrast to that, $G_b$ exhibits a one-parametric family of 1-cycles with ${g}= R \begin{pmatrix} 1& \eta\\ 0 &1\end{pmatrix} R^{-1}$ for any $\eta \in \mathbb{C}$ if $T \vec{v} \propto \vec{v}$ (or $\vec{v}= \vec{0}$).

Let us now discuss the case that $T$ is diagonalizable. We now distinguish two further cases depending on whether there exists an $m \in \mathbb{N}$ such that $T^m \propto \identity$, or not. In case such an $m$ does not exist, we distinguish several subcases depending on $\vec{v}$. If $\vec{v} = \vec{0}$ we obtain a one-parametric family of 1-cycles  with ${g}= R \begin{pmatrix} \chi& \\  &1/\chi \end{pmatrix} R^{-1}$ for any $\chi \in \mathbb{C} \setminus \{0\}$. In contrast, if $T \vec{v} \not\propto \vec{v}$, but $T^2 \vec{v} \propto \vec{v}$ we 
obtain a single 1-cycle with ${g}= R \begin{pmatrix} i& \\  &-i \end{pmatrix} R^{-1}$. We obtain only trivial cycles for all other $\vec{v}$. Note that a generic $b$ falls into this category.

Let us now discuss the case that there exists an $m \in \mathbb{N}$ such that $T^m \propto \identity$. In this case, we may write $T \propto R \begin{pmatrix} e^{i \frac{r \pi}{m}} & \\ & e^{i \frac{r \pi}{m}} \end{pmatrix} R^{-1}$ for some $r \in \{0, \ldots, m-1\}$. Again, we distinguish several subcases depending on the vector $\vec{v}$. First, let us consider the case that $\vec{v} = \vec{0}$. In this case we obtain a rich set of cycles in $G_b$. Actually, we obtain $m$-cycles with ${g}_k = T^k {g}_0 T^{-k}$ for any ${g}_0$. This is effectively a three-parametric family of cycles including both, instances in which the ${g}_k$ are diagonalizable, as well as instances in which ${g}_k$ are not diagonalizable. In case $m$ is even, in addition to that a one-parametric family of $m/2$-cycle emerges. There, we have ${g}_k = R  \begin{pmatrix}0 & i \eta e^{i \frac{2k (2r+1) \pi}{m}}\\ i/ \eta e^{-i \frac{2k (2r+1) \pi}{m}} & 0 \end{pmatrix} R^{-1}$, for any $y \in \mathbb{C}\setminus\{0\}$. Note that $\tr {g}_k = 0$. Let us now discuss the case that we have a non-vanishing $\vec{v}$ with $T \vec{v} \propto \vec{v}$. In this case we obtain an effectively one-parametric $m$-cycle of non-diagonalizable cycles with ${g}_k = S_0 \begin{pmatrix}1 & e^{i \frac{2 k r \pi}{m}} \\ 0 & 1 \end{pmatrix} S_0^{-1}$, where $S_0 = \left(\vec{v}, \vec{w} \right)$ for any $\vec{w} \in \mathbb{C}^2$. In other words, ${g}_0$ may be chosen as any non-diagonalizable matrix whose eigenvector is given by $\vec{v}$, the remaining matrices are then determined.
 Let us now discuss the case that we have a non-vanishing $\vec{v}$ with $T \vec{v} \not\propto \vec{v}$, but $T^2 \vec{v} \propto \vec{v}$. Note that this implies $T^2\propto \identity$. In this case, we obtain cycles with diagonalizable ${g}_k$. The eigenvectors of each ${g}_k$ are given by $\vec{v}$ and $T\vec{v}$. We obtain global cycles, in which the eigenvalues of ${g}$ are $\pm i$. Moreover, in case of an even particle number, we obtain two-cycles with ${g}_1 = {g}_0^{-1}$ and a freely choosable eigenvalue $\chi \neq 0,\pm i$. Finally, in case of a non-vanishing $\vec{v}$ with $T^2 \vec{v} \not\propto \vec{v}$ we obtain no non-trivial cycles. This completes the characterization of cycles within $G_b$ considering the fiducial states $\identity \otimes b \otimes \identity \ket{LLT}$. We present a summary of the findings in terms of a flowchart in Figure \ref{fig:L1L1Tflowchart}.
 
 Let us conclude with remarking that for any specified $T$ and $\vec{v}$, there is a two-parametric family of $b$'s (with $b_{20}=1$) leading to the specified $T$, $\vec{v}$, as in Eq. (\ref{eq:L1L1Tparam}).  Thus, it is possible to construct a $b$  possessing any desired symmetry presented in Figure \ref{fig:L1L1Tflowchart} using the appropriate $T$ and $\vec{v}$. Moreover, for normal MPS, Theorem \ref{thm:L1L1Tconditions} characterizes all possible symmetries.

\subsection{SLOCC Classification}
\label{sec:L1L1T_SLOCC}
In order to identify the different SLOCC classes emerging within the normal MPS associated to $\ket{LLT}$, we consider $(b \rightarrow c)$ cycles within the symmetry group of the fiducial state. More precisely, we study the relation
\begin{align}
y_{k} b x_{k+1} \propto c \text{ for all } k \in \{0, \ldots, N-1\}
\end{align}
in order to decide whether the MPS generated by $\identity \otimes b \otimes \identity \ket{LLT}$ and $\identity \otimes c \otimes \identity \ket{LLT}$ are SLOCC equivalent to each other. As shown in \cite{SaMo19} (see also Sec. \ref{sec:prelim} and Sec. \ref{sec:M10M11M1inf}), they are SLOCC equivalent to each other iff it is possible to identify an $N$-cycle (or an $M$-cycle, where $M$ divides the total particle number $N$). 
We will first characterize 1-cycles. Then, we will introduce a (non-unique) standard form for $b$ and $c$ up to global SLOCC operations. Finally, we complete the classification by considering non-global operations. Note that we charcterize the $(b\rightarrow c)$ cycles for all $b$, $c$ with $b_{20},c_{20} \neq 0$, however, we keep in mind that certain such $b,c$ lead to non-normal MPS. We present the SLOCC classification in the flowchart shown in Figure \ref{fig:L1L1TflowchartSLOCC}, following the same structure as in Figure \ref{fig:L1L1Tflowchart}.

Recall that two states $\ket{\psi}$ and $\ket{\phi}$ can only be SLOCC equivalent if their symmetry group is compatible, i.e., $\mathcal{S}_{\ket{\psi}}$  equals $\mathcal{S}_{\ket{\phi}}$ up to conjugation. This immediately shows that, e.g., states belonging to the box (IV) cannot be SLOCC equivalent to states belonging to box (V) in Figure \ref{fig:L1L1Tflowchart}. However, this necessary condition is not strong enough to reveal anything about SLOCC equivalence between, e.g., states belonging to boxes (V) and (VI) within the figure, yet.

\subsubsection{Global SLOCC operations and standard form}
As a first step, we investigate $(b \rightarrow c)$ 1-cycles, which allow to characterize equivalence of normal MPS under global operations. For normal MPS, stated differently, we characterize here all $b$ for which there exists an operator $g$ such that $\Psi_b(LLT)=g^{\otimes n} \Psi_c(LLT)$ for a given $c$. Using the symmetry of the fiducial state [see Eq. (\ref{Eq:xy})] this leads to the following. 
A $(b \rightarrow c)$ 1-cycle exists if
\begin{align*}
b \propto x^{-1} c y^{-1},
\end{align*}
where 
\begin{align*}
x^{-1}&= \begin{pmatrix} \delta & \gamma & B_{02} \gamma +  B_{01} \delta \\ \beta & \alpha &  B_{02} \alpha +  B_{01} \beta \\ 0&0&B_{11}\end{pmatrix}\\
y^{-1}&=\frac{1}{B_{11}}\begin{pmatrix} B_{11} & -(B_{02} \gamma +  B_{01} \delta)  & -(B_{02} \alpha +  B_{01} \beta) \\ 0 & \delta &  \beta \\ 0& \gamma & \alpha\end{pmatrix},
\end{align*}
where $\alpha,\beta,\gamma,\delta,B_{11},B_{01},B_{02} \in \mathbb{C}$ such that $\det g = 1$, where $g = \begin{pmatrix}\alpha & \beta\\ \gamma & \delta \end{pmatrix}$.
We use the parametrization and normalization of $b$ and $c$ as in  Eq. (\ref{eq:L1L1Tparam}) and write
\begin{align*}
b = b[T_b, \vec{v}_b,\begin{pmatrix}b_{10}\\b_{00}\end{pmatrix}]
\end{align*}  
and similarly for $c$. We obtain all $b$ that are connected to $c$ via a $(b \rightarrow c)$ 1-cycle through 
\begin{align} \label{Eq:BSL}
b = b[\frac{1}{B_{11}^2} g T_c g^{-1},\frac{1}{B_{11}}  g \vec{v}_c,\frac{1}{B_{11}} g \begin{pmatrix}c_{10}+B_{02}\\c_{00}+B_{01}\end{pmatrix} ]
\end{align}
for $g$ with $\det g = 1$ and $B_{11},B_{01},B_{02} \in \mathbb{C}$. Here, $g$ is the global physical operation relating the two MPS. Note that even $g = \identity$ leads to a freedom in $b$, which  is due to symmetries of the fiducial state that have the form $\identity \otimes B \otimes C$. Note also that in Eq. (\theequation) we have equality and not proportionality as $b_{20} = c_{20} = 1$ fixes the proportionality factor to 1. Note further that Eq. (\ref{Eq:BSL}) allows to easily identify global LU-invariant quantities. 

We introduce a standard form for $b$, $c$ up to global operations (1-cycles). It then suffices to study SLOCC equivalence for MPS associated to $b$, $c$ which are in standard form in order to provide a full characterization of SLOCC equivalence. We choose the following standard form:
\begin{align*}
b = b[T_b, \vec{v}_b, 0],
\end{align*}
where $T_b$ is in JNF, $\det T_b = 1$. Moreover, we use the same convention for the ordering and possible sign flip of the eigenvalues as earlier in this section. More precisely, for diagonalizable $T_b$ we write $T_b = \operatorname{diag}(\sigma_b,\sigma_b^{-1})$, where $\sigma_b \in \mathcal{D}$ (see Figure \ref{fig:domain}).  Note that the standard form is not unique as we may flip the direction of $\vec{v}$ via a sign change in $B_{11}$ and moreover, special forms of $T_b$ such as e.g. $T_b = \identity$ leave even more freedom to choose the direction of $\vec{v}$. Let us stress here that $b$ with coinciding $T$ and non-vanishing $\vec{v}$ whose directions coincide, but whose norms differ, lead to MPS that share the same symmetry group, but are not necessarily related by a global SLOCC operation. 
Clearly, if $b$ and $c$ which are connected by a $(b \rightarrow c)$ 1-cycle are in standard form, we necessarily have that $T_c = T_b$.

For normal tensors, the characterization of $(b \rightarrow c)$ 1-cycles allows to characterize equivalence of the associated MPS under global SLOCC operations (for non-normal tensors, additional MPS might turn out to be equivalent, which are not identified as equivalent by considering $(b\rightarrow c)$ cycles) \cite{SaMo19}. Due to the considerations of $(b \rightarrow c)$ 1-cycles above, we obtain such a characterization as stated in the following lemma. 
\begin{lemma}
\label{lemma:L1L1T_global_SLOCC}
Consider fiducial states  $\identity \otimes b \otimes \identity \ket{LLT}$ and $\identity \otimes c \otimes \identity \ket{LLT}$ which correspond to normal MPS (i.e., $b_{20} = c_{20} = 1$ and additionally $\vec{v}_b,\vec{v}_c \neq \vec{0}$). Then, the MPS are related via a global SLOCC operation if and only if there exists a $g \in SL(2,\mathbb{C})$ such that $\vec{v}_b \propto g \vec{v}_c$  and $T_b = \frac{\vec{v}_b^T \vec{v}_b}{(g \vec{v}_c)^T g \vec{v}_c} g T_c g^{-1}$. 
\end{lemma}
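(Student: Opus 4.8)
\emph{Proof proposal.} The plan is to reduce the statement entirely to the description of $(b\rightarrow c)$ $1$-cycles obtained just above. By Theorem~\ref{Thm2}, together with the reduction of the SLOCC problem to fiducial states of the form $\one\otimes b\otimes\one\ket{A}$ (see Sec.~\ref{sec:prelim}), the normal TIMPS generated by $\one\otimes b\otimes\one\ket{LLT}$ and $\one\otimes c\otimes\one\ket{LLT}$ are related by a global SLOCC operation $g^{\otimes N}$ if and only if there is a $(b\rightarrow c)$ $1$-cycle in the symmetry group of the fiducial state, and the set of all $b$ admitting such a $1$-cycle for a fixed $c$ was computed above to be exactly the set described by Eq.~(\ref{Eq:BSL}). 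So the whole task is to re-express ``$b$ lies in the set~(\ref{Eq:BSL})'' purely in terms of $T_b,\vec{v}_b$ and $T_c,\vec{v}_c$.

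The first step is to use that, since both MPS are normal, Observation~\ref{obs:L1L1Tnormality} gives $b_{20},c_{20}\neq0$, so after an overall rescaling of $b$ and of $c$ (immaterial for the generated MPS, hence for SLOCC equivalence) one may take $b_{20}=c_{20}=1$ and use the parametrization~(\ref{eq:L1L1Tparam}). By~(\ref{eq:L1L1Tparam}) together with~(\ref{eq:Tdef}) and~(\ref{eq:vdef}), a matrix $b$ with $b_{20}=1$ is uniquely determined by, and uniquely determines, the triple $\big(T_b,\vec{v}_b,(b_{10},b_{00})^{T}\big)$; reading off the three slots of the right-hand side of Eq.~(\ref{Eq:BSL}) through this bijection turns ``$b$ lies in the set~(\ref{Eq:BSL}) for the given $c$'' into the condition that there exist $g\in SL(2,\mathbb{C})$, $B_{11}\in\mathbb{C}\setminus\{0\}$ and $B_{01},B_{02}\in\mathbb{C}$ with
\begin{gather*}
T_b=B_{11}^{-2}\,gT_cg^{-1},\qquad \vec{v}_b=B_{11}^{-1}\,g\vec{v}_c,\\
(b_{10},b_{00})^{T}=B_{11}^{-1}\,g\,(c_{10}+B_{02},\,c_{00}+B_{01})^{T}.
\end{gather*}
Since $g$ is invertible and $B_{01},B_{02}$ range freely over $\mathbb{C}$, for every $g$ and $B_{11}$ the last equation can be solved for $B_{01},B_{02}$ whatever $(b_{10},b_{00})^{T}$ is, so it imposes no constraint; writing $\mu:=B_{11}^{-1}\in\mathbb{C}\setminus\{0\}$, global SLOCC equivalence of the two MPS is thus equivalent to the existence of $g\in SL(2,\mathbb{C})$ and $\mu\in\mathbb{C}\setminus\{0\}$ with $\vec{v}_b=\mu g\vec{v}_c$ and $T_b=\mu^{2}gT_cg^{-1}$.

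The remaining step is to eliminate $\mu$. As $b,c$ are normal, $\vec{v}_b\neq\vec{0}$ and $g\vec{v}_c\neq\vec{0}$, so $\vec{v}_b=\mu g\vec{v}_c$ is equivalent to $\vec{v}_b\propto g\vec{v}_c$ and, for a fixed such $g$, determines $\mu$ uniquely. Applying $\vec{w}\mapsto\vec{w}^{T}\vec{w}$ to this equation gives $\vec{v}_b^{T}\vec{v}_b=\mu^{2}(g\vec{v}_c)^{T}g\vec{v}_c$, so — when $(g\vec{v}_c)^{T}g\vec{v}_c\neq0$ — substituting $\mu^{2}=\vec{v}_b^{T}\vec{v}_b\big/\big((g\vec{v}_c)^{T}g\vec{v}_c\big)$ into $T_b=\mu^{2}gT_cg^{-1}$ yields exactly the condition of the lemma. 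For the converse one would assume $\vec{v}_b\propto g\vec{v}_c$ and $T_b=\tfrac{\vec{v}_b^{T}\vec{v}_b}{(g\vec{v}_c)^{T}g\vec{v}_c}\,gT_cg^{-1}$ for some $g\in SL(2,\mathbb{C})$, define $\mu$ by $\vec{v}_b=\mu g\vec{v}_c$, and note (again applying $\vec{w}\mapsto\vec{w}^{T}\vec{w}$) that $\mu^{2}$ equals that fraction, so both $\vec{v}_b=\mu g\vec{v}_c$ and $T_b=\mu^{2}gT_cg^{-1}$ hold and the required $(b\rightarrow c)$ $1$-cycle exists.

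I expect the only genuine obstacle — everything else being bookkeeping on top of Eq.~(\ref{Eq:BSL}), where the real work already happened — to be the degenerate case in which $\vec{v}_b$ is isotropic, $\vec{v}_b^{T}\vec{v}_b=0$: then, since $g\vec{v}_c\propto\vec{v}_b$ for \emph{every} witnessing $g$, one also has $(g\vec{v}_c)^{T}g\vec{v}_c=0$, and the quotient in the statement has to be read as the defining relations ``$\vec{v}_b=\mu g\vec{v}_c$ and $T_b=\mu^{2}gT_cg^{-1}$'' — equivalently, $\vec{v}_b^{T}\vec{v}_b\big/(g\vec{v}_c)^{T}g\vec{v}_c$ is to be replaced by $\mu^{2}$, with $\mu$ read off from any nonvanishing component of $\vec{v}_b=\mu g\vec{v}_c$. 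One should also record explicitly that restricting the global operator to $SL(2,\mathbb{C})$ instead of $GL(2,\mathbb{C})$ costs nothing, as overall scalars are irrelevant for SLOCC.
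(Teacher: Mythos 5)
Your proposal is correct and follows essentially the same route as the paper, whose proof is simply the remark that the lemma follows from the characterization of $(b\rightarrow c)$ $1$-cycles in Eq.~(\ref{Eq:BSL}); you merely spell out the bookkeeping of eliminating $B_{11}$ via $\vec{v}_b^{T}\vec{v}_b=\mu^{2}(g\vec{v}_c)^{T}g\vec{v}_c$. Your observation about the isotropic case $\vec{v}_b^{T}\vec{v}_b=0$, where the quotient in the lemma must be read as $\mu^{2}$ with $\mu$ fixed by $\vec{v}_b=\mu g\vec{v}_c$, is a legitimate caveat about the statement that the paper does not address.
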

\begin{proof}
The statement follows from the considerations of $(b \rightarrow c)$ 1-cycles above.
\end{proof}

\subsubsection{Non-global SLOCC operations}
Let us now also take non-global SLOCC operations into account. 
Considering
\begin{align*}
b \propto x_k^{-1} c y_{k+1}^{-1},
\end{align*}
and imposing that both $b$ and $c$ are in standard form we obtain $B_{01}^{(k)} = B_{02}^{(k)} = 0$ for all $k$ and
\begin{align}
\label{eq:L1L1Tslocctrafo}
b = b[\frac{1}{B_{11}^{(k)}B_{11}^{(k+1)}} g_{k+1} T_c g_{k}^{-1}, \frac{1}{B_{11}^{(k+1)}} g_{k+1} \vec{v}_c, 0],
\end{align}
where we use the normalization $\det g_k = 1$.
% Moreover, we have $B_{11}^{(k)}B_{11}^{(k+1)} = 1$ for all $k$.
We obtain as a simple necessary condition for having an $N$-cycle
\begin{align}
\label{eq:L1L1Tsloccnec}
T_b^N = \pm g_k T_c^N g_k^{-1}
\end{align}
for all $k$ (with a positive sign in case of even $N$, $\pm$ in case of odd $N$). 

Using Eqs. (\ref{eq:L1L1Tslocctrafo}) and (\ref{eq:L1L1Tsloccnec}) it is straightforward to establish that within Figure \ref{fig:L1L1TflowchartSLOCC}, fiducial states that belong to different boxes do allow for a $(b\rightarrow c)$ cycle (see Observation \ref{obs:L1L1TSLOCCobs} in Appendix \ref{app:L1L1TSLOCC}).

Let us now complete the characterization of $(b\rightarrow c)$ cycles. In the case that $T_c^N \not\propto \identity$, considering Eq. (\ref{eq:L1L1Tsloccnec}), the standard form for $T_b,T_c$, and the uniqueness of the Jordan decomposition straightforwardly leads to the fact that all $g_k$ must be in Jordan normal form (special care needs to be taken in case $\operatorname{tr}T_c = 0$). Then, using Eq. (\ref{eq:L1L1Tslocctrafo}) in addition, a tedious calculation shows that $g_k = g$ for all $k$. However, the case $T_c^N \propto \identity$ is more involved as in this case, the condition in Eq. (\ref{eq:L1L1Tsloccnec}) is not helpful. Let us thus take intermediate steps in completing the characterization of $(b\rightarrow c)$ cycles. To this end, we will introduce two lemmata, which we prove in Appendix \ref{app:L1L1TSLOCC}. It is obvious that whenever we have $b$ and $c$ in standard form allowing for a $(b \rightarrow c)$ 1-cycle, it holds that $T_b = T_c$. The first lemma shows that the same is true for $(b \rightarrow c)$ $N$-cycles if $\vec{v}_b \neq \vec{0}$, $\vec{v}_c \neq \vec{0}$.

\begin{lemma}
\label{lemma:L1L1TSLOCCsameT}
Consider $b$ and $c$ in standard form which correspond to normal MPS (i.e., $b_{20} = c_{20} = 1$ and additionally $\vec{v}_b,\vec{v}_c \neq \vec{0}$). If there exists a  $(b\rightarrow c)$ $N$-cycle, then, $T_b = T_c$. 
\end{lemma}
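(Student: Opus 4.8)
The plan is to unfold the $(b\to c)$ $N$-cycle condition of Eq.~(\ref{eq:L1L1Tslocctrafo}) and run it as a recursion. Writing $\lambda_k:=B_{11}^{(k)}$, the statement that ${g}_0,\dots,{g}_{N-1}$ is a $(b\to c)$ $N$-cycle (with $b,c$ in standard form, indices mod $N$) is equivalent to
\begin{align*}
\lambda_k\lambda_{k+1}\,T_b={g}_{k+1}T_c{g}_k^{-1},\qquad {g}_k\vec{v}_c=\lambda_k\vec{v}_b,\qquad \det{g}_k=1,
\end{align*}
for all $k$. Taking determinants of the first relation and using $\det T_b=\det T_c=1$ (standard form) gives $\lambda_k\lambda_{k+1}=\pm1$; the first relation then reads ${g}_{k+1}=\lambda_k\lambda_{k+1}\,T_b\,{g}_k\,T_c^{-1}$, so iterating, ${g}_k=\big(\prod_{j<k}\lambda_j\lambda_{j+1}\big)T_b^{\,k}{g}_0T_c^{-k}$, and closing the cycle gives $T_c^{\,N}=\Lambda^2\,{g}_0^{-1}T_b^{\,N}{g}_0$ with $\Lambda:=\prod_k\lambda_k$ and $\Lambda^2=\pm1$, which recovers the necessary condition already recorded in Eq.~(\ref{eq:L1L1Tsloccnec}). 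In particular $\det{g}_k=\det{g}_0$ for every $k$, so all the remaining content of the cycle sits in the vector equations together with $\det{g}_0=1$ and the two closure conditions ${g}_N={g}_0$, $\lambda_N=\lambda_0$. Moreover $T_b^{\,N}$ and $T_c^{\,N}$ are conjugate up to the sign $\Lambda^2$, hence share spectrum up to that sign and share (non)diagonalizability; since a nontrivial $2\times2$ Jordan block remains non-diagonalizable under $N$-th powers, $T_b$ is diagonalizable iff $T_c$ is.

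Next I would split on whether $\vec{v}_c$ is an eigenvector of $T_c$. If $T_c\vec{v}_c=\mu\vec{v}_c$, the vector equations at $k=0$ and (after substituting the closed form of ${g}_1$) at $k=1$ force $T_b\vec{v}_b=(\mu/\lambda_0^2)\vec{v}_b$, i.e.\ $\vec{v}_b$ is an eigenvector of $T_b$; comparing the value of $\lambda_2$ forced by $\lambda_k\lambda_{k+1}=\pm1$ with the one forced by the eigenvalue relation at step $k=2$ yields $\lambda_0^4=1$. Since $\det T_b=\det T_c=1$, the spectra are $\{\mu/\lambda_0^2,\lambda_0^2/\mu\}$ and $\{\mu,1/\mu\}$, and $\lambda_0^4=1$ forces these unordered pairs to coincide in every case; with the matching of (non)diagonalizability and the JNF/$\mathcal{D}$ convention of the standard form, $T_b=T_c$. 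If $\vec{v}_c$ is not an eigenvector of $T_c$, then $\{\vec{v}_c,T_c^{-1}\vec{v}_c\}$ is a basis, the $k=0,1$ vector equations determine ${g}_0$ uniquely, and invertibility of ${g}_0$ forces $\vec{v}_b$ not to be an eigenvector of $T_b$ either, so $T_b,T_c$ are non-scalar. Expanding $T_c^{-k}\vec{v}_c=a_k\vec{v}_c+b_kT_c^{-1}\vec{v}_c$ by Cayley--Hamilton (with $a_k=-b_{k-1}$ and the Chebyshev-type recursion $b_{k+1}=(\operatorname{tr}T_c)b_k-b_{k-1}$), applying ${g}_0$, and matching against the analogous expansion of $T_b^{-k}\vec{v}_b$, the vector equations at $k=2$ and $k=3$ reduce to $\operatorname{tr}T_c=\lambda_0^2\operatorname{tr}T_b$ and, when $\operatorname{tr}T_b\neq0$, $\lambda_0^4=1$; as before this pins the characteristic polynomial of $T_c$ to that of $T_b$, and with the matching diagonalizability and the standard form, $T_b=T_c$.

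The real obstacle in both cases is to exclude the branch $\lambda_0^2=-1$ when $\operatorname{tr}T_c\neq0$ (equivalently, to rule out that the two spectra are related by an overall sign) and, relatedly, the finite-order case $T_c^{\,m}\propto\identity$, where the closure $T_c^{\,N}=\Lambda^2{g}_0^{-1}T_b^{\,N}{g}_0$ carries no information. The coarse invariants are not enough here; one has to go back to the full matrix form of the concatenation equation Eq.~(\ref{eq:lhsrhs}): beyond the $(2,0)$-entry (which fixes the proportionality factor) the remaining entries force, exactly as in Observation~\ref{obs:L1L1Teigenvalues}, the sign $\lambda_k\lambda_{k+1}=\pm1$ to be $-1$ only when $\operatorname{tr}{g}_k=0$. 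Feeding this back, together with the parity of $N$ (closing the $\lambda_k$-cycle forces $\Lambda^2=1$ for even $N$ and $\lambda_0^2=\Lambda^2$ for odd $N$), gives $\lambda_0^2=1$ unless $\operatorname{tr}T_b=\operatorname{tr}T_c=0$, in which case $T_b=T_c=\operatorname{diag}(i,-i)$ follows directly from $\operatorname{tr}T_c=\lambda_0^2\operatorname{tr}T_b=0$. The small cycles $N\in\{2,3\}$, where the steps $k=2,3$ are unavailable, and the borderline $\operatorname{tr}T=0$ are the cases I would treat by hand and relegate to the appendix.
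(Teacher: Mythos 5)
Your reduction of the $(b\to c)$ $N$-cycle to the recursion $g_{k}\propto T_b^{k}g_0T_c^{-k}$, the determinant normalization $\lambda_k\lambda_{k+1}=\pm1$, and the low-order vector equations (giving, in your two cases, $T_b\vec{v}_b=(\mu/\lambda_0^2)\vec{v}_b$, respectively $\operatorname{tr}T_c=\lambda_0^2\operatorname{tr}T_b$ and $\lambda_0^4=1$) is sound and, incidentally, would let you bypass the paper's hardest case ($T_c^N\propto\identity$, which the paper handles by summing $\det M_k$ over even $k$). But the step you yourself flag as the crux — excluding the branch $\lambda_0^2=-1$ — is where the proof genuinely fails. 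You propose to import Observation \ref{obs:L1L1Teigenvalues} ("$B_{11}^{(k)}B_{11}^{(k+1)}=-1$ only if $\operatorname{tr}g_k=0$") "exactly as" in the symmetry case, but that observation is derived from the concatenation condition for cycles in $G_b$, Eq.~(\ref{eq:lhsrhs}), which is a similarity relation whose spectra can be matched, and there $B_{11}^{(k)}$ is an eigenvalue of $g_k$ with eigenvector $\vec v$. None of this survives in the $(b\to c)$ setting: the condition $y_k b x_{k+1}\propto c$ is not a similarity, $\lambda_k=B_{11}^{(k)}$ is not an eigenvalue of $g_k$, and the sign of $\lambda_k\lambda_{k+1}$ is partly pure gauge — replacing $(g_k,\lambda_k)$ by $(-g_k,-\lambda_k)$ at a single site preserves all cycle equations, and one can write down valid $(b\to c)$ $2$-cycles with $\lambda_0\lambda_1=-1$ and $\operatorname{tr}g_k\neq0$ (e.g.\ $T_b=T_c$ diagonal, $g_1=-g_0$ diagonal, $\lambda_0=1$, $\lambda_1=-1$). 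So your claimed trace dichotomy for $g_k$ is false here, and with it the exclusion of $\lambda_0^2=-1$ collapses; your second-paragraph assertion that "$\lambda_0^4=1$ forces these unordered pairs to coincide in every case" is likewise not true when $\lambda_0^2=-1$ unless $\mu=\pm i$.

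The gap is fixable, but by a different (and simpler) argument than the one you sketch: if $\lambda_0^2=-1$, your relations force the spectra of $T_b$ and $T_c$ to differ by an overall sign, i.e.\ $\{\sigma_c,\sigma_c^{-1}\}=\{-\sigma_b,-\sigma_b^{-1}\}$; with both $\sigma_b,\sigma_c$ in the domain $\mathcal{D}$ of Figure \ref{fig:domain} this is impossible unless $\operatorname{tr}T_b=\operatorname{tr}T_c=0$, in which case $T_b=T_c=\operatorname{diag}(i,-i)$ anyway. This is essentially the same use of the standard form by which the paper disposes of the $\sigma_b=\pm\sigma_c^{\pm1}$ ambiguity in its case $T_c^N\not\propto\identity$. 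For comparison, the paper's own proof splits instead on whether $T_c^N\propto\identity$: when it does not, Eq.~(\ref{eq:L1L1Tsloccnec}) forces the $g_k$ to be (counter)diagonal and Eq.~(\ref{eq:L1L1Tvcomparison}) pins the phase; when it does, the closure carries no information and the paper derives $\det M_k=0$ with $M_k=T_b^kg_0T_c^{-k}-g_0$ for even $k$ and sums the resulting cosines over $k$ to force $q_b=q_c$. Your route, with the sign branch closed via the $\mathcal{D}$ convention rather than via a trace condition on $g_k$, would constitute a valid alternative; as written, it is incomplete at exactly that point, and the cited Eq.~(\ref{eq:lhsrhs}) is in any case the wrong concatenation equation for $(b\to c)$ cycles.
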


Building on Lemma \ref{lemma:L1L1TSLOCCsameT}, the next lemma shows that whenever such $b$ and $c$ are connected by a $(b \rightarrow c)$ $N$-cycle, there also exists a $(b \rightarrow c)$ $1$-cycle.

\begin{lemma}
\label{lemma:L1L1TSLOCCglobalsuffices}
Consider $b$ and $c$ which correspond to normal MPS (i.e., $b_{20} = c_{20} = 1$ and additionally $\vec{v}_b,\vec{v}_c \neq \vec{0}$). If there exists a  $(b\rightarrow c)$ $N$-cycle, then, there also exists a $(b\rightarrow c)$ $1$-cycle. 
\end{lemma}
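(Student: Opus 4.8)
The plan is: reduce to standard form, apply Lemma~\ref{lemma:L1L1TSLOCCsameT} to obtain $T_b=T_c=:T$, and then dispatch the $N$-cycle by a short case analysis in the Jordan type of $T$ that mirrors Figure~\ref{fig:L1L1TflowchartSLOCC}.

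First I would reduce to the case that $b$ and $c$ are both in standard form. Passing from $b$ to its standard form is realised by a $(b\to b^{\mathrm{std}})$ $1$-cycle [this is precisely what Eq.~(\ref{Eq:BSL}) produces], and since $(b\to c)$ cycles compose with $1$-cycles on either side, the existence of a $(b\to c)$ $N$-cycle, resp.\ $1$-cycle, is unchanged by replacing $b,c$ with their standard forms. So assume $b,c$ are in standard form; Lemma~\ref{lemma:L1L1TSLOCCsameT} then gives $T_b=T_c=:T$, in Jordan normal form with $\det T=1$. If $T^N\not\propto\identity$ (in particular if $T$ is not diagonalisable, since then $T^N=\left(\begin{smallmatrix}1&N\\0&1\end{smallmatrix}\right)\not\propto\identity$), the claim is already contained in the discussion preceding Lemma~\ref{lemma:L1L1TSLOCCsameT}: uniqueness of the Jordan decomposition in Eqs.~(\ref{eq:L1L1Tslocctrafo})--(\ref{eq:L1L1Tsloccnec}) forces $g_k=g$ for all $k$, so the $N$-cycle already is a $1$-cycle. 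Hence it remains to treat $T^N\propto\identity$, so $T=\operatorname{diag}(\zeta,\zeta^{-1})$ with $\zeta^{2N}=1$.

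Writing the given $(b\to c)$ $N$-cycle out through Eq.~(\ref{eq:L1L1Tslocctrafo}) yields $g_k\in SL(2,\mathbb{C})$ and scalars $B_{11}^{(k)}$ with, modulo $N$, $\;g_{k+1}=\lambda_k\,Tg_kT^{-1}$ and $\;g_k\vec{v}_c=B_{11}^{(k)}\vec{v}_b$, where $\lambda_k:=B_{11}^{(k)}B_{11}^{(k+1)}=\pm1$ (the last from taking determinants). Iterating the first relation gives $g_k=\Lambda_kT^kg_0T^{-k}$ with $\Lambda_k\in\{\pm1\}$, and closing the cycle together with $T^N\propto\identity$ forces $\Lambda_N=1$; from $\lambda_k=\pm1$ one also gets $(B_{11}^{(k+1)})^2=(B_{11}^{(k)})^{-2}$, whence $(B_{11}^{(0)}B_{11}^{(N-1)})^2=1$. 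Since $\vec{v}_b\propto g_0\vec{v}_c$, it now suffices to exhibit $g\in SL(2,\mathbb{C})$ and $B_{11}$ with $gT_cg^{-1}=B_{11}^2T_b$ and $g\vec{v}_c=B_{11}\vec{v}_b$, i.e.\ the $N=1$ instance of Eq.~(\ref{eq:L1L1Tslocctrafo}). I then split on whether $\vec{v}_b$ (equivalently $\vec{v}_c$, since $b,c$ lie in the same box of Figure~\ref{fig:L1L1TflowchartSLOCC} by Observation~\ref{obs:L1L1TSLOCCobs}) is an eigenvector of $T$. If it is not, $\{\vec{v}_b,T\vec{v}_b\}$ is a basis and, using the cycle relations for $k=0$ and $k=N-1$ and $g_0^{-1}\vec{v}_b=\tfrac1{B_{11}^{(0)}}\vec{v}_c$, one computes that $g_0Tg_0^{-1}$ equals $T$ followed by $\operatorname{diag}\!\big((B_{11}^{(N-1)})^{2},(B_{11}^{(N-1)})^{-2}\big)$ in that basis; since $g_0Tg_0^{-1}$ is similar to $T$, comparing traces — in the basis $\{\vec{v}_b,T^{-1}\vec{v}_b\}$ one has $[T]=\left(\begin{smallmatrix}\operatorname{tr}T&1\\-1&0\end{smallmatrix}\right)$ by Cayley--Hamilton — shows that for $\operatorname{tr}T\neq0$ necessarily $(B_{11}^{(N-1)})^2=1$, so $g_0Tg_0^{-1}=T$ and $g:=g_0$ works; for $\operatorname{tr}T=0$, i.e.\ $T=\operatorname{diag}(i,-i)$ (the box-(VII) configuration), $g_0$ itself need not conjugate $T$ to a multiple of $T$, and I would instead take $g$ to be the map $\vec{v}_c\mapsto a\vec{v}_b$, $T\vec{v}_c\mapsto a\,T\vec{v}_b$: using $T^2=-\identity$ one gets $gTg^{-1}=T$, while $\det g=1$ forces $a^2=(B_{11}^{(0)}B_{11}^{(N-1)})^2=1$, so this $g$ is a $1$-cycle. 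If instead $\vec{v}_b$ is an eigenvector of $T=\operatorname{diag}(\zeta,\zeta^{-1})$, then $\vec{v}_b,\vec{v}_c$ are each proportional to $e_1$ or $e_2$; if to the same one, a diagonal $g$ commuting with $T$ works, and if to different ones one checks, combining $g_k=\Lambda_kT^kg_0T^{-k}$ with $\lambda_k=B_{11}^{(k)}B_{11}^{(k+1)}$, that an $N$-cycle cannot exist unless $\zeta^4=1$, whence $T=\operatorname{diag}(i,-i)$ and an off-diagonal $g$ with eigenvalues $\pm i$ supplies the $1$-cycle.

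The step I expect to be most delicate is the case $T=\operatorname{diag}(i,-i)$: there genuinely non-constant $N$-cycles occur (the $2$-cycles of box (VII)), so one cannot simply reuse $g_0$; the replacement $g$ must be constructed by hand, and one must verify that its $SL(2)$-normalisation and the requirement ``$\chi=i$ whenever $B_{11}^2=-1$'' come out correctly — this is where the recursion $(B_{11}^{(k+1)})^2=(B_{11}^{(k)})^{-2}$ and, in the eigenvector sub-case, the compatibility constraint $\lambda_k=B_{11}^{(k)}B_{11}^{(k+1)}$ are genuinely needed rather than only the bare ``$g$-dynamics'' and ``$\vec{v}$-map''. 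Everywhere else the argument reduces to routine matching of traces, determinants and eigenvectors. Collecting the cases completes the proof.
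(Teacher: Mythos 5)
Your argument is sound and reaches the paper's conclusion along the same skeleton (reduce to standard form, use Lemma~\ref{lemma:L1L1TSLOCCsameT} to get $T_b=T_c=:T$, then split on the Jordan type of $T$ and on whether $T^N\propto\identity$), but in the decisive case $T^N\propto\identity$ your mechanism is genuinely different from the paper's. The paper derives $\left(T^2g_0T^{-2}-g_0\right)\vec{v}_c=\vec{0}$ and then, precisely in the two boxes of Figure~\ref{fig:L1L1Tflowchart} where non-constant $(b\to c)$ $N$-cycles really occur (box (IV) and box (VII)), it \emph{composes} the given cycle with the known non-global symmetry cycles of the MPS (the $m$-cycles of box (IV), the $2$-cycles of box (VII)) so that the physical operators of the modified cycle become constant; this leans on the already established symmetry classification. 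You instead build the global operator directly from the cycle data: in the non-eigenvector branch you exploit the $k=0$ and wrap-around $k=N-1$ relations together with trace/determinant matching of $g_0Tg_0^{-1}$ against $T$ in the basis $\{\vec{v}_b,T^{-1}\vec{v}_b\}$ to conclude $(B_{11}^{(N-1)})^2=(B_{11}^{(0)})^2=1$ and hence that $g_0$ itself is a $1$-cycle when $\operatorname{tr}T\neq 0$, and for $T=\operatorname{diag}(i,-i)$ you construct $g:\vec{v}_c\mapsto a\vec{v}_b$, $T\vec{v}_c\mapsto aT\vec{v}_b$, whose $SL(2,\mathbb{C})$ normalization is guaranteed exactly by $(B_{11}^{(0)}B_{11}^{(N-1)})^2=1$; I checked these computations and they close, including the eigenvector-mismatch subcase forcing $\zeta^4=1$. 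Your route buys independence from the symmetry flowchart (no appeal to the box-(IV)/(VII) cycles), at the price of more hands-on linear algebra; the paper's route is shorter once Figure~\ref{fig:L1L1Tflowchart} is available.

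Two loose ends you should tighten. First, in the eigenvector branch the assertion that $\vec{v}_b,\vec{v}_c$ are proportional to standard basis vectors fails when $T\propto\identity$; that degenerate case is trivial (every $g$ commutes with $T$, and the cycle relations force $(B_{11}^{(0)})^2=1$, so $g_0$ already yields a $1$-cycle), but it must be stated. Second, for $T^N\not\propto\identity$ you dispose of the case by citing the main-text claim that $g_k=g$; the actual short arguments are that Eq.~(\ref{eq:L1L1Tsloccnec}) gives $[g_k,T]=0$ when $T$ is non-diagonalizable, and forces each $g_k$ to be diagonal or antidiagonal when $T$ is diagonal with $T^N\not\propto\identity$, after which Eq.~(\ref{eq:L1L1Tslocctrafo}) makes the cycle constant; adding those two lines would make your proof self-contained.
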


We are now in the position to state simple necessary and sufficient conditions for SLOCC equivalence of normal MPS generated by fiducial states within the $LLT$ class.
\begin{theorem}
\label{thm:L1L1TSLOCC}
Consider fiducial states  $\identity \otimes b \otimes \identity \ket{LLT}$ and $\identity \otimes c \otimes \identity \ket{LLT}$ which correspond to normal MPS (i.e., $b_{20} = c_{20} = 1$ and additionally $\vec{v}_b,\vec{v}_c \neq \vec{0}$). Then, the MPS are SLOCC equivalent if and only if they are related via a global operation, i.e., there exists a $g \in SL(2,\mathbb{C})$ such that $\vec{v}_b \propto g \vec{v}_c$  and $T_b = \frac{\vec{v}_b^T \vec{v}_b}{(g \vec{v}_c)^T g \vec{v}_c} g T_c g^{-1}$. 
\end{theorem}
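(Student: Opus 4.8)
The plan is to deduce Theorem~\ref{thm:L1L1TSLOCC} directly from the two structural lemmas established above, so that almost all of the content sits in Lemma~\ref{lemma:L1L1T_global_SLOCC} and Lemma~\ref{lemma:L1L1TSLOCCglobalsuffices} (the latter proved in Appendix~\ref{app:L1L1TSLOCC}). The ``if'' direction is immediate: a global operation $g^{\otimes N}$ is in particular a local invertible operation, hence witnesses SLOCC equivalence. For the ``only if'' direction I would argue as follows. By Theorem~\ref{ThSlocc}, together with the reduction recalled in Section~\ref{sec:prelim} (one may replace the fiducial states by $\one\otimes b\otimes\one\ket{LLT}$ and $\one\otimes c\otimes\one\ket{LLT}$, and $G_{A_b,A_c}=(\one\otimes c\otimes\one)\,G_A\,(\one\otimes b^{-1}\otimes\one)$), the two normal MPS are SLOCC equivalent if and only if $G_{A_b,A_c}$ contains an $N$-cycle, equivalently if and only if there is a $(b\to c)$ $N$-cycle in $G_A$ (possibly coming from an $M$-cycle with $M\mid N$). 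By Lemma~\ref{lemma:L1L1TSLOCCglobalsuffices} the existence of such a cycle forces the existence of a $(b\to c)$ $1$-cycle, i.e.\ a global SLOCC operation relating the two MPS. Finally, by Lemma~\ref{lemma:L1L1T_global_SLOCC}, a $(b\to c)$ $1$-cycle exists precisely when there is a $g\in SL(2,\mathbb{C})$ with $\vec{v}_b\propto g\vec{v}_c$ and $T_b=\frac{\vec{v}_b^T\vec{v}_b}{(g\vec{v}_c)^T g\vec{v}_c}\,g\,T_c\,g^{-1}$, which is the claimed criterion; note that it does not refer to $N$, so SLOCC equivalence in this class is $N$-independent, in contrast to case~(i).

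Thus the real work is Lemma~\ref{lemma:L1L1TSLOCCglobalsuffices}, which I would prove in two steps, the first being Lemma~\ref{lemma:L1L1TSLOCCsameT}. Put $b$ and $c$ in standard form, so $T_b,T_c$ are in Jordan normal form with unit determinant and (for diagonalizable $T$) eigenvalues in the domain $\mathcal{D}$. From Eq.~(\ref{eq:L1L1Tslocctrafo}) a $(b\to c)$ $N$-cycle $(g_k)$ satisfies $T_b=\tfrac{1}{B_{11}^{(k)}B_{11}^{(k+1)}}g_{k+1}T_cg_k^{-1}$ and $\vec{v}_b=\tfrac{1}{B_{11}^{(k+1)}}g_{k+1}\vec{v}_c$ for all $k$, and composing the first relation around the full cycle yields the necessary condition Eq.~(\ref{eq:L1L1Tsloccnec}), $T_b^N=\pm\,g_kT_c^Ng_k^{-1}$. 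When $T_c^N\not\propto\one$ this already pins $T_b$ down up to conjugation, and together with the standard-form normalization and uniqueness of the Jordan decomposition it gives $T_b=T_c$ and moreover forces every $g_k$ into Jordan normal form; a then-direct manipulation of Eq.~(\ref{eq:L1L1Tslocctrafo}) shows $g_k$ is independent of $k$, so the $N$-cycle is itself global. The delicate case is $T_c^N\propto\one$, where Eq.~(\ref{eq:L1L1Tsloccnec}) is vacuous; there I would feed the standard form back into the $T$- and $\vec{v}$-components of Eq.~(\ref{eq:L1L1Tslocctrafo}) and use $\det T=\det b$, $\vec{v}_b,\vec{v}_c\neq\vec{0}$, and the fact that $T_b,T_c$ are roots of $\one$ to still conclude $T_b=T_c$.

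Once $T_b=T_c=:T$ is established, I would finish by invoking the explicit cycle classification of Figure~\ref{fig:L1L1Tflowchart} / Theorem~\ref{thm:L1L1Tconditions}: for normal MPS a non-global $(b\to c)$ cycle can only occur when $T^N\propto\one$, i.e.\ when both $b$ and $c$ sit in the $m$-cycle branch ($T\vec{v}\propto\vec{v}$, cf.\ box~(IV)) or the $2$-cycle branch ($T^2\propto\one$, $T\vec{v}\not\propto\vec{v}$, cf.\ box~(VII)). In each of these branches the constraints on an $N$-cycle, namely $[g_k-B_{11}^{(k)}\one]\vec{v}=\vec{0}$ together with $g_{k+1}=\tfrac{1}{B_{11}^{(k)}B_{11}^{(k+1)}}Tg_kT^{-1}$, pin the $g_k$ down tightly enough (common eigenvector $\vec{v}$, respectively common eigenvectors $\vec{v}$ and $T\vec{v}$) that one can read off a single $g$ with $\vec{v}_b\propto g\vec{v}_c$ and $T_b=\tfrac{1}{B_{11}^2}gT_cg^{-1}$, i.e.\ a $(b\to c)$ $1$-cycle; the $B_{01},B_{02}$ in the third slot of Eq.~(\ref{eq:L1L1Tslocctrafo}) can always be chosen since $g$ is invertible. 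I expect this last case distinction --- in particular tracking the $\pm$ sign ambiguities in the $B_{11}^{(k)}$ and the $\tr g_k=0$ sub-branch arising when $\chi=i$ --- to be the main technical obstacle; the remainder is bookkeeping once Lemmas~\ref{lemma:L1L1T_global_SLOCC} and~\ref{lemma:L1L1TSLOCCsameT} are in hand.
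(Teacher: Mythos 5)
Your proposal is correct and follows essentially the same route as the paper: the theorem's "only if" direction is obtained exactly as in the paper by combining Lemma~\ref{lemma:L1L1TSLOCCglobalsuffices} (an $N$-cycle forces a $1$-cycle) with the characterization of $(b\to c)$ $1$-cycles in Lemma~\ref{lemma:L1L1T_global_SLOCC}, and the "if" direction is trivial. Your additional sketches of Lemmas~\ref{lemma:L1L1TSLOCCsameT} and~\ref{lemma:L1L1TSLOCCglobalsuffices} also match the structure of the paper's appendix proofs (case distinction on diagonalizability and on $T_c^N\propto\identity$, then exploiting the cycle classification of Theorem~\ref{thm:L1L1Tconditions} in the boxes (IV) and (VII) subcases).
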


Let us remark here that the operator $g$ in the Theorem is such that $g^{\otimes N}$ transforms one state into the other. 
\begin{proof}
The statement of the theorem follows directly from Lemma \ref{lemma:L1L1TSLOCCglobalsuffices} together with the considerations on global SLOCC operations [$(b\rightarrow c)$ 1-cycles] in Lemma \ref{lemma:L1L1T_global_SLOCC}.
\end{proof}

A straightforward consequence of the theorem is that SLOCC equivalence of the considered MPS is not particle-number dependent, in spite of all the variety within their ($N$-dependent) symmetry group. Note that this is not true in general, see e.g. the SLOCC classes for MPS with bond dimension $D=2$ that are generated by fiducial states within the GHZ class \cite{SaMo19}.

\subsubsection{Representatives and parametrization of SLOCC classes}

Due to Theorem \ref{thm:L1L1TSLOCC} we have that two MPS are SLOCC equivalent iff they are related by a global transformation. Here, we parameterize all SLOCC classes by introducing a more precise standard form for the various $b$'s, i.e. the fiducial states. This standard form is then also useful to obtain representative MPS for the different SLOCC classes. The resulting representatives of the SLOCC classes are presented in the flowchart in Fig.~\ref{fig:L1L1TflowchartSLOCC}. This completes the characterization of all SLOCC classes of TIMPS corresponding to the fiducial states which are SLOCC equivalent to the $\ket{LLT}$ state. 

As mentioned before (see Observation \ref{obs:L1L1TSLOCCobs} in Appendix \ref{app:L1L1TSLOCC}) SLOCC equivalent normal TIMPS must belong to the same box in Fig.~\ref{fig:L1L1Tflowchart}. We obtain the parameterization of the SLOCC classes by introducing a precise standard form of all operators $b$ corresponding to the individual boxes. To this end we consider the operators $b$ and $c$ which have the same standard form as given above. We have seen that $T$ may be brought into Jordan normal form, and normalized to determinant 1. Let us now further specify $T$ and the vector $\vec{v}$ for the various cases (boxes).

Let us first consider the scenario that $T_b$ and $T_c$ are not diagonalizable in more details. Due to the chosen standard form, we then have $T_b= T_c = \begin{pmatrix}1&1\\0&1 \end{pmatrix}$. Moreover, due to $T_b \propto g T_c g^{-1}$ we have that 
\begin{align*}
g = \pm  \begin{pmatrix}1&\beta\\0&1 \end{pmatrix}.
\end{align*}
Note that these are the only global transformations, which map normal MPSs with fiducial states $\one \otimes b \otimes \one \ket{LLT}$, with $b$ such that $T_b$ is non--diagonalizable and in standard from into each other. We hence have that $b$ and $c$ (in standard form) such that $T_b$ and $T_c$ are not diagonalizable lead to MPS that are SLOCC related if and only if 
\begin{align*}
\vec{v}_b = \begin{pmatrix}v_{b,0} \\ v_{b,1}\end{pmatrix} = \pm  \begin{pmatrix}v_{c,0} + \beta v_{c,1} \\ v_{c,1}\end{pmatrix}
\end{align*}
for some $\beta \in \mathbb{C}$. In order to take into account this freedom, we amend the definition of the standard form of $b$ by additionally requiring that either $\vec{v} \propto (1,0)^T$, or $\vec{v} \propto (0,1)^T$ (or $\vec{v} = \vec{0}$). Then, we have that $b$ and $c$ in standard form with non-diagonlizable $T_b$, $T_c$ correspond to MPS that are related by a global SLOCC operation if and only if  $\vec{v}_b = \pm \vec{v}_c$. The standard form for $b$ may now be used to obtain MPS that are representatives for the present SLOCC classes. Contemplating the characterization of symmetries, we have that there is a one-parametric\footnote{Recall that we count complex parameters.} family of SLOCC classes exhibiting a non-trivial global symmetry with $\vec{v}_b \propto (1,0)^T$ (the proportionality factor is the free complex parameter). More precisely, all states which belong to the SLOCC class can be transformed into the standard form with $\vec{v}_b = \pm x (1,0)^T$, and they belong to different SLOCC classes for different values of $x\in \C$.  

Supposing $\vec{v} \neq \vec{0}$, these classes correspond to the box (IIb) in Figure \ref{fig:L1L1Tflowchart}. Moreover, we find a one-parametric family with trivial symmetry group for $\vec{v} \propto (0,1)^T$, $\vec{v}  \neq \vec{0}$, which corresponds to box (V) within Figure \ref{fig:L1L1Tflowchart}. More precisely, all states with $\vec{v}_c = \pm x (\beta ,1)^T$  belong to the SLOCC class with $\vec{v}_b = x (0 ,1)^T$ for arbitrary $\beta \in \C$ and fixed $x\in \C$.

Let us now discuss the case that $T_c$ and $T_b$ are diagonalizable, i.e.,  $T_c = T_b = \operatorname{diag}(\sigma, 1/\sigma)$ for some $\sigma \in \mathbb{C}$. In the case that $\sigma = 1$, the MPS are SLOCC-equivalent if and only if there exists a $g$ with $\det g = 1$ such that $\vec{v}_b = g \vec{v}_c$. We thus choose the standard form $\vec{v} = (1,0)^T$. For $\sigma = i$, we obtain a $(b\rightarrow c)$ 1-cycle if and only if there exists an $g = \operatorname{diag}(\alpha, 1/\alpha)$ such that $\vec{v}_b = \pm g \vec{v}_c$, or $\vec{v}_b = \pm i g \sigma_x \vec{v}_c$. For $\sigma \neq 1,i$, we obtain a $(b\rightarrow c)$ 1-cycle if and only if there exists an $g = \operatorname{diag}(\alpha, 1/\alpha)$ such that $\vec{v}_b = \pm g \vec{v}_c$. In case we have $T \vec{v} \propto \vec{v}$, we thus choose the standard form such that either $\vec{v} = (1,0)^T$, or $\vec{v} = (0,1)^T$. Otherwise, we choose the standard form $\vec{v} = (v_1,1/v_1)^T$ for $v_1 \in \mathbb{C}\setminus \{0\}$.
Due to these considereations, there is a two-parametric family of SLOCC classes corresponding to box (VI) within Figure \ref{fig:L1L1Tflowchart} (one free parameter within $T$ plus one free parameter within $\vec{v}$). For the remaining boxes containing normal MPS, i.e., boxes (IV),  (VII), and (VIII), let us suppose that $T^m \propto \identity$ for some fixed $m$. Then, there is a discrete number of possible $T$s, namely $\lfloor m/2 \rfloor +1$ (see Figure \ref{fig:domain}). For a fixed $m$, there are exactly $m$ different SLOCC classes corresponding to box (IV), as for each of the $\lfloor m/2 \rfloor +1$ possible $T$'s we either have $\vec{v} = (1,0)^T$, or $\vec{v} = (0,1)^T$, except for $T = \identity$ and $T = \operatorname{diag}(-i,i)$, where these possibilities are equivalent and we hence choose $\vec{v} = (1,0)^T$. Particularly interesting will be the subfamily corresponding box (IV) for which $m=N$, as this subfamily encompasses all type-(IV) MPS with non-trivial symmetry group. Due to the reasonings above, there are exactly $N$ of them.   
Finally, there are one-parametric families of SLOCC classes corresponding to boxes (VII) and (VIII), respectively, with the free parameter stemming from $\vec{v}$. We summarize these findings in Figure \ref{fig:L1L1TflowchartSLOCC} following the same structure as in Figure \ref{fig:L1L1Tflowchart}, which displays the corresponding symmetries.

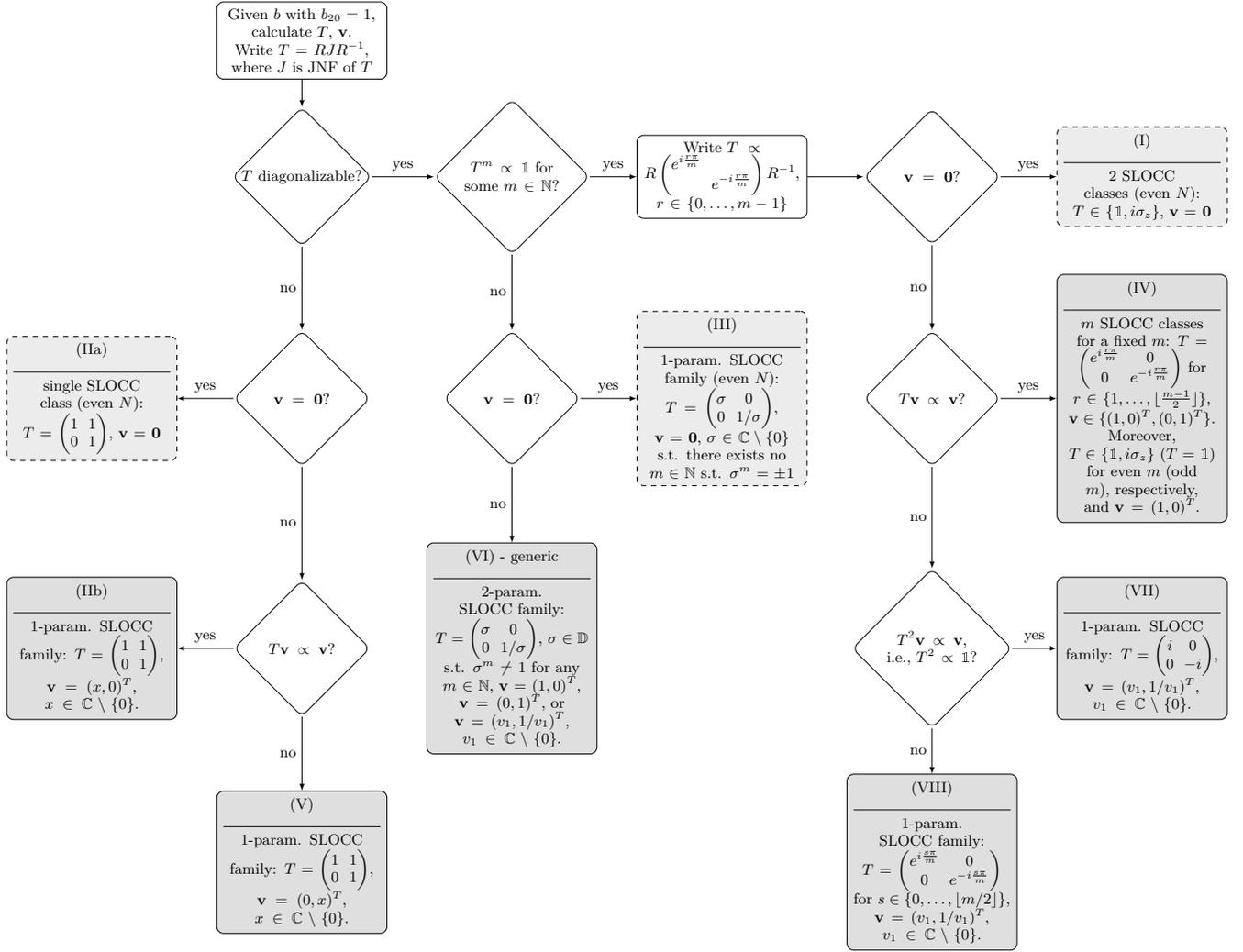
\begin{figure*}[t]
\resizebox{\textwidth}{!}{%
\begin{tikzpicture}[-latex]
  \matrix (chart)
    [
      matrix of nodes,
      column sep      = 2.5em,
      row sep         = 3ex,
      ampersand replacement=\&
    ]
    {
                               \& |[root]| Given $b$ with $b_{20}=1$, calculate $T$, $\vec{v}$.  Write $T = R J R^{-1}$, where $J$ is JNF of $T$     \& \&  \&         \&    \\
                               \& |[decision]| $T$ diagonalizable?                                    \& |[decision]| $T^m \propto \identity$ for some $m\in\mathbb{N}$? \&   |[treenode]| Write $T \propto R \begin{pmatrix} e^{i \frac{r \pi}{m}} & \\ &  e^{-i \frac{r \pi}{m}}\end{pmatrix} R^{-1}$, $r \in \{0, \ldots, m-1\}$  \& |[decision]| $\vec{v}=\vec{0}$?  \&  |[finishnonnormal]| (I) \rule{\textwidth}{0.2pt} 2 SLOCC classes (even $N$): $T \in \{\identity, i \sigma_z\}$,  $\vec{v}= \vec{0}$\\
      |[finishnonnormal]| (IIa) \rule{\textwidth}{0.2pt}   single SLOCC class (even $N$):  $T= \begin{pmatrix}1&1\\0&1 \end{pmatrix}$, $\vec{v}= \vec{0}$    \&  |[decision]| $\vec{v}=\vec{0}$?  \&  |[decision]| $\vec{v}=\vec{0}$? \&  |[finishnonnormal]| (III)   \rule{\textwidth}{0.2pt} 1-param. SLOCC family (even $N$): $T= \begin{pmatrix}\sigma&0\\0&1/\sigma \end{pmatrix}$, $\vec{v}= \vec{0}$, $\sigma \in \mathbb{C}\setminus \{0\}$ s.t. there exists no $m \in \mathbb{N}$ s.t. $\sigma^{m} = \pm 1$ \&  |[decision]| $T \vec{v}\propto \vec{v}$? \&  |[finish]| (IV)  \rule{\textwidth}{0.2pt} $m$ SLOCC classes for a fixed $m$: $T= \begin{pmatrix}e^{i \frac{r \pi}{m}}&0\\0& e^{-i \frac{r \pi}{m}} \end{pmatrix}$ for $r\in\{1, \ldots, \lfloor \frac{m-1}{2} \rfloor\}$, $\vec{v}\in \{(1,0)^T,(0,1)^T\}$. Moreover, $T \in \{\identity, i \sigma_z\}$ ($T=\identity$) for even $m$ (odd $m$), respectively, and $\vec{v}=(1,0)^T$.  \\
          |[finish]| (IIb) \rule{\textwidth}{0.2pt} 1-param. SLOCC family: $T= \begin{pmatrix}1&1\\0&1 \end{pmatrix}$, $\vec{v}= (x,0)^T$, $x \in \mathbb{C} \setminus \{0\}$.                 \&   |[decision]| $T \vec{v}\propto \vec{v}$? \&   |[finish]| (VI) - generic \rule{\textwidth}{0.2pt} 2-param. SLOCC family: $T= \begin{pmatrix}\sigma&0\\0&1/\sigma \end{pmatrix}$, $\sigma \in \mathbb{D}$ s.t. $\sigma^m\neq 1$ for any $m \in \mathbb{N}$, $\vec{v}=(1,0)^T$, $\vec{v}=(0,1)^T$, or $\vec{v} =  (v_1,1/v_1)^T$, $v_1 \in \mathbb{C} \setminus \{0\}$.   \&  \&  |[decision]| $T^2 \vec{v} \propto \vec{v}$, i.e., $T^2 \propto \identity$?  \&  |[finish]| (VII)  \rule{\textwidth}{0.2pt} 1-param. SLOCC family: $T= \begin{pmatrix}i&0\\0&-i \end{pmatrix}$, $\vec{v} =  (v_1,1/v_1)^T$, $v_1 \in \mathbb{C} \setminus \{0\}$. \\
                               \&         |[finish]| (V) \rule{\textwidth}{0.2pt} 1-param. SLOCC family: $T= \begin{pmatrix}1&1\\0&1 \end{pmatrix}$, $\vec{v} =  (0,x)^T$, $x\in \mathbb{C} \setminus \{0\}$.         \& \&  \&  |[finish]| (VIII) \rule{\textwidth}{0.2pt} 1-param. SLOCC family: $T= \begin{pmatrix}e^{i \frac{s \pi}{m}}&0\\0& e^{-i \frac{s \pi}{m}} \end{pmatrix}$ for $s\in\{0, \ldots, \lfloor m/2 \rfloor\}$, $\vec{v} =  (v_1,1/v_1)^T$, $v_1 \in \mathbb{C} \setminus \{0\}$. \&   \\
    };
  \draw
    (chart-1-2) edge (chart-2-2)    
    (chart-2-2) \no (chart-3-2)    
    (chart-2-2) \yes (chart-2-3)  
    (chart-3-2) \yes (chart-3-1)
    %(chart-3-2) edge node [below] {\footnotesize\textcolor{orange}{[26]}} (chart-3-1)
    (chart-3-2) \no (chart-4-2) 
    %(chart-3-2) edge  node [left,align=center]  {no \\ {\footnotesize \textcolor{orange}{[26], Thm. 3}}} (chart-4-2)
   % (chart-3-2) edge  node [right,align=center]  {{\footnotesize \textcolor{orange}{(Alternatively: Obs. 5}} \\ {\footnotesize \textcolor{orange}{and Appendix D)}}} (chart-4-2)
    (chart-2-3) edge node [above] {yes} (chart-2-4)
    (chart-2-4) edge  (chart-2-5)
    %(chart-2-3) edge node [below] {\footnotesize \textcolor{orange}{Obs. 5}} (chart-2-5)
    (chart-2-5) edge node [above] {yes} (chart-2-6)
    (chart-2-5) \no (chart-3-5)    
    (chart-2-3) \no (chart-3-3)
    (chart-3-3) \yes (chart-3-4)
    %(chart-3-3) edge node [below,align=center] {\footnotesize \textcolor{orange}{Thm. 4 i,}\\ \footnotesize \textcolor{orange}{Obs. 6}} (chart-3-4)
    (chart-3-3) \no (chart-4-3)
    (chart-3-5) \no (chart-4-5)    
    (chart-3-5) \yes (chart-3-6)
    (chart-4-2) \yes (chart-4-1)
    %(chart-4-3) edge  node [left,align=center]  {no \\ {\footnotesize \textcolor{orange}{Thm. 4 ii-iv,}} \\ {\footnotesize \textcolor{orange}{ Obs. 6}}} (chart-5-3)   
    (chart-4-5) \yes (chart-4-6)
  %  (chart-4-4) edge node [below] {\footnotesize\textcolor{orange}{Thm. 4 iv}} (chart-4-5)
    (chart-4-2) \no (chart-5-2)    
    (chart-4-5) \no (chart-5-5)    ;
    %(chart-5-4) edge node [below] {\footnotesize\textcolor{orange}{Thm. 4 iii}} (chart-5-5);
  %  (chart-5-4) edge  node [left,align=center]  {no} (chart-6-4)
   % (chart-5-4) edge  node [right,align=center]  {{\footnotesize \textcolor{orange}{Thm. 4 ii, iv}}} (chart-6-4);
\end{tikzpicture}
}
\caption{Summary of the SLOCC classification of all normal MPS generated by $\identity \otimes b \otimes \identity \ket{LLT}$ (shaded rectangles with solid contour) plus partial results on SLOCC classes of some non-normal MPS (shaded rectangles with dashed contour). We display the number of SLOCC classes corresponding to each type and give representatives for every SLOCC class (we count complex parameters). To simplify the presentation not all redundancies in the representatives of the continuous SLOCC families are avoided. Redundancy may be removed straightforwardly, though. The flowchart is following the same structure as the one in Figure \ref{fig:L1L1Tflowchart}, in particular, the displayed type labels agree. Note that additional non-normal MPS that have not been identified as SLOCC equivalent might in fact be equivalent. The displayed non-normal MPS vanish in case of odd $N$.
}
\label{fig:L1L1TflowchartSLOCC}
\end{figure*}

\subsection{Non-normal MPS}
\label{sec:L1L1Tnonnormal}

We have seen in Observation \ref{obs:L1L1Tnormality} that an MPS associated to $b$ is normal if and only if $b_{20} = 1$ and $\vec{v}_b \neq \vec{0}$. In Observation \ref{obs:L1L1Tb20vansihes} we have analyzed (non-normal) MPS associated to $b$ with $b_{20} = 0$. In this section, we discuss the remaining non-normal MPS, i.e., MPS generated by $b$ such that $b_{20} = 1$ and $\vec{v} = \vec{0}$, i.e. the three boxes (I), (IIa), and (III) in Fig.~\ref{fig:L1L1Tflowchart}. In particular, we present non-normal MPS belonging to box (I) and show that the symmetries determined in this section might indeed only be subgroups of the whole symmetry group for non normal MPS. Moreover, we show that the fact that any possible SLOCC transformation among normal MPS can be performed with a global transformation is no longer true for non normal MPS.

Note that for any odd particle number $N$ we have  $\ket{\Psi_{A_b}} = 0$, i.e., the MPS vanishes. Considering the definition of MPS [see Eq. (\ref{eq:mps})], this can be easily seen as follows.  If $\vec{v}= \vec{0}$, then both $A_b^0$ and $A_b^1$ are matrices of the form $\begin{pmatrix} 0 & 0 & \cdot \\ 0 & 0 & \cdot  \\ \cdot & \cdot & 0 \end{pmatrix}$, where $\cdot$ indicates an arbitrary (vanishing or non-vanishing) entry. This form is retained by any product of matrices of such a form, which is comprised of odd factors (see also Appendix \ref{app:L1L1Tnormality}). In particular, the trace vanishes and thus $\ket{\Psi_{A_b}} = 0$. Conversely, it may be straightforwardly verified that $\ket{\Psi_{A_b}} \neq 0$ for any even $N \geq 4$ (unless $\det b = 0$). In the following, we hence consider even $N \geq 4$.

Let us first consider MPS associated to diagonalizable $T$ as in box (I) in Fig.~\ref{fig:L1L1Tflowchart}.\ and consider the particularly interesting case $m=N$. In standard form, we then have $T = \operatorname{diag}(e^{i\frac{r \pi}{N}},e^{-i\frac{r \pi}{N}})$ for $r\in \{0,\ldots, \frac{N}{2}\}$. Considering only global $(b \rightarrow c)$-cycles these different states appear to be inequivalent. Note, however, that the premises of Lemmata \ref{lemma:L1L1TSLOCCsameT} and \ref{lemma:L1L1TSLOCCglobalsuffices} (which stated that this suffices to conclude that the associated MPS are inequivalent)  are not fulfilled, as $\vec{v} = \vec{0}$. Indeed, additional equivalences become apparent taking non-global $(b \rightarrow c)$-cycles into account. More precisely, considering Eq.~(\ref{eq:L1L1Tslocctrafo}) and $g_k = \operatorname{diag}(e^{-i\frac{r k \pi}{N}}, e^{i\frac{r k \pi}{N}})$ and $B_{11}^{(k)} = 1$ for $k \in \{0,\ldots, N-1\}$, we find that all $T = \operatorname{diag}(e^{i\frac{r \pi}{N}},e^{-i\frac{r \pi}{N}})$ for even $r$ are equivalent to $T = \identity$. For odd $r$ this construction does not work due to a sign mismatch in $B_{11}^{(k)}B_{11}^{(k+1)}$ in Eq.~(\ref{eq:L1L1Tslocctrafo}). Instead, those can be shown to be equivalent to $T = \operatorname{diag}(i,-i) = i\sigma_z$ using $g_k = \operatorname{diag}(e^{-i\frac{(r-N/2) k \pi}{N}}, e^{i\frac{(r-N/2) k \pi}{N}})$. Thus, there are (at most) two SLOCC classes within box (I) for $m=N$. The two representative MPS associated to $T=\identity$ and $T = i\sigma_z$ are in fact the Majumdar-Ghosh states \cite{Perez-Garcia2007} 
\begin{align*}
\ket{\Psi_{A_b}} \propto& \ket{\psi^{-}}_{01} \ldots  \ket{\psi^{-}}_{N-2 \, N-1}   \pm  \ket{\psi^{-}}_{12} \ldots  \ket{\psi^{-}}_{ N-1 \, 0},
\end{align*}
where $+$ (-) corresponds to $T=\identity$ ($T = i\sigma_z$), respectively. Note that $g^{\otimes N}$ is a symmetry of $\ket{\Psi_{A_b}}$ for any $g$. Thus, the study of cycles (see Figure \ref{fig:L1L1Tflowchart}) clearly has revealed only a subgroup of the symmetry group, as may be expected for non-normal tensors.

\section{Fiducial states for a bond dimension $D>3$ corresponding to diagonal matrix pencils}
\label{sec:2DDdiagonal}

We discuss here the generic case of fiducial states with bond dimension larger than 3 (see also Sec. \ref{sec:prelim}). The fiducial states correspond to diagonal matrix pencils \cite{HeGa18}, i.e., we have
\begin{align*}
\ket{A} = \ket{0}_A\ket{\Phi^+_D}_{BC} + \ket{1}_A  (D \otimes \identity) \ket{\Phi^+_D}_{BC},
\end{align*}
where $D = \operatorname{diag}(x_1, \ldots, x_D)$, with $x_1, \ldots, x_D$ being the eigenvalues of the corresponding matrix pencil. Its symmetries are of the form 
\begin{align*}
g \otimes P_\sigma^{-1} D_{g}^{-1} \tilde{D} \otimes  P_\sigma^{-1} \tilde{D}^{-1},
\end{align*}
where $g = \begin{pmatrix}\alpha & \beta \\ \gamma & \delta \end{pmatrix}$ is such that the set of eigenvalues of the pencil is mapped into itself (see Sec. \ref{sec:prelim}).

Depending on the eigenvalues of the diagonal matrix pencil, i.e., the entries of $D$, we have either: 
(i) the eigenvalues are such that the linear fractional transformation given in Eq.~(\ref{eq:EigenvaluesMPsym})  exists, or
(ii) no such transformation exists (which is the generic case). 

In case (ii), which is the generic case, the fiducial state has only the trivial qubit symmetry, which implies that the corresponding TIMPS has only the trivial symmetry. Moreover, any TIMPS which corresponds to a fiducial state which is in such a SLOCC class has only the trivial symmetry. In case (i) non--trivial symmetries exist. 
A simple example of such a state would be the fiducial state given in Eq.~(\theequation) with $x_k=\omega^k$, where $\omega = e^{i \frac{2\pi}{D}}$\footnote{Using the theory of Möbius transformations one might derive necessary and sufficient conditions on the fiducial states to possess non-trivial symmetries.}.
In general we have that in this case, $g$ is such that for all $i \in \{1, \ldots, D\}$, $x_i \mapsto x_{\sigma(i)}$ for some permutation $\sigma$. Moreover, $D_{g} = \operatorname{diag}(\gamma x_1 + \delta, \ldots, \gamma x_D + \delta)$, and $P_{\sigma}= \sum_{i} \ket{\sigma(i)}\bra{\sigma}$, and $\tilde{D}$ is an arbitrary invertible diagonal matrix (see equation above). Then, it is immediate that $\identity \otimes b \otimes \identity \ket{A}$ has the symmetries
\begin{align*}
g \otimes b P_\sigma^{-1} D_{g}^{-1} \tilde{D} b^{-1} \otimes  P_\sigma^{-1} \tilde{D}^{-1} = S \otimes b x b^{-1} \otimes y^T, 
\end{align*}
where the RHS is standard MPS notation for symmetries of the fiducial state.

We outline in the following how all the symmetries of the corresponding TIMPS can be determined. As in Sec. \ref{sec:M10M11M1inf} one would start out by solving the concatenation condition [see Eq. (\ref{eq:ConcatSymb})]. More precisely, in order to determine the physical symmetry of the MPS we have to identify the $N$-cycles within the symmetry group of the fiducial state. That is, we solve the concatenation rules: 
\begin{align*}
 y_k b x_{k+1} b^{-1} &\propto \identity \text{ for } k\in \{0, \ldots, N-2\},\nonumber\\
 y_0 b x_{N-1}b^{-1} &\propto \identity.
\end{align*}

For $\ket{A_b}$ as fiducial state, this condition is equivalent to  
\begin{align*}
 b P_{\sigma_{k+1}}^{-1} D_{{g}_{k+1}}^{-1} \tilde{D}_{k+1} b^{-1} \propto  P_{\sigma_k}^{-1} \tilde{D}_k.
\end{align*}

Thus, we can derive as a necessary condition that $b P_{\sigma_{k+1}}^{-1} D_{{g}_{k+1}}^{-1} \tilde{D}_{k+1} b^{-1}$ must be similar to $P_{\sigma_k}^{-1} \tilde{D}_k$. This implies that these matrices must have the same eigenvalues. Matrices of the form $P_{\sigma} D$, where $P_{\sigma}$ is a permutation matrix and $D$ is a diagonal matrix, are called \emph{generalized permutation matrices} \cite{JoHo}. It turns out that their eigenvalues are easy to calculate, as the following lemma shows.

\begin{lemma}[Eigenvalues of monomial matrices]
\label{lemma:genpermutation}
Let $P_\sigma$ be a permutation matrix and $D$ a diagonal matrix. Then the eigenvalues of $P_\sigma D$ can be determined as follows. Assume $\sigma$ decomposes into $l$ distinct cycles $\pi_1, \ldots, \pi_l$. Let $d_i$ denote the $\operatorname{length}(\pi_i)$-th root of the product of the entries of $D$ associated with the cycle $\pi_i$. Then the eigenvalues of $P_\sigma D$ are $\left(d_1 e^{i\frac{2 k \pi}{\operatorname{length}(\pi_1)}} \right)_{k=0}^{\operatorname{length}(\pi_1)-1} \cup \ldots \cup \left(d_l e^{i\frac{2 k \pi}{\operatorname{length}(\pi_l)}} \right)_{k=0}^{\operatorname{length}(\pi_l)-1}$.
\end{lemma}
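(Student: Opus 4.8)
The plan is to reduce to the single-cycle case by a simultaneous block decomposition, and then to diagonalize each cyclic block explicitly. First I would observe that a permutation $\sigma$ decomposes the index set $\{1,\dots,n\}$ into disjoint orbits $O_1,\dots,O_l$ on which $\sigma$ acts as the cycles $\pi_1,\dots,\pi_l$. Both $P_\sigma$ and the diagonal matrix $D$ map $\operatorname{span}\{e_j : j\in O_r\}$ into itself: $D$ because it is diagonal, and $P_\sigma$ because it permutes basis vectors within each orbit. Hence, after the permutation of basis vectors that groups the orbits together, $P_\sigma D$ becomes block diagonal with one block per cycle, and the spectrum of $P_\sigma D$ is the (multiset) union of the spectra of these blocks. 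This reduces the claim to the case where $\sigma$ is a single $p$-cycle.

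Next I would treat a single $p$-cycle. Relabelling the orbit as $\{1,\dots,p\}$ with $\sigma$ the cyclic shift, the matrix $P_\sigma D$ acts on $e_k$ by $e_k \mapsto d_k e_{\sigma(k)}$ for the corresponding diagonal entries $d_1,\dots,d_p$; concretely it is $C \operatorname{diag}(d_1,\dots,d_p)$ (or a relabelling thereof) where $C$ is the basic circulant shift. Its characteristic polynomial is $\det(P_\sigma D - \mu\,\identity)$, which one can expand along a single row (only two nonzero entries per row) to obtain $\mu^p - d_1 d_2 \cdots d_p$, up to sign. Hence the eigenvalues are exactly the $p$-th roots of the product $d_1\cdots d_p$, i.e. $d\, e^{i 2k\pi/p}$ for $k=0,\dots,p-1$, where $d$ is any fixed $p$-th root of $d_1\cdots d_p$. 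Combining with the block-decomposition step gives the stated formula. (Alternatively one can exhibit the eigenvectors directly: for a choice of $\tilde d_1,\dots,\tilde d_p$ solving the recurrence $\tilde d_{\sigma(k)} = \tilde d_k d_k / d$, the vectors $\operatorname{diag}(\tilde d_1,\dots,\tilde d_p)(1,\zeta,\zeta^2,\dots)^T$ for $\zeta$ ranging over $p$-th roots of unity are eigenvectors, mirroring the explicit eigenvectors recorded in Observation \ref{obs:genpermutation3} for $D=3$; this also shows $P_\sigma D$ is diagonalizable when the $d_i$ are nonzero.)

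The only mildly delicate point is bookkeeping: making sure the "$p$-th root $d_i$ of the product of the entries of $D$ associated with $\pi_i$" is consistently defined (any branch works, since changing the branch only reindexes the roots of unity), and tracking multiplicities when different cycles produce coinciding eigenvalues — but since we are describing the spectrum as a multiset (union with multiplicity), no issue arises. I do not expect any real obstacle here; this is essentially the standard computation of the spectrum of a monomial matrix, and the lemma is used in the paper only as a convenient computational tool, paralleling Observation \ref{obs:genpermutation3} in the $D=3$ case.
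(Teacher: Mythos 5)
Your proposal is correct and follows essentially the same route as the paper: decompose into invariant subspaces corresponding to the cycles of $\sigma$, then note that each block has characteristic polynomial $\lambda^{p}-d_1\cdots d_p$, whose roots are the stated $p$-th roots of the product. The extra eigenvector construction you mention is a fine addition but not needed; the paper's proof stops at the characteristic polynomial.
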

\begin{proof}
  Let us fix a cycle $\pi_i$ and restrict $P_\sigma D$ to the subspace spanned by the basis elements that $\pi_i$ does not leave invariant. This matrix then has characteristic polynomial $\lambda^{\operatorname{length}(\pi_i)} - d_i^{\operatorname{length}(\pi_i)} =0$, and thus its eigenvalues are as stated in the lemma. 
\end{proof}

Using now these necessary conditions for the existence of a cycle, similar tools as the ones presented in Sec \ref{sec:M10M11M1inf} can be utilized to determine all symmetries of the corresponding TIMPS.

\section{Entanglement and LOCC transformations}
\label{sec:reventt}

Before concluding, let us briefly discuss the implication of the results derived here in the context of entanglement theory. 
As mentioned in the introduction, if a state, $\ket{\Psi}$ can be transformed deterministically via LOCC into some other state $\ket{\Phi}$, then $E(\ket{\Psi})\geq E(\ket{\Phi}) $ for any entanglement measure $E$. Hence, LOCC transformations induce a partial order on the set of entangled states. As shown in \cite{GoWa11, deSp13,HeEn21} local symmetries play an important role in characterizing all possible LOCC transformations among pure states. As we have characterized all the local symmetries of the TIMPS, it is straightforward to determine possible LOCC transformations (at least in case the number of symmetries is finite). To give a simple example a TIMPS $\ket{\Psi}$ can be transformed deterministically into a state $h_1 \otimes \one  \otimes \one \otimes \ldots \ket{\Psi}$, where $h_1$ is determined by the  symmetries of $\ket{\Psi}$. More precisely, if the symmetries are unitary symmetries on (at least) all but one system, system 1, then the above mentioned transformation is possible if and only if there exists a finite set of probabilities $\{p_k\}$ and symmetries $g^{k}$ such that $\sum_k p_k \left(g^k_1\right)^\dagger h_1^\dagger h_1 g^k_1 \propto \identity$.
Recall that the symmetry group of an MPS may depend on the particle number $N$. In drastic cases, an MPS may exhibit the trivial symmetry group for certain $N$, while the symmetry group is non-trivial for other $N$. A simple example would be the MPS $\Psi_{1\, D^3}$ from Section \ref{sec:examples}. Thus, it can be easily seen that whether $\ket{\Psi}$ can be transformed deterministically into a state $h_1 \otimes \one  \otimes \one \otimes \ldots \ket{\Psi}$ via LOCC may depend on the particle number $N$.
Another assertion concerning reachability of states under LOCC is possible due to knowing the full symmetry group of a TIMPS. Namely, TIMPS which possess non-trivial global symmetries, but no local symmetries, such as, e.g., $\Psi_{G \, \text{fin}}$ from Section \ref{sec:examples}, are not reachable from any other state via an LOCC protocol involving a finite number of rounds of classical communication \cite{sym}.

In case a deterministic transformation is not posssible, one might study the maximal success probability of transforming one TIMPS into another. We denote by $P(\psi \rightarrow \phi)$ the maximal success probability for transforming $\psi$ to $\phi$. It has been shown in \cite{Vidal} that $P(\psi \rightarrow \phi)=\min_\mu \mu(\psi)/\mu(\phi)$, where $\mu$ denotes an arbitrary entanglement monotone. For a generic set of states, this minimum can be easily determined. This set is defined as the union of all SLOCC classes which possess a representative whose single party reduced state is completely mixed (critical state) and whose stabilizer is trivial. It has been shown that this set is a full measure set in case of a homogeneous system, i.e. where all local dimensions coincide \cite{SaWa18}. For those generic multipartite states, we have \cite{GoKr17}
\begin{equation*} 
  P(\psi \rightarrow \phi)= \frac{||\phi||^2}{||\psi||^2}\frac{1}{\lambda_{max}(G^{-1} H)},
\end{equation*}
where $G=g^\dagger g$, $H=h^\dagger h$, and $\psi=g \psi_s$, $\phi=h \psi_s$, with $\psi_s$ the critical representative of the SLOCC class and $g$ and $h$ are local operators. Here, $\lambda_{max}$ denotes the maximal eigenvalue. Note that the maximal success probability can be easily determined as $G$ and $H$ are local operators. Note that for these states it is also possible to determine so--called SLOCC--paths along which on state can be transformed optimally into the other \cite{SaSc18}. Furthermore, for these states a complete set of entanglement monotones, which can be easily computed, is known \cite{SaSc18}. Clearly all these results apply to TIMPS which belong to the above mentioned full--measured set.

\section{Conclusion}
We studied the symmetries of TIMPS with bond dimension $D=3$ and showed that they are in strong contrast to TIMPS with bond dimension $D=2$. Depending on the SLOCC class of the underlying fiducial state, very different symmetry and entanglement properties (regarding SLOCC classes) occur. We illustrate the rich variety of states by presenting TIMPS with particular symmetry groups. 

In a future project it will be interesting to investigate how the stabilizer groups and SLOCC classes presented here relate to the results on the classification of phases of matter presented in \cite{Schuch2011,Chen2011}. Furthermore, the relaxation of locality in (S)LOCC as presented in \cite{Piroli2021} might reveal a more coarse--grained structure of the SLOCC classes presented here. 

\section{Acknowledgments} M. H. and B. K.  acknowledge financial support from the Austrian Science Fund (FWF): the DK-ALM (W1259-N27), the stand alone project: P32273-N27, and the SFB BeyondC (F7017). A.M. acknowledges funding from the European Research Council (ERC) under the European Union's Horizon 2020 research and innovation programme through the ERC-CoG SEQUAM (grant agreement No. 863476) and ERC-CoG GAPS (grant agreement No.  648913).
J. I. Cirac acknowledges funding from the ERC-AdG QUENOCOBA
(No. 742102) and 
the DFG under
Germany's Excellence Strategy (EXC2111-390814868),
and the SFB BeyondC F7104 
Project number 414325145.

%----------------------------------------------------------------------------------------
%	BIBLIOGRAPHY
%----------------------------------------------------------------------------------------

%----------------------------------------------------------------------------------------

\appendix

\section{Characterization of cycles considering fiducial states $\identity \otimes b \otimes \identity \ket{ M(\omega) }$}
\label{app:M10M11M1infcycles}

In this appendix we present Table \ref{tab:M10M11M1infcycleslong} providing details on normal TIMPS generated by fiducial states $\identity \otimes b \otimes \identity \ket{ M(\omega) }$ (see Section \ref{sec:M10M11M1inf}). The table lists all possible cycles and provides parametrizations for all normal fiducial states $\identity \otimes b \otimes \identity \ket{ M(\omega) }$ s.t. $G_b$ exhibits the respective cycles. See Section  \ref{sec:M10M11M1inf} for the methods required to derive the table, as well as an exemplary calculation for the cycle $C_0$.

\LTcapwidth=\textwidth
\begin{longtable*}{c|c|c|c|>{\raggedright\arraybackslash}p{0.52\linewidth}|c }
Label & Subgroup(s)  & Cycle & $N$ & $b$ & \#param.\\ \hline
  $C_0$&- &  $S$& 1 &  $D \operatorname{diag}(1,\omega,\omega^2)^l \begin{pmatrix} b_{00} & b_{01}& b_{02} \\ b_{02} & b_{00} & b_{01} \\ b_{01} & b_{02} & b_{00} \end{pmatrix}D^{-1}$, where $b_{00}, b_{01}, b_{02}\in \mathbb{C}$, $l \in \{0,1,2\}$ & 2\\
 $T_0^\tau$ & - &  $\tau$ & 1& $D \operatorname{diag}(1,\omega,\omega^2) \begin{pmatrix} -b_{00} & -b_{01}& -b_{01} \\ b_{01} & b_{11} & b_{12} \\ b_{01} & b_{12} & b_{11} \end{pmatrix}D^{-1}$, or $D \operatorname{diag}(1,\omega,\omega^2) \begin{pmatrix} b_{00} & b_{01}& i b_{01} \\ -b_{01} & -b_{11} & -b_{12} \\ i b_{01} & b_{12} & -b_{11} \end{pmatrix}D^{-1}$, where $b_{00}, b_{01}, b_{11}, b_{12} \in \mathbb{C}$  &3\\
 $T_0^\epsilon$ & - &  $\epsilon$ & 1&  $D \operatorname{diag}(1,\omega,\omega^2) \begin{pmatrix} b_{00} & b_{01}& b_{02} \\ b_{01} & b_{00} & b_{02} \\ -b_{02} & -b_{02} & -b_{22} \end{pmatrix}D^{-1}$, or $D \operatorname{diag}(1,\omega,\omega^2)\begin{pmatrix} b_{00} & b_{01}& b_{02} \\ -b_{01} & b_{00} &  i b_{02} \\ b_{02} & -i b_{02} & b_{22} \end{pmatrix}D^{-1}$, where $b_{00}, b_{01}, b_{02}, b_{22} \in \mathbb{C}$ &3 \\
 $T_0^\kappa$ & - &  $\kappa$ & 1& $D \operatorname{diag}(1,\omega,\omega^2) \begin{pmatrix} b_{00} & b_{01}& b_{02} \\ -b_{01} & -b_{11} & -b_{01} \\ b_{02} & b_{01} & b_{00} \end{pmatrix}D^{-1}$, where $b_{00}, b_{01}, b_{02}, b_{11} \in \mathbb{C}$&3 \\ \hline
  $C_1$ &-&  $S \otimes S^2$  & 2 & $D \operatorname{diag}(1,\omega,\omega^2) \begin{pmatrix} b_{00} & b_{01}& b_{02} \\ b_{01} & b_{02} & b_{00} \\ b_{02} & b_{00} & b_{01} \end{pmatrix}D^{-1}$, or $D \operatorname{diag}(1,\omega,\omega^2) \begin{pmatrix} b_{00} & b_{01}& b_{02} \\ b_{01} & -b_{02} & -b_{00} \\ b_{02} & -b_{00} & b_{01} \end{pmatrix}D^{-1}$, or $D \operatorname{diag}(1,\omega,\omega^2) \begin{pmatrix} b_{00} & b_{01}& b_{02} \\ -b_{01} & -b_{02} & -i b_{00} \\ b_{02} & i b_{00} & i b_{01} \end{pmatrix}D^{-1}$ where $b_{00}, b_{01}, b_{02} \in \mathbb{C}$&2\\
 $T_1^\tau$ & $T_0^\tau$ &  $\identity \otimes \tau$& 2  & $D \operatorname{diag}(1,\omega,\omega^2) \begin{pmatrix} 0 & b_{01}& b_{01} \\ -b_{01} & -b_{11} & b_{11} \\ -b_{01} & b_{11} & -b_{11} \end{pmatrix}D^{-1}$, where $b_{01}, b_{11} \in \mathbb{C}$ &1\\
 $T_1^\epsilon$ & $T_0^\epsilon$ &  $\identity \otimes \epsilon$  & 2& $D \operatorname{diag}(1,\omega,\omega^2) \begin{pmatrix} -b_{00} & b_{00}& -b_{02} \\ b_{00} & -b_{00} & -b_{02} \\ b_{02} & b_{02} & 0 \end{pmatrix}D^{-1}$, where $b_{00}, b_{02} \in \mathbb{C}$ &1\\
 $T_1^\kappa$ & $T_0^\kappa$ &  $\identity \otimes \kappa$ & 2 & $D \operatorname{diag}(1,\omega,\omega^2) \begin{pmatrix} b_{00} & b_{01}& -b_{00} \\ -b_{01} & 0 & -b_{01} \\ -b_{00} & b_{01} & b_{00} \end{pmatrix}D^{-1}$, where $b_{00}, b_{01} \in \mathbb{C}$ &1\\
 $T_2^\tau$ &  $T_0^\tau$, $C_1$ & $\epsilon \otimes \kappa$ & 2& $D \operatorname{diag}(1,\omega,\omega^2) \begin{pmatrix} b_{00} & b_{01}& b_{01} \\  -b_{01} & -i b_{01} &  b_{00} \\ -b_{01} &  b_{00} & -b_{01} \end{pmatrix}D^{-1}$, or $D \operatorname{diag}(1,\omega,\omega^2) \begin{pmatrix} b_{00} & b_{01}& -i b_{01} \\ -b_{01} & i b_{01} & -i b_{00} \\ -i b_{01} & i b_{00} & i b_{01} \end{pmatrix}D^{-1}$, where $b_{00}, b_{01} \in \mathbb{C}$ &1 \\
 $T_2^\epsilon$ &  $T_0^\epsilon$, $C_1$ & $\tau \otimes \kappa$ & 2&  $D \operatorname{diag}(1,\omega,\omega^2) \begin{pmatrix} -b_{00} & -b_{01}& - i b_{00} \\ -b_{01} & -b_{00} & -i b_{00} \\ i b_{00} & i b_{00} & -b_{01} \end{pmatrix}D^{-1}$, or $D \operatorname{diag}(1,\omega,\omega^2) \begin{pmatrix} b_{00} & b_{01}& -b_{00} \\ -b_{01} & b_{00} & -i b_{00} \\ -b_{00} & i b_{00} & i b_{01} \end{pmatrix}D^{-1}$, where $b_{00}, b_{01} \in \mathbb{C}$ &1\\
 $T_2^\kappa$ &  $T_0^\kappa$, $C_1$ &  $\tau \otimes \epsilon$ & 2& $D \operatorname{diag}(1,\omega,\omega^2) \begin{pmatrix} b_{00} & i b_{00}& b_{02} \\ -i b_{00} & b_{02} & -i b_{00} \\ b_{02} & i b_{00} & b_{00} \end{pmatrix}D^{-1}$, or $D \operatorname{diag}(1,\omega,\omega^2) \begin{pmatrix} b_{00} & -i b_{00}& b_{02} \\ i b_{00} & -b_{02} & -i b_{00} \\ b_{02} & i b_{00} & b_{00} \end{pmatrix}D^{-1}$, where $b_{00},  b_{02} \in \mathbb{C}$ &1\\ \hline
  $C_2$& $C_0$ & $\identity \otimes S \otimes S^2$& 3&  $D  \operatorname{diag}(1,\omega,\omega^2)^l \begin{pmatrix} 1 & 1 & \omega^{m} \\  \omega^{m} & 1 & 1 \\ 1 & \omega^{m} & 1 \end{pmatrix}D^{-1}$, where $l \in \{0,1,2\}$, $m \in \{1,2\}$ &-\\
  $T_3^\tau$& - & $\identity \otimes \tau \otimes \tau$& 3& $D \operatorname{diag}(1,\omega,\omega^2) \begin{pmatrix} 0 & b_{01}& b_{01} \\ -b_{01} & -b_{11} & b_{11} \\ b_{01} & -b_{11} & b_{11} \end{pmatrix}D^{-1}$, where $b_{01}, b_{11} \in \mathbb{C}$ & 1\\
  $T_3^\epsilon$& - &  $\identity \otimes \epsilon \otimes \epsilon$& 3&  $D \operatorname{diag}(1,\omega,\omega^2) \begin{pmatrix} b_{00} & -b_{00}& b_{02} \\ b_{00} & -b_{00} & -b_{02} \\ b_{02} & b_{02} & 0 \end{pmatrix}D^{-1}$, where $b_{00}, b_{02} \in \mathbb{C}$ &1\\
  $T_3^\kappa$& - & $\identity \otimes \kappa \otimes \kappa$& 3& $D \operatorname{diag}(1,\omega,\omega^2) \begin{pmatrix} b_{00} & b_{01}& b_{00} \\ -b_{01} & 0 & b_{01} \\ -b_{00} & b_{01} & -b_{00} \end{pmatrix}D^{-1}$, where $b_{00}, b_{01} \in \mathbb{C}$ &1 \\
  $T_4^{\circlearrowleft}$& $C_0$  &$\tau \otimes \kappa \otimes \epsilon$& 3 & $D \operatorname{diag}(1,\omega,\omega^2) \begin{pmatrix} b_{00} & b_{01}& b_{00} \omega^m \\ b_{00} \omega^m & b_{00} & b_{01} \\ b_{01} & b_{00}\omega^m & b_{00} \end{pmatrix}D^{-1}$, where $b_{00}, b_{01} \in \mathbb{C}$, $m \in \{0,1,2\}$ &1 \\
  $T_4^{\circlearrowright}$& $C_0$ &  $\tau \otimes \epsilon \otimes \kappa$& 3 &$D \operatorname{diag}(1,\omega,\omega^2) \begin{pmatrix} b_{00} & b_{00} \omega^m& b_{02}  \\ b_{02}  & b_{00} & b_{00}\omega^m \\ b_{00} \omega^m & b_{02} & b_{00} \end{pmatrix}D^{-1}$, where $b_{00}, b_{01} \in \mathbb{C}$, $m \in \{0,1,2\}$ &1\\
  $T_5^{\circlearrowleft}$& $C_2$, $C_0$ & $\tau \otimes \tau \otimes \kappa $& 3& $D  \operatorname{diag}(1,\omega,\omega^2)^{m+1} \begin{pmatrix} 1 & 1 & \omega^{m} \\  \omega^{m} & 1 & 1 \\ 1 & \omega^{m} & 1 \end{pmatrix}D^{-1}$, where $m \in \{1,2\}$ &-\\
  $T_5^{\circlearrowright}$& $C_2$, $C_0$ & $\tau \otimes \tau \otimes \epsilon $& 3 & $D  \operatorname{diag}(1,\omega,\omega^2)^{2m+1} \begin{pmatrix} 1 & 1 & \omega^{m} \\  \omega^{m} & 1 & 1 \\ 1 & \omega^{m} & 1 \end{pmatrix}D^{-1}$, where $m \in \{1,2\}$ &-\\ \hline
  $C_3$& - &  $\identity \otimes S \otimes \identity \otimes S^2$ & 4 & $D\operatorname{diag}(1,\omega,\omega^2)^l  \begin{pmatrix} 1 & 1 & 1 \\  1 & \omega^{m} & \omega^{2m} \\ 1 & \omega^{2m} & \omega^{m} \end{pmatrix}D^{-1}$, where $l \in \{0,1,2\}$, $m\in\{1,2\}$&- \\
  $T_6^\tau$& $C_3$, $T_0^\tau$ & $\kappa \otimes \tau \otimes \epsilon \otimes \tau $ & 4 & $D \operatorname{diag}(1,\omega,\omega^2) \begin{pmatrix} 1 & 1 & 1 \\  1 & \omega^{m} & \omega^{2m} \\ 1 & \omega^{2m} & \omega^{m} \end{pmatrix}D^{-1}$, where $m\in\{1,2\}$&-\\
  $T_6^\epsilon$& $C_3$, $T_0^\epsilon$ & $\tau \otimes \epsilon \otimes \kappa \otimes \epsilon $ & 4 & $D\operatorname{diag}(1,\omega,\omega^2)^{2m+1}  \begin{pmatrix} 1 & 1 & 1 \\  1 & \omega^{m} & \omega^{2m} \\ 1 & \omega^{2m} & \omega^{m} \end{pmatrix}D^{-1}$, where $m\in\{1,2\}$&-\\
  $T_6^\kappa$& $C_3$, $T_0^\kappa$ &  $\epsilon \otimes \kappa \otimes \tau \otimes \kappa $ & 4 &$D\operatorname{diag}(1,\omega,\omega^2)^{m+1}  \begin{pmatrix} 1 & 1 & 1 \\  1 & \omega^{m} & \omega^{2m} \\ 1 & \omega^{2m} & \omega^{m} \end{pmatrix}D^{-1}$, where $m \in \{1,2\}$ &- \\ \hline
   $C_4$& $C_1$&  $\identity \otimes S \otimes S \otimes \identity \otimes S^2 \otimes S^2$  & 6 & $D \operatorname{diag}(1,\omega,\omega^2)^l \begin{pmatrix} 1 & 1& \omega^{m} \\  1 & \omega^m & 1 \\ \omega^{m} & 1 & 1 \end{pmatrix}D^{-1}$, where $l \in \{0,1,2\}$, $m \in \{1,2\}$ &-\\
  $T_7^\tau$& $C_4$, $C_1$, $T_2^\tau$ , $T_0^\tau$ &  $\kappa \otimes \tau \otimes \epsilon \otimes \epsilon \otimes \tau \otimes \kappa$  & 6 &  $D \operatorname{diag}(1,\omega,\omega^2)^{m+1} \begin{pmatrix} 1 & 1& \omega^{m} \\  1 & \omega^m & 1 \\ \omega^{m} & 1 & 1 \end{pmatrix}D^{-1}$, where  $m \in \{1,2\}$&-\\
  $T_7^\epsilon$& $C_4$, $C_1$, $T_2^\epsilon$, $T_0^\epsilon$ & $\tau \otimes \epsilon \otimes \kappa \otimes \kappa \otimes \epsilon \otimes \tau$  & 6 & $D \operatorname{diag}(1,\omega,\omega^2)^{2m+1} \begin{pmatrix} 1 & 1& \omega^{m} \\  1 & \omega^m & 1 \\ \omega^{m} & 1 & 1 \end{pmatrix}D^{-1}$, where  $m \in \{1,2\}$&-\\
  $T_7^\kappa$& $C_4$, $C_1$, $T_2^\kappa$, $T_0^\kappa$ & $\epsilon \otimes \kappa \otimes \tau \otimes \tau \otimes \kappa \otimes \epsilon$  & 6  & $D \operatorname{diag}(1,\omega,\omega^2) \begin{pmatrix} 1 & 1& \omega^{m} \\  1 & \omega^m & 1 \\ \omega^{m} & 1 & 1 \end{pmatrix}D^{-1}$, where  $m \in \{1,2\}$&-\\\hline
  
  \caption{Characterization of all $b$ leading to non-trivial cycles for normal MPS generated by the fiducial states $\identity \otimes b \otimes \identity \ket{ M(\omega) }$. The first four columns coincide with Table \ref{tab:M10M11M1infcycles}. 
  The fifth column shows a parametrization (possible alternative parametrizations) of the set of all normal $b$'s s.t. $G_b$ exhibits the respective cycles. Parameter choices such that $b$ becomes a generalized permutation matrix must be excluded in order to have normality.
  The last column indicates the effective (complex) dimension of the respective sets ("-" indicates discrete sets). The matrix $D$ appearing in the fifth column is always an arbitrary diagonal matrix. As explained in Section \ref{sec:M10M11M1inf}, $D$ is actually irrelevant for the generated MPS. See Figure \ref{fig:sets} in the main text for set-theoretic relations between all the displayed parametrized sets.}
\label{tab:M10M11M1infcycleslong}  
\end{longtable*}

\section{Proof of Observation \ref{obs:M10M11M1infnormality} concerning the normality of MPS generated by fiducial states $\identity \otimes b \otimes \identity \ket{ M(\omega) }$}
\label{app:M10M11M1infnormality}

In this appendix we prove Observation \ref{obs:M10M11M1infnormality} from the main text, which concerns the normality of fiducial states $\identity \otimes b \otimes \identity \ket{ M(\omega) }$. Before we do so, let us discuss a few general remarks on the matter.

Recall that a fiducial state $\identity \otimes b \otimes \identity \ket{A}$ is normal if and only if for some fixed $L$, it is possible to build products of $A_b^{0}$ and $A_b^{1}$ comprised of $L$ factors, which form a basis for all $D \times D$ matrices (see Section \ref{sec:prelim}). It suffices to consider $L$ up to a certain upper bound depending on the bond dimension $D$ \cite{Sanz2010}. For any concrete choice of $b$ it is simple to decide normality. To this end, one may proceed as follows. First, one calculates all possible products of $A_b^{0}$ and $A_b^{1}$ of length $L$ leading to $2^L$ $D\times D$ matrices. Then, one rearranges the matrix entries in order to form $D^2$-dimension vectors, which are then used as columns of a $D^2\times 2^L$ matrix $M$. Obviously, the tensor is normal if $\operatorname{rk} M = D^2$ and it is non-normal if  $\operatorname{rk} M < D^2$, hence, normality may be decided by computing the rank\footnote{In order to circumnavigate numeric imprecisions, one may e.g. compute the singular values of $M$ and make sure that $D^2$ of them are sufficiently different from 0.} of $M$.  
Deciding normality of a continuous family of $b$'s is more involved, though. Which products of $A_b^{0}$ and $A_b^{1}$ one needs to consider in order to obtain a basis will typically depend on the parameter choices in $b$. It is still comparably simple to show that a family of $b$'s is normal for generic parameter choices. To this end, one may construct the matrix $M$ as above and then consider the determinant of a submatrix of $M$ obtained by selecting $D^2$ columns of $M$, which will be some polynomial in the entries of $b$. In case the obtained polynomial is not identically zero, this shows that the MPS is normal for generic parameter choices. However, as mentioned above, typically there will be certain particular parameter choices for which the polynomial vanishes. In order to prove that the whole family leads to normal MPS, one thus often needs to consider additional submatrices of $M$ (and their determinants) and show that the obtained polynomials don't have a common root (additionally assuming $\det b \neq 0$). 

We now make a small observation concerning the fact that an MPS is normal if and only if any SLOCC equivalent MPS is, which is an immediate consequence of the concepts introduced in \cite{SaMo19}.

\begin{observation}
\label{obs:SLOCCnormality}
A (not necessarily TI) fiducial state $g_k \otimes \identity \otimes \identity \ket{A}$ is normal with injectivity length $L$ for any $g_1 \ldots g_N$ if and only if $\ket{A}$ is.
\end{observation}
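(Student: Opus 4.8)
## Proof proposal for Observation \ref{obs:SLOCCnormality}

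The plan is to exploit the fact that applying an invertible operator \emph{on the physical (qubit) system only} multiplies the MPS tensors on the left by a fixed invertible matrix, permuting them linearly, which cannot change the span of any fixed-length product of tensors. Concretely, write $A_k := A$ for the translationally invariant tensor and let $g_k = \begin{pmatrix} \alpha_k & \beta_k \\ \gamma_k & \delta_k \end{pmatrix} \in GL(2,\C)$ act on site $k$. The new tensor at site $k$ is $\tilde A_k^{0} = \alpha_k A^{0} + \gamma_k A^{1}$ and $\tilde A_k^{1} = \beta_k A^{0} + \delta_k A^{1}$, i.e. $(\tilde A_k^{0},\tilde A_k^{1}) = (A^{0},A^{1})\, g_k^{T}$ viewed as a linear recombination. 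The key point is that for any window $[i,i+L-1]$ the linear span
$$
\mathrm{span}\bigl\{ \tilde A_i^{j_1}\tilde A_{i+1}^{j_2}\cdots \tilde A_{i+L-1}^{j_L} : (j_1,\dots,j_L)\in\{0,1\}^{L} \bigr\}
$$
equals the span of the products $A^{k_1}A^{k_2}\cdots A^{k_L}$, because each $\tilde A_{m}^{j}$ is an invertible linear combination of $A^{0}$ and $A^{1}$, so the tuple of $2^{L}$ product matrices on the left is obtained from the tuple on the right by the invertible linear map $g_i^{T}\otimes\cdots\otimes g_{i+L-1}^{T}$ acting on the multi-index $(j_1,\dots,j_L)$. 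Hence the map $X \mapsto \sum_{\vec j}\tr\!\bigl(X\,\tilde A_i^{j_1}\cdots \tilde A_{i+L-1}^{j_L}\bigr)\ket{j_1\dots j_L}$ is injective if and only if the corresponding map for the original tensors is injective, since injectivity of this map is equivalent to the products spanning the full matrix space $\C^{D\times D}$, a property of the span alone.

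From this the proof is immediate in both directions. If $\ket{A}$ is normal with injectivity length $L$, then the contraction of any $L$ consecutive original tensors spans $\C^{D\times D}$, so by the span equality above the contraction of any $L$ consecutive $\tilde A$ tensors also spans $\C^{D\times D}$, i.e. $g_k\otimes\identity\otimes\identity\ket{A}$ (as a site-dependent family of tensors) is normal with the \emph{same} injectivity length $L$. Conversely, if $\bigotimes_k g_k\otimes\identity\otimes\identity\ket{A}$ is normal with injectivity length $L$, apply the inverse operators $g_k^{-1}$ on the physical sites; by the same argument the original tensors $A$ then contract injectively over windows of length $L$, so $\ket{A}$ is normal with injectivity length $L$. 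This also shows the injectivity length is preserved, not merely the property of being normal.

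The one point requiring a line of care — and the only place where anything could go wrong — is the claim that injectivity of the window map is \emph{exactly} the statement that the $2^{L}$ product matrices span $\C^{D\times D}$, rather than merely span a space large enough; this is standard (the map $X\mapsto \sum_{\vec j}\tr(X M_{\vec j})\ket{\vec j}$ has trivial kernel iff the $M_{\vec j}$ span the whole matrix space under the trace pairing, which is nondegenerate), and it is exactly the formulation used in the Definition of normal MPS in Section \ref{sec:revMPS}. Everything else is bookkeeping: one should state once that replacing each $A$ by $bA b^{-1}$ or by $x^{-1}Ax$ on the virtual legs is irrelevant here because we only act on the physical leg, and that the argument does not use translational invariance at all (hence the ``not necessarily TI'' in the statement). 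I expect no genuine obstacle; the main thing to get right is phrasing the linear-algebra equivalence cleanly so that the ``same injectivity length $L$'' conclusion is visibly justified.
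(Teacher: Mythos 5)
Your proof is correct and matches the argument the paper implicitly relies on: the paper states this observation without an explicit proof (calling it an immediate consequence of the concepts of Ref.~\cite{SaMo19}), and the intended reasoning is exactly your point that acting only on the physical leg transforms the tuple of length-$L$ products $A^{k_1}\cdots A^{k_L}$ by the invertible map $g_i\otimes\cdots\otimes g_{i+L-1}$, so their span --- and hence injectivity of every length-$L$ window, via the nondegenerate trace pairing --- is unchanged, with the converse following by applying the inverses $g_k^{-1}$.
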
 

Let us also recall Observation \ref{obs:permutationnotnormal}, which we have proven in the main text. According to Observation \ref{obs:permutationnotnormal}, fiducial states $\identity \otimes b \otimes \identity \ket{ M(\omega) }$ are not normal if $b$ is a generalized permutation matrix, or such that in any row or column $i$, $b_{ii}$ is the only non-vanishing entry. Let us now restate and prove Observation \ref{obs:M10M11M1infnormality}. 

\noindent {{\bf Observation \ref{obs:M10M11M1infnormality}}{\bf.}}\textit{
All $b$ s.t. there exist non-trivial cycles in  $G_b$ (see Table \ref{tab:M10M11M1infcycleslong}) are normal with injectivty length $L=4$, $5$, or $6$, unless $b$ fulfills the prerequisites of Observation \ref{obs:permutationnotnormal}. 
}
\begin{proof}
We use the parametrizations for $b$ as given in Table \ref{tab:M10M11M1infcycleslong}.
We first argue that in order to prove the observation, it suffices to prove the statement for only some of the families of $b$'s in Table \ref{tab:M10M11M1infcycleslong}. We will then argue that it will be convenient to change the considered representative from $\identity \otimes b \otimes \identity \ket{ M(\omega) }$ to $\identity \otimes b \otimes \identity \ket{ M^1(0) \oplus M^1(1) \oplus M^1(\infty)}$. For each of the relevant families, we then proceed as discussed above.

In order to prove the observation it suffices to consider $b(C_0)$, $b(C_1)$, $b(C_3)$, $b(T_0^{\tau, \epsilon, \kappa})$, and $b(T_3^{\tau, \epsilon, \kappa})$, as all other relevant families are subfamilies of the mentioned ones (see Figure \ref{fig:groupstructure}). Moreover, it suffices to prove the statement for $b(T_0^{\tau})$ and $b(T_3^{\tau})$ instead of all the families $b(T_0^{\tau, \epsilon, \kappa})$ and $b(T_3^{\tau, \epsilon, \kappa})$. The reason for this is that the families are SLOCC equivalent to the families $b(T_0^{\tau})$ and $b(T_3^{\tau})$, respectively, via a global operation. Due to Observation \ref{obs:SLOCCnormality}, normality is retained under an SLOCC operation.
Thus, we need to show the statement of the observation for the families $b(C_0)$,$b(C_1)$, $b(C_3)$, $b(T_0^{\tau})$, and $b(T_3^{\tau})$ only.
In order to do so, we proceed as outlined in the discussion above Observation \ref{obs:SLOCCnormality}. Recall that the matrix $D$ within the parametrizations for $b$ within Table \ref{tab:M10M11M1infcycleslong} does not alter the generated MPS, we will hence disregard $D$ in the following. Note that the representative for the fiducial state $\ket{ M(\omega) }$ is SLOCC equivalent to $\ket{ M^1(0) \oplus M^1(1) \oplus M^1(\infty)}$. While using the former representative was more convenient in the main text, as the physical operators constitute a more natural representation of the symmetric group, the latter representative will be more convenient here, as the tensor $A$ is more sparsely populated in that case. Note that $\ket{ M(\omega) }\propto \begin{pmatrix}1 & \omega^2\\ \omega & \omega^2 \end{pmatrix} \otimes \operatorname{diag}(1,-\omega^2,\omega) \otimes \identity  \ket{ M^1(0) \oplus M^1(1) \oplus M^1(\infty)}$. Thus, $\identity \otimes b \otimes \identity \ket{ M(\omega) }$ is normal if and only if $\identity \otimes b' \otimes \identity \ket{ M^1(0) \oplus M^1(1) \oplus M^1(\infty)}$ is for $b' = \operatorname{diag}(1,-\omega^2,\omega) b$. Moreover, note that $b$ is of the form given in Observation \ref{obs:permutationnotnormal} if and only if $b'$ is. Thus, it suffices to prove the statement of the observation considering the tensor
\begin{align}
A_{b'}^0 = \begin{pmatrix} 0& b'_{01} & b'_{02} \\  0 &b'_{11} &  b'_{12}\\ 0 & b'_{21}& b'_{22}\end{pmatrix}, \ 
A_{b'}^1 = \begin{pmatrix} b'_{00} & b'_{01} & 0\\  b'_{10} &  b'_{11} & 0\\b'_{20}& b'_{21} & 0\end{pmatrix},
\end{align}
where $b'$ is given by the parametrizations of the families $b(C_0)$, $b(C_1)$, $b(C_3)$, $b(T_0^{\tau})$, and $b(T_3^{\tau})$ as in Table \ref{tab:M10M11M1infcycleslong}, multiplied with $\operatorname{diag}(1,-\omega^2,\omega)$ from the left. Let us now show the statement for the individual families. In the following we simply write $b$ instead of $b'$.

Let us start considering the family $b(C_3)$. As this is a discrete family, it is straightforward to show that the corresponding fiducial states are normal with $L=4$. To this end, it suffices to construct the matrix $M$ as described above Observation \ref{obs:SLOCCnormality} and verify that it has rank $9$.

Let us now come to the continuous families. For each of these families, we will provide several alternative choices of nine products of $A_b^0$ and $A_b^1$ comprised of $L$ factors. Then, for each parameter choice within the considered family of $b$'s, at least one of the alternatives will provide a basis for all $3\times 3$ matrices (for generic parameter choices, all of the alternatives do). Before we provide the concrete choices, a few remarks are in order. First, note that the provided choices are by far not unique and, moreover, there might exist choices such that less than the provided number of alternatives suffice to show normality. Second, note that if nine operators $\{O_i\}_i$ form a basis of $3\times3$ matrices then so do the operators $\{X O_i\}_i$ for any invertible $X$. Here it is advantageous to consider products of $A_b^0$ and $A_b^1$ multiplied by $b^{-1}$ from the left instead of the mere products, such as, e.g., $b^{-1}A_b^0 A_b^1 A_b^0A_b^0$ or $b^{-1}A_b^0 A_b^0 A_b^1A_b^1$ instead of $A_b^0 A_b^1 A_b^0A_b^0$ or $A_b^0 A_b^0 A_b^1A_b^1$.
The reason for this is that due to the form of $A^0 = \ket{1}\bra{1}+\ket{2}\bra{2}$ and $A^1 = \ket{0}\bra{0}+\ket{1}\bra{1}$, we obtain matrices with only four non-vanishing entries and it becomes much simpler to identify independent ones. 

Let us start with considering the family $b(T_3^\tau)$. Here, it actually suffices to consider a single choice of nine products of the operators $A_b^0$, $A_b^1$ of length $L=4$, which turn out to be linearly independent for all $b$ within the considered family. Considering
$\{A_b^0A_b^0A_b^0A_b^0,\allowbreak A_b^0A_b^0A_b^0 A_b^1,\allowbreak A_b^0A_b^0 A_b^1 A_b^1,\allowbreak A_b^0 A_b^1A_b^0 A_b^1,\allowbreak A_b^0 A_b^1 A_b^1A_b^0,\allowbreak A_b^0 A_b^1 A_b^1 A_b^1,\allowbreak  A_b^1A_b^0 A_b^1A_b^0,\allowbreak  A_b^1A_b^0 A_b^1 A_b^1,\allowbreak  A_b^1 A_b^1 A_b^1 A_b^1\}$, it may be easily verified that the determinant of the corresponding $9 \times 9$ matrix reads $-2048 b_{01}^{16}b_{11}^{11}$. In order to have an invertible $b$ we have $b_{01}\neq 0$ and $b_{11}^{11}\neq 0$. Hence, the given operator products are linearly independent (and thus form a basis for all $3\times3$ matrices) for all $b$ within the considered family. 

Let us now consider the family $b(C_0)$. We consider the following alternative sets of products with $L=4$,
$\{A_b^0A_b^0A_b^0A_b^0,\allowbreak A_b^0A_b^0A_b^0 A_b^1,\allowbreak A_b^0 A_b^1A_b^0A_b^0,\allowbreak  A_b^1 A_b^1A_b^0 A_b^1,\allowbreak  A_b^1 A_b^1 A_b^1 A_b^1,\allowbreak  A_b^1A_b^0A_b^0A_b^0,\allowbreak  A_b^1A_b^0 A_b^1A_b^0,\allowbreak  A_b^1 A_b^1A_b^0A_b^0,\allowbreak  A_b^1 A_b^1 A_b^1A_b^0\}$,
$\{A_b^0A_b^0 A_b^1A_b^0,\allowbreak A_b^0A_b^0 A_b^1 A_b^1,\allowbreak A_b^0 A_b^1 A_b^1A_b^0,\allowbreak  A_b^1A_b^0A_b^0 A_b^1,\allowbreak  A_b^1A_b^0 A_b^1 A_b^1,\allowbreak  A_b^1A_b^0A_b^0A_b^0,\allowbreak  A_b^1A_b^0 A_b^1A_b^0,\allowbreak  A_b^1 A_b^1A_b^0A_b^0,\allowbreak  A_b^1 A_b^1 A_b^1A_b^0\}$,
$\{A_b^0A_b^0A_b^0A_b^0,\allowbreak A_b^0A_b^0A_b^0 A_b^1,\allowbreak A_b^0A_b^0 A_b^1A_b^0,\allowbreak  A_b^1A_b^0A_b^0 A_b^1,\allowbreak  A_b^1 A_b^1 A_b^1 A_b^1,\allowbreak  A_b^1A_b^0A_b^0A_b^0,\allowbreak  A_b^1A_b^0 A_b^1A_b^0,\allowbreak  A_b^1 A_b^1A_b^0A_b^0,\allowbreak  A_b^1 A_b^1 A_b^1A_b^0\}$,
$\{A_b^0A_b^0A_b^0A_b^0,\allowbreak A_b^0A_b^0A_b^0 A_b^1,\allowbreak A_b^0A_b^0 A_b^1A_b^0,\allowbreak A_b^0 A_b^1A_b^0 A_b^1,\allowbreak A_b^0 A_b^1 A_b^1A_b^0,\allowbreak  A_b^1A_b^0 A_b^1A_b^0,\allowbreak  A_b^1A_b^0A_b^0A_b^0,\allowbreak  A_b^1A_b^0A_b^0 A_b^1,\allowbreak  A_b^1 A_b^1A_b^0 A_b^1\}$, as well as two with $L=6$,
$\{A_b^0A_b^0A_b^0A_b^0A_b^0A_b^0,\allowbreak A_b^0A_b^0A_b^0A_b^0A_b^0 A_b^1,\allowbreak A_b^0A_b^0A_b^0A_b^0 A_b^1A_b^0,\allowbreak A_b^0A_b^0A_b^0 A_b^1A_b^0A_b^0,\allowbreak A_b^0A_b^0 A_b^1A_b^0A_b^0 A_b^1,\allowbreak A_b^0 A_b^1A_b^0A_b^0 A_b^1A_b^0,\allowbreak  A_b^1A_b^0A_b^0A_b^0A_b^0A_b^0,\allowbreak  A_b^1A_b^0A_b^0A_b^0A_b^0 A_b^1,\allowbreak  A_b^1 A_b^1 A_b^1 A_b^1 A_b^1 A_b^1\}$,
$\{A_b^0A_b^0 A_b^1A_b^0A_b^0 A_b^1,\allowbreak A_b^0A_b^0 A_b^1A_b^0 A_b^1A_b^0,\allowbreak A_b^0A_b^0 A_b^1A_b^0 A_b^1 A_b^1,\allowbreak A_b^0 A_b^1A_b^0A_b^0 A_b^1A_b^0,\allowbreak A_b^0 A_b^1A_b^0A_b^0 A_b^1 A_b^1,\allowbreak A_b^0 A_b^1A_b^0 A_b^1A_b^0 A_b^1,\allowbreak  A_b^1 A_b^1A_b^0 A_b^1A_b^0A_b^0,\allowbreak  A_b^1 A_b^1A_b^0 A_b^1A_b^0 A_b^1,\allowbreak  A_b^1 A_b^1A_b^0 A_b^1 A_b^1A_b^0\}$, which certify normality as outlined above. As the corresponding determinants are lengthy expressions, we abstain from displaying them here. Note that although $b$ within the considered family are generically normal with $L=4$, there indeed exist examples that are normal with injectivity length $L=6$ (but not for a smaller $L$).

Let us now consider the family $b(C_1)$. We consider the following two alternative sets of products with $L=4$, $\{A_b^0A_b^0A_b^0A_b^0,\allowbreak A_b^0A_b^0 A_b^1A_b^0,\allowbreak A_b^0 A_b^1A_b^0A_b^0,\allowbreak A_b^0 A_b^1A_b^0 A_b^1,\allowbreak A_b^0 A_b^1 A_b^1A_b^0,\allowbreak  A_b^1A_b^0A_b^0A_b^0,\allowbreak  A_b^1A_b^0A_b^0 A_b^1,\allowbreak  A_b^1A_b^0 A_b^1A_b^0,\allowbreak  A_b^1 A_b^1 A_b^1 A_b^1\}$,
$\{A_b^0A_b^0A_b^0A_b^0,\allowbreak A_b^0A_b^0A_b^0 A_b^1,\allowbreak A_b^0A_b^0 A_b^1A_b^0,\allowbreak A_b^0 A_b^1 A_b^1A_b^0,\allowbreak A_b^0 A_b^1A_b^0 A_b^1,\allowbreak  A_b^1A_b^0A_b^0A_b^0,\allowbreak  A_b^1A_b^0A_b^0 A_b^1,\allowbreak  A_b^1 A_b^1A_b^0 A_b^1,\allowbreak  A_b^1 A_b^1 A_b^1 A_b^1\}$, as well as the following three alternative sets of products with $L=6$,
$\{A_b^0A_b^0A_b^0A_b^0A_b^0A_b^0,\allowbreak A_b^0A_b^0A_b^0A_b^0A_b^0 A_b^1,\allowbreak A_b^0A_b^0A_b^0A_b^0 A_b^1 A_b^1,\allowbreak A_b^0A_b^0A_b^0 A_b^1 A_b^1 A_b^1,\allowbreak A_b^0A_b^0 A_b^1 A_b^1A_b^0A_b^0,\allowbreak A_b^0 A_b^1 A_b^1A_b^0A_b^0A_b^0,\allowbreak  A_b^1A_b^0A_b^0A_b^0A_b^0A_b^0,\allowbreak  A_b^1A_b^0A_b^0A_b^0A_b^0 A_b^1,\allowbreak  A_b^1 A_b^1 A_b^1 A_b^1 A_b^1 A_b^1\}$,
$\{A_b^0A_b^0A_b^0A_b^0A_b^0A_b^0,\allowbreak A_b^0A_b^0A_b^0A_b^0A_b^0 A_b^1,\allowbreak A_b^0A_b^0A_b^0A_b^0 A_b^1 A_b^1,\allowbreak A_b^0A_b^0 A_b^1 A_b^1A_b^0A_b^0,\allowbreak A_b^0 A_b^1 A_b^1A_b^0A_b^0A_b^0,\allowbreak  A_b^1A_b^0A_b^0A_b^0A_b^0A_b^0,\allowbreak  A_b^1A_b^0A_b^0A_b^0A_b^0 A_b^1,\allowbreak  A_b^1 A_b^1A_b^0 A_b^1 A_b^1A_b^0,\allowbreak  A_b^1 A_b^1A_b^0 A_b^1A_b^0 A_b^1\}$,
$\{A_b^0A_b^0 A_b^1A_b^0A_b^0 A_b^1,\allowbreak A_b^0A_b^0 A_b^1A_b^0 A_b^1A_b^0,\allowbreak A_b^0A_b^0 A_b^1A_b^0 A_b^1 A_b^1,\allowbreak A_b^0 A_b^1A_b^0A_b^0 A_b^1A_b^0,\allowbreak A_b^0 A_b^1A_b^0A_b^0 A_b^1 A_b^1,\allowbreak A_b^0 A_b^1A_b^0 A_b^1A_b^0 A_b^1,\allowbreak  A_b^1 A_b^1 A_b^1 A_b^1 A_b^1 A_b^1,\allowbreak  A_b^1 A_b^1 A_b^1 A_b^1 A_b^1A_b^0,\allowbreak  A_b^1 A_b^1 A_b^1 A_b^1A_b^0A_b^0\}$, which certify normality as outlined above.

Let us now consider the family $b(C_1)$. We consider the following two alternative sets of products with $L=4$,
$\{A_b^0A_b^0A_b^0A_b^0,\allowbreak A_b^0A_b^0A_b^0 A_b^1,\allowbreak A_b^0A_b^0 A_b^1A_b^0,\allowbreak A_b^0A_b^0 A_b^1 A_b^1,\allowbreak A_b^0 A_b^1A_b^0A_b^0,\allowbreak A_b^0 A_b^1A_b^0 A_b^1,\allowbreak  A_b^1A_b^0A_b^0A_b^0,\allowbreak  A_b^1A_b^0 A_b^1A_b^0,\allowbreak  A_b^1A_b^0 A_b^1 A_b^1\}$,
$\{A_b^0A_b^0A_b^0A_b^0,\allowbreak A_b^0A_b^0A_b^0 A_b^1,\allowbreak A_b^0A_b^0 A_b^1 A_b^1,\allowbreak A_b^0 A_b^1A_b^0 A_b^1,\allowbreak A_b^0 A_b^1 A_b^1A_b^0,\allowbreak  A_b^1A_b^0A_b^0A_b^0,\allowbreak  A_b^1 A_b^1A_b^0A_b^0,\allowbreak  A_b^1 A_b^1A_b^0 A_b^1,\allowbreak  A_b^1A_b^0 A_b^1A_b^0\}$,
$\{A_b^0A_b^0A_b^0A_b^0,\allowbreak A_b^0A_b^0A_b^0 A_b^1,\allowbreak A_b^0A_b^0 A_b^1A_b^0,\allowbreak A_b^0 A_b^1A_b^0A_b^0,\allowbreak A_b^0 A_b^1 A_b^1A_b^0,\allowbreak  A_b^1A_b^0A_b^0A_b^0,\allowbreak  A_b^1A_b^0A_b^0 A_b^1,\allowbreak  A_b^1 A_b^1A_b^0 A_b^1,\allowbreak  A_b^1 A_b^1 A_b^1 A_b^1\}$,
 as well as the following three alternative sets of products with $L=5$,
$\{A_b^0 A_b^1 A_b^1A_b^0A_b^0,\allowbreak A_b^0 A_b^1 A_b^1 A_b^1A_b^0,\allowbreak A_b^0 A_b^1 A_b^1 A_b^1 A_b^1,\allowbreak  A_b^1A_b^0A_b^0A_b^0A_b^0,\allowbreak  A_b^1A_b^0A_b^0A_b^0 A_b^1,\allowbreak  A_b^1A_b^0A_b^0 A_b^1 A_b^1,\allowbreak A_b^0A_b^0A_b^0A_b^0A_b^0,\allowbreak A_b^0A_b^0A_b^0 A_b^1 A_b^1,\allowbreak A_b^0A_b^0 A_b^1 A_b^1A_b^0\}$,
$\{A_b^0A_b^0 A_b^1A_b^0A_b^0,\allowbreak A_b^0A_b^0 A_b^1A_b^0 A_b^1,\allowbreak A_b^0A_b^0 A_b^1 A_b^1A_b^0,\allowbreak A_b^0 A_b^1A_b^0A_b^0 A_b^1,\allowbreak A_b^0 A_b^1A_b^0 A_b^1A_b^0,\allowbreak A_b^0 A_b^1 A_b^1A_b^0A_b^0,\allowbreak  A_b^1A_b^0A_b^0 A_b^1A_b^0,\allowbreak  A_b^1A_b^0A_b^0 A_b^1 A_b^1,\allowbreak  A_b^1A_b^0 A_b^1A_b^0A_b^0\}$,
 as well as the following three alternative sets of products with $L=6$,
$\{A_b^0A_b^0A_b^0A_b^0A_b^0 A_b^1,\allowbreak A_b^0A_b^0 A_b^1A_b^0 A_b^1A_b^0,\allowbreak A_b^0A_b^0 A_b^1A_b^0 A_b^1 A_b^1,\allowbreak A_b^0 A_b^1A_b^0A_b^0 A_b^1A_b^0,\allowbreak A_b^0 A_b^1A_b^0A_b^0 A_b^1 A_b^1,\allowbreak A_b^0 A_b^1A_b^0 A_b^1A_b^0 A_b^1,\allowbreak  A_b^1 A_b^1A_b^0 A_b^1A_b^0A_b^0,\allowbreak  A_b^1 A_b^1A_b^0 A_b^1A_b^0 A_b^1,\allowbreak  A_b^1 A_b^1A_b^0 A_b^1 A_b^1A_b^0\}$, and 
$\{A_b^0A_b^0A_b^0A_b^0A_b^0A_b^0,\allowbreak A_b^0A_b^0A_b^0A_b^0A_b^0 A_b^1,\allowbreak A_b^0A_b^0A_b^0A_b^0 A_b^1 A_b^1,\allowbreak A_b^0A_b^0A_b^0 A_b^1 A_b^1A_b^0,\allowbreak A_b^0A_b^0 A_b^1 A_b^1A_b^0A_b^0,\allowbreak A_b^0 A_b^1 A_b^1A_b^0A_b^0A_b^0,\allowbreak  A_b^1A_b^0A_b^0A_b^0A_b^0A_b^0,\allowbreak  A_b^1A_b^0A_b^0A_b^0A_b^0 A_b^1,\allowbreak  A_b^1A_b^0A_b^0A_b^0 A_b^1 A_b^1\}$, which certify normality as outlined above. Within the considered family there indeed exist examples that are normal with injectivity lengths $L=4$, $L=5$, and $L=6$, respectively.

\end{proof}

\section{Proof of Observation \ref{obs:L1L1Tnormality} concerning the normality of MPS generated by fiducial states $\identity \otimes b \otimes \identity \ket{LLT}$}
\label{app:L1L1Tnormality}

In this appendix we restate and prove Observation \ref{obs:L1L1Tnormality} from the main text, which characterizes the normality of MPS generated by fiducial states $\identity \otimes b \otimes \identity \ket{LLT}$.

\noindent {{\bf Observation \ref{obs:L1L1Tnormality}}{\bf.}}\textit{
$N$-qubit MPS associated to $\identity\otimes b \otimes \identity \ket{LLT}$ are normal if and only if $b_{20}\neq 0$ and $\vec{v} \neq \vec{0}$.
}
\begin{proof}
We will first show that the tensor cannot be normal if $b_{20} = 0$.
We will then parametrize $b$ in terms of $T$, $\vec{v}$, $b_{00}$ and $b_{10}$ as in Eq. (\ref{eq:L1L1Tparam}). Making use of Observation \ref{obs:SLOCCnormality} we will argue that it suffices to consider $b$ of a restricted form. We will then show that the tensor cannot be normal if $\vec{v} = \vec{0}$. Finally, we will show that $b$ is normal with injectivity length $L = 4$ if $\vec{v} \neq \vec{0}$.

Let us now show that $b_{20} = 0$ leads to non-normal MPS. For $b_{20} = 0$, we obtain
\begin{align}
\{A_b^0, A_b^1\} = \{ \begin{pmatrix}0 & b_{00} & b_{02} \\0 &b_{10} & b_{12}  \\ 0 & 0 & b_{22} \end{pmatrix},  \begin{pmatrix} b_{00} & 0 & b_{01} \\b_{10} & 0 & b_{11}  \\ 0 & 0 & b_{21} \end{pmatrix}\} 
\end{align}
However, products of matrices of the form $ \begin{pmatrix} \cdot & \cdot & \cdot \\ \cdot & \cdot & \cdot  \\ 0 & 0 & \cdot \end{pmatrix}$ remain of the same form (here, $\cdot$ denots entries with arbitrary values). Hence, $A_b^0$ and $A_b^1$ cannot generate all matrices.

Let us now consider $b$ such that $b_{20} \neq 0$. We choose wlog  $b_{20}=1$. As mentioned above, we now parametrize $b$ in terms of $T$, $\vec{v}$, $b_{00}$, and $b_{10}$. We now make use of the fact that any obtained MPS is SLOCC equivalent to an MPS associated to some $b$ with $T_{10} = 0$, $b_{00} = 0$, and $b_{10} = 0$ (see Section \ref{sec:L1L1T_SLOCC}). Thus, it suffices to consider the normality of MPS for such $b$ in order characterize the normality for all remaining $b$ due to Observation \ref{obs:SLOCCnormality}.

Let us now consider the case $\vec{v} = \vec{0}$. Note that in this case both $A_b^0$ and $A_b^1$ are matrices of the form $ \begin{pmatrix} 0 & 0 & \cdot \\ 0 & 0 & \cdot  \\ \cdot & \cdot & 0 \end{pmatrix}$. It may be easily verified that any product of matrices of such a form comprised of an odd number of factors is again of this form. Moreover, any product comprised of an even number of factors is of the form  $\begin{pmatrix} \cdot & \cdot & 0 \\ \cdot & \cdot & 0  \\ 0 & 0 & \cdot \end{pmatrix}$. Hence, such $b$ cannot lead to normal tensors/MPS.

Let us now consider the case $\vec{v} \neq \vec{0}$. Here we distinguish the two sub-cases  $v_1 \neq 0$ and $v_1 = 0$. In both cases we consider nine products of $A_b^0$ and $A_b^1$ comprised of four factors and show that the products are linearly independent.
Let us first consider the case $v_1 \neq 0$. In this case, we find that the nine products 
\begin{align}
  & \{A_b^0 A_b^0 A_b^1 A_b^0, A_b^0 A_b^1 A_b^1 A_b^0, A_b^0 A_b^1 A_b^1 A_b^1, \nonumber\\  & \qquad A_b^1 A_b^0 A_b^0 A_b^1,   A_b^1 A_b^0 A_b^1 A_b^0,
   A_b^1 A_b^0 A_b^1 A_b^1, \nonumber\\  & \qquad  A_b^1 A_b^1 A_b^0 A_b^1, A_b^1 A_b^1 A_b^1 A_b^0, A_b^1 A_b^1 A_b^1  A_b^1 \}
\end{align}
are linearly independent and thus form a basis for all $3 \times 3$ matrices. This may be easily verified by e.g. considering the determinant of a $9 \times 9$ matrix $M$ whose columns are constructed by rewriting the nine matrices in Eq. (\theequation) as 9-dimensional vectors. We obtain $\det M = -T_{00}^7 T_{11}^6 v_1^{10}$, which is non-vanishing for $v_1 \neq 0$ (note that $T_{00},T_{11}\neq 0$ in order to have $\det b \neq 0$).

In case $v_1 = 0$ we instead consider the products
\begin{align}
  & \{A_b^0 A_b^0 A_b^0 A_b^1, A_b^0 A_b^1 A_b^0 A_b^0, A_b^0 A_b^1 A_b^0 A_b^1, \nonumber\\
   & \qquad A_b^0 A_b^1 A_b^1 A_b^0, A_b^1 A_b^0 A_b^0 A_b^0,  A_b^1 A_b^0 A_b^0 A_b^1, \nonumber\\
   & \qquad  A_b^1 A_b^0 A_b^1 A_b^0, A_b^1 A_b^1 A_b^0 A_b^0, A_b^1 A_b^1 A_b^0 A_b^1\}.
\end{align}
Constructing $M$ as above and using $v_1 = 0$ we obtain $\det M = -T_{00}^{11} T_{11}^5 v_0^4$. As $\vec{v} \neq \vec{0}$ and thus $v_0 \neq 0$, the matrices in Eq. (\theequation) form a basis for all $3 \times 3$ matrices. This completes the proof.
\end{proof}

\section{Derivation of the symmetries of MPS generated by \texorpdfstring{$\identity \otimes b \otimes \identity \ket{LLT}$}{I x b x I |L1 + L1T>}}
\label{app:L1L1Tsym}

In this appendix we provide details on the derivations of the symmetries of MPS generated by $\identity \otimes b \otimes \identity \ket{LLT}$. In particular, we provide the proof of Theorem \ref{thm:L1L1Tconditions} and we derive the procedure of deciding the symmetries for a given $b$ as in the flowchart presented in Figure \ref{fig:L1L1Tflowchart}.

For readability we recite Theorem \ref{thm:L1L1Tconditions} here.\\

\noindent {{\bf Theorem \ref{thm:L1L1Tconditions}}{\bf.}}\textit{
${g}_0, \ldots, {g}_{N-1}$ is an $N$-cylce in $G_b$ \footnote{Recall that we consider here $b_{20} = 1$. We have dealt with the case $b_{20} = 0$ in Observation \ref{obs:L1L1Tb20vansihes}.} if and only if there exist $B_{11}^{(k)} \in \mathbb{C}$ such that for all $k \in \{0, \ldots, N-1\}$,
\begin{align}
\label{eqapp:conjugationrule}
{g}_{k+1} &= \frac{1}{B_{11}^{(k)}B_{11}^{(k+1)}} T {g}_k T^{-1},\\
\label{eqapp:vrule}
\left[ {g}_k - {B_{11}^{(k)}} \identity \right]  \vec{v} &= \vec{0},\\
B_{11}^{(k)}B_{11}^{(k+1)} &= \begin{cases} \pm1 & \text{if } x = i \\ 1 &\text{otherwise}\end{cases}
\end{align}
}

\begin{proof}
In the main text, Eq. (\theequation) has already been proven to be a necessary condition, see Observation \ref{obs:L1L1Teigenvalues}. Let us now prove that Eq. (\ref{eqapp:conjugationrule}) is a necessary condition. To this end, let us consider again the concatenation condition as in Eq. (\ref{eq:lhsrhs}) with the normalizations discussed in the main text. Note also that, as discussed in the main text, due to $b_{20} \neq 0$, the proportionality factor within Eq. (\ref{eq:lhsrhs}) is fixed to $1/B_{11}^{(k)}$. It may be easily seen that the condition is equivalent to the following vector equation,

\begin{widetext}
\begin{align}
\label{eq:vectoreq}
\begin{pmatrix}
 -\delta_{k} & -\gamma_k & 0 & 0\\
 -\delta_{k} b_{21}& -\gamma_k b_{21}& B_{11}^{(k)} b_{00} & 0\\
 -\delta_{k} b_{22}& -\gamma_k b_{22}& 0 & B_{11}^{(k)} b_{00}\\
 -\beta_{k} & -\alpha_k & 0 & 0\\
 -\beta_{k} b_{21}& -\alpha_k b_{21}& B_{11}^{(k)} b_{10} & 0\\
 -\beta_{k} b_{22}& -\alpha_k b_{22}& 0 & B_{11}^{(k)} b_{10}\\
0 & 0 &  B_{11}^{(k)} & 0 \\
 0 &  0 & 0 & B_{11}^{(k)}  
\end{pmatrix}
\begin{pmatrix}
B_{01}^{(k)}\\
B_{02}^{(k)}\\
B_{01}^{(k+1)}\\
B_{02}^{(k+1)}\\
\end{pmatrix} =
\begin{pmatrix}
(B_{11}^{(k)}-\delta_{k}) b_{00} - \gamma_{k} b_{10} \\
B_{11}^{(k)}B_{11}^{(k+1)} (\alpha_{k+1} b_{01} -\gamma_{k+1} b_{02}) - \delta_{k} b_{01} - \gamma_{k} b_{11} \\
B_{11}^{(k)}B_{11}^{(k+1)} (-\beta_{k+1} b_{01} +\delta_{k+1} b_{02})  - \delta_{k} b_{02} - \gamma_{k} b_{12} \\
(B_{11}^{(k)}-\alpha_{k}) b_{10} - \beta_{k} b_{00}\\
B_{11}^{(k)}B_{11}^{(k+1)} (\alpha_{k+1} b_{11} -\gamma_{k+1} b_{12}) - \beta_{k} b_{11} - \alpha_{k} b_{11} \\
B_{11}^{(k)}B_{11}^{(k+1)} (-\beta_{k+1} b_{11} +\delta_{k+1} b_{12})  - \beta_{k} b_{02} - \alpha_{k} b_{12} \\
B_{11}^{(k)} \left( (\alpha_{k+1} B_{11}^{(k+1)} -1) b_{21} - B_{11}^{(k+1)} \gamma_{k+1} b_{22} \right) \\
B_{11}^{(k)} \left( (\delta_{k+1} B_{11}^{(k+1)} -1) b_{22} - B_{11}^{(k+1)} \beta_{k+1} b_{21} \right)
\end{pmatrix}.
\end{align}
Considering this form of the equation, one easily obtains the following set of necessary conditions that are independent of $B_{01}$ and $B_{02}$
\begin{align}
\gamma_k (b_{11} - b_{10}b_{21}) + \delta_k (b_{01} - b_{00}b_{21}) - B_{11}^{(k)}B_{11}^{(k+1)}  \left(\alpha_{k+1} (b_{01} - b_{00}b_{21}) + \gamma_{k+1} (b_{02} - b_{00}b_{22}) \right) &= 0 \nonumber\\
\gamma_k (b_{12} - b_{10}b_{22}) + \delta_k (b_{02} - b_{00}b_{22}) + B_{11}^{(k)}B_{11}^{(k+1)}  \left(\beta_{k+1} (b_{01} - b_{00}b_{21}) + \delta_{k+1} (b_{02} - b_{00}b_{22}) \right) &= 0 \nonumber\\
\alpha_k (b_{11} - b_{10}b_{21}) + \beta_k (b_{01} - b_{00}b_{21}) - B_{11}^{(k)}B_{11}^{(k+1)}  \left(\alpha_{k+1} (b_{11} - b_{10}b_{21}) + \gamma_{k+1} (b_{12} - b_{10}b_{22}) \right) &= 0 \nonumber\\
\alpha_k (b_{12} - b_{10}b_{22}) + \beta_k (b_{02} - b_{00}b_{22}) + B_{11}^{(k)}B_{11}^{(k+1)}  \left(\beta_{k+1} (b_{11} - b_{10}b_{21}) + \delta_{k+1} (b_{12} - b_{10}b_{22}) \right) &= 0,
\end{align} 
\end{widetext}
which are equivalent to Eq. (\ref{eqapp:conjugationrule}) for $T$ as in Eq. (\ref{eq:Tdef}), which proves that Eq. (\ref{eqapp:conjugationrule}) is a necessary condition.

Let us now also prove that Eq. (\ref{eqapp:vrule}) is a necessary condition. To this end, let us again consider Eq. (\ref{eq:vectoreq}) and assume that ${g}_0, \ldots, {g}_{N-1}$ satisfy Eq. (\ref{eqapp:conjugationrule}). It may be easily verified that then, the remaining conditions within Eq. (\ref{eq:vectoreq}) read
\begin{align}
&\begin{pmatrix}
-\delta_k & -\gamma_k \\
-\beta_k & -\alpha_k\\
1 & 0\\
0 & 1
\end{pmatrix}
\begin{pmatrix}
B_{01}^{(k)} \\ B_{02}^{(k)}
\end{pmatrix}
= \nonumber\\
& \qquad
\begin{pmatrix}
-(B_{11}^{(k)} - \delta_k) b_{00} + \gamma_k b_{10} \\
\beta_k b_{00} - (B_{11}^{(k)} - \alpha_k)b_{10} \\
-(\alpha_k B_{11}^{(k)} -1) b_{21} + B_{11}^{(k)} \gamma_k b_{22}\\
 B_{11}^{(k)} \beta_k b_{21} - (\delta_k B_{11}^{(k)} -1) b_{22} 
\end{pmatrix}
\end{align}
for all $k$. Eq. (\theequation) has a solution $(B_{01}^{(k)},B_{02}^{(k)})^T$ for all $k$ if and only if Eq. (\ref{eqapp:vrule}) is satisfied for all $k$. 

Finally, note that if all of the conditions in the theorem are satisfied, then the concatenation condition given in Eq. (\ref{eq:lhsrhs}) is satisfied. This completes the proof of the theorem.
\end{proof}

Let us briefly discuss the condition given in  Eq. (\ref{eqapp:conjugationrule}). Considering an $N$-cycle, in particular, it implies that  ${g}_k = T^N {g}_k T^{-N}$ if $\tr{{g}_k} \neq 0$ ($x \neq i$), or $N$ is even. To see this, we have used Eq. (\ref{eqapp:conjugationrule})  iteratively, moreover, we have used that $x \neq i$ implies that $B_{11}^{(q)}B_{11}^{(q+1)} = 1$ for all $q$, and that for an even $N$ we have that $\prod_{q=0}^{N-1} \left(B_{11}^{(q)}\right)^2 = 1$.
If $N$ is odd and $\tr{{g}_k} = 0$ ($x=i$) instead, we obtain ${g}_k = \pm T^N {g}_k T^{-N}$, i.e., an additional sign freedom emerges. In the former case, we obtain equivalently $[ T^N, {g}_k] = 0$. This condition is satisfied if and only if either $[T,{g}_k] = 0$, or $T \propto R \operatorname{diag}(1,e^{i \frac{2 r \pi}{N}}) R^{-1}$ for some $r \in \{0, \ldots, N-1\}$ and some matrix $R$ (as then  $T^N=\identity$). Note that $[T,{g}_k] = 0$ implies a global symmetry, ${g}_k = {g}$. Considering a minus sign instead, ${g}_0 = - T^N {g}_0 T^{-N}$, we additionally obtain solutions of the form $T = S_0 \tilde{D} H  \operatorname{diag}(1,e^{i \frac{(2 r+1)\pi}{N}}) H \tilde{D}^{-1} S_0^{-1}$ for some $r \in \{0, \ldots, N-2\}$ and some diagonal matrix $\tilde{D}$, where $H$ denotes the Hadamard matrix.

In the following discussion, let us now also take the condition in Eq. (\ref{eqapp:vrule}) into account. One possiblility to satisfy this condition is to have $b_{00} = - b_{21}$ as well as $b_{10} = - b_{22}$. Then, $\vec{v} = \vec{0}$ and Eq. (\ref{eqapp:vrule}) is obviously satisfied for all $k$. The second possibility is that all ${g}_k$ share a common eigenvector, $\vec{v}$, corresponding to the eigenvalue $B_{11}^{(k)}$, respectively. We will see that this second option limits the symmetries of the MPS severely. To this end, first note that $B_{11}^{(k)}$ must equal one of the eigenvalues of ${g}_k$, $x$ or $1/x$. Moreover, using Eq. (\ref{eqapp:vrule}) for $k+1$, inserting ${g}_{k+1}$ as in Eq. (\ref{eqapp:conjugationrule}) and using $\left( B_{11}^{(k+1)} \right)^2 = 1/\left( B_{11}^{(k)} \right)^2 $ yields another useful condition,
\begin{align}
T \left[  {g}_k  - \frac{1}{B_{11}^{(k)}} \identity \right]  T^{-1}\vec{v} =  \vec{0}.
\end{align}
In other words, for each $k$, the eigenvectors of ${g}_k$ are given by $\vec{v}$ and $T^{-1}\vec{v}$, corresponding to the eigenvalues $B_{11}^{(k)}$ and $1/B_{11}^{(k)}$, respectively. One more notable consequence is that considering once more Eq. (\ref{eqapp:conjugationrule}) one obtains that either $T \vec{v} \propto \vec{v}$, or $T \vec{v} \not\propto \vec{v}$ and $T^2 \vec{v} \propto \vec{v}$ (here we assumed ${g}_k \not\propto \identity$ to disregard trivial solutions). In the former case we must have that $1/B_{11}^{(k)}= B_{11}^{(k)} = 1/x = x = 1$. Thus, only non-diagonlizable $g_k$ are possible. In the latter case we have alternating symmetries with ${g}_{k+1} = {g}_k^{-1}$ for all $k$ (we obtain a global symmetry in case $x=i$). Having established these useful facts, we are now in the position to derive the process for determining all cycles in $G_b$ for a given $b$.

\subsection{Deriving the process depicted in Figure \ref{fig:L1L1Tflowchart}, which determines the cycles in $G_b$ for a given \texorpdfstring{$b$}{b}}
We will now derive the process of determining the cycles for a given $b$ with $b_{20}=1$ as depicted in Figure \ref{fig:L1L1Tflowchart}. The first step is to calculate $T$ and $\vec{v}$ for the given $b$. We now consider the Jordan decomposition of $T$, $T = R J R^{-1}$, where $J$ is the JNF of $T$. Depending on whether $T$ is diagonalizable, we now distinguish two cases. 

Let us first deal with the case that $T$ is not diagonalizable. Suppose we haven an $N$-cycle ${g}_0, \ldots, {g}_{N-1}$. Then, the only possibility to fulfill Eq. (\ref{eqapp:conjugationrule}) is that $[T,{g}_k] = 0$. To see this, consider a consequence of Eq. (\ref{eqapp:conjugationrule}), ${g}_{k} T^N  {g}_{k}^{-1}  = \frac{1}{\prod_k \left(B_{11}^{(k)}\right)^2} T^{N}$, which shows that $\prod_k \left( B_{11}^{(k)}\right)^2 = 1$. Thus we have $[T^N,{g}_k] = 0$. From this, $[T,{g}_k] = 0$ follows. It immediately follows that only global symmetries are posssible, as ${g}_{k+1} \propto {g}_k$ due to Eq. (\ref{eqapp:conjugationrule}). Moreover, using $[T,{g}] = 0$ we obtain that for all potential symmetries, ${g} = R \begin{pmatrix}1 & y \\ 0 & 1 \end{pmatrix} R^{-1}$ for any $y \in \mathbb{C}$. It yet remains to consider Eq. (\ref{eqapp:vrule}). Recall that both $\vec{v}$ and $T \vec{v}$ are eigenvectors of ${g}$. As ${g}$ is not diagonalizable in the currently considered case, we must have $T \vec{v} \propto \vec{v}$ (or $\vec{v} = 0$). This leads to another case distinction. If $T \vec{v} \not\propto \vec{v}$, then we have no non-trivial cycles. However, if $T \vec{v} \propto \vec{v}$, then with ${g}$ as given above, all conditions in Theorem \ref{thm:L1L1Tconditions} are satisfied. Thus, in this case we obtain the one-parametric family of gloabal symmetries given above. This compeltes the case that $T$ is not diagonalizable and is shown in the left branch of Figure \ref{fig:L1L1Tflowchart}.

Let us now discuss the case that $T$ is diagonalizable. We now additionally distinguish the case that there exists an $m \in \mathbb{N}$ such that $T^m \propto \identity$ from the case that there does not exist such an  $m$. Let us first discuss the latter case. Similarly to before, we suppose that we have an $N$-cycle and obtain ${g}_0 T^N {g}_0^{-1} = \frac{1}{\prod_k \left(B_{11}^{(k)}\right)^2} T^{N}$. We must have $\prod_k \left(B_{11}^{(k)}\right)^2=1$, as otherwise $T^{2N} \propto \identity$, contradicting the assumption. Thus, we obtain $[{g}_0,T^N] = 0$, which implies $[{g}_0,T] = 0$, as $T^N \not\propto \identity$. Hence, we have global symmetries only and moreover, ${g} = R \operatorname{diag}(x,1/x) R^{-1}$ for any $x \in \mathbb{C} \setminus \{0\}$. If $\vec{v} = \vec{0}$ the conditions in Theorem \ref{thm:L1L1Tconditions} are satisfied and one indeed obtains the mentioned one-parametric family of global symmetries. Let us now consider the case $\vec{v} \neq \vec{0}$. Note that if $T \vec{v} \propto \vec{v}$ both eigenvalues of ${g}$ coincide, which leads to trivial cycles only (recall that ${g}$ must be diagonalizable). Recall that $T^2 \vec{v} \propto \vec{v}$ is a necessary condition to have non-trivial cycles. Thus, only the case $T \vec{v} \not\propto \vec{v}$ and $T^2 \vec{v} \propto \vec{v}$ remains. Note, however, that if $T^2 \vec{v} = \lambda \vec{v}$ for some $\lambda \in \mathbb{C}$, then also $T^2 T \vec{v} = \lambda T \vec{v}$. As $\vec{v}$ and $T \vec{v}$ are linearly independent, this implies that $T^2 = \lambda \identity$, which is contradicting the assumption that $T^N\not\propto \identity$ for any $N \in \mathbb{N}$. Hence, in case $\vec{v} \neq \vec{0}$   we have a trivial symmetry only.

Let us finally disucuss the case that $T$ is diagonalizable and moreover, there is an $m \in \mathbb{N}$ such that $T^m \propto \identity$. In this case one can write $T \propto R \begin{pmatrix}e^{i \frac{ r \pi}{m}} & \\ & e^{-i \frac{ r \pi}{m}} \end{pmatrix} R^{-1}$ for some $r\in \mathbb{N}$. 
% T v == 0$
If $\vec{v} = \vec{0}$, we first obtain the same global symmetries as in case $T^m \not\propto \identity$, ${g} = R \operatorname{diag}(x,1/x) R^{-1}$ for any $x \in \mathbb{C} \setminus \{0\}$. Moreover, we obtain $m$-cycles of the form ${g}_k = T^k {g}_0 T^{-k}$ for any ${g}_0$, which (taking normalization into account) effectively constitutes a three-parametric family of non-global symmetries. Thus, in this case we have diagonalizable as well as non-diagonalizable symmetries. In case $m$ is even, additionally certain $m/2$-cycles emerge, which stem from the fact that in case $x=i$, we may have ${g}_0 = - T^{m/2} {g}_0 T^{m/2}$. As discussed earlier, this admits solutions ${g}_0 = S_0 \operatorname{diag}(i, -i) S_0^{-1}$ such that $T = S_0 \tilde{D} H  \operatorname{diag}(e^{i \frac{(2 r+1)\pi}{m}},e^{-i \frac{(2 r+1)\pi}{m}}) H \tilde{D}^{-1} S_0^{-1}$ for some $r \in \{0, \ldots, m-2\}$ and some diagonal matrix $\tilde{D}$. It may be easily verified that this leads to the one-parametric family of $m/2$-cycles with ${g}_k = R \begin{pmatrix} 0 & i y e^{i \frac{2 k (2 r + 1) \pi}{m}} \\ i/y e^{-i \frac{2 k (2 r + 1) \pi}{m}} & 0\end{pmatrix}R^{-1}$ for $y \in \mathbb{C} \setminus \{0\}$. 
% $T \vec{v} \propto \vec{v}$
Let uns now discuss the case that $\vec{v} \neq 0$, in which additional restrictions must be satisfied. Let us first discuss the subcase $T \vec{v} \propto \vec{v}$. Recall that in this subcase, only cycles with non-diagonalizable ${g}_k$ are possible (disregarding trivial cycles). We thus write ${g}_0 = S_0 \begin{pmatrix}1 & 1 \\ 0 & 1 \end{pmatrix}S_0^{-1}$, where $S_0 = \left(\vec{v}, \vec{w} \right)$ with a freely choosable generalized eigenvector $\vec{w} \in \mathbb{C}^2$. One then obtains the remaining matrices forming an $m$-cycle ${g}_k = S_0 \begin{pmatrix}1 & e^{\frac{2 k r \pi}{m}} \\ 0 & 1 \end{pmatrix}S_0^{-1}$ for $k \in \{0, \ldots, m-1\}$. Thus, in this case we obtain an effectively one-parametric family of non-diagonalizable, non-global symmetries.
% $T \vec{v} \not\propto \vec{v}$
Let us now discuss the remaining subcases. As $\vec{v}$ such that $T^2 \vec{v} \not\propto \vec{v}$ do not allow for any non-trivial cycle, actually the only subcase that remains open is the case that $T \vec{v} \not\propto \vec{v}$, but $T^2 \vec{v} \propto \vec{v}$. Note that in this case we have $T^2 \propto \identity$.
 Clearly, all ${g}_k$ must be diagonalizable, as each ${g}_k$ possesses the two linearly independent vectors $\vec{v}$ and $T \vec{v}$ as eigenvectors. It may be easily seen that a global symmetry with eigenvectors $\vec{v}$ and $T \vec{v}$ and eigenvalues given by $\pm i$ satisfies the conditions given in Theorem \ref{thm:L1L1Tconditions}. Considering cycles with ${g}_k$ such that $x \neq i$, one obtains that ${g}_{k+1} = {g}_k^{-1}$, as the vectors $\vec{v}$ and $T\vec{v}$ correspond to the eigenvalues $x$ and $1/x$ in an alternating manner. Thus, one obtains a one-parametric family of two-cycles with a freely choosable $x \in \mathbb{C} \setminus \{0,\pm i\}$. Clearly, however, for an odd partilcle number only the global symmetry with $x=i$ remains.

\section{Details on the SLOCC classification of $\Psi(LLT)$ and proofs of Lemmata \ref{lemma:L1L1TSLOCCsameT} and \ref{lemma:L1L1TSLOCCglobalsuffices} }
\label{app:L1L1TSLOCC}
In this appendix, we first provide details on the characterization of $(b \rightarrow c)$ cycles for fiducial states $\ket{LLT}$. We first state and prove Observation  \ref{obs:L1L1TSLOCCobs}. We then use the observation in order to prove Lemmata \ref{lemma:L1L1TSLOCCsameT} and \ref{lemma:L1L1TSLOCCglobalsuffices}, which lead to the SLOCC classification of normal MPS (Theorem \ref{thm:L1L1TSLOCC}), as explained in the main text.

From Eq. (\ref{eq:L1L1Tsloccnec}) it follows that $T_b$ and $T_c$ are either both diagonalizable, or both non-diagonalizable and we may deal with SLOCC equivalence for these two cases separately. By considering Eq. (\ref{eq:L1L1Tslocctrafo}) one straightforwardly obtains a few further necessary conditions for having an $(b\rightarrow c)$ $N$-cycle. First, $\vec{v}_b = \vec{0}$ if and only if $\vec{v}_c = \vec{0}$. Second, for any $k \in \mathbb{N}$ it holds that $T_b^k \vec{v}_b \propto \vec{v}_b$ if and only if $T_c^k \vec{v}_c \propto \vec{v}_c$. To see this, suppose that $T_c^k \vec{v}_c \propto \vec{v}_c$ and consider
\begin{align}
T_b^k \vec{v}_b &\propto g_{k} T_c g_{k-1}^{-1} g_{k-1} T_c \ldots T_c g_{0}^{-1} g_0 \vec{v}_c = g_k T_c^k \vec{v}_c \nonumber\\
&\propto g_k \vec{v}_c \propto \vec{v}_b.
\end{align}
Third, there exists an $m_b$ such that $T_b^{m_b} \propto \identity$ if and only if there exists an $m_c$ such that $T_c^{m_c} \propto \identity$. To see this, suppose that there exists an $m_c$ such that $T_c^{m_c} \propto \identity$. Then, consider $T_b^{N m_c}$, where $N$ is the particle number. Due to Eq. (\ref{eq:L1L1Tsloccnec}) we have $T_b^{N m_c} \propto g_k T_c^{N m_c} g_k^{-1} \propto \identity$. Thus, we have $T_b^{m_b} \propto \identity$ with $m_b = N m_c$.  Note, however, that we do not necessarily have $T_b^{m_b} \propto \identity$ for $m_b = m_c$. We present a simple counter example below.
Finally, let us remark that if $T_c \vec{v}_c \not\propto \vec{v}_c$, then we have that for any $m \in \mathbb{N}$, $T_c^{m} \propto \identity$ if and only if $T_b^{m} \propto \identity$ (with the same $m$). This can be seen as follows. Suppose that $T_c^{m} \propto \identity$. Then we have $T_c^{m} \vec{v}_c \propto \vec{v}_c$. As discussed above we thus have $T_b^{m} \vec{v}_b = \lambda \vec{v}_b$ for some $\lambda \in \mathbb{C}$. Furthermore, we have that  $T_b^{m} T_b \vec{v}_b = \lambda T_b \vec{v}_b$. As $\vec{v}_b$ and  $T_b \vec{v}_b$ are linearly independent by the assumption, we have that $T_b^{m} = \lambda \identity$. Let us summarize all of the discussed properties in the following observation. 
\begin{observation}
\label{obs:L1L1TSLOCCobs}
Consider $b$ and $c$ such that there exists an $(b\rightarrow c)$ $N$-cycle\footnote{Recall that we are considering $b_{20} = c_{20} = 1$.}. Then we have
\begin{enumerate}[(i)]
\item $T_b$ is diagonalizable if and only if $T_c$ is.
\item $\vec{v}_b = \vec{0}$ if and only if $\vec{v}_c = \vec{0}$.
\item For any $k \in \mathbb{N}$ we have $T_b^k \vec{v}_b \propto \vec{v}_b$ if and only if $T_c^k \vec{v}_c \propto \vec{v}_c$.
\item There exists an $m_b\in \mathbb{N}$ such that $T_b^{m_b} \propto \identity$ if and only if there exists an $m_c \in \mathbb{N}$ such that $T_c^{m_c} \propto \identity$.
\item If $T_c \vec{v}_c \not\propto \vec{v}_c$, then for any $m \in \mathbb{N}$ we have $T_b^{m} \propto \identity$ if and only if  $T_c^{m} \propto \identity$.
\end{enumerate}
\end{observation}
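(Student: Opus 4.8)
\emph{Proof strategy.} The plan is to derive everything from the two structural facts about a $(b\rightarrow c)$ $N$-cycle that were already isolated above: the transformation rule Eq.~(\ref{eq:L1L1Tslocctrafo}), which in the chosen parametrization states $T_b=\frac{1}{B_{11}^{(k)}B_{11}^{(k+1)}}\,g_{k+1}T_cg_k^{-1}$ and $\vec{v}_b=\frac{1}{B_{11}^{(k+1)}}\,g_{k+1}\vec{v}_c$ for every $k$, together with the necessary condition Eq.~(\ref{eq:L1L1Tsloccnec}), $T_b^N=\pm\,g_kT_c^Ng_k^{-1}$. I will also repeatedly use that the existence of a $(b\rightarrow c)$ $N$-cycle built from \emph{regular} operators is symmetric under exchanging $b$ and $c$ (it is equivalent to SLOCC equivalence of the generated MPS, which is symmetric), so in each item it suffices to prove one direction and invoke symmetry for the other.

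First I would dispatch (i) and (ii). For (ii): if $\vec{v}_c=\vec{0}$ then $\vec{v}_b=\frac{1}{B_{11}^{(k+1)}}g_{k+1}\vec{v}_c=\vec{0}$, and invertibility of $g_{k+1}$ gives the converse. For (i): Eq.~(\ref{eq:L1L1Tsloccnec}) presents $T_b^N$ as genuinely similar to $\pm T_c^N$, and scaling by $\pm1$ does not affect diagonalizability, so $T_b^N$ is diagonalizable iff $T_c^N$ is; it then remains to observe that for a $2\times2$ matrix of determinant $\pm1$, $T$ is diagonalizable iff $T^N$ is, since the only obstruction is a Jordan block and its $N$-th power ($N\ge1$) is again a non-trivial Jordan block. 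Combining the two equivalences yields (i).

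Next I would handle (iii), (iv), (v), whose common engine is the telescoping of Eq.~(\ref{eq:L1L1Tslocctrafo}). Iterating it, the $g_k^{-1}g_k$ factors cancel and one gets $T_b^k\propto g_kT_c^kg_0^{-1}$ (indices taken modulo $N$), while $\vec{v}_b\propto g_0\vec{v}_c$; hence $T_b^k\vec{v}_b\propto g_kT_c^k\vec{v}_c$, which is proportional to $g_k\vec{v}_c\propto\vec{v}_b$ precisely when $T_c^k\vec{v}_c\propto\vec{v}_c$, proving (iii) in one direction (symmetry gives the other). For (iv) the right exponent is $m_b:=Nm_c$: raising Eq.~(\ref{eq:L1L1Tsloccnec}) to the $m_c$-th power yields $T_b^{Nm_c}\propto g_kT_c^{Nm_c}g_k^{-1}\propto\identity$ when $T_c^{m_c}\propto\identity$; I would also record the remark that $m_b=m_c$ need not work, with the counterexample alluded to above. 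For (v): assuming $T_c\vec{v}_c\not\propto\vec{v}_c$ and $T_c^m\propto\identity$, item (iii) with $k=1$ gives $T_b\vec{v}_b\not\propto\vec{v}_b$, so $\{\vec{v}_b,T_b\vec{v}_b\}$ is a basis; item (iii) with $k=m$ gives $T_b^m\vec{v}_b=\lambda\vec{v}_b$, and then $T_b^m(T_b\vec{v}_b)=T_b(\lambda\vec{v}_b)=\lambda T_b\vec{v}_b$, so $T_b^m$ acts as $\lambda\identity$ on a basis, hence $T_b^m\propto\identity$; symmetry finishes.

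The only genuinely delicate part is bookkeeping of the scalars $B_{11}^{(k)}$ and of the sign ambiguity in Eq.~(\ref{eq:L1L1Tsloccnec}) coming from the $\chi=i$ case; once Eqs.~(\ref{eq:L1L1Tslocctrafo}) and (\ref{eq:L1L1Tsloccnec}) are in hand, everything else is routine. I would therefore devote most of the write-up to making the telescoping in (iii) and the sign/scalar tracking in (i) and (iv) precise, and keep each remaining item to a line or two.
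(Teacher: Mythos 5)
Your proposal matches the paper's own argument essentially step for step: both rest on Eq.~(\ref{eq:L1L1Tslocctrafo}) and Eq.~(\ref{eq:L1L1Tsloccnec}), with the telescoping product $T_b^k\propto g_kT_c^kg_0^{-1}$ together with $\vec{v}_b\propto g_j\vec{v}_c$ for item (iii), the exponent $m_b=Nm_c$ (and the same caveat that $m_b=m_c$ can fail) for item (iv), and the linear-independence of $\vec{v}_b$, $T_b\vec{v}_b$ for item (v). Your extra details—the Jordan-block justification that a regular $2\times2$ matrix is diagonalizable iff its $N$-th power is, and the explicit appeal to symmetry of the cycle relation for the converse directions—only make explicit what the paper leaves implicit, so the proof is correct and not a different route.
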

We formulate the observation in terms of cycles rather than SLOCC equivalence of MPS as the considered family of $b,c$  encompasses non-normal instances, cf. the discussion below Theorem \ref{thm:L1L1Tconditions}.
With this observation, we have established that normal MPS belonging to different boxes within Figure \ref{fig:L1L1Tflowchart} are SLOCC inequivalent (even though they may have compatible symmetry group).

In Observation \ref{obs:L1L1TSLOCCobs} we have shown that if there is a $(b \rightarrow c)$ cycle and $T_c^{m_c} \propto \identity$ for some $m_c$, then $T_b^{m_b} \propto \identity$ for some $m_b$.
However, in the discussion above the observation we have mentioned that we do not necessarily have $T_b^{m_b} \propto \identity$ for $m_b = m_c$. Here we present a simple counter example illustrating this. Consider 
$c$ such that $\vec{v}_c = \vec{0}$ and $T_c = \operatorname{diag}(e^{i \frac{\pi}{m_c}}, e^{-i \frac{\pi}{m_c}})$, i.e., $T_c^{m_c} \propto \identity$. Then, $b$ with $\vec{v}_b = \vec{0}$ and $T_b = \operatorname{diag}( e^{i (\frac{2}{N}+\frac{1}{m_c})\pi},e^{-i (\frac{2}{N}+\frac{1}{m_c})\pi})$ gives rise to an SLOCC equivalent MPS. This can be easily seen by writing $T_b = g_{k+1} T_c g_{k}^{-1}$, where $g_k = \operatorname{diag}(e^{i \frac{2 k \pi}{N}},e^{-i \frac{2 k \pi}{N}})$. Suppose that $N$ is odd, moreover, $m_c$ and $N$ are coprime. Then, $m_b = N m_c$ is the smallest integer such that $T_b^{m_b} \propto \identity$, in particular,  $T_b^{m_c} \not\propto \identity$.
\\

\noindent {{\bf Lemma \ref{lemma:L1L1TSLOCCsameT}}{\bf.}}\textit{
Consider $b$ and $c$ in standard form which correspond to normal MPS (i.e., $b_{20} = c_{20} = 1$ and additionally $\vec{v}_b,\vec{v}_c \neq \vec{0}$). If there exists a  $(b\rightarrow c)$ $N$-cycle, then, $T_b = T_c$. 
}
\begin{proof}
In order to prove the lemma, we distinguish several cases, namely, non-diagonalizable and diagonalizable $T_c$ and, moreover, $T_c$ such that $T_c^m \propto \identity$ for some $m$ and $T_c$ such that $T_c^m \not\propto \identity$ for all $m$. We show the statement of the lemma for each of these cases separately.

Let us first consider non-diagonalizable $T_c$. Due to Observation \ref{obs:L1L1TSLOCCobs} and the chosen standard form for $T$ we have $T_b= T_c = \begin{pmatrix}1&1\\0&1 \end{pmatrix}$. The statement is hence trivial for non-diagonalizable $T$. 

Let us now consider diagonalizable $T_b$, $T_c$. 
Consider first the case that there exists no $m_c$ such that $T_c^{m_c}\propto \identity$. Due to Observation \ref{obs:L1L1TSLOCCobs}, the same must hold for $T_b$. Considering the standard form for $T$ we may write $T_c = \operatorname{diag}(\sigma_c, \sigma_c^{-1})$ and $T_b = \operatorname{diag}(\sigma_b, \sigma_b^{-1})$. Using Eq. (\ref{eq:L1L1Tsloccnec}) we necessarily have that either $\sigma_b^N = \pm \sigma_c^N$, or $\sigma_b^N = \pm \sigma_c^{-N}$ (with a positive sign in case of even $N$). Note that the latter case is only possible if $|\sigma_c| = 1$. Thus, $\sigma_b = e^{i \frac{q \pi}{N}} \sigma_c$ or $\sigma_b = e^{i \frac{q \pi}{N}} \sigma_c^{-1}$ for some $q \in \{0, \ldots, 2N-1\}$ in case of odd $N$ and $\sigma_b = e^{i \frac{2 q \pi}{N}} \sigma_c$ or $\sigma_b = e^{i \frac{2 q \pi}{N}} \sigma_c^{-1}$ for some $q \in \{0, \ldots, N-1\}$ in case of even $N$. Moreover, using Eq. (\ref{eq:L1L1Tsloccnec}) we obtain that either $g_k = \operatorname{diag}(\alpha_k,1/\alpha_k)$ for all $k$, or $g_k = \operatorname{diag}(\alpha_k,1/\alpha_k) \sigma_x$ for all $k$, respectively\footnote{Let us remark that in case $\vec{v}_c = \vec{0}$ we also have $\vec{v}_b = \vec{0}$ and any $\sigma_b$ as in the main text indeed leads to an SLOCC equivalent MPS, which can be seen by considering $g_k = \operatorname{diag}(e^{-i \frac{k q \pi}{N}},e^{i \frac{k q \pi}{N}}) \sigma_x^b$ and $B_{11}^{(k)}= (-1)^k i$, where $b \in \{0,1\}$ for odd $N$ and $g_k = \operatorname{diag}(e^{-i \frac{2 k  q \pi}{N}},e^{i \frac{2 k  q \pi}{N}}) \sigma_x^b$ and $B_{11}^{(k)}=1$ for even $N$. Hence, there is a one-parametric family of SLOCC classes corresponding to box (III) in Figure \ref{fig:L1L1Tflowchart}. }.
Due to the assumption we have $\vec{v}_c \neq \vec{0}$. Due to Eq. (\ref{eq:L1L1Tslocctrafo}) we have
 \begin{align}
 \label{eq:L1L1Tvcomparison}
 \frac{B_{11}^{(k)}}{B_{11}^{(k+1)}} g_{k}^{-1} g_{k+1} \vec{v}_c = \vec{v}_c,
 \end{align}
 and due to the considerations above we moreover have
 \begin{align}
 g_k^{-1} g_{k+1} = {B_{11}^{(k+1)}}{B_{11}^{(k)}} \operatorname{diag}(e^{i \frac{q \pi}{N}}, e^{-i \frac{q \pi}{N}})
 \end{align}
 for $q \in \{0, \ldots, 2N-1\}$.
Thus, $\left(B_{11}^{(k)}\right)^2 \operatorname{diag}(e^{i \frac{q \pi}{N}}, e^{-i \frac{q \pi}{N}})  \vec{v}_c = \vec{v}_c$. Hence, either $q=0$, which implies $T_b = T_c$ ($T_b = T_c^{-1}$ is not possible due to the chosen standard form), or  $\vec{v}_c$ is proportional to a standard basis vector, i.e., an eigenvector of $T_c$. In the latter case we have $\left(B_{11}^{(k)}\right)^2 = e^{\pm i \frac{q \pi}{N}}$ with coinciding sign for all $k$. As we are considering $T$ in standard form ($\det T_b = \det T_c =1$), we have $\left(B_{11}^{(k)}\right)^2 \left(B_{11}^{(k+1)}\right)^2 = 1$. We thus obtain $e^{i \frac{2 q \pi}{N}} = 1$ and in further consequence $\sigma_b = \pm \sigma_c$, or $\sigma_b = \pm  \sigma_c^{-1}$. Hence, $T_b = T_c$ due to the standard form.

Let us now consider the case that there exists an $m_c$ such that $T_c^{m_c}\propto \identity$. Here we distinguish two subcases. First, if $T_c^{N}\not\propto \identity$, then the same conclusions as for the case $T_c^{m_c} \not\propto \identity$ for any $m_c \in \mathbb{N}$ can be drawn. Second, if $T_c^{N} \propto \identity$, then Eq. (\ref{eq:L1L1Tsloccnec}) does not yield a constraint. In particular, it does not imply that $g_k$ must be of either diagonal or counter diagonal form. Thus, in the following, we deal with this case separately.
Due to Eq. (\ref{eq:L1L1Tslocctrafo}) we have
\begin{align}
g_k = \left(\prod_{i=0}^{k-1} B_{11}^{(i)}\right) \left(\prod_{i=1}^{k} B_{11}^{(i)}\right) T_b^k g_0 T_c^{-k}
\end{align}
for all $k$. We now consider $\frac{1}{B_{11}^{(k)}} g_k \vec{v}_c = \frac{1}{B_{11}^{(0)}} g_0 \vec{v}_c$ [an implication of Eq. (\ref{eq:L1L1Tslocctrafo})]. Inserting Eq. (\theequation) we obtain the condition
\begin{align}
\underbrace{\left( \left(  \prod_{i=0}^{k-1}B_{11}^{(i)}\right)^2  T_b^k g_0 T_c^{-k} - g_0 \right)}_{M_k} \vec{v}_c = \vec{0}.
\end{align}
As $\left(B_{11}^{(l)}B_{11}^{(l+1)} \right)^2 = 1$ for any $l$, the product in Eq. (\theequation) equals 1 for any even $k$. As $T_c^{N},T_b^{N} \propto \identity$ and due to the standard form we have $T_b = \operatorname{diag}(e^{i\frac{q_b \pi}{N}},e^{-i\frac{q_b \pi}{N}})$ and $T_c = \operatorname{diag}(e^{i\frac{q_c \pi}{N}},e^{-i\frac{q_c \pi}{N}})$ for some $q_b,q_c \in \{0, \ldots, \lfloor N/2 \rfloor\}$. Due to Eq. (\theequation) we have 
$\det M_k = 0$ for any $k$. Moreover, due to the definition of $M_k$ we have that
\begin{align}
\det M_k = 2 -  2&\left( \alpha_0 \delta_0 \cos \frac{k \pi (q_b-q_c)}{N}\right. \nonumber\\
  &\quad \left. -\beta_0 \gamma_0 \cos \frac{k \pi (q_b + q_c)}{N} \right)
\end{align}
for any even $k$. We have
\begin{align}
\sum_{k \in \{2, 4, \ldots, 2(N-1)\}} \det M_k = \begin{cases}2N & q_b \neq q_c\\0 &q_b = q_c \in \{0,\frac{N}{2}\}\\-2N\beta_0 \gamma_0 & q_b = q_c \not\in \{0,\frac{N}{2}\} \end{cases}
\end{align}
Note that we deliberately also sum over $k$ that are larger than $N$, we use $N+l \equiv l$ for $l \in \{0, \ldots, N-1\}$. 
As $M_k$ must be singular for any $k$ this shows that $q_c = q_b$ and hence $T_c = T_b$. This completes the proof of the lemma.
\end{proof}

\noindent {{\bf Lemma \ref{lemma:L1L1TSLOCCglobalsuffices}}{\bf.}}\textit{
Consider $b$ and $c$ which correspond to normal MPS (i.e., $b_{20} = c_{20} = 1$ and additionally $\vec{v}_b,\vec{v}_c \neq \vec{0}$). If there exists a  $(b\rightarrow c)$ $N$-cycle, then, there also exists a $(b\rightarrow c)$ $1$-cycle. 
}
\begin{proof}
We prove the statement separately for the case of non-diagonalizable $T_c$, diagonalizable $T_c$ such that $T_c^N \propto \identity$ and diagonalizable $T_c$ such that $T_c^N \not\propto \identity$. We make use of the fact that   $T_b = T_c$ due to Lemma \ref{lemma:L1L1TSLOCCsameT} in all of the cases.

Let us start by considering non-diagonalizable $T_c$.  Due to Eq. (\ref{eq:L1L1Tslocctrafo}) we have $g_{k+1}\propto T_c g_k T_c^{-1}$. From the condition in Eq. (\ref{eq:L1L1Tsloccnec}) it follows that $[g_k, T_c]  = 0$. Thus, $g_{k+1}\propto g_k$ for all $k$ and the statement of the lemma follows for non-diagonalizable $T_c$.

Let us now consider the case that $T_c$ is diagonalizable and $T_c^N \not\propto \identity$. Recall that due to the considered standard form of $c$, $T_c$ is diagonal. Due to Eq. (\ref{eq:L1L1Tsloccnec}) we have that $g_k$ is either diagonal or counter-diagonal. Using $g_{k+1} \propto T_c g_k T_c^{-1}$ we have that $g_{k+1} \propto g_k$ and the statement of the lemma follows for the considered $T_c$.

Let us finally consider the case that $T_c$ is diagonalizable and $T_c^N \propto \identity$. Note that if $T_c \propto \identity$ we have $g_{k+1} \propto T_c g_k T_c^{-1} \propto g_k$ and the statement of the lemma follows trivially. In the following we thus assume $T_c \not\propto \identity$. 
We will make use of the relation
\begin{align}
\left(T_c^2 g_0 T_c^{-2} - g_0 \right) \vec{v}_c = \vec{0},
\end{align}
which may be derived as in the proof of Lemma \ref{lemma:L1L1TSLOCCsameT}. We now distinguish several subcases.
We first consider the subcase that $T_c \vec{v}_c \propto \vec{v}_c$. We either have that $\vec{v}_c \propto (1,0)^T$, or $\vec{v}_c \propto (0,1)^T$.  Suppose that  $\vec{v}_c \propto (1,0)^T$. Note that we are considering $c$ corresponding to box (IV) in Figure \ref{fig:L1L1Tflowchart}.
Considering $g_{k+1} \propto T_c g_k T_c^{-1}$ and $g_0 \vec{v}_c \propto g_k \vec{v}_c$ for any $k$ we have that $g_0 \vec{v}_c \propto T_c^k g_0 T_c^{-k} \vec{v}_c \propto T_c^k g_0 \vec{v}_c$. Considering $k=1$ and using $T_c \not\propto \identity$ we have that $g_0 \vec{v}_c$ must be proportional to a standard basis vector. Thus, either $\alpha_0 = 0$ or $\gamma_0 = 0$. Unless $T_c^2 \propto \identity$ we must have $\gamma_0 =0$ due to Eq. (\theequation).
Due to the fact that the symmetry group of the considered fiducial states possesses an $N$-cycle yielding a symmetry as displayed in box (IV) in Figure \ref{fig:L1L1Tflowchart}, there must as well exist an $(b \rightarrow c)$ $N$-cycle with physical operators $g'_0, g'_1, \ldots$, where $g'_0 = g_0 \begin{pmatrix}1 & z \\ 0 & 1 \end{pmatrix}$ for any $z \in \mathbb{C}$. Choosing $z = -\beta_0/\alpha_0$ ($z = -\delta_0/\gamma_0$) if $\gamma_0 = 0$ ($\alpha_0 = 0$), we obtain a $(b \rightarrow c)$ $N$-cycle with $g'_0 =  \begin{pmatrix}\alpha_0 & 0 \\ 0 & \delta_0 \end{pmatrix}$ ($g'_0 =  \begin{pmatrix}0 & \beta_0 \\ \gamma_0 & 0 \end{pmatrix}$), respectively. Recall that the latter case can only occur if $T_c^2 \propto \identity$, which implies $T_c = \operatorname{diag}(i, -i)$ due to the chosen standard form. As $g'_{k+1} \propto T_c g'_{k+1} T_c^{-1}$ we hence have that $g'_{k+1} \propto g'_0$ for all $k$. Thus, we have an $(b\rightarrow c)$ 1-cycle. The proof works similarly if $\vec{v}_c \propto (0,1)^T$, instead.

Let us now consider the subcase that $T_c \vec{v}_c \not\propto \vec{v}_c$ and $T_c^2 \propto \identity$, i.e., we are considering $c$ corresponding to box (VII) in Figure \ref{fig:L1L1Tflowchart}. It will be convenient to introduce $S = (\vec{v}_c, T\vec{v}_c)$ and write $g_0= S \begin{pmatrix}\alpha_{0,S} & \beta_{0,S} \\ \gamma_{0,S} & \delta_{0,S} \end{pmatrix} S^{-1}$. Note that $T_c S \propto S \sigma_x$. Thus, the condition $g_0 \vec{v}_c \propto g_1 \vec{v}_c \propto T_c g_0 T_c^{-1} \vec{v}_c$ yields $\sigma_x (\beta_{0,S},  \delta_{0,S})^T \propto (\alpha_{0,S},  \gamma_{0,S})^T$. Thus, we have $g_0= S \begin{pmatrix}\alpha_{0,S} & \lambda \gamma_{0,S} \\ \gamma_{0,S} & \lambda \alpha_{0,S} \end{pmatrix} S^{-1}$ for some $\lambda \in \mathbb{C}$. Note that the considered subcase can only occur in case of even $N$. Thus, we may untilize the $2$-cycles displayed in box (VII) in Figure \ref{fig:L1L1Tflowchart} in order to conclude that there must exist another $(b \rightarrow c)$ $N$-cycle with $g_0', g_1', \ldots$ such that $g_0'= S \begin{pmatrix}\alpha_{0,S}' &  \gamma_{0,S}' \\ \gamma_{0,S}' &  \alpha_{0,S}' \end{pmatrix} S^{-1}$. Now, as $g_{k+1}' \propto T_c g_k' T_c^{-1} =  S \sigma_x \begin{pmatrix}\alpha_{k,S}' &  \gamma_{k,S}' \\ \gamma_{k,S}' &  \alpha_{k,S}' \end{pmatrix}\sigma_x S^{-1}$, we have $g_k' \propto g_0'$ for all $k$. Thus, there exists a  $(b \rightarrow c)$ $1$-cycle.

Let us finally consider the subcase that $T_c \vec{v}_c \not\propto \vec{v}_c$ and $T_c^2 \not\propto \identity$, i.e., we are considering $c$ corresponding to box (VIII) in Figure \ref{fig:L1L1Tflowchart}. As $T_c^2 \not\propto \identity$, Eq. (\theequation) yields $\begin{pmatrix}0 & \beta_0 \\ \gamma_0 & 0 \end{pmatrix} \vec{v}_c = \vec{0}$. As $\vec{v}_c$ is not an eigenvector of $T_c$, i.e., not proportional to a standard basis vector, this yields $\beta_0 = 0$ and $\gamma_0 = 0$, i.e., $g_0$ is diagonal. As $g_{k+1} \propto T_c g_k T_c^{-1}$ and $T_c$ is diagonal too, we have $g_k \propto g_0$ for all $k$. Thus, there exists a  $(b \rightarrow c)$ $1$-cycle. This completes the proof of the lemma.
\end{proof}

\clearpage
\section{Symmetries of the TIMPS corresponding to the remaining 4 SLOCC classes of the fiducial state}
\label{sec:otherSLOCCclasses}

We briefly discuss here the four remaining SLOCC classes of the fiducial states. The subsequent subsections are all structured in the same way. We present the symmetries of the fiducial states and the concatenation rules, which would then, analogously to the derivation in the main text, allow to determine the symmetries, SLOCC transformations, and SLOCC classes of the corresponding TIMPS. Instead of completely characterizing here all the symmetries (and SLOCC classes), we just highlight some particularly interesting symmetries, which do not occur in the cases studied in the main text. The notation used for the representatives refer to the value and the degeneracy of the eigenvalues of the corresponding matrix pencil (see also Sec \ref{sec:prelim}). 

\subsection{TIMPS corresponding to the fiducial states represented by $M^3(0)$}\label{sec:M30}

We consider here the SLOCC class of the fiducial state which is represented by the state by $\ket{M^3(0)} = \ket{0}(\ket{01}+\ket{12}) + \ket{1}(\ket{00}+\ket{11}+\ket{22})$.

The matrix pencil reads
\begin{align}
\mathcal{P} = M^3(0) = \begin{pmatrix}\lambda & \mu &0\\0& \lambda  & \mu\\ 0 & 0 & \lambda \end{pmatrix}.
\end{align}
\subsubsection{Symmetries of the fiducial state}

As before, the symmetries of the fiducial state can be straight forwardly determined using the corresponding matrix pencil (or directly from the state). Symmetries of the form  $\identity \otimes B \otimes C$ read
\begin{align}
B &=  \begin{pmatrix}1 &  B_{01} & B_{02} \\0 &1 & B_{01}  \\ 0 & 0 & 1 \end{pmatrix} \nonumber \\
C^T &=  \begin{pmatrix}1 & -B_{01} & B_{01}^2 - B_{02} \\0 & 1 & -B_{01} \\ 0 & 0 & 1 \end{pmatrix}.
\end{align} 

The symmetries involving the physical symmetry ${g}$ can be easily determined using Eq. (\ref{eq:EigenvaluesMPsym}). As the matrix pencil possesses the eigenvalue 0, we have $\beta = 0$. Moreover, w.l.o.g. we choose the normalization $\delta=1$. Hence, any physical symmetry can be chosen to be lower triangular. For any given $g$ the corresponding matrices $B_{g}, C_g$ can then easily be determined and we otbain 
\begin{align}
{g} &= \begin{pmatrix}\alpha &  0 \\\gamma &1 \end{pmatrix} \nonumber\\
B_{g} &= \begin{pmatrix}1/\alpha &  0 & 0 \\0 &1 & -\gamma  \\ 0 & 0 & \alpha \end{pmatrix} \nonumber \\
C_{g}^T &=  \begin{pmatrix}\alpha & -\gamma & 0 \\0 & 1 & 0 \\ 0 & 0 & 1/\alpha \end{pmatrix}.
\end{align}
 
The symmetry group of the state $\ket{M^3(0)}$ is thus given by
\begin{align}
{g} \otimes B_{g} B \otimes C_{g} C.
\end{align}
 
As shown in \cite{SaMo19} (see also Sec. \ref{sec:prelim}), symmetries and SLOCC transformations between TIMPS generated by any fiducial state within the SLOCC class of the representativ $\ket{M^3(0)}$ can be determined by focusing on the states of the form  $\identity \otimes b \otimes \identity \ket{M^3(0)}$. The tensor $A$ associated to $\identity \otimes b \otimes \identity \ket{M^3(0)}$ reads
\begin{align}
A_0 = \begin{pmatrix} 0& b_{00} & b_{01} \\  0 &b_{10} &  b_{11}\\ 0 & b_{20}& b_{21}\end{pmatrix}, \ 
A_1 = \begin{pmatrix} b_{00} & b_{01} & b_{02}\\  b_{10} &  b_{11} & b_{12}\\b_{20}& b_{21} & b_{22}\end{pmatrix}.
\end{align}
 
\subsubsection{Concatenation rules}
As explained in the main text, the symmetries of the MPS corresponding to a particular fiducial state can be determined by the corresponding concatenation rules. In case of the fiducial state $\identity \otimes b \otimes \identity \ket{M^3(0)}$ the concatenation rules read 
 \begin{align}
\label{eq:lhsrhsM3}
b \underbrace{\begin{pmatrix}1/\alpha_{k+1} &  B_{01}^{(k+1)}/\alpha_{k+1} & B_{02}^{(k+1)}/\alpha_{k+1} \\0 & 1 & B_{01}^{(k+1)} - \gamma_{k+1} \\ 0 & 0 & \alpha_{k+1} \end{pmatrix}}_{LHS_{k+1}} b^{-1} \propto \nonumber \\
\underbrace{\begin{pmatrix}1/\alpha_{k} &  (B_{01}^{(k)}+\gamma_k)/\alpha_{k} & (B_{02}^{(k)} + B_{01}^{(k)} \gamma_k)/\alpha_{k} \\0 & 1 & B_{01}^{(k)} \\ 0 & 0 & \alpha_{k} \end{pmatrix}}_{RHS_k}
\end{align} 
Observe that the matrices involved in Eq. (\ref{eq:lhsrhsM3}) are upper triangular. Thus, the diagonal elements are the eigenvalues of the matrices. As the determinant of the matrices on both sides equals 1, the proportionality factor must be a thrid root of unity and it follows that either $\alpha_{k+1} = \alpha_k$, or $\alpha_{k+1} = 1/\alpha_k$.
Note that all eigenvalues are distinct as long as $\alpha_k \neq \pm 1$. Similar tools as presented in the main text can be used to determine all the symmetries and to identify the normal MPS corresponding to fiducial states in this SLOCC class. One finds, apart from global symmetries for instance two-cycles. 

\subsection{TIMPS corresponding to the fiducial states represented by \texorpdfstring{$M^2(0) \oplus M^1(0)$}{M2(0)+M1(0)}}

The SLOCC class we consider here is represented by the fiducial state by $\ket{ M^2(0) \oplus M^1(0) } = \ket{0}(\ket{01}) + \ket{1}(\ket{00}+\ket{11}+\ket{22})$.
The corresponding matrix pencil 
\begin{align}
\mathcal{P} = M^2(0) \oplus M^1(0) = \begin{pmatrix}\lambda & \mu &0\\0& \lambda  & 0\\ 0 & 0 & \lambda \end{pmatrix}
\end{align}
has an eingenvalue zero. 

\subsubsection{Symmetries of the fiducial state}

Analogously to before we determine the symmetries by 
symmetries of the form  $\identity \otimes B \otimes C$, which are given by (choosing a convenient normalization)
\begin{align}
B &=  \begin{pmatrix}1 &  B_{01} &  B_{02} \\0 &1 & 0  \\ 0 &  B_{21} & B_{22} \end{pmatrix} \nonumber \\
C^T &=  \begin{pmatrix}1 & -B_{01} + \frac{ B_{02}  B_{21}}{ B_{22}}& -\frac{ B_{02}}{ B_{22}} \\0 & 1 & 0 \\ 0 & -\frac{ B_{21}}{ B_{22}} & 1/B_{22} \end{pmatrix}.
\end{align} 

and symmetries of the form $g \otimes B_g \otimes C_g$. In the case considered here they are given by 
\begin{align}
{g} &= \begin{pmatrix}\alpha &  0 \\ \gamma  &1 \end{pmatrix} \nonumber\\
B_{g} &= \begin{pmatrix}1 &  0 & 0 \\0 &\alpha & 0  \\ 0 & 0 & 1 \end{pmatrix} \nonumber \\
C_{g}^T &=  \begin{pmatrix}1 & -\gamma/\alpha & 0 \\0 & 1/\alpha & 0 \\ 0 & 0 & 1 \end{pmatrix}.
\end{align}
As before we chose w.l.o.g. the normalization $\delta = 1$. 
Note that, as in the previously considered case, the matrix pencil has a single eigenvalue, 0, and thus we have $\beta = 0$ in ${g}$.
The symmetry group of the state $\ket{ M^2(0) \oplus M^1(0) }$ is thus given by
\begin{align}
{g} \otimes B_{g} B \otimes C_{g} C.
\end{align}

The tensor $A$ associated to $\identity \otimes b \otimes \identity \ket{ M^2(0) \oplus M^1(0) }$ reads
\begin{align}
\label{eq:tensorM20M10}
A_0 = \begin{pmatrix} 0& b_{00} &0 \\  0 &b_{10} &  0\\ 0 & b_{20}& 0\end{pmatrix}, \ 
A_1 = \begin{pmatrix} b_{00} & b_{01} & b_{02}\\  b_{10} &  b_{11} & b_{12}\\b_{20}& b_{21} & b_{22}\end{pmatrix}.
\end{align}
 
\subsubsection{Concatenation rules}
For the fiducial state $\identity \otimes b \otimes \identity \ket{ M^2(0) \oplus M^1(0) }$ the concatenation rules read 
 \begin{align}
\label{eq:lhsrhsM20M10}
b \underbrace{\begin{pmatrix}1 &  B_{01}^{(k+1)} & B_{02}^{(k+1)}  \\0 & \alpha_{k+1} & 0 \\ 0 & B_{21}^{(k+1)}  & B_{22}^{(k+1)} \end{pmatrix}}_{LHS_{k+1}} b^{-1} \propto  
\underbrace{\begin{pmatrix} 1 &  B_{01}^{(k)} + \gamma_k &B_{02}^{(k)}\\0 & \alpha_{k} & 0 \\ 0 & B_{21}^{(k)} &B_{22}^{(k)}\end{pmatrix}}_{RHS_k}
\end{align} 

The symmetries can be worked out similarly to the main text. To give an example, one finds for $b$ given by 
 \begin{align}
b = \begin{pmatrix} b_{00} & b_{01} & b_{02} \\ 0 & b_{11} & b_{12} \\ 1 & b_{21} & -b_{00} -b_{11} \end{pmatrix}  \mbox{ with } b_{12} \neq 0  \end{align}

3-cycles with $\alpha_2 = 1/(\alpha_0 \alpha_1)$ for arbitrary $\alpha_0, \alpha_1 \in \mathbb{C}\setminus\{0\}$. Moreover, $\gamma_k = \frac{\tr \{ b^{-1}\det b\}}{b_{12}} (\alpha_k -1)= \frac{b_{00}^2 + b_{02} + b_{00} b_{11} + b_{11}^2 + b_{12} b_{21}}{b_{12}} (\alpha_k -1)$. 

\subsection{TIMPS corresponding to the fiducial states represented by \texorpdfstring{$M^2(0) \oplus M^1(\infty)$}{M2(0)+M1(infinity)}}
\label{app:M20M1inf}

Here, we consider the SLOCC class of fiducial states represented by $\ket{ M^2(0) \oplus M^1(\infty) } = \ket{0}(\ket{01}+\ket{22}) + \ket{1}(\ket{00}+\ket{11})$.

The corresponding matrix pencil reads
\begin{align}
\mathcal{P} = M^2(0) \oplus M^1(\infty) = \begin{pmatrix}\lambda & \mu &0\\0& \lambda  & 0\\ 0 & 0 & \mu \end{pmatrix}.
\end{align}

It has two distinct eigenvalues, 0 and $\infty$.

\subsubsection{Symmetries of the fiducial state}

Symmetries of the form  $\identity \otimes B \otimes C$ (choosing a convenient normalization) read
\begin{align}
B &=  \begin{pmatrix}1 &  B_{01} & 0 \\0 &1 & 0  \\ 0 & 0 & B_{22} \end{pmatrix} \nonumber \\
C^T &=  \begin{pmatrix}1 & -B_{01} & 0 \\0 & 1 & 0 \\ 0 & 0 & 1/B_{22} \end{pmatrix}.
\end{align} 

The matrix pencil has two distinct eigenvalues, $0$ and $\infty$. Thus, in order to satiesfy Eq. (\ref{eq:EigenvaluesMPsym}) we have $\beta = \gamma = 0$ for any symmetry on the physical system, ${g}$. W.l.o.g. we normalize $g$ such that $\delta = 1$. For a  particular $g$ one can then easily determine a particular $B_{g}, C_{g}$, such that ${g} \otimes B_{g} B \otimes C_{g}
$ is a symmetry. The operators are given by 

\begin{align}
{g} &= \begin{pmatrix}\alpha &  0 \\0  &1 \end{pmatrix} \nonumber\\
B_{g} &= \begin{pmatrix}1/\alpha &  0 & 0 \\0 &1 & 0  \\ 0 & 0 & 1/\alpha \end{pmatrix} \nonumber \\
C_{g}^T &=  \begin{pmatrix}\alpha & 0 & 0 \\0 & 1 & 0 \\ 0 & 0 & 1 \end{pmatrix}.
\end{align}

The symmetry group of the state $\ket{ M^2(0) \oplus M^1(\infty) }$ is thus given by
\begin{align}
{g} \otimes B_{g} B \otimes C_{g} C.
\end{align}

The tensor $A$ associated to $\identity \otimes b \otimes \identity \ket{ M^2(0) \oplus M^1(\infty) }$ reads
\begin{align}
A_0 = \begin{pmatrix} 0& b_{00} & b_{02} \\  0 &b_{10} &  b_{12}\\ 0 & b_{20}& b_{22}\end{pmatrix}, \ 
A_1 = \begin{pmatrix} b_{00} & b_{01} & 0\\  b_{10} &  b_{11} & 0\\b_{20}& b_{21} & 0\end{pmatrix}.
\end{align}
 
\subsubsection{Concatenation rules}
The concatenation rules are given by
 \begin{align}
\label{eq:lhsrhsM20M1inf}
b \underbrace{\begin{pmatrix}1/\alpha_{k+1} &  B_{01}^{(k+1)}/\alpha_{k+1} & 0 \\0 & 1 & 0 \\ 0 & 0 & B_{22}^{(k+1)} / \alpha_{k+1} \end{pmatrix}}_{LHS_{k+1}} b^{-1} \propto \nonumber \\
\underbrace{\begin{pmatrix}1/\alpha_{k} &  B_{01}^{(k)}/\alpha_{k} &0\\0 & 1 & 0 \\ 0 & 0 &B_{22}^{(k)}\end{pmatrix}}_{RHS_k}.
\end{align}

\subsubsection{Symmetries of the MPS}

In this class particularly interesting new symmetries occur, which we will exemplary present here. 

Considering for instance 
$b = \begin{pmatrix} 0 &  b_{01}& -b_{12} b_{21} \\ 0 & 0 & b_{12} \\ 1 & b_{21} & 0 \end{pmatrix}$ leads to $N$-cycles such that $\alpha_{k+1}  \alpha_k^2 \alpha_{k-1} = 1$ for all $k$. Now, $\alpha_k$ may be expressed in terms of $\alpha_0$ and $\alpha_1$ as follows. For even $k$ we have $\alpha_k = \frac{1}{\alpha_0^{k-1} \alpha_1^k}$ and for odd $k$ we have $\alpha_k = \alpha_0^{k-1} \alpha_1^k$. It may be easily verified that for odd $N$, one obtains only three non-trivial global symmetries with $\alpha_k = \alpha$ such that $\alpha^4 = 1$, i.e., $\alpha \in \{\pm 1, \pm i\}$. In contrast to that, for even $N$ we obtain only a less stringent condition, $(\alpha_0 \alpha_1)^N = 1$. Thus, we have $N$-cycles with arbitrary $\alpha_0 \in \mathbb{C}$, $\alpha_1 = 1/\alpha_0 e^{i \frac{2 r \pi}{N}}$ for some $r \in \{0, \ldots, N-1\}$, and then $\alpha_k = \begin{cases}\alpha_0 e^{-i \frac{2 r k \pi}{N}} & \text{for even } k \\ \frac{1}{\alpha_0} e^{i \frac{2 r k \pi}{N}} & \text{for odd } k\end{cases}$.
  Interestingly, the eigenvalues of ${g}_k$ do not coincide (up to rescaling) for all $k$.

Another interesting symmetry occurs for instance for 
 $b = \begin{pmatrix} b_{00} & - b_{00}^2& b_{02} \\ 1 & -b_{00} & 0 \\ 0 & b_{21} & 0 \end{pmatrix}.$ 
 
We find 4-cycles with $\alpha_2 = 1/\alpha_0$ and $\alpha_3 = 1/\alpha_1$ and also 2-cycles with $\alpha_1 = 1/\alpha_0$ and 1-cycles with $\alpha_0 = -1$.  

\subsection{TIMPS corresponding to the fiducial states represented by \texorpdfstring{$M^1(0) \oplus M^1(0) \oplus M^1(\infty)$}{M1(0)+M1(0)+M1(infinity)}}
\label{app:M10M10M1inf}

The representing fiducial state of this SLOCC class is $\ket{ M^1(0) \oplus M^1(0) \oplus M^1(\infty) } = \ket{0}\ket{22} + \ket{1}(\ket{00}+\ket{11})$.
The corresponding matrix pencil reads
\begin{align}
\mathcal{P} = M^1(0) \oplus M^1(0) \oplus M^1(\infty) = \begin{pmatrix}\lambda & 0 &0\\0& \lambda  & 0\\ 0 & 0 & \mu \end{pmatrix}.
\end{align}

The matrix pencil has (as in the previous case) two distinct eigenvalues, $0$ and $\infty$. However, the degeneracy is different compared to the case studied before. 

\subsubsection{Symmetries of the fiducial state}

Symmetries of the form  $\identity \otimes B \otimes C$ are given by
\begin{align}
B &=  \begin{pmatrix}B_{00} &  B_{01} & 0 \\B_{10} &B_{11} & 0  \\ 0 & 0 & B_{22} \end{pmatrix} \nonumber \\
C^T &= B^{-1} 
\end{align} 
In the following we use the normalization $B_{22}=1$.

Symmetries involving ${g}$ (and choosing a proper normalization) are of the following simple form
\begin{align}
{g} &= \begin{pmatrix}\alpha &  0 \\0  &1 \end{pmatrix} \nonumber\\
B_{g} &= \begin{pmatrix}1 &  0 & 0 \\0 &1 & 0  \\ 0 & 0 & 1/\alpha \end{pmatrix} \nonumber \\
C_{g} &= \identity.
\end{align}
As the matrix pencil now has two distinct eigenvalues, $0$ and $\infty$, we have $\beta = \gamma = 0$ in ${g}$.
The symmetry group of the state $\ket{ M^1(0) \oplus M^1(0) \oplus M^1(\infty) }$ is thus given by
\begin{align}
{g} \otimes B_{g} B \otimes  C.
\end{align}

The tensor $A$ associated to $\identity \otimes b \otimes \identity \ket{ M^1(0) \oplus M^1(0) \oplus M^1(\infty) }$ reads
\begin{align}
A_0 = \begin{pmatrix} 0& 0 & b_{02} \\  0 &0 &  b_{12}\\ 0 & 0& b_{22}\end{pmatrix}, \ 
A_1 = \begin{pmatrix} b_{00} & b_{01} & 0\\  b_{10} &  b_{11} & 0\\b_{20}& b_{21} & 0\end{pmatrix}.
\end{align}

\subsubsection{Concatenation rules}

As in all the previous cases the symmetries of the correspongind MPS can be determined by solving the concatatation rules, which are given in this case by 

 \begin{align}
\label{eq:lhsrhsM10M10M1inf}
b \underbrace{\begin{pmatrix}B_{00}^{(k+1)} &  B_{01}^{(k+1)} & 0 \\B_{10}^{(k+1)} & B_{11}^{(k+1)} & 0 \\ 0 & 0 & 1 / \alpha_{k+1} \end{pmatrix}}_{LHS_{k+1}} b^{-1} \propto \nonumber \\
\underbrace{\begin{pmatrix}B_{00}^{(k)} &  B_{01}^{(k)} &0\\B_{10}^{(k)} & B_{11}^{(k)} & 0 \\ 0 & 0 &1\end{pmatrix}}_{RHS_k}.
\end{align}

\section{Table combining results on the symmetry group and the SLOCC classification of the TIMPS \texorpdfstring{$\Psi(LLT)$}{Psi(LLT)}}
\label{sec:L1L1Ttable}

\begin{table*}[b]
 \begin{tabular}{c|c||c| >{\raggedright\arraybackslash}m{0.3\linewidth}||c|   >{\raggedright\arraybackslash}m{0.25\linewidth}}
 type & normal & \# symm.& \multicolumn{1}{c||}{cycles} & \makecell{\# SLOCC \\classes} & \multicolumn{1}{c}{representatives}\\
 \hline\hline
  IIb & yes &$1$-param.&      1-cycle: ${g} = R \begin{pmatrix}1&y\\0&1 \end{pmatrix} R^{-1}$ for any $y \in \mathbb{C}$                 & $1$-param.& $T= \begin{pmatrix}1&1\\0&1 \end{pmatrix}$, $\vec{v}= (x,0)^T$, $x \in \mathbb{C} \setminus \{0\}$.\\ \hline %\cline{1-4}
  IV& yes & \makecell{$1$-param. ($m$\\ divides $N$),\\ 1 (else)}&     $m$-cycle: ${g}_k = S_0 \begin{pmatrix}1&e^{i \frac{2 k r \pi}{m}}\\0&1 \end{pmatrix} S_0^{-1}$, where $S_0 = (\vec{v},\, \vec{w})$ for any $\vec{w} \in \mathbb{C}^2$                &  \makecell{$m$ considering \\ a fixed $m$  \footnote{Note that there exist exactly $m$ different pairs of $r$ and $\vec{v}$ in the most right column. However, the SLOCC classes corresponding to e.g. $m=2$ would also be counted in the case of $m=4$ (as $T^2\propto \identity$ implies $T^4 \propto \identity$).}}& $T= \begin{pmatrix}e^{i \frac{r \pi}{m}}&0\\0& e^{-i \frac{r \pi}{m}} \end{pmatrix}$ for $r\in\{1, \ldots, \lfloor \frac{m-1}{2} \rfloor\}$, $\vec{v}\in \{(1,0)^T,(0,1)^T\}$. Moreover, $T \in \{\identity, i \sigma_z\}$ ($T=\identity$) for even $m$ (odd $m$), respectively, and $\vec{v}=(1,0)^T$.\\ \hline %\cline{1-4}
  V& yes & $1$ (trivial)  &     none               &$1$-param.   & $T= \begin{pmatrix}1&1\\0&1 \end{pmatrix}$, $\vec{v} =  (0,x)^T$, $x\in \mathbb{C} \setminus \{0\}$. \\ \hline %\cline{1-4}
  VI& yes &$1$ (trivial) &          none           & \makecell{$2$-param.,\\generic}& $T= \begin{pmatrix}\sigma&0\\0&1/\sigma \end{pmatrix}$, $\sigma \in \mathbb{D}$ s.t. $\sigma^n\neq 1$ for any $n \in \mathbb{N}$, $\vec{v}=(1,0)^T$, $\vec{v}=(0,1)^T$, or$\vec{v} =  (v_1,1/v_1)^T$, $v_1 \in \mathbb{C} \setminus \{0\}$.\\ \hline %\cline{1-4}
  VII& yes&\makecell{$1$-param. (even \\$N$), 2 (odd $N$)}&  2-cycle: ${g}_0$ s.t. ${g}_0 \vec{v} \propto \vec{v}$, ${g}_0 T\vec{v} \propto T\vec{v}$ with any $x \in \mathbb{C}\setminus\{0,i\}$, ${g}_1 = {g}_0^{-1}$. \newline  1-cylce: ${g}_0$ as above, but with $x=i$.                     & $1$-param.& $T= \begin{pmatrix}i&0\\0&-i \end{pmatrix}$, $\vec{v} =  (v_1,1/v_1)^T$, $v_1 \in \mathbb{C} \setminus \{0\}$.\\  \hline %\cline
  VIII& yes &$1$ (trivial) &    none                  & $1$-param.&  $T= \begin{pmatrix}e^{i \frac{s \pi}{m}}&0\\0& e^{-i \frac{s \pi}{m}} \end{pmatrix}$ for $s\in\{0, \ldots, \lfloor m/2 \rfloor\}$, $\vec{v} =  (v_1,1/v_1)^T$, $v_1 \in \mathbb{C} \setminus \{0\}$.  \\ %\cline{1-4}
  \hline\hline
  I & no & $3$-param.&           1-cycle: ${g} = R \begin{pmatrix}x&\\&1/x \end{pmatrix} R^{-1}$ for any $x\in \mathbb{C}\setminus\{0\}$. \newline $m$-cycle: ${g}_k= T^k g_0 T^{-k}$ for any ${g}_0$. \newline $m/2$-cycle: ${g}_k = R  \begin{pmatrix}0 & i z\\ i/z & 0 \end{pmatrix} R^{-1}$, $z = y e^{i \frac{2k (2r+1) \pi}{m}}$ for any $y \in \mathbb{C}\setminus\{0\}$            &2 (even $N$)& $T \in \{\identity, i \sigma_z\}$,  $\vec{v}= \vec{0}$\\ \hline%\cline{1-4}
  IIa & no &$1$-param.&     1-cycle: ${g} = R \begin{pmatrix}1&y\\0&1 \end{pmatrix} R^{-1}$ for any $y \in \mathbb{C}$                   & 1 (even $N$) & $T= \begin{pmatrix}1&1\\0&1 \end{pmatrix}$, $\vec{v}= \vec{0}$\\ \hline%\cline{1-4}
  III & no &$1$-param.  &     1-cycle: ${g} = R \begin{pmatrix}x&\\&1/x \end{pmatrix} R^{-1}$ for any $x\in \mathbb{C}\setminus\{0\}$                  &  \makecell{$1$-param.\\ (even $N$)} & $T= \begin{pmatrix}\sigma&0\\0&1/\sigma \end{pmatrix}$, $\vec{v}= \vec{0}$, $\sigma \in \mathbb{C}\setminus \{0\}$ s.t. there exists no $m \in \mathbb{N}$ s.t. $\sigma^{m} = \pm 1$.\\ %\cline{1-4}
 \end{tabular}
 \caption{Summary of the symmetries (cycles in $G_b$) and SLOCC classification of all normal MPS generated by $\identity \otimes b \otimes \identity \ket{LLT}$ (first part) plus partial results on SLOCC classes of some non-normal MPS (second part of the table). The ``type'' corresponds to the labels displayed in Figure \ref{fig:L1L1Tflowchart}. Note that for non-normal MPS the symmetry group might be larger than displayed, as the utilized methods may fail to identify the full symmetry group, but yield a subgroup instead. Similarly, additional non-normal MPS that have not been identified as SLOCC equivalent might in fact be equivalent. The displayed non-normal MPS vanish in case of odd $N$. As in Figure \ref{fig:L1L1Tflowchart}, by $m$ we denote a number such that $T^{m}\propto \identity$ (if such a number exists) and write $T \propto R \begin{pmatrix}e^{i \frac{r \pi}{m}} & \\ &  e^{-i \frac{r \pi}{m}}\end{pmatrix} R^{-1}$, $r \in \{0, \ldots, N-1\}$ for some matrix $R$ and $r \in \{0, \ldots, \lfloor m/2 \rfloor\}$.
 }
  \label{tab:L1L1Tfull}
\end{table*}

In this appendix we show Table \ref{tab:L1L1Tfull}, which summarizes the results on the symmetry groups and SLOCC classes of MPS corresponding to fiducial states in the $LLT$ class.

\end{document}